\newtheorem{lemma}{Lemma}
\newtheorem{prop}{Proposition}
\newtheorem{thm}{Theorem}
\newcommand{\Id}{\mathds{1}}
\newcommand{\tr}{\mathrm{Tr}}
\begin{document}

\preprint{APS/123-QED}

\title{Efficiently learning non-Markovian noise in many-body quantum simulators}

\author{Jordi A. Montañà-López}
\email{jordi.montana-lopez@mpq.mpg.de}

\affiliation{Max Planck Institute of Quantum Optics, Garching bei München - 85748, Germany}
\affiliation{Department of Electrical and Computer Engineering, University of Washington - 98195, USA}
\author{Andreas Elben}
  \affiliation{Laboratory for Theoretical and Computational Physics, Paul Scherrer Institute, CH-5232 Villigen-PSI, Switzerland}
    \affiliation{ETHZ-PSI Quantum Computing Hub, Paul Scherrer Institute, CH-5232 Villigen-PSI, Switzerland}
    \author{Joonhee Choi}
\affiliation{Department of Electrical Engineering, Stanford University, Stanford, CA, USA}
\author{Rahul Trivedi}

\affiliation{Max Planck Institute of Quantum Optics, Hans-Kopfermann-Straße 1, 85748 Garching, Germany}
\affiliation{Department of Electrical and Computer Engineering, University of Washington - 98195, USA}

\date{\today}

\begin{abstract}
As quantum simulators are scaled up to larger system sizes and lower noise rates, non-Markovian noise channels are expected to become dominant. While provably efficient protocols for Markovian models of quantum simulators, either closed system models (described by a Hamiltonian) or open system models (described by a Lindbladian), have been developed, it remains less well understood whether similar protocols for non-Markovian models exist. In this paper, we consider geometrically local lattice models with both quantum and classical non-Markovian noise and show that, under a Gaussian assumption on the noise, we can learn the noise with sample complexity scaling logarithmically with the system size. Our protocol requires preparing the simulator qubits initially in a product state, introducing a layer of single-qubit Clifford gates and measuring product observables. 


\end{abstract}

\maketitle

\section{\label{sec:intro} Introduction}
Recent experimental progress in developing quantum simulators in different technological platforms \cite{bluvstein2024logical,ai2024quantum,main2025distributed,reichardt2024demonstration,google2025quantum,google2023suppressing,zhang2023superconducting} has opened up the possibility of using them to solve computationally challenging many-body problems. In order to both understand and characterize the accuracy of these simulators \cite{mark2023benchmarking,miessen2024benchmarking,eisert2020quantum,shaw2024benchmarking} as well as to effectively use error mitigation strategies \cite{cai2023quantum,takagi2022fundamental,quek2024exponentially}, we would like to be able to learn an accurate noise model that captures their dynamics. As a first approximation, typically errors in a quantum simulator implementing a target Hamiltonian are modeled as Markovian. This can include errors modeled as either unwanted Hamiltonian terms or as jump operators in a master equation describing the simulator dynamics. However, the Markovian model is only an approximation --- in many experimental quantum platforms, gates and Hamiltonian evolution are realized through global driving fields that couple multiple qubits simultaneously. This shared control introduces correlated noise and memory effects, giving rise to non-Markovian dynamics. As both the size of the quantum simulators are increased and the noise rates are reduced, such non-Markovian effects are expected to become important and increasingly determine the noise floor in the quantum device. This opens up the theoretical question of developing efficient measurement protocols to learn non-Markovian noise models of quantum simulators.

For quantum devices containing only a few qubits, the quantum channel generated by its dynamics (including both the target Hamiltonian and unwanted errors) can be learned by performing quantum process tomography \cite{chuang1997prescription,mohseni2008quantum}. In particular, for digital circuits, recent work has also developed ``process tensor tomography" which generalizes process tomography to learn a non-Markovian noise model independent of the target evolution \cite{pollock2018non,white2020demonstration,white2022non,mangini2024tensor, guo2020tensor,luchnikov2024scalable}. However, these approaches usually require a number of samples that scale exponentially with the number of qubits and thus become infeasible to employ at large system sizes. For digital quantum simulators implementing a quantum circuit built out of a universal gate set, noise is often characterized by the average error rates of the individual gates used in the circuit, which can be extracted using the randomized benchmarking protocol \cite{knill2008randomized,wallman2014randomized,emerson2005scalable,dankert2009exact,kelly2014optimal,veldhorst2014addressable,helsen2022general,proctor2017randomized}. While traditionally randomized benchmarking was developed under the the Markovian assumption, where individual gate errors are assumed to be independent and uncorrelated, it has recently been extended to the non-Markovian setting as well \cite{gandhari2025quantum,brillant2025randomized}. Despite these significant advances, employing randomized benchmarking for learning noise models remains restricted to digital simulators and only furnishes an ``average" measure of the noise rate as opposed to a more detailed model of the noise in the quantum device.

In analog quantum simulators, alternative approaches have been developed that leverage the natural system dynamics in the many-body regime to infer both the system Hamiltonian and Lindbladian noise processes \cite{haah2024learning,stilck2024efficient,mobus2023dissipation,mobus2025heisenberg,hangleiter2024robustly,wu2025hamiltonian,gu2024practical,ma2024learning,olsacher2025hamiltonian,francca2025learning}. Such strategies have been explored both for the closed system setting, where the unitary evolution of the simulator is generated by a Hamiltonian \cite{haah2024learning,gu2024practical}, as well as in the Markovian open system setting, where dissipative evolution is described by a Lindbladian \cite{stilck2024efficient,francca2025learning}. The general strategy involves evolving the analog simulator for a short time starting from an initial product state and measure a set of local observables --- since the evolution is performed only for a short time, the parameters of the Hamiltonian or the Lindbladian describing the simulator can be backed out from the measured observables. This can be done either by measuring the observables at a single, sufficiently small, time and using a non-linear equation solver \cite{haah2024learning} or by measuring time-traces of the observables and employing a robust fitting strategy \cite{stilck2024efficient}. For Markovian geometrically local models, it has also been rigorously proven that the sample complexity of a learning protocol employing either of these strategies scales at-most logarithmically with the system size and attains the standard-quantum-limit scaling with respect to the desired precision in the simulator parameters \cite{stilck2024efficient}. Furthermore, for the case of Hamiltonian learning, a strategy that combines long-time evolution and Pauli operator reshaping has also been developed to attain Heisenberg scaling in the sample complexity with respect to the desired precision \cite{huang2023learning,ma2024learning}. However, the development of these strategies beyond the Markovian approximation still remains an experimentally relevant open question and has largely been considered only for small systems \cite{zhang2022predicting,ivander2024unified,vezvaee2024fourier,aguiar2025quantum,yang2024control,gullans2024compressed, dong2025efficient,varona2025lindblad}.


In this paper, we address this question for geometrically local quantum simulators. We consider a reasonably general non-Markovian noise model, where the noise can act on multiple neighboring qubits in a correlated manner, with the only assumption being that the noise arises from an interaction with a stationary and Gaussian environment \cite{trivedi2022description}. We provide a measurement strategy that uses short-time evolution to learn the two-point correlation functions of the non-Markovian environment to a desired precision. Similar to the algorithms available for the Markovian case, our strategy provably requires a number of samples scaling only logarithmically with the system size, and the postprocessing time is similarly efficient. Furthermore, we provide an explicit construction of the initial states and observables that need to be measured for our protocol. In the worst case, while the sample complexity of our protocol scales exponentially with the size of the support of the system-environment coupling operators, it is independent of the locality of the terms in the system Hamiltonian.

Finally, we also adapt our strategy to a coherent noise model, where the errors are Gaussian random geometrically local terms in the Hamiltonian albeit with possible all-to-all correlations. Physically, this would be the case where a global control field is used to implement single- or two-qubit Hamiltonian terms in an experiment, but the control itself is noisy --- then the laser noise would result in coherent errors that can be modeled as random, but coherent, terms added to the target Hamiltonian. On averaging the unitary evolution with respect to the noise realization, we in general still obtain a non-Markovian channel for the simulator dynamics. In this case, we show that even if the noise is all-to-all correlated, we can still learn the mean and covariance matrix of the noise model with almost the same experimental requirements as Hamiltonian learning. 


This paper is structured as follows: in Section \ref{sec:model_nonmarkovian} we introduce the non-Markovian noise model, describing the non-Markovian effects via so-called memory kernels. In Section \ref{sec:results} we present our results for estimating the derivatives at $t=0$ of these memory kernels, on two classes of non-Markovian models that satisfy different locality assumptions. In Section \ref{sec:measurement_strategy} we lay out the strategy to learn these derivatives from a linear system of equations and provide the set of measurements that need to be performed to invert it. In Section \ref{sec:numerics} we present numerical results for the performance of our protocol with both sparse non-Markovian kernels and in the case of correlated errors across all interaction terms. In Section \ref{sec:polynomial_fit} we explain how this system of equations arises from a polynomial fit to the time traces of the measured observables and analyze the number of samples required to estimate the derivatives of the kernels to the desired precision.

\section{\label{sec:model_nonmarkovian}Describing Non-Markovian noise}
\subsection{Model}
\begin{figure}[b]
    \centering
    \includegraphics[width=0.9\linewidth]{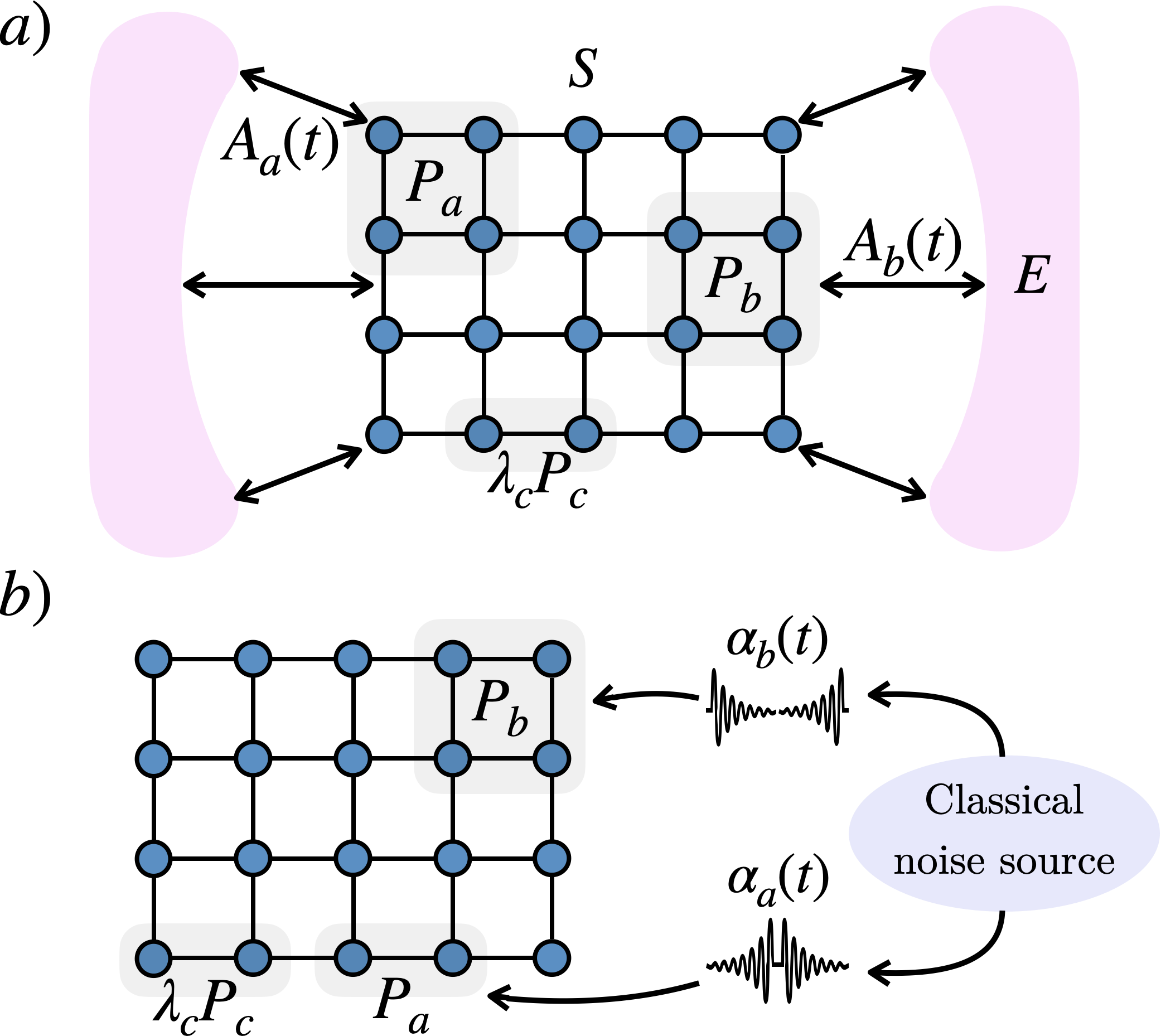}
    \caption{Open system models considered in this paper. (a) The system $S$ (blue circle lattice) is assumed to interact with an environment $E$ (pink), with a coupling $P_a A_a(t)$ between a Pauli string $P_a$ acting on the system and a time-dependent Hermitian operator $A_a(t)$ acting on the environment. (b) A non-dissipative environment, where the environment operators are commuting $[A_a(t), A_{b}(s)] = 0$, can be equivalently described by a noisy term in the system Hamiltonian  $\alpha_a(t) P_a$, where $\alpha_a(t)$ is a classical noise. }
    \label{fig:models_intro}
\end{figure}

We will model the quantum simulator as a system with $N$ qubits interacting with a decohering environment. In the Pauli basis, the Hamiltonian describing the system-environment evolution can generically be written as sum of a static system Hamiltonian $H_S$ and a time-dependent system-environment coupling $V_{SE}(t)$:
\begin{subequations}\label{eq:non_markovian_model_def}
\begin{align}
    H(t) = H_S + V_{SE}(t), 
\end{align}
with
\begin{align}
      H_S &= \sum_{a \in \mathcal{P}_S} \lambda_a P_a , \\
      V_{SE}(t) &= \sum_{b \in \mathcal{P}_{SE}} P_b A_b(t),
\end{align}
\label{eq:H(t)}
\end{subequations}
where $\mathcal{P}_S$ and $\mathcal{P}_{SE}$ are the index sets of Pauli operators (i.e., operators of the form $\sigma_1 \otimes \sigma_2 \otimes \dots \sigma_N$ where $\sigma_i \in \{\Id, \sigma^x, \sigma^y, \sigma^z\}$) appearing in the system Hamiltonian, $H_S$, and the coupling Hamiltonian, $V_{SE}(t)$, respectively. Note that we choose to express the coupling Hamiltonian directly in the interaction picture with respect to the environment Hamiltonian, which makes the Hermitian operator $A_b(t)$ dependent on time. While the system-environment coupling will in general result in the system and the environment evolving into an entangled state, we will assume that initially the system and the environment are not entangled i.e., $\rho(0) = \rho_S \otimes \gamma_E$ where $\rho_S$ and $\gamma_E$ are the initial system and environment state, respectively. We assume that the state $\gamma_E$ is the same for each realization of the experiment. This is reasonable as, for example, the environment could be evolving under a quadratic Hamiltonian and $\gamma_E$ be its generalized Gibbs state. After a projective measurement on the system and turning off the system-environment coupling, the environment will equilibrate to $\gamma_E$. 

From the Dyson expansion  it follows that the state of the system at time $t$, $\rho_S(t) = \tr_E(U(t, 0)(\rho_S \otimes \gamma_E)U(0, t))$ where $U(t, s) = \mathcal{T}\exp(-i\int_s^t H(\tau) d\tau)$, is entirely determined by the $n-$point environment correlation functions $\mathcal{C}_{a_1, a_2 \dots a_n}(t_1, t_2 \dots t_n)$ \cite{feynman2000theory,huang2024unified,cirio2023pseudofermion}:
\begin{align*}
    \mathcal{C}_{a_1, a_2 \dots a_n}(t_1, t_2 \dots t_n) = \tr_E\bigg[\bigg(\prod_{i=1}^n A_{a_i}(t_i)\bigg) \gamma_E\bigg].
\end{align*}
We will make two physically reasonable assumptions on these environment correlation functions which are reasonable in several leading quantum information processing platforms, such as superconducting systems, \cite{chirolli2008decoherence,zhang2009controlling,tang2012measuring,malekakhlagh2016non,heidler2021non}, trapped ion systems as well as quantum optical systems \cite{gardiner2004quantum,liu2011experimental,chiuri2012linear,de2008matter}. First, \emph{Stationarity}, i.e.,~the correlation functions $\mathcal{C}_{a_1, a_2 \dots a_n}(t_1, t_2 \dots t_n)$ depend only on time differences $t_2 - t_1, t_3 - t_1 \dots t_n - t_1$ i.e.
    \begin{align}
    &\mathcal{C}_{a_1, a_2 \dots a_n}(t_1, t_2, \dots t_n) =\nonumber\\
    &\qquad  \mathcal{C}_{a_1, a_2 \dots a_n}(0, t_2 - t_1, \dots t_n - t_1).
    \end{align}
Second, \emph{Gaussianity}, i.e.,~the correlation functions $\mathcal{C}_{a_1, a_2 \dots a_n}(t_1, t_2 \dots t_n)$ satisfy the Wick's theorem i.e.,
    \begin{align}
        &\mathcal{C}_{a_1, a_2 \dots a_n}(t_1, t_2 \dots t_n) =\nonumber\\
        &\qquad \sum_{\mathcal{S}} \prod_{(i, j) \in \mathcal{S}} \mathcal{C}_{a_i, a_j}(t_i, t_j) \prod_{k \in \mathcal{S}^c} \mathcal{C}_{a_k}(t_k), 
    \end{align}
    where $\mathcal{S}$ is a subset of $\{1, 2 \dots n\}$ with even number of elements that are divided into pairs $(i, j)$ with $i < j$.
Thus, a stationary Gaussian environment is described entirely by 
\begin{align}
&\lambda_a = \mathcal{C}_a(0) = \tr(\gamma_E A_a(0)) \text{ and }\\
&K_{a b}(t) =\mathcal{C}_{a, b}(t, 0) = \tr(\gamma_E A_a(t) A_b(0)). \label{eq:kernel_def}
\end{align}
Since we can always transform the environment operators $A_a(t) \to A_a(t) - \lambda_a$ and the system Hamiltonian $H_S \to H_S + \sum_{a \in \mathcal{P}_{SE}}\lambda_a P_a$ without changing the coupling Hamiltonian, without loss of generality we will assume that $\lambda_a= 0$.

A special class of such systems are \emph{non-dissipative} systems, in which the environment operators, $A_b(t)$, commute with one another: $[A_b(t), A_{b'}(s)] = 0$. In this case, the evolution of the system qubits can be alternately described by a Hamiltonian ensemble i.e.,
\[
\rho_S(t) = \mathbb{E}_{\alpha}\big(U_{\alpha}(t, 0) \rho_S(0) U_{\alpha}(0, t)\big),
\]
where $U_{\alpha}(t, s) = \mathcal{T}\exp(-i\int_s^t H_{\alpha}(\tau) d\tau)$ with
\begin{align}
    H_{\alpha}(t) = \sum_{a\in \mathcal{P}_S} \lambda_a P_a + \sum_{b \in \mathcal{P}_{SE}}\alpha_b(t) P_b,\label{eq:H_alpha}
\end{align}
with $\alpha_b(t)$ being a classical stationary real-valued Gaussian stochastic process with $\mathbb{E}(\alpha_b(t)) = 0$ and $\mathbb{E}(\alpha_a(t) \alpha_b(s))= K_{a b}(t - s)$. Thus, non-Markovian non-dissipative noise can equivalently be considered to be classical random noise in Hamiltonian parameters.

\subsection{The learning problem}
Learning the quantum simulator model with either dissipative or non-dissipative noise, under the stationary Gaussian assumption, requires learning the Hamiltonian parameters $\lambda_a$ as well as the kernels $K_{a,b}(t)$. Since, in the most general setting, the kernels can be almost arbitrary functions of time, learning them could require learning an infinite number of parameters and is thus not a well-posed problem. However, in most physical situations, the kernel is parameterized by a small number of parameters. Under the additional physically reasonable assumption that the kernels $K_{a, b}(t)$ are smooth functions of time, we can then aim to learn its derivatives $K_{a, b}^{(m)}(0)$ for $m \in \{0, 1, 2 \dots M \}$ upto a pre-specified order $M$, and then use these derivatives to infer the parameters that $K_{a, b}(t)$ depends on. 

As an example of a physically relevant non-Markovian noise model, consider the setting where the environment itself consists of a set of discrete bosonic modes undergoing Markovian particle loss. This family of models, referred to as the pseudomode model in open-system theory \cite{tamascelli2018nonperturbative,ferracin2024spectral}, is a well-studied model for non-Markovian open systems and can also approximate the dynamics of \emph{any} non-Markovian models with smooth kernels if a sufficiently large number of discrete modes are used \cite{trivedi2021convergence}. For such models, it can be shown that the kernels $K_{a, b}(t)$ are  given by linear combinations of decaying exponentials \cite{tamascelli2018nonperturbative}:
\begin{align}
    &K_{a,b}(t) = \sum_l v_{a,l}^* v_{b,l} e^{(i\epsilon_l -\gamma_l/2)|t|}. \label{eq:kernel_ansatz}
\end{align}
Here the sum is over environment modes $l$ that are coupled to $P_a,P_b$ and $\epsilon_l,\gamma_l$ are the resonant frequency and dissipation rate of the modes. In Section \ref{sec:numerics_nonmarkovian} we simulate this kind of non-Markovian dynamics in the fermionic case and show that we can estimate the kernel derivatives at $t=0$ for $m=0,1,...,M$. With an ansatz for the kernel as in Eq.\eqref{eq:kernel_ansatz} this yields the system of equations relating the kernel derivatives $K_{a, b}^{(m)}(0)$ to the parameters $v_{a, l}, \epsilon_l, \gamma_l$:
\begin{align}
    &K_{a,b}^{(m)}(0)=\sum_l v_{a,l}^* v_{b,l}\Big(i\epsilon_l-\frac{\gamma_l}{2}\Big)^m. \label{eq:K_ansatz_MFD}
\end{align}
Inverting either this system of non-linear equations or by using a filter diagonalization algorithm \cite{mandelshtam1998multidimensional,mandelshtam2000multidimensional}, we can then obtain the parameters $v_{a, l}, \epsilon_l, \gamma_l$ from the learned derivative $K_{a, b}^{(m)}(0)$.

Note that this differs from previous work on characterizing many-body non-Markovian processes \cite{white2020demonstration,white2022non} that could also be applied to our setting. Their general method of process tensor tomography requires a number of samples that grows exponentially with the system size and Markov order --- their measure of non-Markovianity. Our work focuses on a particular non-Markovian model with geometrically-local kernels, which are relevant in many experimental platforms. We expect our learning problem to directly provide experimentally-relevant information, since kernel ansatzs such as Eq.\ref{eq:kernel_ansatz} provide resonant frequencies and dissipation rates of the environment modes. 

In the next section, we summarize the main results of this paper showing that this learning problem can be solved with a number of copies of $\rho_S(t)$ that scales at most logarithmically with the system size and outline the measurement protocols that attain this scaling.

\section{\label{sec:results}Results}
\subsection{Non-Markovian models}
We will consider the model of non-Markovian evolution given by a system $S$ of $N$ spin-$\frac{1}{2}$ particles interacting with stationary Gaussian environment $E$. We assume that the system-environment Hamiltonian $H(t)$ is a sum of $O(N)$ geometrically local terms --- more specifically, we assume that the Pauli strings in $H_S$ and $V_{SE}(t)$ are supported on at most $k_S,k_{SE}=O(1)$ sites, respectively, and their support diameter is also upper bounded by a constant $a_0=O(1)$. Note that this ensures that each Pauli string in the Hamiltonian $H(t)$ overlaps with at most $\mathfrak{d}=O(1)$ other Pauli strings in $H(t)$. Additionally, we assume that the environment does not introduce non-local interactions in $H(t)$ by assuming that each term $P_aA_a(t)$ in the coupling Hamiltonian [Eq.~(\ref{eq:non_markovian_model_def}c)] is non-commuting with at most $\mathfrak{d}=O(1)$ other terms $P_bA_b(t)$ and that for a fixed $P_a$, there's at most $s=O(1)$ kernels $K_{a,b}(t)$ that are nonzero. Finally, we will assume that the non-Markovian environment has ``finite memory" as formalized by the total variation of the kernels $K_{a,b}(t)$ in Ref.~\cite{trivedi2024lieb} i.e., we will assume that
\begin{align}\label{eq:total_variation_condition}
\sup_a \sum_{b} \int_{-\infty}^\infty \left| {K_{a,b}(s)}\right |ds \leq O(1).
\end{align}
As established in Ref.~\cite{trivedi2024lieb}, this condition is sufficient to ensure that the system has a finite velocity of information propagation, thus making it a reasonable assumption for an experimentally realistic model of a noisy quantum simulator.

Our first result shows that there is an efficient protocol that allows us to estimate both the parameters $\lambda = \{\lambda_a\}_a$ of the system Hamiltonian as well as the Kernel derivatives
\begin{align}
    &K_{a,b}^{(m)}(0):=\partial_t^mK_{a,b}(t)|_{t=0},\quad K^{(m)}=\{K_{a,b}^{(m)}(0)\}_{(a,b)} \label{eq:Km_def}
\end{align} by only initializing and measuring the system.

\begin{prop}\label{prop:non_markovian_main}
    There is a protocol which uses only initial product states, single qubit gates and Pauli measurements and with probability $\geq 1-\delta$ obtains estimates $\hat{\lambda},\hat{K}^{(m)} (0)$ satisfying $||\hat{\lambda}-{\lambda}||_\infty,||\hat{K}^{(m)}(0)-{K}^{(m)}(0)||_\infty<\epsilon$ for every $m\in \{0,1...,M\}$ with $N_S$ samples, where:
\begin{align}
    &N_S=O\bigg(e^{O(M^2\log M)} \frac{1}{\epsilon^2}\log\bigg(\frac{N}{\delta}\bigg)\bigg).\notag
\end{align}

This protocol requires evolving initial states up to time $t=O(1)$ and the classical postprocessing is $O(N\cdot N_S)$.
\end{prop}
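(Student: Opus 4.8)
\emph{Proof strategy.} The plan is to express the time traces of carefully chosen local observables as functions of the unknowns and then invert a geometrically-local system of equations. Expanding $U(t,0)$ as a Dyson series in $V_{SE}(t)$ in the interaction picture with respect to $H_S$, and tracing out the environment using Wick's theorem together with $\lambda_a=0$, writes $\rho_S(t)$ as a sum over diagrams of time-ordered integrals of products of kernels $K_{a,b}$ dressed by $e^{-iH_S\tau}$ and the strings $P_a$; this is the Feynman--Vernon influence-functional expansion of the reduced dynamics. Consequently, for any product observable $O$ the time trace $f_O(t):=\tr(O\rho_S(t))$ is real-analytic near $t=0$, and --- the structural fact we will use --- the coefficient of $t^n$ in its Taylor expansion is a fixed, classically computable polynomial in $\{\lambda_a\}$ and $\{K^{(m)}_{a,b}(0):m\le n-2\}$ that, by geometric locality, involves only terms supported within distance $O(n)$ of the support of $O$. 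The finite-memory condition~(\ref{eq:total_variation_condition}) and the Lieb--Robinson bound of Ref.~\cite{trivedi2024lieb} bound the growth of $|f_O^{(n)}(0)|$, hence the tail of this expansion, so that $f_O$ is faithfully represented by a bounded-degree polynomial on a time window $[0,T]$ with $T=O(1)$.

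The dependence is \emph{triangular in the order}: the $t^1$ coefficient of $f_O$ is linear in $\lambda$, and $K^{(m)}_{a,b}(0)$ enters $f_O$ for the first time at order $t^{m+2}$, where it appears only through the single-contraction, no-$H_S$ diagram and hence contributes a term \emph{linear} in the entries of $K^{(m)}(0)$, all other contributions to the $t^{m+2}$ coefficient being (possibly nonlinear) functions of $\lambda$ and of $K^{(m')}(0)$ with $m'<m$ only. The explicit construction of Section~\ref{sec:measurement_strategy} --- product initial states that are joint Pauli eigenstates, a layer of single-qubit Clifford gates, and product-Pauli measurements --- is chosen so that, using the locality hypotheses (each term overlaps at most $\mathfrak{d}=O(1)$ others and each $P_a$ has at most $s=O(1)$ nonzero partner kernels), these linear systems factor into $O(1)$-sized blocks, one per site and edge, each invertible with condition number independent of $N$. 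This gives an inductive reconstruction: solve the local $t^1$-blocks for $\widehat\lambda$; then for $m=0,1,\dots,M$, use $\widehat\lambda,\widehat K^{(0)},\dots,\widehat K^{(m-1)}$ to evaluate and subtract the lower-order part of the $t^{m+2}$ coefficient (efficient, since each diagram is local) and invert the local blocks to obtain $\widehat K^{(m)}(0)$.

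To feed this with data, group the $O(N)$ observables into $O(1)$ families of simultaneously-measurable product Paulis, measure each family at $O(M)$ times in $[0,T]$, and use the measured frequencies both to estimate $f_O(t_j)$ and --- via a bounded-degree polynomial fit in $t$, which is exactly how the linear system of Section~\ref{sec:measurement_strategy} arises (cf. Section~\ref{sec:polynomial_fit}) --- to extract the low-order Taylor data. A Hoeffding bound with a union bound over the $O(N)$ observables and $O(M)$ times shows that $n_0=O(\eta^{-2}\log(N/\delta))$ repetitions per configuration control all estimated traces to additive error $\eta$ with probability $\ge 1-\delta$. Propagating $\eta$ through the polynomial fit and then through the $M+1$ inductive inversions, and optimizing $T$ and the fit degree, the error in $\widehat\lambda$ and every $\widehat K^{(m)}(0)$ is below $\epsilon$ once $\eta=\epsilon\,e^{-O(M^2\log M)}$; the blow-up $e^{O(M^2\log M)}$ reflects the ill-conditioning of high-order derivative extraction from a polynomial fit, compounded across the induction, together with the $e^{O(M\log M)}$ Wick diagrams that must be re-summed at each step. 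This yields $N_S=O\big(e^{O(M^2\log M)}\epsilon^{-2}\log(N/\delta)\big)$; the protocol evolves for constant time $T=O(1)$, and the postprocessing updates $O(N)$ local estimators from each of the $N_S$ length-$N$ records and then solves $O(N)$ systems of size $O(M)$, for total time $O(N\cdot N_S)$.

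The main obstacle is the content of the second paragraph: proving that, after subtracting the computable contributions of $\lambda$ and the lower-order kernels, the $t^{m+2}$ data isolates $K^{(m)}(0)$ through a map that is simultaneously injective, classically invertible, and \emph{block-local}, so that its condition number does not degrade with $N$. This requires combining the diagrammatic structure of the Dyson--Wick expansion with the geometric-locality assumptions and with the explicit choice of states and observables. The complementary difficulty is the quantitative error analysis --- controlling the $M$-dependence of the propagated error through the polynomial fit and the induction to obtain the $e^{O(M^2\log M)}$ scaling while retaining the standard-quantum-limit $\epsilon^{-2}$ dependence --- carried out in Section~\ref{sec:polynomial_fit}.
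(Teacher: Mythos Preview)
Your proposal is correct and follows essentially the same approach as the paper: Dyson expansion of $\mathcal{E}_W(t)$, the triangular structure where $K^{(m)}_{a,b}(0)$ first appears linearly in the $(m+2)$-th Taylor coefficient, recursive reconstruction using the lower-order estimates to compute and subtract the nonlinear ``offset'' $f^{(m)}_{W,(O,I)}$, robust polynomial interpolation \`a la Ref.~\cite{stilck2024efficient} backed by the non-Markovian Lieb--Robinson bound of Ref.~\cite{trivedi2024lieb}, and the $e^{O(M^2\log M)}$ blow-up from compounding the offset errors across $M$ inductive steps.

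Two minor remarks. First, the paper does \emph{not} pass to the interaction picture with respect to $H_S$; it expands directly in the full $H(t)=H_S+V_{SE}(t)$ and splits the second-order term as $\mathcal{F}^{(2)}_W=\mathcal{F}^{(2)}_{W,S}+\mathcal{F}^{(2)}_{W,SE}$. Your interaction-picture framing leads to the same Taylor data and is equivalent, but note that your ``no-$H_S$ single-contraction diagram'' is precisely $T^{(m)}_W:=\partial_t^m\mathcal{F}^{(2)}_{W,SE}|_{t=0}$ in the paper's notation. Second, you correctly flag injectivity of the local linear maps as the main obstacle, but you understate what goes wrong without the intermediate gate layer: the paper shows (Eq.~\eqref{eq:problematic_sum_no_W}) that for $W=\Id^{\otimes N}$ and odd $m$ the map $K^{(m-2)}\mapsto T^{(m)}_{\Id^{\otimes N}}$ is genuinely \emph{degenerate} --- it depends only on certain sums $\sum_{(c,d):[P_c,P_d]_\pm=Q}K^{(m-2)}_{c,d}(0)$ and annihilates $\mathrm{Im}[K^{(m-2)}_{a,a}(0)]$ entirely --- so the Clifford $W$ is not a convenience but a necessity, and the case analysis (Table~\ref{tab:W_gate}: $W=\Id$ for even $m$, $W=P_w$ for odd $m$ with $a\neq b$, $W=S\!\cdot\! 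H$ for odd $m$ with $a=b$) is the substantive content you are deferring to Section~\ref{sec:measurement_strategy}.
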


Note that the provable sample complexity in Proposition \ref{prop:non_markovian_main} grows super-exponentially with $M$, which is the degree of the highest derivative of the kernels $K_{a,b}(t)$ that we want to learn. This indicates that this protocol would perform poorly in accurately learning noise models with very rapidly varying kernels --- however in many physically relevant settings, these kernels are expected to be a sum of a small number of exponentials in $t$, as in Eq.\eqref{eq:kernel_ansatz}, \cite{tamascelli2018nonperturbative,ferracin2024spectral} and can thus be specified by only low-order derivatives at $t = 0$.\\


\begin{figure}
    \centering
    \includegraphics[width=0.9\linewidth]{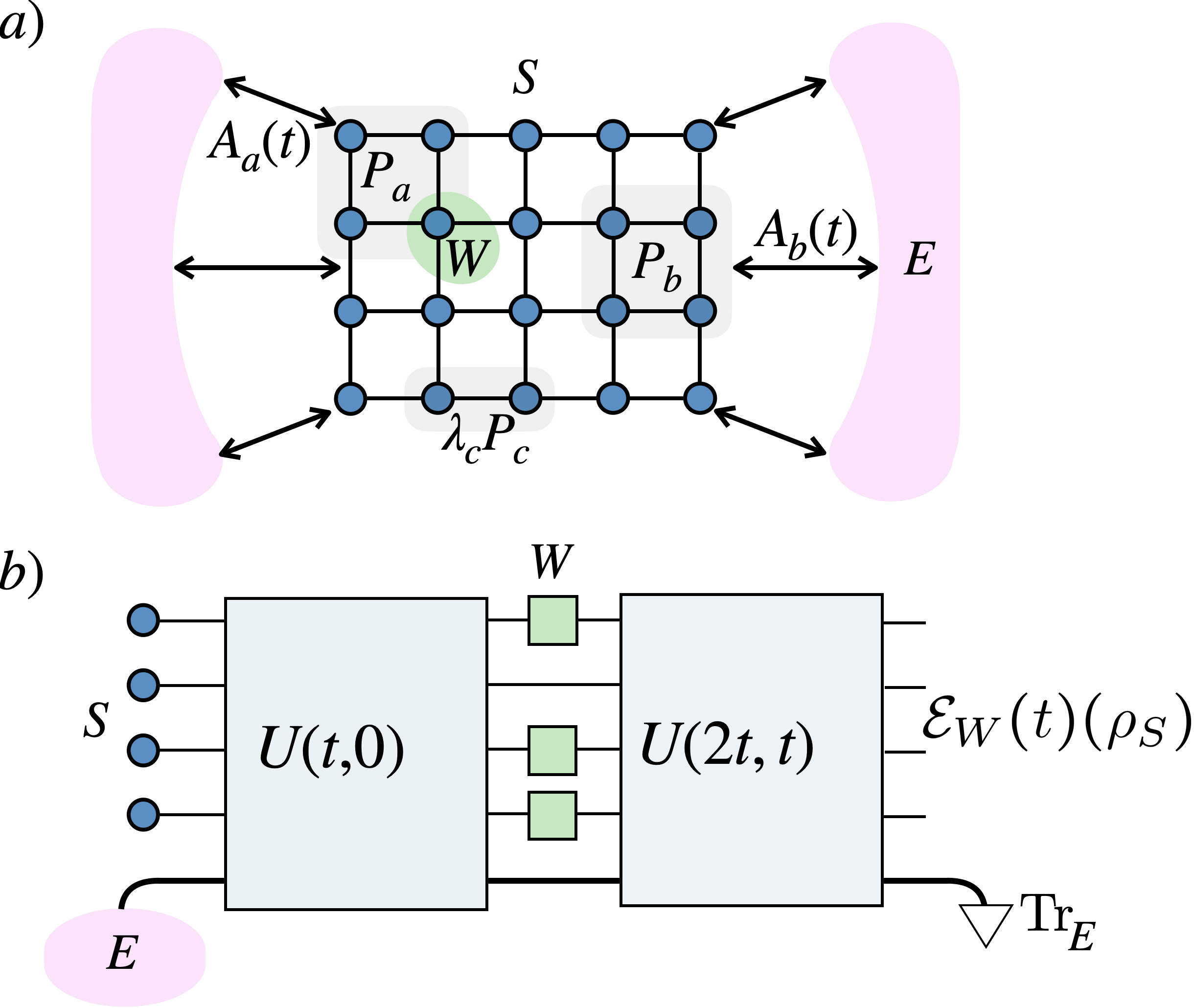}
    \caption{a) Single-qubit gate $W$ applied on the support of $P_a$ when trying to estimate kernel $K_{a,b}(t)$. b) Quantum channel used to prepare the state: an initial state $\rho(0)=\rho_S\otimes \gamma_E$ is prepared, where $\rho_S $ is a product state chosen depending on the kernels we are trying to learn while $\gamma_E$ is a Gaussian state. The state is time evolved under $U(t,0)$ and a unitary $W$, consisting of single-qubit Clifford gates on certain sites, is applied. The state is further evolved under $U(2t,t)$ and the environment is traced out.}
    \label{fig:state_preparation}
\end{figure}
Our protocol builds on a strategy recently developed for many-body Lindbladian learning \cite{stilck2024efficient}, which is based on measuring system observables after a short-time evolution of an initial controllable system state. More specifically, we start with an initial system-environment state $\rho(0)=\rho_S\otimes\gamma_E$, where $\rho_S$ is a product state of our choice, and then prepare a state $\mathcal{E}_W(t)(\rho_S)$ by performing the evolution shown in Figure \ref{fig:state_preparation}:
\begin{align}\label{eq:rho_W_t}
    &\mathcal{E}_W(t)(\rho_S) =\tr_E(V_W(t)\rho_S \otimes \gamma_EV_W^\dagger(t)),
\end{align}
where $V_W(t) = U(2t,t)WU(t,0)$, with $W$ being a layer of single qubit gates applied on the system qubits and $U(t, s) = \mathcal{T}\exp(-i \int_s^t H(\tau) d\tau)$. For small enough $t$, $\mathcal{E}_W(t)(\rho_S)$ can be well approximated by a second order Dyson series expansion of the system-environment evolution $U(t,0)$ and $U(2t, t)$:
\begin{subequations}\label{eq:dyson_series}
\begin{align}
    \mathcal{E}_W(t)(\rho_S) \approx W \rho_S W^\dagger + \mathcal{F}_W^{(1)}(t)(\rho_S) + \mathcal{F}_W^{(2)}(t)(\rho_S),
\end{align}
where $\mathcal{F}_W^{(j)}(t)$ are system super-operators that correspond to the $j^\text{th}$ order term in the Dyson series expansion:
\begin{align}
    &\mathcal{F}_W^{(1)}(t)(\rho_S) =-it\big(W[H_S,\rho_S]W^\dagger+[H_S,W \rho_S W^\dagger]\big),\\
    &\mathcal{F}_W^{(2)}(t)(\rho_S) =\nonumber\\
    &\quad-\bigg(\int_0^t \int_{0}^{s_1}  \tr_E(W[H(s_1),[H(s_2),\rho_S\otimes \gamma_E]]W^\dagger) d^2s+\notag\\
    &\qquad \hspace{0pt} \int_t^{2t} \int_{0}^{t} \tr_E([H(s_1),W[H(s_2),\rho_S \otimes \gamma_E ]W^\dagger]) d^2 s+\notag \\
    &\qquad \hspace{0pt}\int_t^{2t} \int_{t}^{s_1} \tr_E([H(s_1),[H(s_2),W\rho_S W^\dagger \otimes \gamma_E]])d^2 s\bigg).
\end{align}
\end{subequations}
where in Eq.~(\ref{eq:dyson_series}a) we have used that, by assumption, $\text{Tr}_E(A_a(t) \gamma_E) = 0$. From Eq.~\ref{eq:dyson_series}, it is clear that $\mathcal{F}_W^{(1)}(t)$ is linear in the system Hamiltonian coefficients $\lambda_a$. Furthermore, since $\mathcal{F}_W^{(2)}(t)$ is quadratic in the environment operators $A_a(t)$ and the environment is traced out, it is linear in $K_{a,b}(t)$. Consequently, it follows from Eq.~\ref{eq:rho_W_t} that the first derivative of $\mathcal{E}_W(t)$ at $t = 0$ carries information about the system Hamiltonian coefficients $\lambda_a$, while second and higher derivatives carry information about $K_{a,b}^{(m)}(0)$. The problem of learning $\lambda_a, K_{a,b}^{(m)}(0)$ thus reduces to finding an informationally complete set of initial states $\rho_S$, observables $O$ and unitaries $W$ for which the derivatives of $\text{Tr}_S[O\mathcal{E}_W(t)(\rho_S)]$ at $t = 0$ can be used to uniquely determine $\lambda_a, K_{a,b}^{(m)}(0)$.

For learning system Hamiltonian parameters from first derivative information, it was shown in Ref.~\cite{haah2024learning} that we can always choose $W = \Id^{\otimes N}$, and for every $\lambda_a$, there is a unique choice of an initial state $\rho_{S}$ as well as an observable $O$ such that $\partial_t\text{Tr}_S[O\mathcal{E}_{W = \Id^{\otimes N}}(t) (\rho_{S})]|_{t=0} = -8\lambda_a$, as shown in Eq. \eqref{eq:H_learning}. In order to learn $K_{a,b}^{(m)}(0)$, we need to use higher-order derivative information and the nested commutator structure of $\mathcal{F}_W^{(2)}(t)(\rho_S)$ makes it more complicated to develop an explicit tomography protocol. One of the main contributions of our work, detailed in Section \ref{sec:measurement_strategy}, is to determine a set of initial product states $\rho_S$, product observables $O$ and single qubit unitaries $W$ such that $K_{a,b}^{(m)}(0)$ can be learned from the derivatives of $\text{Tr}_S[O\mathcal{E}_W(t)(\rho_S)]$ at $t = 0$.  In contrast to the case of Hamiltonian learning, where a single observable and initial state can be used to directly infer a given Hamiltonian parameter, we need to use several initial-state and observable pairs $(\rho_S,O)$ to learn a particular $K_{a,b}^{(m)}(0)$. Nevertheless, we show that it is sufficient to use product states and measure observables on a region that is slightly larger than the supports $P_a,P_b$ to learn $K_{a,b}^{(m)}(0)$. Furthermore, unlike Hamiltonian or Lindbladian learning \cite{stilck2024efficient}, where the learning strategy involve preparing an initial state and measuring an observable after short time evolution (thus setting $W = \Id^{\otimes N}$), we show in Section \ref{sec:measurement_strategy} that the short-time evolution of observables as described by a second-order Dyson series expansion does not depend on the kernels $\text{Im}[K_{c, c}^{(m)}(0)]$. Nevertheless, we find that by applying a layer of single qubit gates $W$ in the middle of the time-evolution allows us to also learn these Kernel parameters --- we provide an explicit procedure to pick these gates $W$ in addition to the initial state $\rho_S$ and observables $O$ that allows us to learn all the Kernel parameters in Section \ref{sec:measurement_strategy}.



While our construction of the required measurement settings $(\rho_S,O,W)$ is based on analyzing a second order Dyson series expansion, we rigorously account for the higher order terms in establishing the sample complexity result in Proposition \ref{prop:non_markovian_main}. \emph{First}, while the first and second order derivatives of the measured observable $\text{Tr}_S(O\mathcal{E}_W(t)(\rho_S))$, at $t = 0$, is entirely determined by the first and second order term in the Dyson series expansion of $\mathcal{E}_W(t)$, the third and higher order derivatives of $\text{Tr}_S(O\mathcal{E}_W(t)(\rho_S))$, needed to obtain the higher derivatives of $K_{a,b}(t)$, in general have a contribution from higher order terms in the Dyson series for $\mathcal{E}_W(t)$. Furthermore, the third and higher order terms in the Dyson series of $\mathcal{E}_W(t)$ would be non-linear in the parameters $\lambda_a, K_{a,b}^{(m)}(0)$ --- consequently, the derivatives of $\text{Tr}_S(O\mathcal{E}_W(t)(\rho_S))$ at $t = 0$ depend nonlinearly on $\lambda_a, K_{a,b}^{(m)}(0)$. Despite this, we show that with the choice of $(\rho_S,O,W)$ made on the basis of a second order Dyson series expansion of $\mathcal{E}_W(t)$, the resulting non-linear equations relating $\lambda_a, K_{a,b}^{(m)}(0)$ to the derivatives of $\text{Tr}_S(O\mathcal{E}_W(t)(\rho_S))$ at $t = 0$ can be uniquely and efficiently solved to obtain $\lambda_a, K_{a,b}^{(m)}(0)$. Furthermore, in an actual experiment we can only measure the time-traces $\text{Tr}_S(O\mathcal{E}_W(t)(\rho_S))$ to an accuracy determined by the number of measurement shots and its derivatives have to be obtained by fitting a polynomial to the measured data --- to obtain rigorous sample complexity bounds for the tomography protocol, following a strategy laid out in \cite{stilck2024efficient, arora2024outlier} for Lindbladian learning, we use Lieb-Robinson bounds \cite{trivedi2024lieb} to show that for evolution times $t \leq t_\text{max} = O(1)$, these time traces can be well approximated by polynomials of degree $d=O(\text{poly}(t_{\max},\log({\epsilon}^{-1})))$, where $\epsilon$ is the target precision desired in the parameters $\lambda_a, K_{a,b}^{(m)}(0)$, and with the observables measured only at times $n\tau$ where $n \in \{1, 2, 3 \dots\} $ and $\tau = \Theta(1/d^2)$. These fitted polynomials are then used for extracting the derivatives of $\text{Tr}_S(O\mathcal{E}_W(t)(\rho_S))$ which are the subsequently used for learning $\lambda_a, K_{a,b}^{(m)}(0)$.


\subsection{\label{sec:ensemble_model}Ensemble Hamiltonian model}
The second class of models that we address in this paper are Hamiltonians with noisy parameters --- we consider the simulator to be described by
\[
H_\Lambda=\sum_{a \in \mathcal{P}_S} \Lambda_a P_a,
\]
where, again, $P_a$ are geometrically local Paulis which are supported on at most $O(1)$ sites and with diameter upper bounded by an $O(1)$ constant and $\Lambda_a$ are jointly Gaussian random variables with mean $\lambda_a = \mathbb{E}(\Lambda_a)$ and covariances $\Sigma_{a,b} = \mathbb{E}(\Lambda_a \Lambda_b) - \lambda_a \lambda_b$. We assume the Hamiltonian contains $|\mathcal{P}_S|=O(N)$ terms and allow the covariance matrix to be dense (i.e., allow for all-to-all correlations between the coefficients $\Lambda_a$) and only assume that $
|{\lambda_a}| \leq 1$ and $\text{var}(\Lambda_a) = \Sigma_{aa} \leq  1$. We remark that, as described in Section \ref{sec:model_nonmarkovian}, this model is equivalent to a non-dissipative open system model with system Hamiltonian $\sum_a \lambda_a P_a$ and a time-independent kernel $K_{a,b}(t)=\mathbb{E}_\Lambda(\Lambda_a\Lambda_b)$. However, since the kernels $K_{a,b}(t)$ are not necessarily sparse and do not satisfy the total variation condition in Eq.~\eqref{eq:total_variation_condition}, Proposition \ref{prop:non_markovian_main} cannot be directly applied to this setting. Nevertheless, we show that both the means $\lambda = \{\lambda_a\}_a$ and covariances $\Sigma_{a,b}$ can be learned efficiently for this model following a strategy similar to the one used for the more general non-Markovian model.
\begin{figure}
    \centering
    \includegraphics[width=0.9\linewidth]{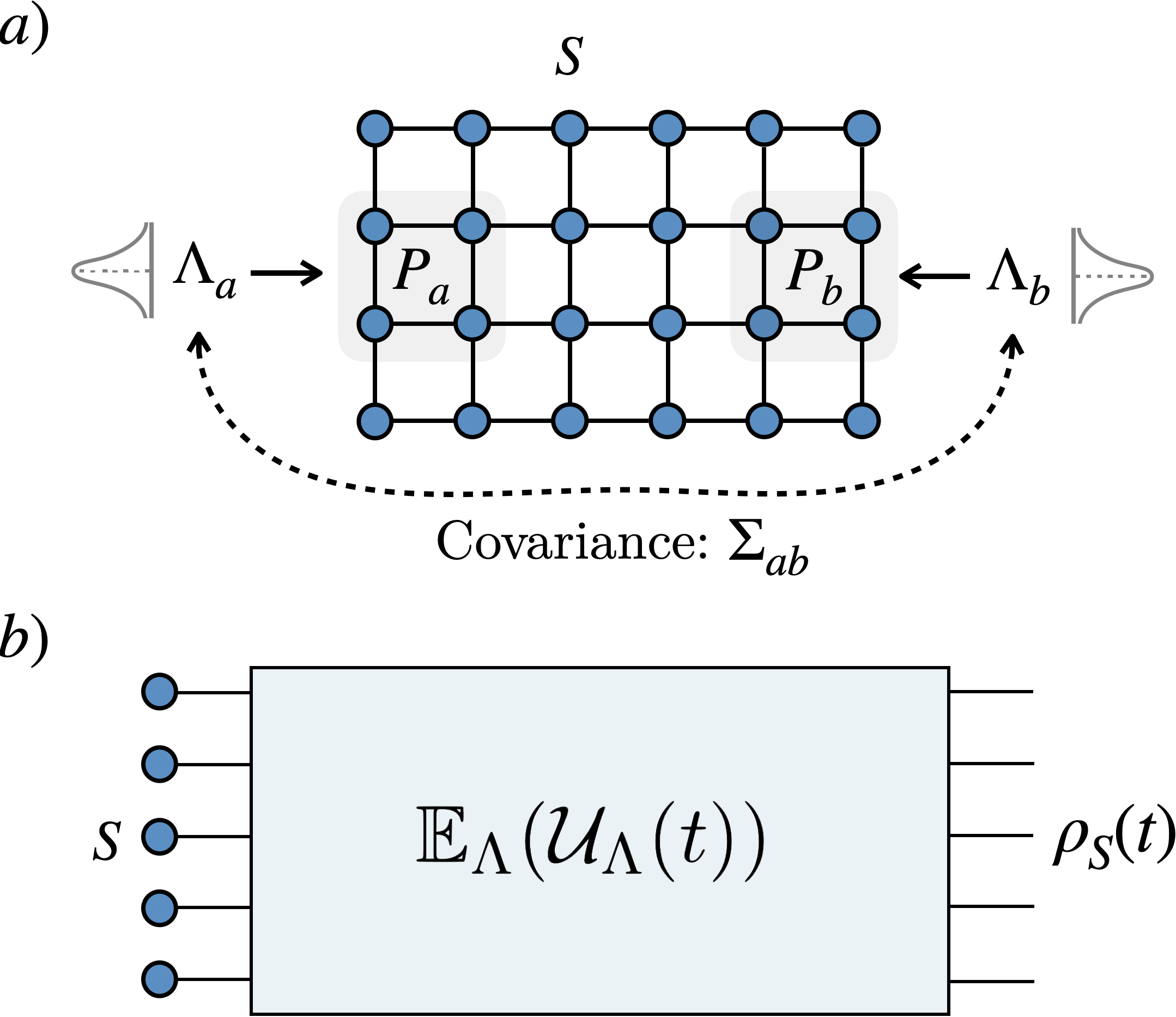}
    \caption{a) In the ensemble Hamiltonian model, while the Pauli strings are geometrically local, the correlations between their coefficients can be all-to-all: $\Lambda_a,\Lambda_b$ can have covariance $\Sigma_{ab} = \Omega(1)$ even if the supports of $P_a,P_b$ are far away. b) Quantum channel used to prepare the sate: a state $\rho(0)=\rho_S\otimes \gamma_E$ is prepared, where $\rho_S $ is a product state chosen depending on the kernel we are trying to learn while $\gamma_E$ is a Gaussian state. The state is time evolved under $\mathcal{U}_\Lambda (t)(\rho_S)=U_\Lambda(2t,0)\rho_S U_\Lambda(2t,0)^\dagger$ with $U_\Lambda(t,0) = e^{-itH_\Lambda}$, without introducing any intermediate gate $W$. We take the expectation value with respect to the jointly Gaussian random variables $\Lambda$. }
    \label{fig:ensemble_state_preparation}
\end{figure}


\begin{prop}\label{prop:hamiltonian_ensemble}
    There is a protocol which uses only initial product states and Pauli measurements and with probability $\geq 1-\delta$ obtains estimates $\hat{\lambda},\hat{\Sigma}$ satisfying $||\hat{\lambda}-\lambda||_\infty,||\hat{\Sigma}-\Sigma||_{\max}<\epsilon$ for an $s$-sparse $\Sigma$ with $N_S$ samples, where:     
\begin{align}
    &N_S=O\bigg(\frac{s}{\epsilon^2}\log\bigg(\frac{N}{\delta}\bigg)\bigg).\notag
\end{align}
This protocol requires evolving initial states up to time $t=O((\log (N) \log(1/\epsilon))^{-\frac{1}{2}})$ and the classical postprocessing is $O(sN\cdot N_S)$.
\end{prop}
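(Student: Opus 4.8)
The plan is to run the short-time-evolution strategy of Proposition~\ref{prop:non_markovian_main}, specialized to the simplifications of this model: the dynamics is generated by the \emph{time-independent} Hamiltonian $H_\Lambda$, the induced kernel $K_{a,b}(t)=\mathbb{E}_\Lambda(\Lambda_a\Lambda_b)=\Sigma_{a,b}+\lambda_a\lambda_b$ is time-independent, real and symmetric (so only $M=0$ is needed and, as in Fig.~\ref{fig:ensemble_state_preparation}, no intermediate gate $W$ is required), and hence only the first and second derivatives at $t=0$ of the measured time-traces matter. First I would Taylor-expand the averaged channel $\mathbb{E}_\Lambda[\mathcal{U}_\Lambda(t)(\rho_S)]=\mathbb{E}_\Lambda[e^{-2itH_\Lambda}\rho_S e^{2itH_\Lambda}]$, which using only the first two moments of $\Lambda$ reads
\begin{align*}
\mathbb{E}_\Lambda\big[\mathcal{U}_\Lambda(t)(\rho_S)\big]
&=\rho_S-2it\Big[{\textstyle\sum_a}\lambda_aP_a,\rho_S\Big]\\
&\quad-2t^2{\textstyle\sum_{a,b}}\big(\Sigma_{a,b}+\lambda_a\lambda_b\big)\big[P_a,[P_b,\rho_S]\big]+\mathcal{R}(t).
\end{align*}
Thus $\partial_t\,\tr_S[O\,\mathbb{E}_\Lambda\mathcal{U}_\Lambda(t)(\rho_S)]|_{t=0}$ is linear in $\lambda=\{\lambda_a\}$, and $\partial_t^2\,\tr_S[O\,\mathbb{E}_\Lambda\mathcal{U}_\Lambda(t)(\rho_S)]|_{t=0}=-4\sum_{a,b}(\Sigma_{a,b}+\lambda_a\lambda_b)\,\tr_S[O[P_a,[P_b,\rho_S]]]$ is linear in $\{\Sigma_{a,b}+\lambda_a\lambda_b\}$.

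Since the first derivative is exactly the quantity appearing in Hamiltonian learning, for the means $\lambda_a$ I would directly invoke the construction of Ref.~\cite{haah2024learning} used in Eq.~\eqref{eq:H_learning}: for each $a$ there is a product state and product observable acting nontrivially only near $\mathrm{supp}(P_a)$ for which the first derivative isolates $\lambda_a$ up to an $O(1)$ factor, and the $O(N)$ resulting settings batch into $O(1)$ rounds by geometric coloring. For the covariances I would build, for every pair $(a,b)$ with $\Sigma_{a,b}$ possibly nonzero, a product state $\rho_S$ and product observable $O$ that are nontrivial only on the union $\mathrm{supp}(P_a)\cup\mathrm{supp}(P_b)$ of two $O(1)$-size regions. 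Because $[P_d,\rho_S]=0$ unless $P_d$ overlaps this union, the second derivative reduces to a finite ($O(1)\times O(1)$) linear combination of $\{\Sigma_{c,d}+\lambda_c\lambda_d\}$ over Paulis $c,d$ meeting the two regions; taking $(\rho_S,O)$ to be essentially a ``tensor product'' of two copies of the single-Pauli-resolving construction of Ref.~\cite{haah2024learning} (the role of the explicit construction of Section~\ref{sec:measurement_strategy} at $M=0$, $W=\Id$, is to check that the cross nested-commutator terms keep this small system invertible) makes it well-conditioned, and subtracting the already-estimated $\hat\lambda_a\hat\lambda_b$ yields $\hat\Sigma_{a,b}$. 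The $O(Ns)$ pairs are then scheduled into $O(s)$ rounds: in the graph on Paulis defined by the support of $\Sigma$ each vertex has degree $\le s$ (row sparsity) and each Pauli's $O(1)$-size region meets only $O(1)$ other Paulis (geometric locality), so the associated scheduling conflict graph has degree $O(s)$, and each round probes $\Theta(N)$ pairs with pairwise well-separated combined supports in parallel.

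For rigor and sample complexity I would reuse the polynomial-fitting/Lieb–Robinson machinery of Refs.~\cite{stilck2024efficient,trivedi2024lieb}, adding one ingredient to handle the unboundedness of the Gaussian couplings: condition on $\mathcal{G}=\{|\Lambda_a-\lambda_a|\le R\ \forall a\}$ with $R=\Theta(\sqrt{\log(N/\delta)})$, which holds with probability $\ge1-\delta$ by a Gaussian tail bound and a union bound over the $O(N)$ couplings, and on which each realization $H_\Lambda$ is geometrically local with interaction strength $O(R)=O(\sqrt{\log N})$; the contribution of $\mathcal{G}^c$ to every time-trace is $O(\delta\|O\|)$ and is absorbed into the failure probability. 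On $\mathcal{G}$, the Lieb–Robinson bounds of Ref.~\cite{trivedi2024lieb} imply that each $\tr_S[O\,\mathbb{E}_\Lambda\mathcal{U}_\Lambda(t)(\rho_S)]$ is $\epsilon$-close on $[0,t_{\max}]$ to a polynomial of degree $d=O(\text{poly}(t_{\max},\log(1/\epsilon)))$ as long as $t_{\max}^2\log N\log(1/\epsilon)=O(1)$, i.e. $t_{\max}=O((\log N\log(1/\epsilon))^{-1/2})$, which also controls the remainder $\mathbb{E}_\Lambda[\mathcal{R}(t)]$ using Gaussian moment bounds; measuring at times $n\tau$ with $\tau=\Theta(1/d^2)$ and $O(\epsilon^{-2}\log(N/\delta))$ shots per round and fitting as in Ref.~\cite{stilck2024efficient} gives derivative estimates accurate to $\epsilon$, hence via the $O(1)$-conditioned inversions above $\|\hat\lambda-\lambda\|_\infty,\|\hat\Sigma-\Sigma\|_{\max}<\epsilon$; with $O(s)$ rounds this totals $N_S=O(s\epsilon^{-2}\log(N/\delta))$ samples and $O(sN\cdot N_S)$ classical postprocessing.

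The main obstacle is this last step: trading the evolution time $t_{\max}\sim(\log N)^{-1/2}$ against the $\sim\sqrt{\log N}$ typical magnitude of the Gaussian couplings so that the higher-order Dyson remainder and the polynomial-approximation error are simultaneously made $O(\epsilon)$, uniformly over the $O(s)$ parallel measurement settings; this is exactly the point where the shorter evolution time relative to Proposition~\ref{prop:non_markovian_main} is forced. The remainder of the argument — the explicit $(\rho_S,O)$ construction and the small, well-conditioned linear inversions — is a localized linear-algebra argument essentially parallel to that proposition, now with the covariance entries playing the role of the kernel values $K_{a,b}^{(0)}(0)$ and the only genuinely new combinatorial point being the $O(s)$-coloring of the all-to-all pair list.
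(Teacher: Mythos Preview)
Your overall strategy---extract $\lambda$ from the first derivative, extract $\Sigma_{a,b}+\lambda_a\lambda_b$ from the second, reuse the tomography/inversion of Section~\ref{sec:measurement_strategy} at $M=0$ with $W=\Id$, and parallelize into $O(s)$ rounds via greedy coloring---is exactly the paper's route. The Taylor expansion you write, the absence of an intermediate gate, and the coloring count all line up with the paper's argument (the paper's Lemma~\ref{lem:Iab_overlap} gives the degree bound $(\mathfrak d_0+2)(s+\mathfrak d)\mathfrak d^2=O(s)$ that you are using informally).

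The one place where your argument diverges from the paper, and where there is a genuine gap, is the handling of the Gaussian tail. You condition on $\mathcal G=\{|\Lambda_a-\lambda_a|\le R\ \forall a\}$ and then say the contribution of $\mathcal G^c$ to each time-trace is $O(\delta\|O\|)$ and ``is absorbed into the failure probability''. This is not a failure event: the quantity you are fitting is the \emph{deterministic} expectation $f(t)=\mathbb E_\Lambda[\tr(O\,\mathcal U_\Lambda(t)(\rho_S))]$, and the $\mathcal G^c$ piece is a bias in $f$, not a random failure. More importantly, your Lieb--Robinson step on $\mathcal G$ only shows that $f$ is uniformly close on $[0,t_{\max}]$ to the degree-$d$ Taylor polynomial of the \emph{conditional} expectation $g(t)=\mathbb E_\Lambda[\,\cdot\,|\mathcal G]$. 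The fitting lemma (Lemma~\ref{lem:samples}) then returns derivatives of \emph{that} polynomial at $0$, i.e.\ conditional moments $\mathbb E[\Lambda_a|\mathcal G]$ and $\mathbb E[\Lambda_a\Lambda_b|\mathcal G]$, not the unconditional $\lambda_a$ and $\Sigma_{a,b}+\lambda_a\lambda_b$ you want. You would still need a separate argument that these conditional moments are $\epsilon$-close to the unconditional ones; this is true but requires controlling $\mathbb E[\Lambda^m|\mathcal G^c]\cdot\mathbb P(\mathcal G^c)$ for all $m\le d$, which is essentially the work you were trying to avoid.

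The paper sidesteps this by bounding the Taylor remainder of the \emph{unconditional} expectation $\|\mathbb E_\Lambda(R_{d,\Lambda}(t))\|$ directly (Appendix~\ref{app:RandomHam}), splitting into $E=\{|\Lambda_b|\le\beta\ \forall b\}$ and $E^c$. On $E$ it uses Lieb--Robinson with bounded couplings, but on $E^c$ it does \emph{not} simply bound the function value by $\|O\|$: it bounds the integral-form remainder using high-moment estimates $\mathbb E(|\Lambda_{i_1}\cdots\Lambda_{i_{d+1}}|\,|\,E^c)$ for conditioned Gaussians (their Lemmas~\ref{lem:bound_Ec}--\ref{lem:EZ>Z_0}). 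This guarantees the approximating polynomial is the Taylor polynomial of $f$ itself, so its derivatives at $0$ are exactly the moments you need, and is also what forces the precise choice $\beta\sim\sqrt{\log(1/\epsilon)+d\log N}$ and hence $t_{\max}=O((\log N\log(1/\epsilon))^{-1/2})$. Your proposal is fixable along either line (control the conditional--unconditional moment gap, or redo the remainder bound), but as written it skips the step that makes the derivative extraction land on the right target.
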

\noindent The learning protocol for this model is considerably simpler than that for the non-Markovian model. \emph{First}, we do not need an intermediate layer of single qubit gates shown in Fig.~\ref{fig:ensemble_state_preparation}. \emph{Second}, we only need to extract the first and second derivatives of the measured observables --- the mean $\lambda_a$ are estimated from the first derivative, as in Hamiltonian learning \cite{haah2024learning,gu2024practical} and  $\mathbb{E}_\Lambda(\Lambda_a\Lambda_b)$ from the second derivatives. We describe the precise choice of initial product states as well as observables in the next section.

To obtain rigorous sample complexity bounds for this tomography protocol, we again use the Lieb-Robinson bounds \cite{hastings2006spectral} to show that the measured observables are well approximated by low degree polynomials --- while the Lieb-Robinson bounds derived for non-Markovian lattice models \cite{trivedi2024lieb} does not apply to this model directly (since $\Lambda_a$ might have all-to-all correlations), we use the fact that since each $\Lambda_a$ is Gaussian with mean and variance $\leq 1$, it is going to be $\leq O(1)$ with high probability and consequently the $H_\Lambda$ will have an $O(1)$ Lieb-Robinson velocity with high probability. By controlling the tail contribution of the low probability event where $H_\Lambda$ has a large Lieb-Robinson velocity, in Appendix \ref{app:RandomHam} we show that the time trace up to $t_{\max}=O((\log(N)\log(1/\epsilon))^{-1/2})$ can be well-approximated by a polynomial of degree $d=O(\text{poly}(t_{\text{max}},\log(\epsilon^{-1})))$, where $\epsilon$ is the target precision in the means and covariances. The first and second time-derivatives of the observable at $t= 0$ can be extracted by evolving it from $0$ to $t_{\text{max}}$, measuring it at times $n \tau$ where $n \in \{1, 2, 3\dots \}$ and $\tau = \Theta(1/d^2) $, and obtaining the polynomial fit. \\

\section{\label{sec:measurement_strategy} Measurement strategy}
We now present the details on the measurement protocols that have been outlined in Section ~\ref{sec:results} and achieve the sample complexities in propositions \ref{prop:non_markovian_main} and \ref{prop:hamiltonian_ensemble} --- we focus on describing the measurement protocol for the non-Markovian setting since the same protocol also works for the ensemble Hamiltonian setting. We first briefly review the measurement protocol for learning the system Hamiltonian parameters $\lambda_a$ which have been provided in \cite{haah2024learning} and then describe the more complex measurement protocol needed for measuring the kernel derivatives $K_{a,b}^{(m)}(0)$.

\subsection{Learning the system Hamiltonian parameters $\lambda_a$ (Review of Ref.~\cite{haah2024learning})}
To learn the system Hamiltonian parameter $\lambda_a$, we choose the intermediate unitary $W = \Id^{\otimes N}$. For every Pauli string $P_a, a \in \mathcal{P}_S$ in the system Hamiltonian, we can always find a single-qubit Pauli $P_{I}$ such that $\{P_{I}, P_a\} = 0$ --- suppose we choose the initial state $\rho_S = (I^{\otimes N}+P_{I}) / 2^N$ and the observable $O = i[P_a, P_{I}] = 2i P_a P_{I}$, then from Eqs.~\ref{eq:rho_W_t} and \ref{eq:dyson_series} it follows that the first derivative of the expected observable $\text{Tr}[O \mathcal{E}_{W = \Id^{\otimes N}}(t)(\rho_S)]$ at $t = 0$ is given by
\begin{align*}
&\partial_t \text{Tr}[O \mathcal{E}_{W = \Id^{\otimes N}}(t)(\rho_S)]|_{t = 0} \nonumber\\
&\qquad =\frac{4}{2^N} \sum_{b \in \mathcal{P}_S}\lambda_b \text{Tr}(P_a P_{I}[P_b, P_{I}]).
\end{align*}
Noting that since $P_{I}$ and $P_a$ anticommute by construction, $\text{Tr}(P_a P_{I}[P_b, P_{I}]) = -2\text{Tr}(P_a P_b) = -2^{N + 1}\delta_{a, b}$, and consequently 
\begin{align}
\partial_t \text{Tr}[O \mathcal{E}_{W = \Id^{\otimes N}}(t,\rho_S)]|_{t = 0}  =-8\lambda_a. \label{eq:H_learning}
\end{align}
This shows that using $W = \Id^{\otimes N}, \rho_S = (I + P_{I})/2^N$ and $O = 2i P_a P_{I}$ is enough to learn $\lambda_a$ from the first derivative of the expected observable at $t= 0$. 

In practice, it is more convenient to prepare initial product states instead of states of the form $\rho_S = (I  +P_I)/2^N$ for a Pauli string $P_I$. In particular, we can recognize that $\rho_S$ has product states as its eigenvectors, so it can be expressed as
\begin{align}\label{eq:initial_paulc_State_to_product}
\rho_S =\frac{1}{2^N} \sum_{r_1, r_2 \dots r_{|\mathcal{S}|} = \pm 1} (1 + r_1r_2 \dots r_{|\mathcal{S}|}) \bigotimes_{i\in \mathcal{S}}\ket{r_i}_i\!\bra{r_i},
\end{align}
where we assume that $P_I$ is supported on qubits in $\mathcal{S}$ with $|\mathcal{S}| \leq O(1)$ and $|\pm1\rangle_i$ is the $\pm 1$ eigenstate of the pauli at the $i^\text{th}$ qubit in $P_I$. Therefore, if we measure the expected value of an observable, and its derivatives, with $2^{|\mathcal{S}|}$ states of the form $\bigotimes_{i\in \mathcal{S}}\ket{r_i}_i\!\bra{r_i} \bigotimes_{i \in \mathcal{S}^c}I/2$, then they can be classically postprocessed as per Eq.~\eqref{eq:initial_paulc_State_to_product} to obtain its expected value in the state $(I + P_I)/2^N$ in $O(2^{|\mathcal{S}|}) \leq O(1)$ time.



\subsection{Learning the Kernel parameters $K_{a,b}^{(m)}(0)$}
\subsubsection{Notation and preliminaries}
Next, we consider learning the derivatives $K_{a,b}^{(m)}(0)$ of $K_{a,b}(t)$ at $t=0$. To explicitly construct initial states and observables that would be needed to learn $K_{a,b}^{(m)}(0)$, it will be instructive to consider the action of the channel $\mathcal{E}_W(t)$ defined in Eq.~\ref{eq:rho_W_t} and depicted schematically in Fig.~\ref{fig:inversion_schematic} on an ``input" Pauli $P_I$ and take its overlap with an ``output" Pauli $P_O$ i.e., 
\begin{align}\label{eq:experimental_time_traces}
B_{W,(O,I)}(t) =\frac{1}{2^N} \text{Tr}[P_O \mathcal{E}_W(t)(P_I)].
\end{align}
Even though $P_I/2^N$ is not a valid initial state of the system qubits, as long as $P_I$ is supported on $O(1)$ qubits, we can follow the strategy outlined in the previous Section to efficiently obtain $B_{W, (O, I)}(t)$ --- first initialize the system qubits in the eigenstates of $P_I$, measuring the observable $P_O$ and then classically postprocess the resulting measurements to obtain $B_{W, (O, I)}(t)$.

Performing a Dyson series expansion of $\mathcal{E}_W(t)$, we can express
\begin{align}
    B_{W, (O, I)}(t) =\frac{1}{2^N}\sum_{n = 1}^\infty   \text{Tr}[P_O \mathcal{F}_W^{(n)}(t)(P_I)],
\end{align}
where $\mathcal{F}_W^{(n)}(t)(P_I)$ is the $n^\text{th}$ order term in the Dyson series expansion of $\mathcal{E}_W(t)(P_I)$ as introduced in Eq.~\ref{eq:dyson_series}. We remark that $\mathcal{F}_W^{(n)}(t)(\rho_S)$ will have $n$ commutators with the full system-environment Hamiltonian $H(t)$ and $n$ integrals over time. Furthermore, the Taylor expansion of $\mathcal{F}_W^{(n)}(t)$ as a function of $t$ around $t = 0$ will have the form
\[
\mathcal{F}_W^{(n)}(t) = \sum_{m \geq n} \frac{t^m}{m!} [\partial_s^m \mathcal{F}_W^{(n)}(s)]_{s = 0},
\]
i.e., due to the $n$ time integrals in $\mathcal{F}_W^{(n)}(t)$, the leading order of its Taylor expansion will be $t^n$ and hence the summation in the above equation starts from $n$. We emphasize that if the system-enviroment Hamiltonian is time-independent (for e.g., for the ensemble Hamiltonian model with the trace over the environment replaced with a classical ensemble average), then $\mathcal{F}_W^{(n)}(t)$ will only have a term proportional to $t^n$ in its Taylor expansion.

\begin{figure*}
    \centering
    \includegraphics[width=0.8\linewidth]{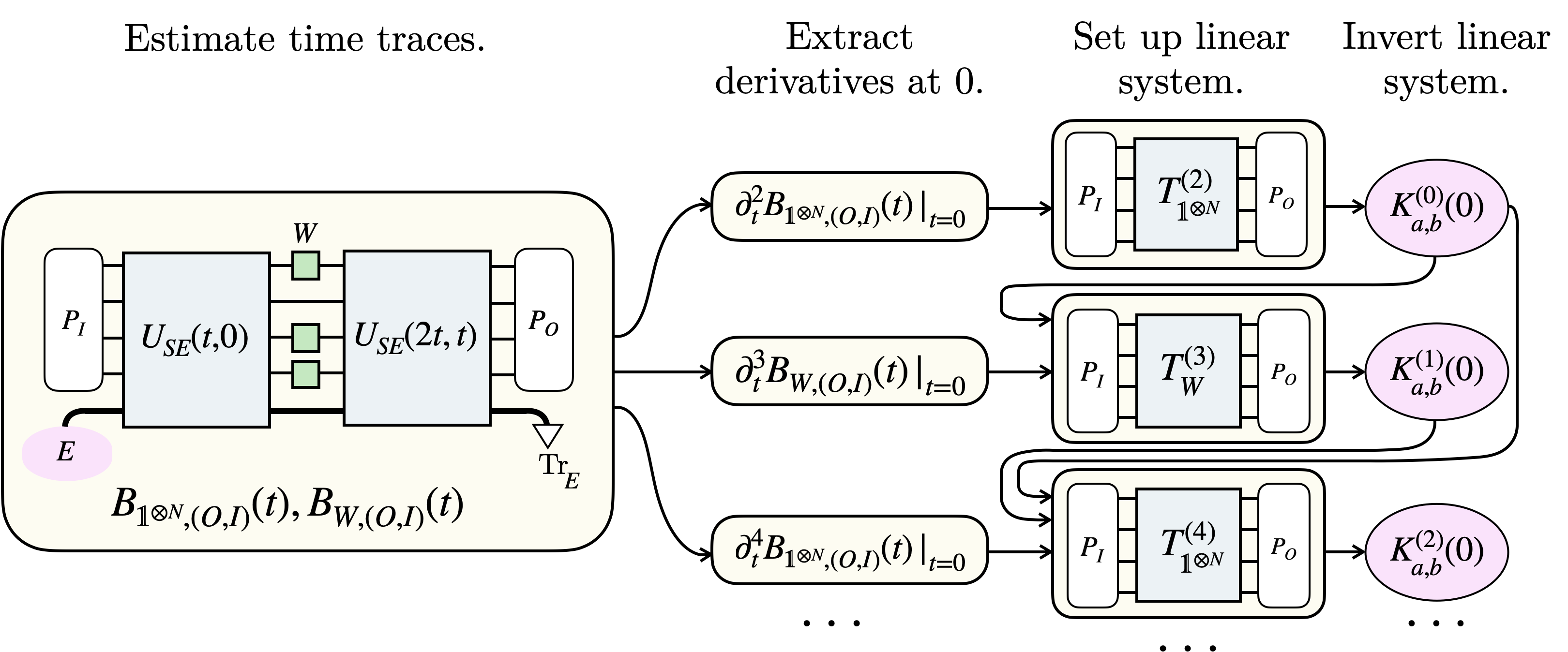}
    \caption{Schematic for the procedure to learn the kernel derivatives $K_{a,b}^{(0)}(0),K_{a,b}^{(1)}(0),...$. Given the Pauli strings $P_a,P_b$ associated to the kernel $K_{a,b}(t)$, we perform tomography on a region $\mathcal{I}_{a,b}$ slightly larger than the supports of $P_a,P_b$. We estimate $\frac{1}{2^N}\tr(P_O \mathcal{E}_W(t)(P_I))$ by sampling eigenstates of $P_I$, which is supported only on $\mathcal{I}_{a,b}$, evolving for time $t$, introducing a layer of single-qubit gates $W$, evolving further until time $2t$ and finally measuring $P_O$, which is only supported on $\mathcal{I}_{a,b}$. The intermediate $W$ depends on $P_a,P_b$ and can be chosen to be a single-qubit gate according to Table \ref{tab:W_gate}. Repeating this procedure for several $t$ yields a time trace $B_{W,(O,I)}(t)$. We similarly obtain the time trace $B_{\mathds{1}^{\otimes N},(O,I)}(t)$, this time without introducing any gate $W$. We estimate the $m$-th derivative at $t=0$ of these time traces using polynomial interpolation: these yield systems of equations $T_{\Id^{\otimes N}}^{(m)}$ ($m$ even), $T_{W}^{(m)}$ ($m$ odd) that are linear in the $(m-2)$-th kernel derivative, using Eq. \eqref{eq:time_traces_equations}. In order to obtain $T_{\Id^{\otimes N}}^{(2)}$ from $\partial_t^2B_{\Id^{\otimes},(O,I)}(t)|_{t=0}$ we need to use estimates $\lambda_a$ for the system coefficients. Tomography allows us to obtain an estimate for $K_{a,b}^{(0)}(0)$ from the linear map $T_{\Id^{\otimes N}}^{(2)}$ using Eq.\eqref{eq:xi_case1}. Similarly, we obtain $T_{W}^{(3)}$ from $\partial_t^3B_{W,(O,I)}(t)|_{t=0}$ and recover $K_{a,b}^{(1)}(0)$ using Eqs.\eqref{eq:xi_case2},\eqref{eq:xi_case3}. The fourth derivative $\partial_t^4B_{\Id^{\otimes N},(O,I)}(t)|_{t=0}$ will depend nonlinearly on $K_{a,b}^{(0)}(0)$, as well as the system parameters $\lambda_a$. Thus, we use our estimates obtained in previous steps to obtain the linear map $T_{\Id^{\otimes N}}^{(4)}$, which allows us to obtain an estimate for $K_{a,b}^{(2)}(0)$. Similarly, higher derivatives of the time trace depend linearly on the kernel derivatives not yet estimated, and nonlinearly on those kernel derivatives and system parameters already estimated.}
    \label{fig:inversion_schematic}
\end{figure*}

The first order term $\mathcal{F}^{(1)}_W(t)$ [Eq.~(\ref{eq:dyson_series}b)] would depend only on the system Hamiltonian coefficients $\lambda_a$, which have already been learned. Since $\text{Tr}(A_a(t) \gamma_E) = 0$, the second order term $\mathcal{F}^{(2)}_W(t)$ [Eq.~(\ref{eq:dyson_series}c)] can further be expressed as:
\begin{align}
    \mathcal{F}_{W}^{(2)}(t) =  \mathcal{F}_{W, SE}^{(2)}(t)+ \mathcal{F}_{W,S}^{(2)}(t), \label{eq:UW2_def}
\end{align}
with $\mathcal{F}_{W, SE}^{(2)}(t)$ is given by Eq.~(\ref{eq:dyson_series}c) with $H(t) \to V_{SE}(t)$ and $\mathcal{F}_{W, S}^{(2)}(t)$ is given by Eq.~(\ref{eq:dyson_series}c) with $H(t) \to H_{S}$. $\mathcal{F}_{W, S}^{(2)}(t)$ depends only on the system Hamiltonian coefficients $\lambda_a$, while $\mathcal{F}_{W,SE}^{(2)}(t)$ depends on the kernels $K_{a,b}(t)$.

The second and higher derivatives of $B_{W, (O, I)}(t)$ can then be expressed as
\begin{subequations}\label{eq:time_traces_equations}
\begin{align}
    &\partial_t^m B_{W, (O, I)}(t)\big|_{t=0} = \frac{1}{2^N}\text{Tr}[P_O T_W^{(m)}(P_I)] +f^{(m)}_{W, (O, I)},
\end{align}
where $m \geq 2$,
    \begin{align}\label{eq:TWm_def}
    T_W^{(m)}:=\partial_t^m \mathcal{F}_{W,SE}^{(2)}(t)|_{t=0},
\end{align}
and
\begin{align}
    &f_{W,(O,I)}^{(m)}= \frac{1}{2^N}\partial_t^m\tr\bigg(P_O\bigg(\mathcal{F}_{W,S}^{(2)}(t)(P_I)+\nonumber \\
    &\qquad \qquad \qquad \qquad \qquad \sum_{2 < n \leq m}^\infty \mathcal{F}_{W}^{(n)}(t)(P_I)\bigg)\bigg)\bigg|_{t=0}. \label{eq:f_definition_main}
\end{align}
\end{subequations}
We note that $T^{(m)}_W$ depends linearly on $K^{(m -2)}_{a,b}(0)$. On the other hand $f^{(m)}_{W, (O, I)}$ is a non-linear function of $K_{a,b}^{(m)}(0)$ --- 
however, $\mathcal{F}_{W, S}^{(2)}(t)$, and consequently $\partial_t^m \mathcal{F}_W^{(2)}(t) |_{t = 0}$, is independent of $K_{a,b}^{(m)}(0)$. Furthermore, since $\mathcal{F}_W^{(n)}(t)$ is obtained from $n^\text{th}$ term in the Dyson series expansion of the system-environment unitary, $\partial_t^m \mathcal{F}_W^{(n)}(t)|_{t = 0}$, for $2 < n \leq m$, depends only on $K_{a,b}^{(0)}(0), K_{a,b}^{(1)}(0)\dots K_{a,b}^{(m - 3)}(0) $. We provide an explicit expression for $f^{(m)}_{W, (O, I)}$ in Appendix \ref{app:compute_f}.

This structure in the dependence of $T^{(m)}_W$ and $f^{(m)}_{W, (O, I)}$ on the kernel derivatives $K_{a,b}^{(m)}(0)$ allows us to build a strategy that allows us to learn $K_{a,b}^{(m)}(0)$ from the experimentally measurable time-traces $B_{W, (O, I)}(t)$ [Eq.~\eqref{eq:experimental_time_traces}]: \emph{First}, construct a set of measurements specified by the input Pauli string $P_I$, output Pauli string $P_O$ as well as single-qubit unitaries $W$ such that $K_{a,b}^{(m - 2)}(0)$ can be uniquely determined from $\text{Tr}[P_O T_W^{(m)}(P_I)]/2^N$ --- note that this only involves analyzing the invertability of the linear map $K_{a,b}^{(m - 2)}(0) \to \text{Tr}[P_O T_W^{(m)}(P_I)]$. Then, experimentally measure the time-traces $B_{W, (O, I)}(t)$ --- from the second derivative of this time-trace at $t = 0$, using Eq.~\eqref{eq:time_traces_equations} for $m = 2$ and the inverse of the map $K_{a,b}^{(0)}(0) \to \text{Tr}[P_O T_W^{(2)}(P_I)]$, we immediately obtain $K_{a,b}^{(0)}(0)$. Similarly, the third derivative of the time-trace $B_{W, (O, I)}(t)$ at $t = 0$ yields $K_{a,b}^{(1)}(0)$ since $f^{(3)}_{W, (O, I)}$ depends only on $K_{a,b}^{(0)}(0)$. Proceeding similarly, we can sequentially determine $K_{a,b}^{(2)}(0), K_{a,b}^{(3)}(0) \dots$ from increasingly higher derivatives of $B_{W, (O, I)}(t)$ at $t = 0$.

It then remains to determine a set of measurements $(P_I, P_O, W)$ such that the linear map $K_{a,b}^{(m - 2)}(0) \to \text{Tr}[P_O T_W^{(m)}(P_I)]$ is invertible --- we do so in the next two subsections. We first provide a general local tomography procedure that works for any geometrically local open system model and then go onto optimize it for certain experimentally relevant settings.

 \subsubsection{\label{sec:tomography}Local tomography procedure}
Since the invertibility of the map 
\[
K_{a,b}^{(m-2)}(0) \to \text{Tr}[P_O T_W^{(m)}(P_I)] 
\]
determines the set of measurements (as specified by the input Pauli string $P_I$, output Pauli string $P_O$ and single-qubit unitary $W$), it will be convenient to obtain an explicit expression for the superoperator $T_W^{(m)}$. Note that the kernels satisfy $K_{a,b}(t)=K_{b,a}(-t)^*$ and consequently 
\begin{align*}
&\text{Re}[K_{a,b}^{(m)}(0)]=(-1)^m\text{Re}[K_{b,a}^{(m)}(0)] \text{ and},\nonumber\\
&\text{Im}[K_{a,b}^{(m)}(0)]=(-1)^{m+1}\text{Im}[K_{b,a}^{(m)}(0)].
\end{align*}
In particular, $K_{a,a}^{(m)}(0)$ is real for even $m$ and imaginary for odd $m$. Using Eq.~(\ref{eq:time_traces_equations}b), we then obtain
\begin{subequations}\label{eq:TW_explicit_expression}
\begin{align}
    &T_W^{(m)}= \sum_{c\in\mathcal{P}_{SE}}\text{Re}[K_{c,c}^{(m-2)}(0)] \tau_{W,c,c}^{(m),-}+\notag\\
    &\hspace{55pt}\text{Im}[K_{c,c}^{(m-2)}(0)]\tau_{W,c,c}^{(m),+}+\notag\\
    &\hspace{28pt}\sum_{(c,d)\in\mathcal{P}_{SE}}\text{Re}[K_{c,d}^{(m-2)}(0)](\tau_{W,c,d}^{(m),-}+(-1)^m\tau_{W,d,c}^{(m),-})+\notag\\
    &\hspace{55pt}\text{Im}[K_{c,d}^{(m-2)}(0)](\tau_{W,c,d}^{(m),+}+(-1)^{m+1}\tau_{W,d,c}^{(m),+}),\label{eq:TWm_sum}
\end{align}
where the sum over $(c, d) \in \mathcal{P}_{SE}$ runs over all unordered pairs of Paulis $c, d \in \mathcal{P}_{SE}$ appearing in the coupling Hamiltonian, $V_{SE}(t)$, and $\tau^{(m), \pm}_{W, c, d}$ are superoperators given by:
\begin{align}
    &\tau_{W,c,d}^{(m),\pm}:=-(W[P_c,[P_d,(\cdot)]_\pm]W^\dagger+\notag\\
    &\hspace{65pt}[P_c,W[P_d,(\cdot)]_\pm W^\dagger](2^m-2)+\notag\\
    &\hspace{65pt}[P_c,[P_d,W(\cdot) W^\dagger]_{\pm}])e^{i\frac{\pi}{4}(1\pm 1)}, \label{eq:tau_def}
\end{align}
\end{subequations}
with $[\cdot,\cdot]_+ := \{\cdot,\cdot\},[\cdot,\cdot]_-:=[\cdot,\cdot]$ being the anticommutator and commutator, respectively. At this point, we remark that if $W = \Id^{\otimes N}$, then $\tau_{\Id^{\otimes N}, c, c}^{(m), -} = 0$ since for any pauli string $P$, $[P, \{P, \cdot\}] = 0$. Furthermore, the coefficients of $\text{Re}[K_{c,d}^{(m-2)}(0)],\text{Im}[K_{c,d}^{(m-2)}(0)]$ in 
Eq.~\eqref{eq:tau_def} evaluate to
\begin{align}\label{eq:problematic_sum_no_W}
&\tau_{\Id^{\otimes N}, c, d}^{(m), \pm} \mp (-1)^{m} \tau_{\Id^{\otimes N}, d, c}^{(m), \pm} = -2^me^{i\frac{\pi}{4}(1\pm 1)}\times\notag\\
&\begin{cases}
    [[P_c, P_d]_\pm, (\cdot)] &,m\text{ odd,}\\
    \{[P_c, P_d]_\mp, (\cdot)\}+2(\pm P_c(\cdot) P_d-P_d(\cdot) P_c) &, m\text{ even}.
\end{cases}
\end{align}
Therefore, for $m$ odd, from Eqs.~\eqref{eq:tau_def} and \eqref{eq:problematic_sum_no_W} it follows that $T_{\Id^{\otimes N}, c, d}^{(m + 2)}$ depends only on the sums 
\begin{align}
    &\sum_{(c, d) \in \mathcal{P}_{SE}: [P_c, P_d] = Q} \text{Re}[K_{c,d}^{(m)}(0)],\notag\\
    &\sum_{(c, d) \in \mathcal{P}_{SE}: \{P_c, P_d\} = Q} \text{Im}[K_{c,d}^{(m)}(0)],\notag
    \end{align}
instead of individually on $\text{Re}[K_{c,d}^{(m)}(0)],\text{Im}[K_{c,d}^{(m)}(0)]$. Consequently, the usual measurement strategy involving short-time evolution of an initial product state and a Pauli measurement is not sufficient for learning $K_{c,d}^{(m)}(0)$. To overcome this issue, we need to additionally apply a non-identity intermediate layer of gates $W$ --- this is in contrast to the Hamiltonian or Lindbladian learning \cite{haah2024learning,stilck2024efficient}, where no intermediate gates are required. The gate $W$ that we introduce will depend on the kernel coefficient $K_{a,b}^{(m)}(0)$ that we want to learn, as summarized in three cases in Table \ref{tab:W_gate}. In the remainder of this section, we will analyze these cases one by one.

\begin{table}[]
    \centering
    \begin{tabular}{l p{4.2cm}}
        \toprule
        \textbf{Target coefficient} & \multicolumn{1}{c}{\textbf{W}} \\
        \midrule
        \textbf{Case 1:} &  \\
        \hspace{1em} $\text{Re}[K_{a,b}^{(m)}(0)]$, $m$ even & \multicolumn{1}{c}{$\Id^{\otimes N}$} \\
        \hspace{1em} $\text{Im}[K_{a,b}^{(m)}(0)]$, $m$ even &  \\
        \hspace{1em} $\text{Re}[K_{a,a}^{(m)}(0)]$, $m$ even &  \\
        \midrule
        \textbf{Case 2:} &  \\
        \hspace{1em} $\text{Re}[K_{a,b}^{(m)}(0)]$, $m$ odd & \multicolumn{1}{c}{Single-qubit Pauli $P_w$:} \\
        \hspace{1em} $\text{Im}[K_{a,b}^{(m)}(0)]$, $m$ odd & \multicolumn{1}{c}{$\{P_w,P_aP_b\}=0$}  \\
        \midrule
        \textbf{Case 3:} &  \\
        \hspace{1em} $\text{Im}[K_{a,a}^{(m)}(0)]$, $m$ odd & \multicolumn{1}{c}{Single-qubit gate $S \cdot H$ on $\mathcal{S}_a$} \\
        \bottomrule
    \end{tabular}

    \caption{Intermediate $W$ gate needed to learn $K_{a,b}^{(m)}(0)$. The choice of $W$ gate depends on if $a$ and $b$ are the same, on whether the real and imaginary part is the target coefficient and on the parity of $m$. In case 2 $P_w$ needs to commute with one of $P_a,P_b$ and anticommute with the other: since $P_a\neq P_b$ there is a site where they differ. If both are non-identity at that site, let $P_w$ be the Pauli of $P_b$ at that site. If, say, $P_b$ is identity at that site, let $P_w$ be a Pauli different from $P_a$ at that site. In case 3 the gate $SH$ can be applied on any site in the support of $P_a$. Note that since $K_{a, b}(t) = K_{b, a}^*(-t)$, $\text{Re}[K_{a, a}^{(m)}(0)] = 0$ if $m$ is odd and $\text{Im}[K_{a, a}^{(m)}(0)]= 0$ if $m$ is even, and thus do not appear in the cases listed above.}
    \label{tab:W_gate}
\end{table}

\textbf{Case 1, $m$ is even}. For $m$ even, we do not need the intermediate gate layer i.e., we can set $W = \Id^{\otimes N}$. From Eq.~\eqref{eq:TWm_sum}, it follows that
\[
T_{\Id^{\otimes N}}^{(m)} = \sum_{P, Q } \xi_{P,Q}^{\Id^{\otimes N},(m)} P (\cdot) Q,
\]
where either $P, Q \in \mathcal{P}_{SE}$, $P = Q = \Id^{\otimes N}$ or $(P, Q) = ([P_c, P_d]_\pm, \Id^{\otimes N})$ or $ (\Id^{\otimes N},[P_c, P_d]_\pm)$ for $P_c, P_d \in \mathcal{P}_{SE}$. Furthermore, for $P_a, P_b \in \mathcal{P}_{SE}$, 
\begin{align}\label{eq:xi_case1}
    &\xi_{P_a, P_b}^{\Id^{\otimes N},(m)} = 2^{m+1} (\text{Re}[K_{a,b}^{(m - 2)}(0)]-i\text{Im}[K_{a,b}^{(m - 2)}(0)]), \nonumber \\
    &\xi_{P_b, P_a}^{\Id^{\otimes N},(m)} = 2^{m+1} (\text{Re}[K_{a,b}^{(m - 2)}(0)]+i\text{Im}[K_{a,b}^{(m - 2)}(0)]),
\end{align}
We can thus recover $\text{Re}[K_{a,b}^{(m - 2)}(0)],\text{Im}[K_{a,b}^{(m - 2)}(0)]$ provided we can learn $\xi_{P_a, P_b}^{\Id^{\otimes N},(m)},\xi_{P_b, P_a}^{\Id^{\otimes N},(m)}$. To learn the latter from $\text{Tr}[P_O T^{(m)}_{\Id^{\otimes N}}(P_I)]$, a possible approach is to perform process tomography on $T^{(m)}_{\Id^{\otimes N}}$. We remind the reader that performing full process tomography on an $N-$qubit hermiticity-preseving superoperator $\mathcal{M}$ expressible as $\mathcal{M}(X)= \sum_{P, Q  \in \mathcal{P}_N} s_{P, Q} P X Q$ would allow us to unambiguously learn the coefficients $s_{P, Q}$ \cite{chuang1997prescription}. However, since $T^{(m)}_{\Id^{\otimes N}}$ is a $N-$qubit superoperator, full process tomography will require $\sim \text{exp}(N)$ measurements as well as $\sim \text{exp}(N)$ classical postprocessing time \cite{chuang1997prescription}. Instead, by using the locality of the Pauli strings appearing in $T^{(m)}_{\Id^{\otimes N}}$ we can reduce both the number of measurements as well as the classical postprocessing time to $O(N)$, as we discuss next.

\begin{figure}
    \centering
\includegraphics[width=1.0\linewidth]{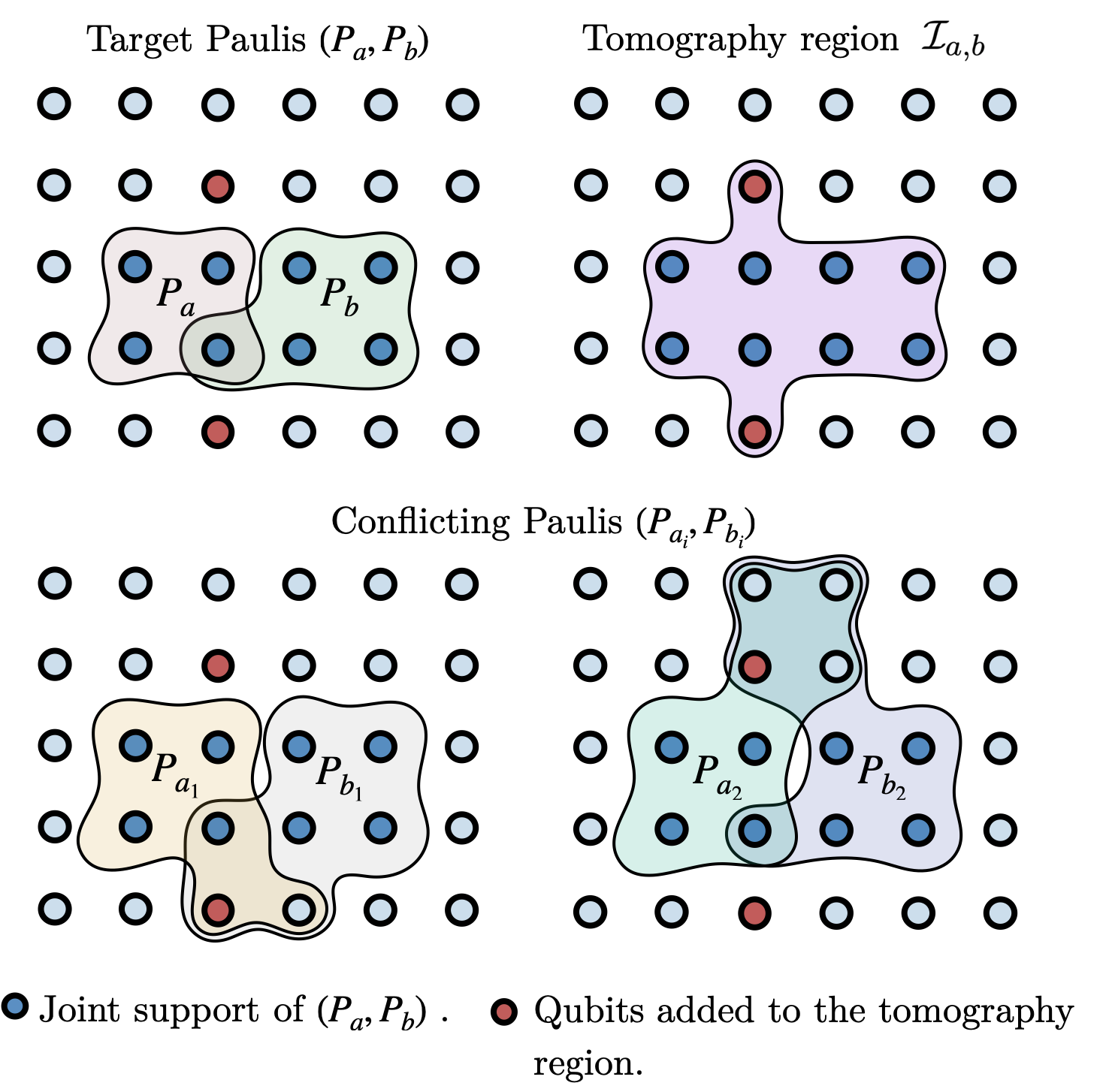}
    \caption{We show how to find a tomography support $\mathcal{I}_{a,b}$ that allows us learn the derivatives $K_{a,b}^{(m)}(0)$. The target Pauli strings $P_a,P_b$, shown in the top left panel, act on the joint support $\mathcal{S}_{a,b}$ (dark blue). The conflicting pairs of Pauli strings $(P_{a_1},P_{b_1}),(P_{a_2},P_{b_2})$, depicted in the bottom panels, satisfy Eqs.\eqref{eq:enlarged_support_cond_1},\eqref{eq:enlarged_support_cond_2}, so their kernel coefficients would mix with those of $K_{a,b}^{(m)}(0)$ if we only performed tomography on $\mathcal{S}_{a,b}$. We add one extra qubit from the region $\mathcal{S}_{a,b}^c$ that lies in the support of each pair (red), obtaining $\mathcal{Q}_{a,b}$. Our tomography region is $\mathcal{I}_{a,b}=\mathcal{S}_{a,b}\cup\mathcal{Q}_{a,b}$, the dark blue and red sites in the top right panel.}
    \label{fig:tomography_info_complete}
\end{figure}

Suppose we want to learn $\xi_{P_a, P_b}^{\Id^{\otimes N},(m)}$ for some distinct $P_a, P_b \in \mathcal{P}_{SE}$ without performing process tomography on all the qubits --- a first guess would be to perform process tomography on $\mathcal{S}_{a,b} = \text{supp}(P_a) \cup \text{supp}(P_b)$. However, there could be Pauli strings $P_c, P_d \in \mathcal{P}_{SE}$ such that 
\begin{align}\label{eq:enlarged_support_cond_1}
P_c |_{\mathcal{S}_{a, b}} = P_a|_{\mathcal{S}_{a, b}}, P_d|_{\mathcal{S}_{a, b}} = P_b|_{\mathcal{S}_{a, b}}, 
\end{align}

\noindent i.e., $P_c$ and $P_d$ are identical to $P_a$ and $P_b$ respectively on the region $\mathcal{S}_{a, b}$ and 
\begin{align}\label{eq:enlarged_support_cond_2}
P_c |_{\mathcal{S}^c_{a, b}} = P_d |_{\mathcal{S}^c_{a, b}},
\end{align}
i.e., $P_c$ and $P_d$ are identical to each other outside $\mathcal{S}_{a,b}$. Equivalently, these are the Pauli strings such that $\text{Tr}(P_O P_c P_I P_d)= \text{Tr}(P_O P_a P_I P_b)$ for any $P_I, P_O$ supported only on $\mathcal{S}_{a, b}$ and, consequently, we cannot distinguish the contribution of $P_c,P_d$ in $T^{(m)}_{\Id^{\otimes N}}$ from that of $P_a,P_b$ by tomography on $\mathcal{S}_{a,b}$. We will denote by $\mathcal{Y}_{\mathcal{S}_{a,b}}(a,b)$ the set of such pairs of Pauli strings i.e., Pauli strings in $\mathcal{P}_{SE}\setminus \{a,b\}$ that can't be distinguished from $P_a,P_b$ by tomography on $\mathcal{S}_{a,b}$. Indeed, performing process tomography on $\mathcal{S}_{a,b}$ can only unambiguously yield 
\begin{align}
    &\xi_{P_a,P_b}^{\Id^{\otimes N},(m)}+\sum_{(P_c, P_d) \in \mathcal{Y}_{\mathcal{S}_{a,b}}(a,b)}\xi_{P_c, P_d}^{\Id^{\otimes N},(m)},\notag
\end{align}
instead of $\xi_{P_a, P_b}^{\Id^{\otimes N},(m)}$. Since pairs of Pauli strings satisfying Eqs.\eqref{eq:enlarged_support_cond_1} and \eqref{eq:enlarged_support_cond_2} prevent us from estimating the desired coefficient directly, we call them \textit{conflicting} pairs of Pauli strings. Our strategy will be to slightly enlarge the support of the tomography region to $\mathcal{I}_{a,b}$ such that $|\mathcal{Y}_{\mathcal{I}_{a,b}}(a,b)|=\emptyset$, i.e. there are no further conflicting Pauli strings. To construct $\mathcal{I}_{a,b}$, we pick the smallest set of qubits $\mathcal{Q}_{a,b} $ outside $\mathcal{S}_{a, b}$ such that for every pair of Paulis $(P_c, P_d)\in \mathcal{Y}_{\mathcal{S}_{a,b}}(a,b)$, $\mathcal{Q}_{a, b}$ contains a qubit in the support of $P_c$ or $P_d$ but outside $\mathcal{S}_{a, b}$. Performing tomography on
\begin{align}
    &\mathcal{I}_{a,b}:=\mathcal{S}_{a, b} \cup \mathcal{Q}_{a, b }, \label{eq:Iab_def}
\end{align}
then allows us to uniquely obtain $\xi_{P_a,P_b}^{\Id^{\otimes N},(m)}$ since the additional qubits in $\mathcal{Q}_{a, b}$ allow us to distinguish the target Pauli pair $(P_a, P_b)$ from the Pauli string pairs that originally were conflicting with it. An example of constructing $\mathcal{I}_{a, b}$ is illustrated in Figure \ref{fig:tomography_info_complete}: here, the tomography region $\mathcal{S}_{a, b}$ is enlarged with one site from each conflicting pair of Pauli strings. We show in  Appendix \ref{app:measurement_supports} that the number of additional qubits $|\mathcal{Q}_{a, b}|\leq |\mathcal{Y}_{\mathcal{S}_{a,b}}(a,b)|$ is upper bounded by $\mathfrak{d}$. Since $|\mathcal{S}_{a,b}|=O(k_{SE})$ and we add one extra qubit from each pair in $\mathcal{Y}_{\mathcal{S}_{a,b}}(a,b)$, we get that the region $\mathcal{I}_{a,b}$ has size $|\mathcal{I}_{a,b}|=O(k_{SE})+O(\mathfrak{d})$.

Recall that for $m$ even, $K_{a,a}^{(m)}(0)$ is real, and we can learn it from $\xi_{P_a,P_a}^{\Id^{\otimes N},(m)} = 2^{m+1}\text{Re}[K_{a,a}^{(m)}(0)]$. In this case we only need to enlarge the support by $|\mathcal{Q}_{a,a}|\leq \mathfrak{d}$ qubits.

\textbf{Case 2, $m$ is odd and $a \neq b$}. For $m$ odd, we will introduce an intermediate unitary $W$. We will consider $W$ to be a single-site Pauli matrix, with the corresponding Pauli string $P_w$ that is identity everywhere except at that site. Suppose we want to learn $\text{Re}[K_{a,b}^{(m)}(0)],\text{Im}[K_{a,b}^{(m)}(0)], a\neq b$. Then we can pick $P_w$ that anticommutes with $P_a$ and commutes with $P_b$: if there is a site in $P_a$ that does not overlap with $P_b$, let $W$ be a Pauli matrix different from $P_a$ at that site; if $P_a,P_b$ overlap everywhere, since $P_a\neq P_b$ there is a site where they differ, so let $W$ be $P_b$ at that site. We check whether two Pauli strings $P_a,P_w$ commute or anticommute using the function $\chi(a,w)=0$ if $[P_a,P_w]=0$ and $\chi(a,w)=1$ if $\{P_a,P_w\}=0$. Notice that conjugating by $P_w$ only adds a phase: $P_wP_aP_w^\dagger = (-1)^{\chi(a,w)}P_a$. Then, if we use an input Pauli $P_I$, the coefficients of $\text{Re}[K_{a,b}^{(m)}(0)],\text{Im}[K_{a,b}^{(m)}(0)]$ in Eq.~\eqref{eq:tau_def} evaluate to
\begin{align}\label{eq:sum_W}
&(\tau_{P_w a, b}^{(m), \pm} \mp (-1)^{m} \tau_{P_w, b, a}^{(m), \pm})(P_I) = (-1)^{\chi(w,I)+\chi(w,b)}\times\notag\\
&\hspace{40pt}\frac{2^m-2}{2^{m-1}}(\tau_{\Id^{\otimes N}, a, b}^{(m-1), \pm} \mp (-1)^{m-1} \tau_{\Id^{\otimes N}, b, a}^{(m-1), \pm})(P_I).
\end{align}
Thus, introducing such a $W$ reduces the problem to {case 1}, up to phases that can be corrected in postprocessing. From Eq.~\eqref{eq:TWm_sum}, it follows that
\[
T_{P_w}^{(m)}(P_I) = (-1)^{\chi(w,I)}\sum_{P, Q } \xi_{P,Q}^{P_w,(m)} P P_I Q, \label{eq:T_Pw_simplified}
\]
where either $P, Q \in \mathcal{P}_{SE}$, $P = Q = \Id^{\otimes N}$ or $(P, Q) = ([P_c, P_d]_\pm, \Id^{\otimes N})$ or $ (\Id^{\otimes N},[P_c, P_d]_\pm)$ for $P_c, P_d \in \mathcal{P}_{SE}$. In this case, notice that before we invert this system of equations we need to correct the samples $\frac{1}{2^N}\tr(P_OT_{P_w}^{(m)}(P_I))\mapsto \frac{1}{2^N}\tr(P_OT_{P_w}^{(m)}(P_I))(-1)^{\chi(w,I)}$. Then we can obtain the coefficients for $P_a,P_b\in \mathcal{P}_{SE}$,
\begin{align}\label{eq:xi_case2}
    &\xi_{P_a, P_b}^{P_w,(m)} = (2^{m+1}-4) (-1)^{\chi(w,b)}\times \notag\\
    &\hspace{50pt}(\text{Re}[K_{a,b}^{(m - 2)}(0)]-i\text{Im}[K_{a,b}^{(m - 2)}(0)]),
\end{align}
and $\xi_{P_b, P_a}^{P_w,(m)} =  \xi_{P_a, P_b}^{P_w,(m),*}$. We can thus recover $\text{Re}[K_{a,b}^{(m - 2)}(0)],\text{Im}[K_{a,b}^{(m - 2)}(0)]$, for $a\neq b$ and $m$ odd, from $\xi_{P_a, P_b}^{P_w,(m)},\xi_{P_b, P_a}^{P_w,(m)}$. From Eq.\eqref{eq:T_Pw_simplified} one can see that the same analysis of the region of tomography carried out in {case 1} holds, so it suffices to do tomography on $\mathcal{I}_{a,b}$.
\begin{figure}
    \centering
    \includegraphics[width=0.9\linewidth]{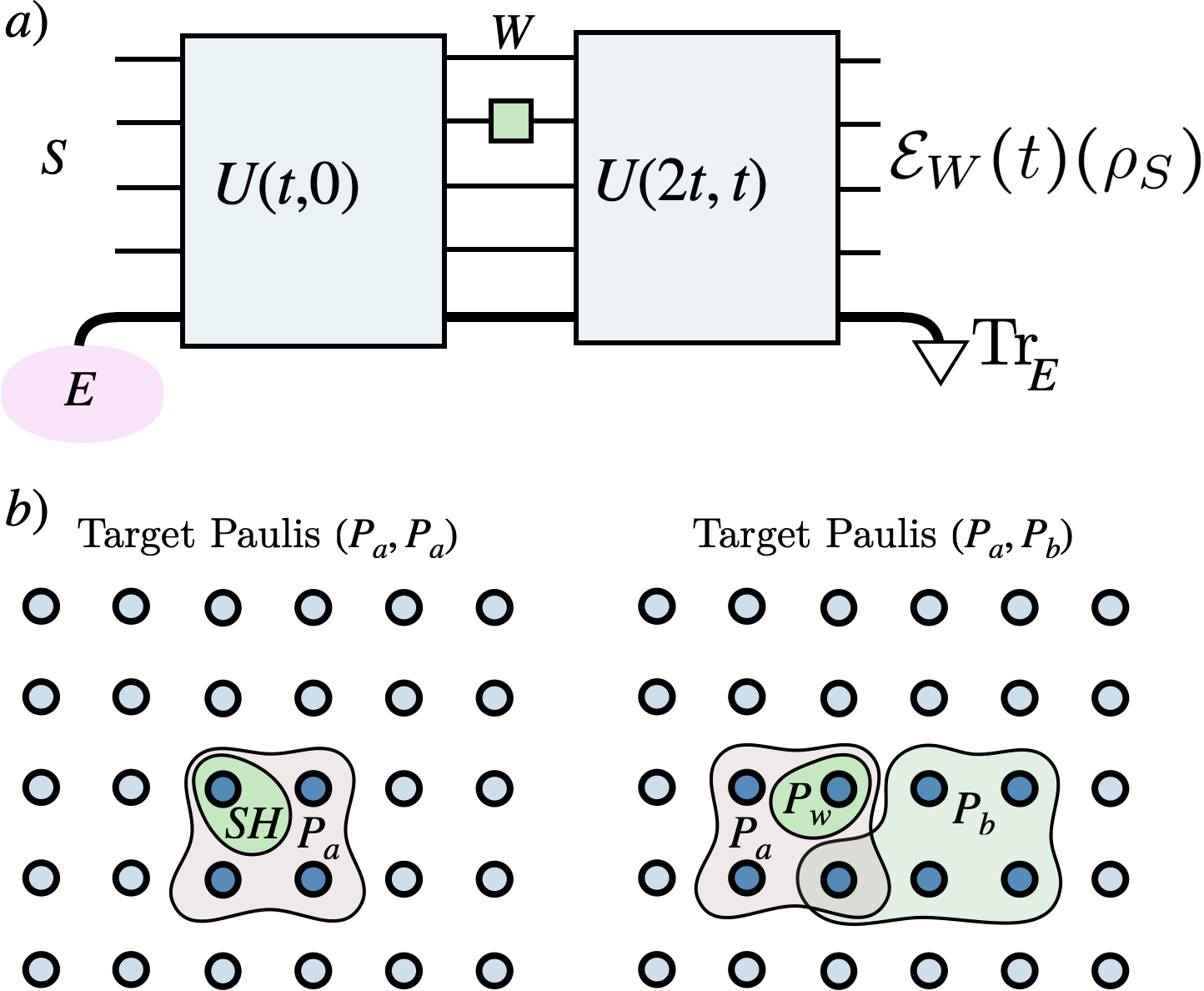}
    \caption{a) A single-qubit gate $W$ is applied in the middle of the time evolution. b) (Left) If $P_a=P_b$, $W$ can be chosen to be $S\cdot H$ on any site in the support of $P_a$. (Right) If $P_a\neq P_b$, $W$ can be chosen to be a Pauli string $P_w$ that is non-identity only on one site of $P_a$ or $P_b$, such that it commutes with one and anticommutes with the other.}
    \label{fig:site_W}
\end{figure}

\textbf{Case 3, $m$ is odd and $a = b$}. For $m$ odd, $\text{Im}[K_{a,a}^{(m)}(0)]$ vanishes from $T_{P_w}^{(m)}$ for the same reason as in $T_{\Id^{\otimes N}}^{(m)}$. Instead, we will pick a site on the support of $P_a$ and choose $W$ to be a single-qubit Clifford gate on that site of the form $S\cdot H$, a product of the phase gate $S$ and the Hadamard $H$. This has the nice property of cycling through non-identity Pauli matrices upon conjugation:
\begin{align}
    & \sigma^x \to \sigma^z \to \sigma^y \to \sigma^x,
\end{align}
where $\to$ applies $SH(\cdot)(SH)^\dagger$. The coefficient of $\text{Im}[K_{a,a}^{(m)}(0)]$ in Eq.\eqref{eq:TW_explicit_expression} evaluates to
\begin{align}
    &\tau_{SH,a,a}^{(m),+}(P_I)=-i(2^m-2)\times \notag\\
    &\hspace{40pt}(\{P_a\tilde{P}_a,\tilde{P}_I\} + P_a\tilde{P}_I\tilde{P}_a-\tilde{P}_a\tilde{P}_IP_a),
\end{align}
where we introduced the notation $\tilde{P}_a = WP_aW^\dagger$, which is a Pauli string, since $W$ is a Clifford gate. Our particular choice of $W$ means $\tilde{P}_a$ has no global phase: it is equal to $P_a$ except at one site, where the Pauli is shifted as above. We want to obtain $\text{Im}[K_{a,a}^{(m)}(0)]$ from the coefficient of $P_a\tilde{P}_I\tilde{P}_a$, but there might be other terms of this form in $T_{SH}^{(m)}(P_I)$. The rest of coefficients in Eq.\eqref{eq:TW_explicit_expression} evaluate to
\begin{align}
    &(\tau_{SH,c,d}^{(m),\pm}\pm\tau_{SH,d,c}^{(m),\pm} )(P_I) = -e^{i\frac{\pi}{4}(1\pm1)}\times \notag\\
    &\hspace{6pt}([[\tilde{P}_c,\tilde{P}_d]_\pm,\tilde{P}_I]+[[P_c,P_d]_\pm,\tilde{P}_I]+\notag\\
    &\hspace{10pt}(2^m-2)([P_c,[\tilde{P}_d,\tilde{P}_I]_\pm]\pm [P_d,[\tilde{P}_c,\tilde{P}_I]_\pm])),
\end{align}
so the terms $P_c\tilde{P}_I\tilde{P}_d,\tilde{P}_d\tilde{P}_IP_c$,$P_d\tilde{P}_I\tilde{P}_c,\tilde{P}_c\tilde{P}_IP_d$ might mix with $P_a \tilde{P}_I\tilde{P}_a$. The first and third term require $P_c =P_a,P_d=P_a$, but $P_c\neq P_d$, so this term does not appear. The second and third are satisfied for $P_c=WP_aW^\dagger,P_d=W^\dagger P_aW$ and vice versa (recall that the pair $P_c,P_d$ is unordered). From Eq.\eqref{eq:TW_explicit_expression}, it follows that
\begin{align}
    &T_{SH}^{(m)}(P_I)=\sum_{P,Q}\xi_{P,Q}^{SH,(m)}P\tilde{P}_IQ,
\end{align}
where either $P,W^\dagger QW\in\mathcal{P}_{SE}$, $W^\dagger PW,Q\in\mathcal{P}_{SE}$ $P=Q=\Id^{\otimes N}$ or $(P,Q)=(RS,\Id^{\otimes N})$ or $(\Id^{\otimes N},RS)$ where either $R,W^\dagger RW,S,W^\dagger SW$ are in $\mathcal{P}_{SE}$. Note that $\tilde{P}_I$ on the right hand side requires that either we modify the inversion procedure to include the extra map $W(\cdot)W^\dagger$ or we can simply prepare $W^\dagger P_I W$ instead of $P_I$. We recover $\text{Im}[K_{a,a}^{(m)}(0)]$ from the coefficients\\
\begin{align}\label{eq:xi_case3}
    &\xi_{P_a,\tilde{P}_a}^{SH,(m)}=(2^m-2)(-\text{Re}[K_{c,d}^{(m)}(0)]+\notag\\
    &\hspace{20pt}i(\text{Im}[K_{c,d}^{(m)}(0)]-\text{Im}[K_{a,a}^{(m)}(0)])),
\end{align}
and $\xi_{P_a,\tilde{P}_a}^{SH,(m)}=\xi_{\tilde{P}_a,P_a}^{SH,(m),*}$, where we let $(P_c,P_d)=(WP_aW^\dagger,W^\dagger P_a W)$. While we can remove $\text{Re}[K_{c,d}^{(m)}(0)]$ using the conjugate coefficient, we need to manually remove $\text{Im}[K_{c,d}^{(m)}(0)]$ in classical postprocessing. Since $P_c\neq P_d$, we will already have learned this coefficient from {case 2}. The tomography region to learn $\xi_{P_a,\tilde{P}_a}^{SH,(m)}$ is $\mathcal{I}_{a,\tilde{a}}$, where $\tilde{a}$ indicates Pauli string $\tilde{P}_a$. This, again, satisfies $|\mathcal{I}_{a,\tilde{a}}|\leq O(k_{SE})+O(\mathfrak{d})$, as detailed in Appendix \ref{app:measurement_supports}. 

We have seen that, in order to learn the derivatives of $K_{a,b}(t)$ at $t=0$ we need to introduce a gate $W$ halfway through time evolution and perform tomography on a region $\mathcal{I}_{a,b}$. While \textit{process} tomography on $\mathcal{I}_{a,b}$ allows us to learn the Hermiticity-preserving maps $T_W^{(m)}$\cite{chuang1997prescription}, it is sufficient to use a procedure with the sample requirement of \textit{state} tomography, as shown in Lemma \ref{lem:Ainverse}. The key is that we do not need to learn all the coefficients in $T_W^{(m)}$ because we are only interested in $\xi_{P,Q}^{W,(m)},\xi_{Q,P}^{W,(m)}$, for $P,Q$ as described in Eqs.\eqref{eq:xi_case1},\eqref{eq:xi_case2},\eqref{eq:xi_case3}. Thus, we only perform the necessary measurements to learn these coefficients, so the number of configurations $(\rho,O,W)$ that we need to prepare to learn the derivatives of $K_{a,b}(t)$ is at most $ 3^{|\mathcal{I}_{a,b}|}\cdot 2$, where $|\mathcal{I}_{a,b}|\leq O(k_{SE})+O(\mathfrak{d})$. In particular, this does not depend on the locality of the system Hamiltonian, $k_S$, which could be much larger.

\subsubsection{One- and two-qubit jump operators}
It is common for dissipation in experimental settings to be given by few-body interaction terms, so we consider now the case that $\mathcal{P}_{SE}$ contained at most two-qubit Pauli strings. Then, if either $P_a$ or $P_b$ was at least $2$-local, there would be no pairs of Pauli strings in $\mathcal{P}_{SE}$ satisfying Eqs.\eqref{eq:enlarged_support_cond_1}\eqref{eq:enlarged_support_cond_2}, since they would need to be at least $3$-local, so it would suffice to measure on $\mathcal{S}_{a,b}$. If $P_a,P_b$ are both single-qubit it is possible to have such pairs, as illustrated in Figure \ref{fig:example_single_qubit}.

\begin{figure}
    \centering
    \includegraphics[width=0.75\linewidth]{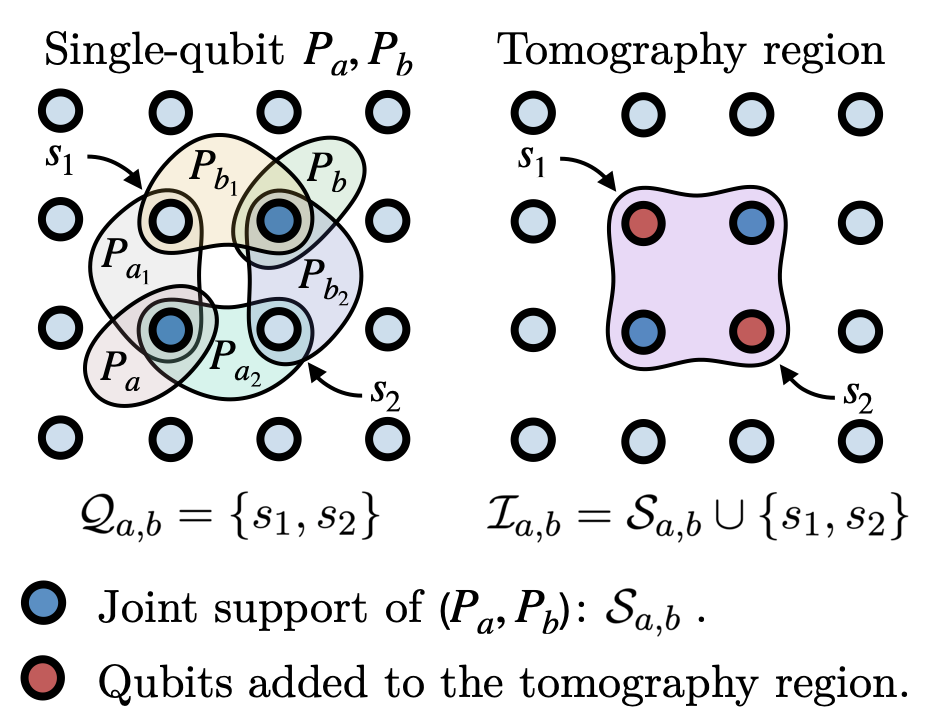}
    \caption{Measurement supports to learn the derivatives of a kernel $K_{a,b}(t)$ with $P_a,P_b$ single-qubit Paulis when $\mathcal{P}_{SE}$ have at most $2$-qubit jump operators. The pairs $(P_{a_1},P_{b_1}),(P_{a_2},P_{b_2})$ consist of $2$-qubit jump operators satisfying Eqs.\eqref{eq:enlarged_support_cond_1}, \eqref{eq:enlarged_support_cond_2}. The extra qubits that need to be added to the tomography region are $\mathcal{Q}_{a,b}=\{s_1,s_2\}$, so the tomography region is $\mathcal{I}_{a,b}=\mathcal{S}_{a,b}\cup \mathcal{Q}_{a,b}$.}
    \label{fig:example_single_qubit}
\end{figure}

\subsubsection{Parallelizing the measurements}
\begin{figure}
    \centering
    \includegraphics[width=0.99\linewidth]{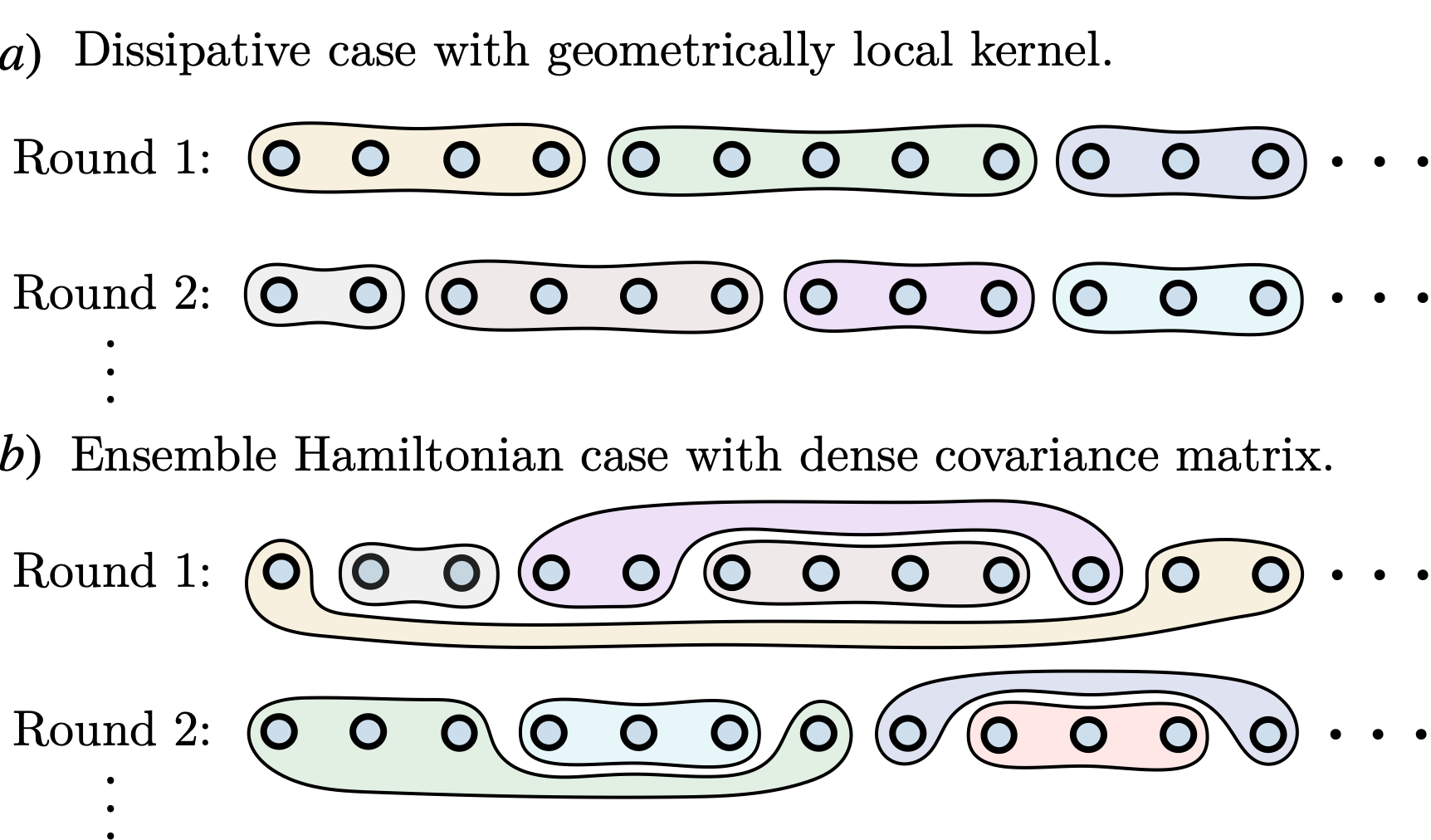}
    \caption{Parallelized measurements in a 1D system. a) Dissipative case, with geometrically-local kernels. Tomography on regions $\mathcal{I}_{a,b},\mathcal{I}_{c,d},...$ that do not overlap can be performed simultaneously. In order to perform tomography on all required regions, several rounds of measurement with different regions will be necessary. By geometric locality, each $\mathcal{I}_{a,b}$ only overlaps with a constant number of regions, so the total number of rounds is independent on the system size. b) Ensemble Hamiltonian case, with dense covariance matrix. Since the covariance between terms acting on far away regions of the lattice is nonzero, the resulting tomography regions within each round will not be geometrically local. 
    Instead of a sparse kernel we now have a dense covariance matrix with $O(N^2)$ parameters, so the number of rounds we need to perform in this case grows like $O(N)$.}
    \label{fig:parallelization}
\end{figure}
For tomography on a region $\mathcal{I}_{a,b}$ we only need to prepare states, introduce $W$ and measure observables all within the support $\mathcal{I}_{a,b}$. If we have several regions $\mathcal{I}_{a,b},\mathcal{I}_{c,d},\mathcal{I}_{e,f},...$ that don't overlap we can estimate their observables simultaneously, as they act on different supports and, thus, commute. As shown in Figure \ref{fig:parallelization} for the $1$D case we can perform several rounds of measurement, measuring several non-overlapping regions at each round, until we have performed tomography on all necessary regions. \\

In the dissipative case, with geometrically-local kernels, we know that each Pauli string $P_a,P_b\in \mathcal{P}_{SE}$ is supported on at most $k_{SE}$ sites and kernels with Pauli strings that are far away vanish, so we get an effective diameter $a_0=O(k_{SE})$ for $\mathcal{S}_{a,b}$ and for $\mathcal{I}_{a,b}$. Each $\mathcal{I}_{a,b}$ then overlaps with only a constant number of regions, so the number of necessary rounds of measurment is independent of $N$, as detailed in Appendix \ref{app:parallelization}.\\

In the ensemble Hamiltonian case, while the Pauli strings $P_a,P_b\in \mathcal{P}_{SE}$ are geometrically local, we consider the case where there might be all-to-all correlated errors, i.e. the covariance matrix $\Sigma$ is dense. This means that even if $P_a,P_b$ act on distant regions of the lattice, $\Sigma_{a,b}\neq 0$, so each row of $\Sigma$ has $O(N)$ nonzero terms. Then each
$\mathcal{I}_{a,b}$ overlaps with $O(N)$ other $\mathcal{I}_{c,d}$, so we need to perform $O(N)$ measurement rounds.

\section{\label{sec:numerics}Numerical study}

\subsection{\label{sec:numerics_nonmarkovian} Free fermions coupled to a bath with memory}
We consider a system $S$ of $N$ fermionic modes with annihilation operators $a_1,...,a_N$:
\begin{align}
    H_S&=-J\sum_{i=1}^{N-1} (a_i^\dagger a_{i+1}+ a_i^\dagger a_{i+1}^\dagger) +h.c.+2h\sum_{i=1}^Na_i^\dagger a_i. \label{eq:Ising_chain}
\end{align}
The system is bilinearly coupled to an environment $E$ consisting of $N$ fermionic modes with annihilation operators $b_{1},...,b_{N}$ through a Hamiltonian $V_{SE}$ and the environment dynamics is given by on-site dissipation $\mathcal{D}$:
\begin{align}
&V_{SE}=\sum_{i,j=1}^Nv_{ij}a_i(b_{j}^\dagger+b_{j}) +h.c.,\\
    &\mathcal{D}(\rho)=\sum_{i=1}^N\gamma_i(b_{i}\rho b_{i}^\dagger-\frac{1}{2}\{b_{i}^\dagger b_{i},\rho\}).
\end{align}
The bath is initially assumed to be in the vacuum, $\gamma = |0\rangle\langle 0|$. After tracing out the bath, the system dynamics follows a non-Markovian evolution \cite{tamascelli2018nonperturbative}, resulting in the memory kernels:
\begin{align}
    &K_{ij}(t)=\sum_{l=1}^Nv_{il}^*v_{jl}\tr(a_l^\dagger(t)a_l(0)\gamma)
\end{align}
\begin{figure}
    \centering
    \includegraphics[width=1\linewidth]{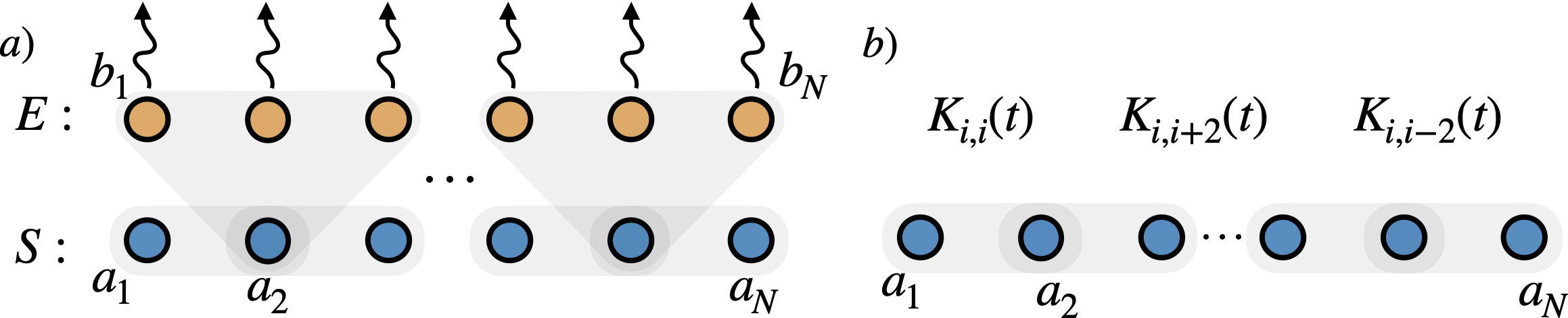}
    \caption{Fermionic model leading to non-Markovian dynamics on the system $S$. a) The fermionic modes of the system interact locally among themselves and with the environment modes $E$. The environment modes undergo dissipation. This leads to effective non-Markovian dynamics on the system, i.e. upon tracing out the environment. b) The environment modes have been traced out and their data is equivalently specified by the kernels $K_{ij}(t)$.}
    \label{fig:numerics_fig}
\end{figure}
We will choose $v_{ij}=\delta_{i+1,j}-\delta_{i,j+1}$. Thus, we obtain interactions between the system and the environment as given in Figure \ref{fig:numerics_fig} (left) and if we trace the environment, we can equivalently describe it by the kernels (right). Since the Hamiltonian is free-fermionic and the jump operators are linear in the creation/annihilation operators, time traces of observables can be simulated efficiently \cite{bravyi2011classical}. 

Following the learning algorithm outlined in the main text, we recover the coefficients of the system and kernel derivatives at $t=0$. We measure the observable for $10$ evenly-spaced timesteps between $t_{\min}=10^{-3}$ and $t_{\max}=10^{-1}$ for a certain number of shots $S$, fit a polynomial of degree $3$ and estimate the desired parameters. The absolute error in the estimated parameters decreases like $1/\sqrt{S}$ with the number of shots  per timestep $S$ in Figure \ref{fig:recovered_error_nonmarkovian} b). Since the scaling with $1/\sqrt{S}$ is the same regardless of the system size, as seen  in Figure \ref{fig:recovered_error_nonmarkovian} c), the absolute error in the estimate parameters does not increase with system size, as expected. The logarithmic scaling with the system size in Proposition \ref{prop:non_markovian_main} arises from the estimation of $O(N)$ parameters to a desired precision. In this numerical example we assume translational invariance for the system parameters and the kernels, so we only estimate one instance of each. The scaling we obtain predicts comparable accuracies in $h,J$ as obtained for Lindbladian learning \cite{stilck2024efficient}, as both Hamiltonian learning and Lindbladian learning are based on estimating the first derivative of the observable time trace. The constant term of the kernel and its first deriative at $t=0$ are obtained by estimating the second and third derivative of the observable time trace at $t=0$, which are harder to estimate with the same number of samples \cite{stilck2024efficient}, so we obtain coarser estimates. From these derivatives of the kernels we can estimate the environment bandwidth $\gamma$ and the system-environment coupling $v$ using Eq. \eqref{eq:K_ansatz_MFD}. As we can see in Figure \ref{fig:recovered_coefs_nonmarkovian}, the errors in the recovered coefficients do not scale with the system size $N$.

\begin{figure*}
    \centering
    \includegraphics[width=\linewidth]{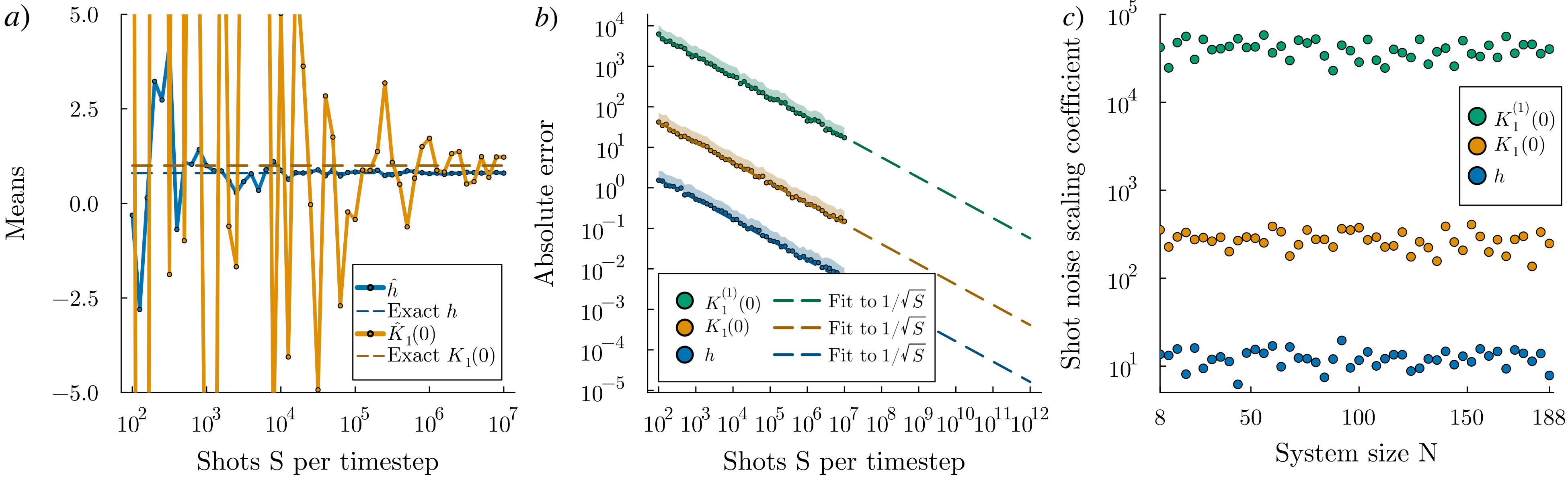}
    \caption{Error in the recovered coefficients for the transverse-field Ising model coupled to a non-Markovian bath, with coefficients $J=0.2,h=0.8, v_i=1,\gamma_i=0.9$. The values of the time traces are computed exactly for $10$ evenly spaced timesteps between $t_{\min}=10^{-3}$ and $t_{\max}=10^{-1}$. Then, for each timestep $S$ shots are drawn from the Bernoulli distribution, obtaining a noisy time trace with projection noise that is fit to a polynomial to estimate the desired parameter. The shaded regions indicate the variance in the absolute error for a fixed $S$, obtained by estimating the error in each parameter for $100$ different batches. a) Convergence of the estimate for $h$ and for the constant term of a kernel, $K_1(0)$, with the number of shots per timestep $S$. We fixed $N=120$. b) Absolute error in $h$ ($|h|=0.8$), the constant term, $K_1(0)$ ($|K_1(0)|=1$), and the first derivative of a kernel at $t=0$, $K_1^{(1)}(0)$ ($|K_1^{(1)}(0)|=0.45$) using $S$ shots per timestep and fixing $N=120$. The dashed lines represent fits to $1/\sqrt{S}$, as expected from Hoeffding's inequality. c) 
    More precisely, from plot $b)$ we expect that the absolute error will scale like $\sigma/\sqrt{S}$, where the standard deviation $\sigma$ is the shot noise scaling coefficient. We plot it for the  magnetic field, $h$, the constant term, $K_1(0)$, and the first derivative of the kernel $K_1^{(1)}(0)$ and see that it does not depend on the system size. Note that, since we assume translational invariance, we only need to estimate one instance of each desired parameter. In Proposition \ref{prop:non_markovian_main} we see a logarithmic scaling with $N$ because we want to estimate all $O(N)$ parameters correctly.}
    \label{fig:recovered_error_nonmarkovian}
\end{figure*}

\begin{figure}
    \centering
    \includegraphics[width=0.9\linewidth]{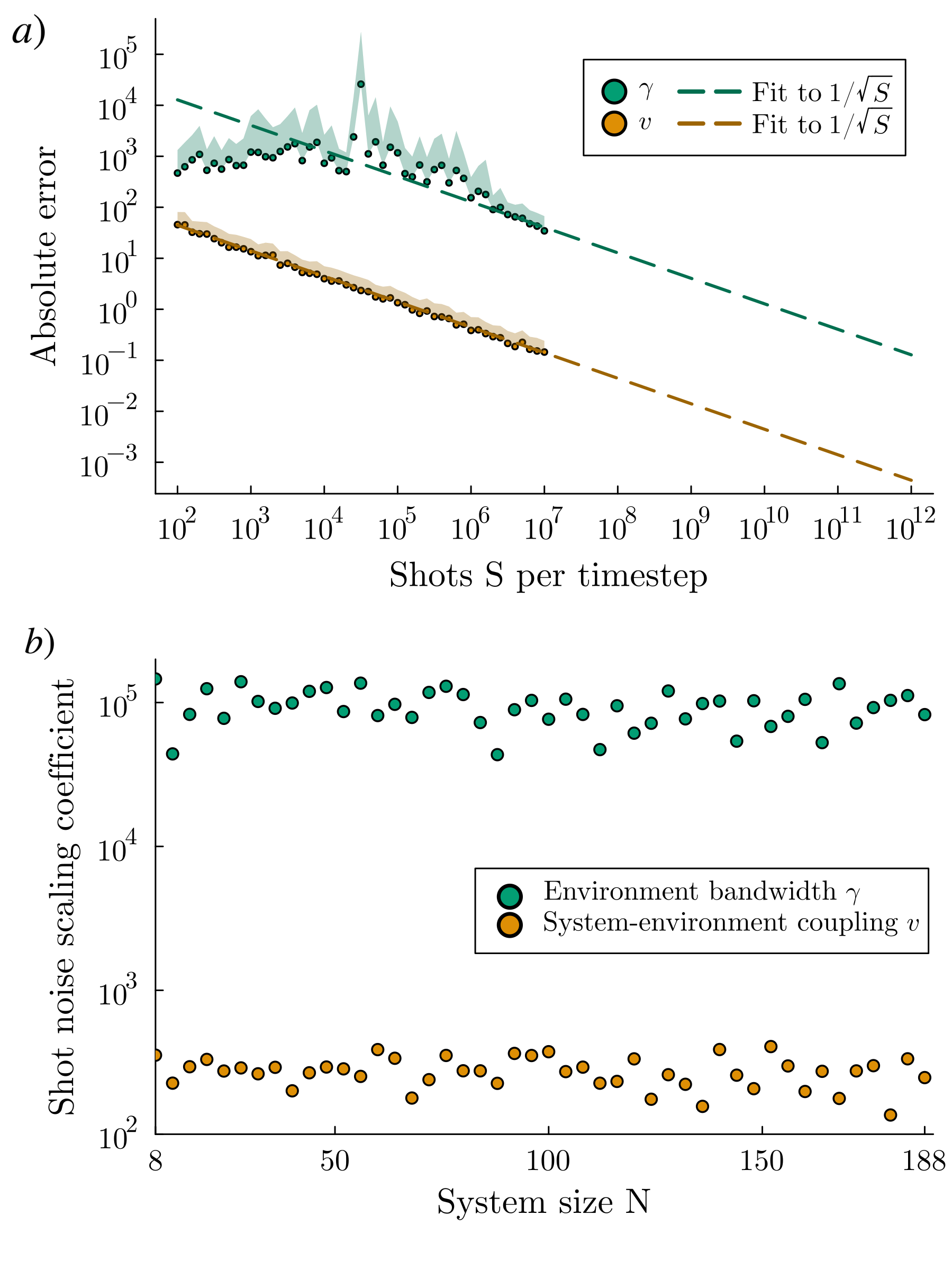}
    \caption{a) Error in the estimates of the coupling strength between system and environment, $v=1$, and the decay rate, $\gamma = 0.9$. The estimates are obtained from the translationally-invariant system in Figure \ref{fig:recovered_error_nonmarkovian} using Eq. \eqref{eq:K_ansatz_MFD}: $v$ is obtained from $K_1(0)$, while $\gamma$ is obtained from $K_1^{(1)}(0)$ using the estimate for $v$. The dashed lines represent fits to $1/\sqrt{S}$: while the error in $v$ decreases as expected, the error in $\gamma$ only decreases once we have an estimate for $v$ that is commensurate with its exact value. We thus fit the error for $\gamma$ once the error in $v$ goes below $\frac{v}{2}$, in this case $S\in[10^6,10^7]$, and see the expected decrease. The system size is $N=120$.
    b) The offset in the graphs of a), i.e. the scaling coefficients of the fit $1/\sqrt{S}$, for the bandwidth $\gamma$ and system-environment coupling $v$, does not depend on the system size. Although reducing the error in the bandwidth, as seen in a), requires a larger number of samples, the errors do not scale with the system size, consistent with Proposition \ref{prop:non_markovian_main}.} 
    \label{fig:recovered_coefs_nonmarkovian}
\end{figure}

\subsection{\label{sec:numerics_ensemble} Transverse field Ising model}
We consider a system $S$ of $N$ spin-$\frac{1}{2}$ particles in 1D evolving under the transverse-field Ising Hamiltonian:
\begin{align}
    H_S&=-J\sum_{i=1}^{N-1}\sigma_i^x\sigma_{i+1}^x-h\sum_{i=1}^N\sigma_i^z,
\end{align}
where $h,J$ are given by jointly Gaussian random variables:
\begin{align}
    &(h,J)^T = B\cdot (Z_1,Z_2)^T+\mu^T,
\end{align}
where $Z_i\sim \mathcal{N}(0,1)$, the covariance matrix is $\Sigma = BB^T$ and the means are $\mu = (\mathbb{E}(h), ...,\mathbb{E}(J))$.

Following the measurement strategy outlined in Section \ref{sec:numerics_nonmarkovian} for the constant term of the kernel, we obtain estimates for the covariance matrix. We evolve each observable for $10$ evenly-spaced timesteps between $t_{\min}=10^{-3}$ and $t_{\max}=10^{-1}$, each time drawing a new set of random parameters for the transverse field Ising model and obtaining a projective measurement outcome. We repeat this for $S$ shots per timestep and fit a polynomial of degree $2$ to obtain the required derivatives. The absolute error in estimating the system parameters and covariances decreases with $1/\sqrt{S}$ in Figure \ref{fig:ensemble_shots} b). In Figure \ref{fig:ensemble_shots} c) we see that this scaling does not increase with $N$, so the absolute error does not increase with system size. Recall that Proposition \ref{prop:hamiltonian_ensemble} is only valid for time $t=O((\log(N)\log(1/\epsilon))^{-1/2})$. In practice this is sufficient for system sizes accessible to current devices: using the same $t_{\max}=10^{-1}$ for all $N$ we do not see an increase in the shot noise scaling coefficient. The scaling we obtain predicts comparable accuracies in $h,J$ is similar to that obtained in the dissipative case. The covariance matrix is obtained from the second derivative of the observable, so we obtain similar precision to the constant term in the kernel for the dissipative case, as expected. 

\begin{figure*}
    \centering
    \includegraphics[width=\linewidth]{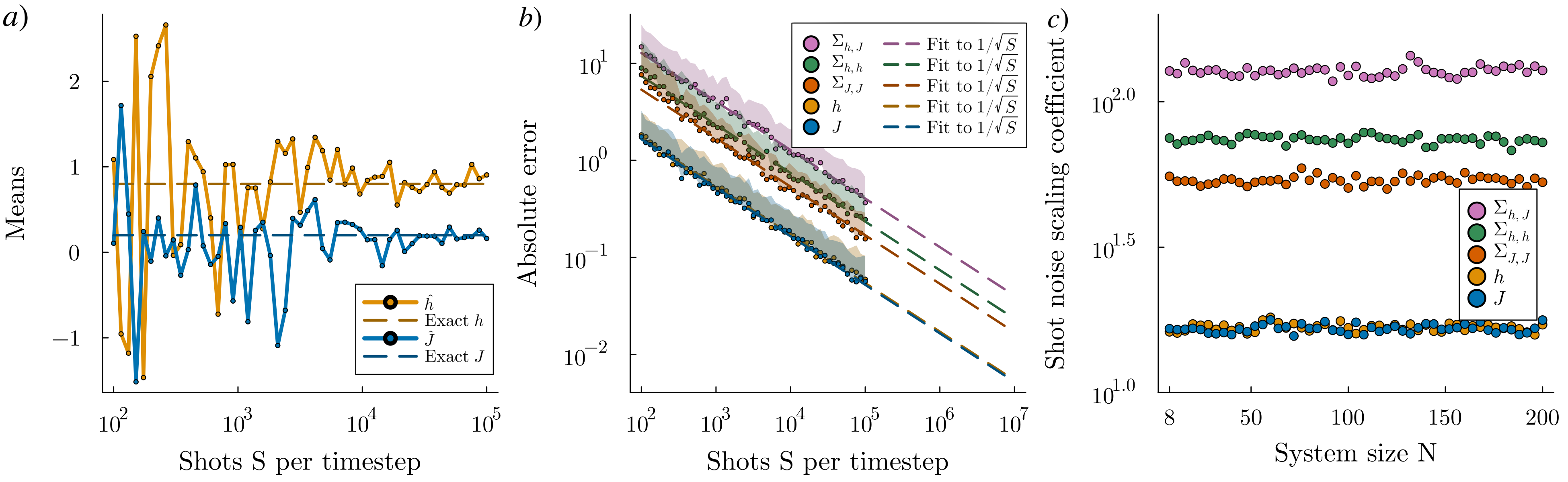}
    \caption{Recovery of the mean and covariance of the Hamiltonian coefficients for ensemble transverse field Ising Hamiltonians with dense $\Sigma$. We fix $J=0.2, h=0.8, \Sigma_{hh}=0.6, \Sigma_{hJ} = 0.3, \Sigma_{JJ}=0.7$. a) Convergence of $h,J$ towards the exact values in terms of the number of shots per timestep $S$, with fixed $N=200$. For each observable we measure $10$ timesteps between $t_{\min}=10^{-3}$ and $t_{\max}=10^{-1}$ to obtain the polynomial fit. 
    b) Absolute error in $h,J$ and the entries of the covariance matrix in terms of the number of shots per timestep $S$, with fixed $N=200$. The dashed lines are fits to $1/\sqrt{S}$, and we see the expected scaling from Hoeffding's inequality. c) More precisely, from plot $b)$ we expect that the absolute error will scale like $\sigma/\sqrt{S}$, where the standard deviation $\sigma$ is the shot noise scaling coefficient. We plot it for $h,J$ and their covariances and see that it does not depend on the system size. Note that, since we assume translational invariance, we only need to estimate one instance of each desired parameter. In this case the covariance matrix is dense (with many parameters being equal, by translational invariance), so in Proposition \ref{prop:hamiltonian_ensemble} we see a scaling with $N \log(N)$ because we want to estimate all $O(N^2)$ parameters correctly. }
    \label{fig:ensemble_shots}
\end{figure*}

\section{\label{sec:polynomial_fit}Sample complexity analysis}
From the measurement procedure we obtain samples for the time traces $B_{W,(O,I)}(t)=\frac{1}{2^N}  \tr(P_O \mathcal{E}_{W}(t)(P_I))$ for $t$ in an interval $[t_0,t_{\max}]$, which we determine below. We will use the robust interpolation method from \cite{stilck2024efficient} to obtain a polynomial fit to each time trace. They use a Lieb-Robinson bound for the Heisenberg evolution of local observables. In Appendix \ref{app:lieb_robinson_nm} we explain how it can be adapted to the non-Markovian case to obtain the following result.

\begin{lemma}[Adapted Theorem D.2, \cite{stilck2024efficient}] \label{lem:franca_liebrobinson}
    Let $\mathcal{H}_\Gamma(t)$ be a geometrically local Hamiltonian acting on a system with a $D$-dimensional regular lattice $\Gamma$ with constant $g$ and a free bosonic environment with finite memory. Moreover, let $t_{\max},\epsilon >0$ be given and $O_Y$ an observable such that $||O_Y||\leq 1$ and $O_Y$ is supported on a constant number of qubits on the system. Then there is a polynomial $p$ of degree
    \begin{align}
        &d=\tilde{O}\bigg(\log^D\bigg(\frac{\exp(vt_{\max})-1}{\epsilon}\bigg)t_{\max}\log\bigg(\frac{1}{\epsilon}\bigg)\bigg),\label{eq:bound_d}
    \end{align}
    such that for all $0\leq t\leq t_{\max}$:
    \begin{align}
        &|\tr(U^\dagger(t)O_YU(t)\rho_S\otimes\gamma_E)-p(t)|\leq \epsilon,
    \end{align} 
    and
    \begin{align}
        & p^{(m)}(0)=\tr(U^\dagger(t)O_YU(t)\rho_S\otimes\gamma_E)^{(m)}|_{t=0}, m=1,...,d,
    \end{align}
    where $U(t) = \mathcal{T}\exp(-it\int_0^tH(s)ds)$.
\end{lemma}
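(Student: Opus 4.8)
The plan is to adapt the proof of Theorem~D.2 of Ref.~\cite{stilck2024efficient} to the non-Markovian setting, the only essential change being that the closed-system Lieb--Robinson bound used there is replaced by the non-Markovian one of Ref.~\cite{trivedi2024lieb}. The guiding idea is unchanged: information in a geometrically local model propagates with a finite velocity $v$, so the time trace $f(t):=\tr(U^\dagger(t)O_YU(t)\,\rho_S\otimes\gamma_E)$ is, up to an error decaying exponentially in the light-cone radius, governed by a spatially finite and effectively ``bounded'' dynamics; such a function is analytic in $t$ with a controlled growth rate, so its degree-$d$ Taylor polynomial around $t=0$ is a good approximant --- which is exactly the $p$ we want, since a degree-$d$ polynomial matching all derivatives $m=1,\dots,d$ at the origin \emph{is} such a Taylor polynomial.

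Concretely I would proceed in four steps. \emph{Step~1 (non-Markovian Lieb--Robinson bound).} Invoke the Lieb--Robinson bound of Ref.~\cite{trivedi2024lieb} for geometrically local lattice Hamiltonians coupled to a Gaussian environment obeying the finite-memory condition~\eqref{eq:total_variation_condition}; this yields a constant information-propagation velocity $v=O(1)$ through the system lattice, and carrying it over to the present measurement setup is the content of Appendix~\ref{app:lieb_robinson_nm}. \emph{Step~2 (light-cone truncation).} Let $\Gamma_\ell$ consist of the system sites within lattice distance $\ell$ of $\mathrm{supp}(O_Y)$, together with the environment modes coupled to them, and let $f_\ell$ be the time trace computed keeping only the system-Hamiltonian terms and kernels supported inside $\Gamma_\ell$. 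Step~1 then gives $|f(t)-f_\ell(t)|\le\epsilon/2$ for all $0\le t\le t_{\max}$ once $\ell=O\!\left(\log\frac{\exp(vt_{\max})-1}{\epsilon}\right)$, so that $\Gamma_\ell$ contains only $|\Gamma_\ell|=O(\ell^D)$ system qubits.

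\emph{Step~3 (analyticity of $f_\ell$ and Taylor truncation).} Expanding $f_\ell$ as a Dyson/nested-commutator series in the restrictions of $H_S$ and $V_{SE}$ to $\Gamma_\ell$, geometric locality (with $O(1)$ overlap number $\mathfrak{d}$) and the $O(1)$ total variation of the truncated kernels give Taylor-coefficient bounds $|f_\ell^{(m)}(0)|\le C^m m!$ for $m\lesssim|\Gamma_\ell|$, with $C=O(1)$ \emph{independent of $\ell$}, and $|f_\ell^{(m)}(0)|\le(C'|\Gamma_\ell|)^m$ for larger $m$. Summing the Taylor remainder of $f_\ell$ at $t_{\max}$ then produces a geometric series of ratio $Ct_{\max}$ (this is where the assumption $t_{\max}=O(1)$, hence $Ct_{\max}<1$, is used) together with a factorially suppressed tail, so that the degree-$d$ Taylor polynomial $p$ of $f_\ell$ at $t=0$ satisfies $|f_\ell(t)-p(t)|\le\epsilon/2$ on $[0,t_{\max}]$ when $d$ is taken as in Eq.~\eqref{eq:bound_d}; combined with Step~2 this gives $|f(t)-p(t)|\le\epsilon$. \emph{Step~4 (matching the derivatives).} Finally, $f^{(m)}(0)$ is an $m$-fold nested-commutator expression in $H_S$ and $V_{SE}$ --- equivalently, a polynomial in the $\lambda_a$ and the kernel derivatives $K_{a,b}^{(j)}(0)$ with $j\le m-2$, as in Eq.~\eqref{eq:f_definition_main} --- and so depends only on the reduced dynamics in a ball of radius $O(m)$ about $\mathrm{supp}(O_Y)$; since the truncation radius $\ell$ may be enlarged freely (this only grows $|\Gamma_\ell|$, which Step~3 tolerates), we may also require $\ell\gtrsim d$, whereupon $f^{(m)}(0)=f_\ell^{(m)}(0)=p^{(m)}(0)$ for every $m=1,\dots,d$.

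The main obstacle, and the one genuinely new point relative to the closed-system argument of Ref.~\cite{stilck2024efficient}, is that the environment is bosonic: the generator of the truncated dynamics is \emph{not} a bounded operator, so one cannot simply bound $\|H_{\Gamma_\ell}\|$ to obtain the analyticity and the Taylor-coefficient estimates of Step~3. The resolution is to work throughout with the reduced system dynamics obtained after tracing out the free (quadratic) environment, where the role of an effective interaction strength is played by the total variation of the memory kernels rather than by an operator norm; controlling both the light cone (Step~1) and the Dyson series of the truncated reduced dynamics (Step~3) in these terms is precisely what the finite-memory Lieb--Robinson bound of Ref.~\cite{trivedi2024lieb} supplies. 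Once that input is available, Steps~2 and~4 are the same bookkeeping as in the closed-system case.
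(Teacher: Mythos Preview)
Your proposal is correct and follows essentially the same route as the paper. The paper's own treatment (Appendix~\ref{app:lieb_robinson_nm}) is in fact terser than yours: it simply states that one takes the proof of Theorem~D.2 in Ref.~\cite{stilck2024efficient} and replaces their Proposition~D.1 (the operator-norm Lieb--Robinson input) by the non-Markovian expectation-value Lieb--Robinson bound of Ref.~\cite{trivedi2024lieb} (recorded there as Lemma~\ref{lem:lieb-robinson_nonmarkovian}), which controls $|\tr((P_a(t)-P_a(t;l))\rho_S\otimes\gamma_E)|$ rather than an operator norm. You correctly identify both this substitution and the reason it is necessary --- the unbounded bosonic environment precludes an operator-norm bound, so one must work at the level of expectation values with product initial states $\rho_S\otimes\gamma_E$ --- and your four steps spell out the same light-cone-truncation-plus-Taylor-remainder mechanism that underlies the original Theorem~D.2.
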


This shows that time traces of local observables can be approximated by a polynomial of controlled degree for constant time. For $W\neq \Id^{\otimes N}$, since $W(\cdot )W^\dagger$ is a tensor product of single-qubit unitary changes of basis, it does not change the support of local observables. Therefore, $B_{W,(O,I)}(t)$, can be approximated by a polynomial of controlled degree, $p_{W,(O,I)}(t)$, for constant time $t=O(1)$. Moreover, the derivatives of the truncation $p_{W,(O,I)}(t)$ at $0$ match those of the time trace $B_{W,(O,I)}(t)$. In order to estimate this polynomial from measurement samples we follow the robust interpolation method described in Appendix E of \cite{stilck2024efficient}, obtaining an estimate $\hat{p}_{W,(O,I)}(t)$ for $p_{W,(O,I)}(t)$ such that $|\hat{p}_{W,(O,I)}(t)-p_{W,(O,I)}(t)|<\epsilon$ for all $ t_0\leq t\leq t_{\max}$. Similarly, the higher derivatives of the estimate $\hat{p}_{W,(O,I)}(t)$ at $0$ are close to those of $p_{W,(O,I)}(t)$ in the following sense.

\begin{lemma}[Adapted Theorem E.1, Proposition E.1, Corollary E.1 \cite{stilck2024efficient}, Theorem 1.2 \cite{arora2024outlier}]\label{lem:samples}
    Let $p(t)$ be a polynomial of degree $d$ defined on an interval $[a,b]$ with $a= d^{-2}$ and $b=2+a$, $\epsilon_{S,M} >0$ a desired precision in the $M$-th derivative at $0$. Then for $\delta >0$, sampling 
    \begin{align}
        &O\bigg(d\log\bigg(\frac{d}{\delta}\bigg)\bigg),
    \end{align}
    i.i.d. points $(x_i,y_i)$ from the Chebyshev measure on $[a,b]$ satisfying
    \begin{align}
        &p(x_i)=y_i+w_i, |w_i|\leq \epsilon_{S}=  \epsilon_{S,M}\frac{(2M-1)!!}{3d^{2M}},
    \end{align}
    for at least a fraction $\alpha > \frac{1}{2}$ of the points is sufficient to obtain a polynomial $\hat{p}$ satisfying 
    \begin{align}
        &|p^{(m)}(0)-\hat{p}^{(m)}(0)|=\epsilon_{S,m} \leq \notag\\
        &\hspace{20pt}\epsilon_{S,M}\frac{(2M-1)!!}{d^{2M}}\frac{d^{2m}}{(2m-1)!!}, m=0,1,...,M.
    \end{align}
    
\end{lemma}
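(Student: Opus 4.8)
\emph{Proof strategy.} The statement is obtained by composing two essentially independent ingredients. The first, responsible for the sample count $O(d\log(d/\delta))$ and the success probability $1-\delta$, is the outlier-robust polynomial regression guarantee in the Chebyshev-sampling model, i.e.\ Theorem~1.2 of \cite{arora2024outlier}: from that many i.i.d.\ samples drawn from the Chebyshev measure, of which a fraction $\alpha>1/2$ are ``good'' (noise bounded by $\epsilon_{S}$) and the rest are adversarially corrupted, one recovers a degree-$d$ polynomial that is uniformly close to $p$. The second ingredient, responsible for the explicit factors $(2M-1)!!\,d^{2m}/(d^{2M}(2m-1)!!)$, consists of the classical Markov--Bernstein inequalities, which turn a uniform bound on a degree-$d$ polynomial over $[a,b]$ into a bound on its $m$-th derivative at a point lying just outside $[a,b]$; this is exactly the content of Theorem~E.1, Proposition~E.1 and Corollary~E.1 of \cite{stilck2024efficient}, which I would re-derive in order to track how the normalization $a=d^{-2}$, $b=2+a$ enters.

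\textbf{Step 1: robust recovery in sup norm.} First I would affinely rescale $[a,b]$ to $[-1,1]$ by $x\mapsto x-1-a$; since $b-a=2$ this is an isometric reparametrization (a pure translation), which pushes the Chebyshev measure on $[a,b]$ to the Chebyshev measure on $[-1,1]$ and does not change which of the sampled points are good. The hypotheses of Theorem~1.2 of \cite{arora2024outlier} are then met, and it returns a polynomial $\hat p$ of degree $\le d$ with $\|\hat p-p\|_{L^\infty[a,b]}\le C_0\,\epsilon_{S}$ with probability $\ge 1-\delta$, where $C_0$ is an absolute constant (its dependence on $\alpha-1/2$ being hidden inside the $O(\cdot)$ of the sample count).

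\textbf{Step 2: derivative transfer and extrapolation to $t=0$.} Write $e:=\hat p-p$, a polynomial of degree $\le d$ with $\|e\|_{L^\infty[a,b]}\le C_0\epsilon_{S}$. On the length-$2$ interval $[a,b]$ the Markov brothers' inequality gives, for every $m\le d$,
\[
\|e^{(m)}\|_{L^\infty[a,b]} \le \frac{\prod_{j=0}^{m-1}(d^2-j^2)}{(2m-1)!!}\,\|e\|_{L^\infty[a,b]} \le \frac{d^{2m}}{(2m-1)!!}\,C_0\,\epsilon_{S},
\]
the interval having been chosen of length exactly $2$ precisely so that no power of $2/(b-a)$ appears here. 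The evaluation point $t=0$ lies a distance $a=d^{-2}$ to the left of $[a,b]$, hence maps to $-1-d^{-2}$ under the rescaling of Step~1; since $e^{(m)}$ has degree $\le d$, the extremal growth property of Chebyshev polynomials gives $|e^{(m)}(0)|\le T_{d}(1+d^{-2})\,\|e^{(m)}\|_{L^\infty[a,b]}$, and $T_d(1+d^{-2})=\cosh\!\big(d\,\mathrm{arccosh}(1+d^{-2})\big)$ is bounded by an absolute constant $C_1$ uniformly in $d$, because $\mathrm{arccosh}(1+d^{-2})=\sqrt{2}/d+O(d^{-3})$. Combining,
\[
|p^{(m)}(0)-\hat p^{(m)}(0)| = |e^{(m)}(0)| \le C_0 C_1\,\frac{d^{2m}}{(2m-1)!!}\,\epsilon_{S},
\]
and substituting $\epsilon_{S}=\epsilon_{S,M}(2M-1)!!/(3d^{2M})$, with the constant $3$ (or, more conservatively, a suitable absolute constant) absorbing $C_0C_1$, yields $|p^{(m)}(0)-\hat p^{(m)}(0)|\le \epsilon_{S,M}\,\frac{(2M-1)!!}{d^{2M}}\frac{d^{2m}}{(2m-1)!!}$ for all $m=0,1,\dots,M$ --- in particular the bound is $\epsilon_{S,M}$ at $m=M$ and decays for $m<M$.

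\textbf{Expected main obstacle.} The substantive point is Step~1: one must verify that Theorem~1.2 of \cite{arora2024outlier} genuinely supplies a \emph{uniform} (sup-norm) recovery guarantee on the whole interval with only an $O(1)$ multiplicative loss --- not merely an $L^2$ guarantee, nor one whose constant grows with $d$ --- and that it tolerates a corruption fraction arbitrarily close to $1/2$ at the claimed $O(d\log(d/\delta))$ sample cost; one also has to check that the noise level passed to that theorem is precisely the $\epsilon_{S}$ appearing in the statement, unchanged by the rescaling. Step~2 is then routine: besides the Markov brothers' inequality and the Chebyshev extremality used above, the only estimate left to confirm is the elementary $\mathrm{arccosh}(1+\eta)\le\sqrt{2\eta}\,(1+O(\eta))$, which keeps the extrapolation cost $C_1$ bounded since $t=0$ sits only $d^{-2}$ outside $[a,b]$.
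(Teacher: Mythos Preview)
Your proposal is correct and arrives at the same conclusion, but the extrapolation step (your Step~2) differs from the paper's. The paper does not use Chebyshev extremal growth; instead it writes the Taylor expansion of $q:=\hat p-p$ at the left endpoint $a$,
\[
q^{(m)}(0)=\sum_{k=m}^{d} q^{(k)}(a)\,\frac{(-a)^{k-m}}{(k-m)!},
\]
bounds each $|q^{(k)}(a)|$ on $[a,b]$ by Markov's inequality with the exact constant $C_M(d,k)=\prod_{j=0}^{k-1}(d^2-j^2)/(2k-1)!!\le d^{2k}/(2k-1)!!$, and then uses $a=d^{-2}$ so that the factor $a^{k-m}=d^{-2(k-m)}$ exactly cancels the extra $d^{2(k-m)}$ coming from the higher derivatives. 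Your route instead applies Markov only once, at order $m$, and handles the $d^{-2}$ overshoot by the polynomial extremal inequality $|q^{(m)}(0)|\le T_{d-m}(1+d^{-2})\,\|q^{(m)}\|_{L^\infty[a,b]}$, noting that $T_{d-m}(1+d^{-2})\le T_d(1+d^{-2})=\cosh\!\big(d\,\mathrm{arccosh}(1+d^{-2})\big)=O(1)$. Both arguments produce the factor $d^{2m}/(2m-1)!!$ up to an absolute constant; yours separates ``derivative loss on the interval'' from ``extrapolation off the interval'' more cleanly, while the paper's Taylor approach makes the interplay with the specific choice $a=d^{-2}$ more transparent and yields an explicit constant ($3$ in the statement) without invoking a second extremal inequality. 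For Step~1 the paper quotes the $L^\infty$ robust-regression guarantee as Theorem~E.1 of \cite{stilck2024efficient} (attributed there to Kane--Karmalkar--Price), obtaining $\|p-\hat p\|_{L^\infty[a,b]}\le 3\epsilon_S$ directly; your invocation of \cite{arora2024outlier} is equally legitimate given the lemma's attribution, and your caveat that one must confirm it delivers a sup-norm bound with an $O(1)$ constant is well placed.
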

Note that, since $m\leq M\leq d$, all derivative estimates are at least as precise as the highest derivative: $\epsilon_{S,0}\leq \epsilon_{S,1}\leq ...\leq \epsilon_{S,M}$. Sampling from the uniform measure $t\in [\frac{1}{d^2},2+\frac{1}{d^2}]$ instead requires $O(d^2\log(\frac{1}{\delta}))$ samples \cite{stilck2024efficient}, where at least half the samples are estimated to precision $\epsilon_S$. 

We show in Appendix \ref{app:error_analysis} that, after taking into account the required precision to estimate the offsets $f_{W,(O,I)}^{(m)}$ accurately, the precision in learning the kernel derivatives $K_{a,b}^{(m)}(0), m=0,...,M$ to precision at least $\epsilon_{K,M}$ is bounded by the precision in the sampling and polynomial fit as follows:
\begin{align}
    &\epsilon_{K,M}\leq e^{O(M^2\log M)}\epsilon_{S,M+2}.
\end{align}
While this is independent of the system size, it is prohibitive for large $M$. The recursive structure of the linear systems of equations required to estimate the kernel derivatives up to order $M$, see Figure \ref{fig:inversion_schematic}, leads to a large overhead due to errors compounding from lower-order estimates.\\

In Appendix \ref{app:parallelization} we estimate the number of samples required to estimate all the time traces to precision $\epsilon_S$ using a parallelized measurement scheme. Plugging in the bounds above means that the total number of samples is upper bounded by:
\begin{align}
    &O\Big(\frac{((\mathfrak{d}+2)(s+\mathfrak{d})\mathfrak{d}^2+1)e^{ O(M^2\log M)}}{\epsilon_{K,M}^2}\times \notag \\
    &\hspace{80pt}\log\bigg(\frac{sN\cdot 3^{2k_{SE}+\mathfrak{d} }}{\delta}\bigg)\bigg).
\end{align}

\section{\label{sec:discussion} Conclusion and outlook}
We formulate the problem of learning open system models beyond the Born-Markov approximation. We work within the assumption of a Gaussian environment, where the memory effects can be captured by the environment memory kernels, i.e.~the two-point correlators of the environment operators coupling to the system. The Gaussian-environment assumption is physically reasonable for both free-space and cavity-based quantum simulators with atomic or molecular systems \cite{gardiner1985input,madsen2011observation,svendsen2023signatures,krinner2018spontaneous}, as well as for describing non-Markovian noise in superconducting devices \cite{caldeira1983path,malekakhlagh2016non,nakamura2024gate}. Under this assumption, we show that the derivatives of these memory kernels at $t = 0$ can be systematically learnt from the time evolution of local observables. Our proposed protocol has a favorable sample-complexity scaling with system size, although learning higher-order derivatives of the memory kernel still requires a substantially large amount of samples. While this makes it difficult to learn certain non-Markovian models -- either because the memory kernel contains many unknown parameters or because it exhibits sharply different short- and long-time behavior, both of which require very high-order derivatives to capture accurately -- in most experimentally relevant settings, a Gaussian non-Markovian environment can be well described by only a few lossy bosonic modes~\cite{tamascelli2018nonperturbative}. In such systems, we expect our protocol to perform comparably to established Hamiltonian or Lindbladian learning methods.  

Furthermore, we also consider an ensemble Hamiltonian model where the noisy simulator is described by a geometrically local Hamiltonian with the coefficients of the local Hamiltonian terms being Gaussian random variables. Here, we allowed for possible all-to-all correlations between the coefficients of the different Hamiltonian terms. This model would capture noise in global controls applied in a quantum simulator which would introduce all-to-all correlated errors in the Hamiltonian. For instance, in Rydberg-atom arrays \cite{shaw2024benchmarking} and trapped-ion simulators \cite{guo2025hamiltonian}, both laser and microwave fields drive entire qubit arrays and lead to correlated frequency and intensity fluctuations that affect all qubits simultaneously \cite{jiang2023sensitivity}. We establish that, despite this all-to-all correlation, the mean and covariance matrix of the Hamiltonian coefficients can also be efficiently learnt by measuring the time-evolution of local observables in the model. 

Our work opens up several interesting questions --- the first and most immediate open question is to develop learning protocols that would be efficient for more complex memory kernels i.e., memory kernels whose parameters cannot be deduced from just a few derivatives. Furthermore, while experimental setups might be aware of their predominant source of noise and thus have prior information of the Paulis appearing in noise operators, it could also be useful to relax this assumption. For example, assuming that Pauli strings of weight larger than $k$ are $0$, in the Lindbladian setting one can learn all terms with Pauli strings less than $k$ using classical shadows \cite{stilck2024efficient}. Furthermore, for Hamiltonian learning, recently developed protocols require no prior assumption on the structure of the Hamiltonian \cite{bakshi2024structure}. Developing similar results for the non-Markovian regime would increase their experimental applicability and minimize modeling errors. Furthermore, while an emphasis of our work has been on developing rigorous protocols with provable sample complexity bounds, it would be important to develop more heuristic but practically efficient methods that can operate with a relatively small sample-budget. Finally, as quantum simulators scale up to $\gtrsim 500-1000$ qubits, even employing protocols with linear sample-complexity to learn a detailed noise-model for the simulator would also become inefficient. In this regime, it is likely that we would need protocols that simply estimate an upper-bound on the noise rate with a much smaller number of samples and bypass the task of learning a full-blown noise model.
\\

\section{\label{sec:acknowledgements}Acknowledgements}
We acknowledge useful discussions with Jeongwan Haah, Hsin-Yuan (Robert) Huang and Daniel Stilck França. R.T. acknowledges support from Center for Integration of Modern Optoelectronic Materials on Demand (IMOD) seed grant (DMR-2019444), from QuPIDC, an Energy Frontier Research Center, funded by the US Department of Energy (DOE), Office of Science, Basic Energy Sciences (BES), under the award number DE-SC0025620 and from the European Union’s Horizon Europe research and innovation program under grant agreement number 101221560 (ToNQS). J.A.M.L and R.T. acknowledge funding from the Munich Center for Quantum Science and Technology (MCQST), funded by the Deutsche Forschungsgemeinschaft (DFG) under Germany’s Excellence Strategy (EXC2111-390814868). The research is part of the Munich Quantum Valley, which is supported by the Bavarian State Government with funds from the High tech Agenda Bayern Plus. J.C. acknowledges support from AFOSR (YIP No.  FA9550-25-1-0147) and the Terman Faculty Fellowship at Stanford University.

\nocite{*}

\bibliography{references}

\clearpage
\onecolumngrid
\appendix
\pagenumbering{alph}

\section{\label{app:nonmarkovian_learning} Non-Markovian learning}

\begin{table}
    \centering
    \renewcommand{\arraystretch}{1.2} 
    \begin{tabular}{l p{2.5cm} p{12.2cm}}
    \toprule
    \textbf{Symbol} & \textbf{Defined in} & \textbf{Informal description} \\
    \midrule
    \hspace{1em} $\mathcal{P}_S$ & Eq.\eqref{eq:H(t)} & Index set for the Pauli strings in the system Hamiltonian $H_S$.\\ \midrule
    \hspace{1em} $\mathcal{P}_{SE}$ & Eq.\eqref{eq:H(t)} & Index set for the Pauli strings in the coupling Hamiltonian $V_{SE}(t)$.\\ \midrule
    \hspace{1em} $P_a$ & Below Eq.\eqref{eq:H(t)} & Pauli string acting on the system.\\ \midrule
    \hspace{1em} $A_a(t)$ & Below Eq.\eqref{eq:H(t)} & Hermitian operator  linear in the bosonic creation and annihilation operators.\\ \midrule
    \hspace{1em} $K_{a,b}(t)$ & Eq.\eqref{eq:kernel_def} & Non-Markovian kernel: two-point correlation function of environment modes.\\ \midrule
    \hspace{1em} $K_{a,b}^{(m)}(0)$ & Eq.\eqref{eq:Km_def} & $m$-th derivative of $K_{a,b}(t)$ at $t=0$.\\ \midrule
    \hspace{1em} $k_S, k_{SE}$ & Above Eq.\eqref{eq:total_variation_condition} & Size of the support of Pauli strings in $\mathcal{P}_S$ and $\mathcal{P}_{SE}$, respectively. \\ \midrule

    \hspace{1em} $\mathfrak{d}$ & Above Eq.\eqref{eq:total_variation_condition}  & Counts how many terms in $H(t)$ might not commute with a given term in $H(t)$. \\ \midrule 
    \hspace{1em} $\mathcal{Y}_{\mathcal{V}}(a,b)$ &  Eq.\eqref{eq:Yab_characterize} & Set of pairs of Pauli strings whose coefficients would mix with those of $K_{a,b}(t)$ if we were to only measure on the support $\mathcal{V}$. \\ \midrule 
    \hspace{1em} $\mathcal{S}_{a,b}$ & Above Eq.\eqref{eq:enlarged_support_cond_1} & Joint support of Pauli strings $P_a,P_b$. Satisfies $|\mathcal{S}_{a,b}|\leq O(k_{SE})$ \\ \midrule 
    
    \hspace{1em} $\mathfrak{d}_0$ & Eq. \eqref{eq:d_0_def} & Upper bound on $|\mathcal{Y}_{\mathcal{S}_{a,b}}(a,b)|$ for every $a,b\in\mathcal{P}_{SE}$. Satisfies $\mathfrak{d}_0 \leq \mathfrak{d}$. \\ \midrule
    \hspace{1em} $\mathcal{Q}_{a,b}$ & Above Eq.\eqref{eq:Iab_def} & Extra sites that need to be added to $\mathcal{S}_{a,b}$ to recover the coefficients of $K_{a,b}(t)$. Satisfies $|\mathcal{Q}_{a,b}|\leq \mathfrak{d}_0$.\\ \midrule 
    \hspace{1em} $\mathcal{I}_{a,b}$ &  Eq.\eqref{eq:Iab_def} & Tomography on support $\mathcal{I}_{a,b}$ is sufficient to learn $K_{a,b}(t)$. Satisfies $|\mathcal{I}_{a,b}|\leq O(k_{SE})+\mathfrak{d}_0$. \\ \midrule 
    
    \hspace{1em} $s$ & Above Eq.\ref{eq:total_variation_condition} & Sparsity of the kernel: for each $P_a$ there are at most $s$ Pauli strings $P_b$ such that $K_{a,b}(t)\neq 0$. In the ensemble Hamiltonian case, sparsity of the covariance matrix $\Sigma$. \\
    \midrule
    \hspace{1em} $a_0$ & Appendix \ref{app:parallelization} & Diameter of Pauli strings in $\mathcal{P}_{SE}$. \\ \midrule
    \hspace{1em} $M$ & Proposition \ref{prop:non_markovian_main} & Highest derivative $\partial_t^M K_{a,b}(t)|_{t=0}$ that we estimate. \\ \midrule 

    \hspace{1em} $\mathcal{E}_W(t)$ & Eq. \ref{eq:rho_W_t} & State preparation channel with an intermediate gate $W$. \\ \midrule 

    \hspace{1em} $\mathcal{F}_W^{(m)}(t)$ & Eq. \ref{eq:UWn_def} & $m$-th term in the Dyson series of the channel $\mathcal{E}_W(t)$. \\ \midrule 
    
    \hspace{1em} $\mathcal{F}_{W,S}^{(2)}(t),\mathcal{F}_{W,SE}^{(2)}(t)$ & Eq. \ref{eq:UW2_def} & Terms in $\mathcal{F}_{W}^{(2)}(t)$ that only involve $H_S$ and $V_{SE}(t)$, respectively. \\ \midrule 

    \hspace{1em} $T_W^{(m)}$ & Eq. \ref{eq:TWm_def} & Certain entries of this linear map yield the derivatives $K_{a,b}^{(m-2)}(0)$. \\ \midrule 

    \hspace{1em} $B_{W,(O,I)}(t)$ & Eq. \ref{eq:experimental_time_traces} & Time trace for initial Pauli string $P_I$, observable $P_O$ and intermediate gate $W$. \\ \midrule 

    \hspace{1em} $f_{W,(O,I)}^{(m)}$ & Eq. \ref{eq:f_definition_main} & Offset containing terms in $H_S$ and kernel derivatives of order $\leq m-3$. \\ \midrule 

    \hspace{1em} $B_a(t)$ & Above Eq. \ref{eq:trace_B} & Variable that can be $\lambda_a,A_a(t),\lambda_a+A_a(t)$ when $a$ is in $\mathcal{P}_S,\mathcal{P}_{SE}$, or both. \\ \midrule 

    \hspace{1em} $\hat{x}$ &   & The hat indicates an estimate for a variable $x$. \\ \midrule 

    \hspace{1em} d & Lemma \ref{lem:franca_liebrobinson} & Degree of the polynomial fit. \\ \midrule

    \hspace{1em} $\Sigma$ & Section \ref{sec:ensemble_model} & Covariance matrix of the ensemble Hamiltonian model. \\

    \bottomrule
    \end{tabular}
    \caption{Table of the symbols used in the non-Markovian and ensemble Hamiltonian models.}
    \label{tab:param_table}
\end{table}

\subsection{\label{app:measurement_supports} Tomography on a support of bounded size $\mathcal{I}_{a,b}$}
We now show which measurements need to be performed in order to invert the linear map:
\begin{align}
    &K_{a,b}^{(m-2)}(0)\to \tr[P_OT_W^{(m)}(P_I)].
\end{align}
In Section \ref{sec:tomography} we proposed to perform tomography on $\mathcal{I}_{a,b}$, which is an enlargement of the joint support of $P_a,P_b$,  $\mathcal{S}_{a,b}$, by at most a constant number of qubits. Here, we formally define the tomography region $\mathcal{I}_{a, b}$, provide bounds on its size and prove that it is sufficient to perform tomography on it to learn $K_{a, b}^{(m - 2)}(0)$. Recall the decomposition of the linear map $T_W^{(m)}$:
\begin{align}
    &T_W^{(m)}(X)=\sum_{P,Q}\xi^{W,{(m)}}_{P,Q}P XQ.
\end{align}
We recall from Eqs.\eqref{eq:enlarged_support_cond_1} and \eqref{eq:enlarged_support_cond_2} that for given Pauli strings $P_a, P_b$ and a support $\mathcal{V}$ we defined $\mathcal{Y}_{\mathcal{V}}(a, b)$ via:
\begin{align}
    &(P_c,P_d)\in \mathcal{Y}_\mathcal{V}(a,b)\text{ if and only if } P_c|_{\mathcal{V}}=P_a|_{\mathcal{V}},P_d|_{\mathcal{V}}=P_b|_{\mathcal{V}} \text{ and } P_c|_{\mathcal{V}^c}=P_d|_{\mathcal{V}^c}\neq \Id_{\mathcal{V}^c}.\label{eq:Yab_characterize}
\end{align}
As discussed in the main text, $\mathcal{Y}_\mathcal{V}(a,b)$ can be interpreted as the set of Pauli string pairs whose contribution to $T_W^{(m)}$ cannot be distinguished from that of $P_a, P_b$ by performing tomography on only $\mathcal{V}$. Indeed, if we restrict our measurements to a support $\mathcal{V}$ we obtain a map whose coefficients are sums of various $\xi_{c,d}^{(m)}$:
\begin{align}
    &\frac{1}{2^{|\mathcal{V}^c|}}\tr_{\mathcal{V}^c}(T_W^{(m)}(X))=\sum_{P,Q\in\mathcal{P}_{|\mathcal{V}|}}\Xi^{W,\mathcal{V},(m)}_{P,Q}PXQ, \quad\Xi_{P_c,P_d}^{W,\mathcal{V},(m)}=\xi^{W,(m)}_{P_c,P_d}+\sum_{(P_e,P_f)\in\mathcal{Y}_\mathcal{V}(c,d)}\xi_{P_e,P_f}^{W,(m)}
\end{align}
Our general strategy is to find a support $\mathcal{I}_{a,b}$ such that $\mathcal{Y}_{\mathcal{I}_{a,b}}(a,b)=\emptyset$, yielding:
\begin{align}
    &\Xi_{P_c,P_d}^{W,\mathcal{I}_{a,b},(m)}=\xi_{P_c,P_d}^{W,(m)}. \label{eq:Xi_xi}
\end{align}
We can obtain the kernel derivatives by estimating $\xi_{P,Q}^{W,(m)},\xi_{Q,P}^{W,(m)}$, for $P,Q$ as described in Eqs.\eqref{eq:xi_case1},\eqref{eq:xi_case2} and \eqref{eq:xi_case3}. Since we only need to learn particular coefficients of the super-operator $T_W^{(m)}$, we don't need to perform process tomography \cite{chuang1997prescription}. In Lemma \ref{lem:Ainverse} we outline the procedure to obtain the desired coefficients, which has the cost of state tomography.

\begin{lemma}[Enlarged support]\label{lem:enlarged_support}
    Let $P_a,P_b\in \mathcal{P}_{SE}$ be Pauli strings with nonzero kernel, $K_{a,b}\neq 0$. The derivative $K_{a,b}^{(m)}(0)$ can be recovered with the following procedure:
    \begin{align}
        \text{If } a\neq b&:\begin{cases}
            \text{Tomography on }\mathcal{I}_{a,b}\text{ using }W=\Id^{\otimes N}, & m \text{ even}.\\
            \text{Tomography on }\mathcal{I}_{a,b}\text{ using }W=P_w, & m \text{ odd}.
        \end{cases}\\
        \text{If } a= b&:\begin{cases}
            \text{Tomography on }\mathcal{I}_{a,a}\text{ using }W=\Id^{\otimes N}, & m \text{ even}.\\
            \text{Tomography on }\mathcal{I}_{a,\tilde{a}}\text{ using }W=(S\cdot H)_s\otimes \Id^{\otimes N-1}, & m \text{ odd}.
        \end{cases},
    \end{align}
    where $P_w$ is a single-qubit Pauli string that anticommutes with $P_a\cdot P_b$ and $(S\cdot H)_s\otimes \Id^{\otimes N-1}$ is identity everywhere except at site $s$, which can be any site in the support of $P_a$. These supports have bounded size:
    \begin{align}
        &|\mathcal{I}_{a,b}|\leq 2k_{SE}+\mathfrak{d}_0-2, \quad|\mathcal{I}_{a,\tilde{a}}|\leq k_{SE}+\mathfrak{d}_0-1, \quad \text{ with }\mathfrak{d}_0 :=\underset{K_{a,b}\neq 0}{\max}|\mathcal{Y}_{\mathcal{S}_{a,b}}(a,b)| \leq \mathfrak{d}. \label{eq:d_0_def}
    \end{align}
\end{lemma}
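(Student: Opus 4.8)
\emph{Proof proposal.} The statement packages three claims: correctness of the tomography-on-$\mathcal{I}_{a,b}$ (resp.\ $\mathcal{I}_{a,\tilde a}$) procedure for the indicated $W$, the size bounds, and $\mathfrak{d}_0\le\mathfrak{d}$. The plan is to reduce correctness to the single structural fact $\mathcal{Y}_{\mathcal{I}_{a,b}}(a,b)=\emptyset$ and then quote the three case analyses of Section~\ref{sec:tomography}, and to obtain the size and counting bounds by examining which Pauli pairs can be conflicting.

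\emph{Correctness.} First I would prove $\mathcal{Y}_{\mathcal{I}_{a,b}}(a,b)=\emptyset$ in two steps. For the inclusion $\mathcal{Y}_{\mathcal{I}_{a,b}}(a,b)\subseteq\mathcal{Y}_{\mathcal{S}_{a,b}}(a,b)$: if $(P_c,P_d)$ obeys Eq.~\eqref{eq:Yab_characterize} on $\mathcal{V}=\mathcal{I}_{a,b}$, then since $P_a,P_b$ are supported inside $\mathcal{S}_{a,b}$ the interior matching forces $P_c\equiv P_d\equiv\Id$ on $\mathcal{Q}_{a,b}=\mathcal{I}_{a,b}\setminus\mathcal{S}_{a,b}$; combined with $P_c=P_d$ on $\mathcal{I}_{a,b}^c$ this makes them agree on all of $\mathcal{S}_{a,b}^c=\mathcal{Q}_{a,b}\sqcup\mathcal{I}_{a,b}^c$, and the common restriction is still $\neq\Id$, so the pair already lies in $\mathcal{Y}_{\mathcal{S}_{a,b}}(a,b)$. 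Conversely, by the definition of $\mathcal{Q}_{a,b}$ every pair of $\mathcal{Y}_{\mathcal{S}_{a,b}}(a,b)$ has a site $q\in\mathcal{Q}_{a,b}$ at which $P_c$ or $P_d$ is non-identity while $P_a$ (resp.\ $P_b$) is identity there, so its interior matching on $\mathcal{I}_{a,b}$ fails; hence $\mathcal{Y}_{\mathcal{I}_{a,b}}(a,b)=\emptyset$ (and symmetrically $\mathcal{Y}_{\mathcal{I}_{a,b}}(b,a)=\emptyset$; in Case~3 the same argument on $\mathcal{I}_{a,\tilde a}$ gives $\mathcal{Y}_{\mathcal{I}_{a,\tilde a}}(a,\tilde a)=\emptyset$). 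By Eq.~\eqref{eq:Xi_xi} the reduced map produced by tomography on $\mathcal{I}_{a,b}$ then has the coefficients $\xi^{W,(m)}_{P_a,P_b}$, $\xi^{W,(m)}_{P_b,P_a}$ exactly, and Lemma~\ref{lem:Ainverse} lets us isolate these at state-tomography cost. Feeding them into Eq.~\eqref{eq:xi_case1} for $m$ even (with $W=\Id^{\otimes N}$, Table~\ref{tab:W_gate} Case~1), into Eq.~\eqref{eq:xi_case2} for $m$ odd and $a\neq b$ (with $W=P_w$ a single-qubit Pauli commuting with one of $P_a,P_b$ and anticommuting with the other, i.e.\ $\{P_w,P_aP_b\}=0$, after the $(-1)^{\chi(w,I)}$ and $(-1)^{\chi(w,b)}$ phase corrections), or into Eq.~\eqref{eq:xi_case3} for $m$ odd and $a=b$ (with $W=S\cdot H$ on a site of $P_a$, using the conjugate coefficient to cancel $\mathrm{Re}[K_{c,d}^{(m)}(0)]$ and the value of $\mathrm{Im}[K_{c,d}^{(m)}(0)]$ already obtained in Case~2 since $P_c\neq P_d$), recovers $K_{a,b}^{(m)}(0)$. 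Here $\tilde P_a=WP_aW^\dagger$ has the same support as $P_a$ because $W=SH$ is a single-qubit Clifford, which is why $\mathcal{I}_{a,\tilde a}$ rather than $\mathcal{I}_{a,b}$ appears in that case.

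\emph{Counting and sizes.} For $\mathfrak{d}_0\le\mathfrak{d}$ I would note that a pair $(P_c,P_d)\in\mathcal{Y}_{\mathcal{S}_{a,b}}(a,b)$ is determined by its first component alone (since $P_d$ equals $P_b$ on $\mathcal{S}_{a,b}$ and equals $P_c$ off it), and that $P_c$ agrees with $P_a$ on $\mathrm{supp}(P_a)\subseteq\mathcal{S}_{a,b}$, so $\mathrm{supp}(P_c)\supseteq\mathrm{supp}(P_a)$ and $P_c$ overlaps $P_a$; geometric locality bounds the number of $P_c\in\mathcal{P}_{SE}$ overlapping $P_a$ by $\mathfrak{d}$, whence $|\mathcal{Y}_{\mathcal{S}_{a,b}}(a,b)|\le\mathfrak{d}$ and $\mathfrak{d}_0\le\mathfrak{d}$. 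For the sizes, $|\mathcal{S}_{a,b}|\le|\mathrm{supp}(P_a)|+|\mathrm{supp}(P_b)|\le 2k_{SE}$ in general, but whenever $\mathcal{Q}_{a,b}\neq\emptyset$ some conflicting $P_c$ must carry non-identity weight strictly outside $\mathrm{supp}(P_a)$ while staying $k_{SE}$-local, forcing $|\mathrm{supp}(P_a)|\le k_{SE}-1$ and, symmetrically, $|\mathrm{supp}(P_b)|\le k_{SE}-1$, so $|\mathcal{S}_{a,b}|\le 2k_{SE}-2$ in that case; adding $|\mathcal{Q}_{a,b}|\le|\mathcal{Y}_{\mathcal{S}_{a,b}}(a,b)|\le\mathfrak{d}_0$ gives $|\mathcal{I}_{a,b}|\le 2k_{SE}+\mathfrak{d}_0-2$. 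For $\mathcal{I}_{a,\tilde a}$, the same-support property yields $\mathcal{S}_{a,\tilde a}=\mathrm{supp}(P_a)$, which is $\le k_{SE}-1$ when conflicts occur, giving $|\mathcal{I}_{a,\tilde a}|\le k_{SE}+\mathfrak{d}_0-1$.

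\emph{Main obstacle.} I expect the crux to be the two-sided conflict claim $\mathcal{Y}_{\mathcal{I}_{a,b}}(a,b)=\emptyset$, specifically the inclusion $\mathcal{Y}_{\mathcal{I}_{a,b}}\subseteq\mathcal{Y}_{\mathcal{S}_{a,b}}$: it is not a priori clear that enlarging the tomography support cannot create fresh conflicts, and the point is exactly that the interior-matching condition on $\mathcal{I}_{a,b}$ forces the candidate Paulis to be trivial on $\mathcal{Q}_{a,b}$, which is what lets the tail-matching condition descend from $\mathcal{I}_{a,b}^c$ to the larger complement $\mathcal{S}_{a,b}^c$. The rest is bookkeeping: aligning the case split with Table~\ref{tab:W_gate}, tracking the sign and phase corrections and the dependency that Case~2 precede Case~3, and verifying the size estimates in the degenerate small-$k_{SE}$ (and $a=b$) configurations.
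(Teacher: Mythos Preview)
Your proposal is correct and follows essentially the same approach as the paper's proof: bound $\mathfrak{d}_0\le\mathfrak{d}$ via the observation that $P_c$ must overlap $P_a$ and $P_d$ is then determined; show $\mathcal{Y}_{\mathcal{I}_{a,b}}(a,b)=\emptyset$ by combining the inclusion $\mathcal{Y}_{\mathcal{I}_{a,b}}\subseteq\mathcal{Y}_{\mathcal{S}_{a,b}}$ with the fact that each original conflict is broken at its added site; obtain the size bound from the contrapositive that large $|\mathcal{S}_{a,b}|$ would force $(k_{SE}{+}1)$-local conflicting Paulis; and invoke Eq.~\eqref{eq:Xi_xi} together with the three case analyses of Section~\ref{sec:tomography} for recovery. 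The only differences are cosmetic---the paper states the inclusion $\mathcal{Y}_{\mathcal{I}_{a,b}}\subseteq\mathcal{Y}_{\mathcal{S}_{a,b}}$ in one line rather than verifying it as you do, and phrases the size argument as the contrapositive of your direct bound on $|\mathrm{supp}(P_a)|,|\mathrm{supp}(P_b)|$.
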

\begin{proof}
    Recall that we want to find $\mathcal{I}_{a,b}$ such that $\mathcal{Y}_{\mathcal{I}_{a,b}}(a,b)=\emptyset$ to learn the derivatives of the kernel from Eq. \eqref{eq:Xi_xi}. We will construct $\mathcal{I}_{a,b}$ as an enlargement of $\mathcal{S}_{a,b}$ that sequentially resolves the conflicting pairs in $\mathcal{Y}_{\mathcal{S}_{a,b}}(a,b)$. We begin by bounding this set. Consider a pair of Pauli strings $(P_c,P_d) \in \mathcal{Y}_{\mathcal{S}_{a,b}}(a,b)$. Since $P_c|_{\mathcal{S}_{a,b}}=P_a|_{\mathcal{S}_{a,b}}$, we know that $P_c$ overlaps with $P_a$. But $P_a$ can overlap with at most $\mathfrak{d}$ terms in $\mathcal{P}_{SE}$, so there are at most $\mathfrak{d}$ options for $P_c$. Since $(P_c,P_d) \in \mathcal{Y}_{\mathcal{S}_{a,b}}(a,b)$, $P_d$ is fixed by $P_a,P_b,P_c$ per Eq.\eqref{eq:Yab_characterize}. Therefore, the number of pairs in $\mathcal{Y}_{\mathcal{S}_{a,b}}(a,b)$ is at most $\mathfrak{d}$, so $\mathfrak{d}_0\leq \mathfrak{d}$.\\

    The pairs $(P_c,P_d)\in \mathcal{Y}_{\mathcal{S}_{a,b}}(a,b)$ can also be written as $P_c=Q\cdot P_a,P_d = Q\cdot P_b$, where $Q$ is a non-identity Pauli string that does not overlap with $\mathcal{S}_{a,b}$. We label the pairs in $\mathcal{Y}_{\mathcal{S}_{a,b}}(a,b)$ from $1$ to $\mathfrak{d}_0$, and consider the corresponding $Q_1,...,Q_{\mathfrak{d}_0}$. Pick one site $s_i$ from the support of each $Q_i$ and define the enlarged support $\mathcal{I}_{a,b}$ as follows:
\begin{align}
    &\mathfrak{\hspace{0pt}}\mathcal{Q}_{a,b}  :=\{s_1,...,s_{\mathfrak{d}_0}\},\quad \mathcal{I}_{a,b} := \mathcal{S}_{a,b}\cup \mathcal{Q}_{a,b}.
\end{align}
If two $Q_i,Q_j$ overlap, we can pick $s_i=s_j$ to minimize the number of extra sites we need to do tomography on. Since $\mathcal{S}_{a,b}\subseteq \mathcal{I}_{a,b}$, we have $\mathcal{Y}_{\mathcal{I}_{a,b}}(a,b)\subseteq \mathcal{Y}_{\mathcal{S}_{a,b}}(a,b)$. But for each pair $(P_c,P_d)\in \mathcal{Y}_{\mathcal{S}_{a,b}}(a,b)$, there is a site in $s\in \mathcal{Q}_{a,b}$ where $P_c|_s=P_d|_s\neq \Id$. Therefore, $P_c|_{\mathcal{I}_{a,b}}\neq P_a|_{\mathcal{I}_{a,b}}$ so $(P_c,P_d)\not\in\mathcal{Y}_{\mathcal{I}_{a,b}}(a,b)$, yielding $\mathcal{Y}_{\mathcal{I}_{a,b}}(a,b) = \emptyset$.\\

We now bound $|\mathcal{I}_{a,b}|$. If $|\mathcal{S}_{a,b}|=2k_{SE}$ or $2k_{SE}-1$, then one of the Pauli strings in each pair $(P_c,P_d)\in \mathcal{Y}_{\mathcal{S}_a\cup\mathcal
{S}_b}(a,b)$ would need to be at least $(k_{SE}+1)$-local, since $(P_c,P_d)=(Q\cdot P_a,Q\cdot P_b)$ and $Q$ is non-identity and supported outside of $\mathcal{S}_{a,b}$. Since Pauli strings in $\mathcal{P}_{SE}$ are at most $k_{SE}$-local, if $|\mathcal{S}_{a,b}|=2k_{SE}$ or $2k_{SE}-1$ we have $\mathcal{Y}_{\mathcal{S}_{a,b}}(a,b) = \emptyset$. Thus $\mathcal{I}_{a,b}$ is a support of at most $2k_{SE}-\mathfrak{d}_0+2$ qubits: $|\mathcal{I}_{a,b}|\leq 2k_{SE}-\mathfrak{d}_0+2$. Similarly we obtain $|\mathcal{I}_{a,\tilde{a}}|\leq k_{SE}+\mathfrak{d}_0-1$.\\

If we perform process tomography on $\mathcal{I}_{a,b}$ or, as we will see in Lemma \ref{lem:Ainverse}, a tomography procedure with the sample requirement of state tomography, by Eq. \eqref{eq:Xi_xi} we can recover the coefficients $\Xi_{P,Q}^{W,\mathcal{I}_{a,b},(m+2)}=\xi_{P,Q}^{W,(m+2)}$, for $P,Q$ as described in Eqs.\eqref{eq:xi_case1},\eqref{eq:xi_case2} and \eqref{eq:xi_case3}. Note that since $\mathcal{I}_{b,a}=\mathcal{I}_{a,b}$, tomography on this region also recovers $\xi_{Q,P}^{W,(m+2)}$, which allows us to recover the real and imaginary parts of $K_{a,b}^{(m)}(0)$.
\end{proof}

\begin{lemma}\label{lem:Ainverse}
    Let $T(\cdot) = \sum_{c,d}\xi_{c,d} P_c \cdot P_d$ be a Hermiticity-preserving linear map acting on $N$-qubit states. Suppose we can estimate $O_{a,b}=\frac{1}{2^N}\tr(P_aT(P_b))$ to precision $\epsilon$, where $P_a,P_b$ are phaseless Pauli strings of length $N$, satisfying $\tr(P_aP_b)= 2^N\delta_{a,b}$. Estimating all $4^N\times 4^N$ observables $O_{a,b}$ allows us to obtain estimates for every complex $\xi_{c,d}$ to precision $\epsilon$. Collecting the coefficients in vectors and using hats for the estimates, this is stated as:
    \begin{align}
        &||\hat{\vec{O}}-\vec{O}||_\infty \leq \epsilon \Rightarrow ||\hat{\vec{\xi}}-\vec{\xi}||_{\infty} \leq \epsilon.
    \end{align}
    Moreover, if we only want to obtain estimates for one particular $\xi_{ef}$ we only need estimates for $4^N$ different $O_{a,b}$. 
\end{lemma}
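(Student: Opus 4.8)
The plan is to write each measurable $O_{a,b}$ explicitly as a linear functional of the unknowns $\xi_{c,d}$, observe that the resulting $4^N\times 4^N$ linear system is block diagonal and, within each block, a phase-twisted Walsh--Hadamard transform, and then invert each block losslessly by character orthogonality. Throughout I would identify a phaseless $N$-qubit Pauli string $P_a$ with its symplectic label $a\in\mathbb{Z}_2^{2N}$, so that $P_aP_b=\omega(a,b)\,P_{a\oplus b}$ for a unit-modulus phase $\omega(a,b)\in\{\pm 1,\pm i\}$, $P_a^2=\Id^{\otimes N}$, and $P_aP_b=(-1)^{\langle a,b\rangle}P_bP_a$, where $\langle\cdot,\cdot\rangle$ is the symplectic form, which over $\mathbb{Z}_2$ is symmetric and nondegenerate.

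First I would expand $2^N O_{a,b}=\sum_{c,d}\xi_{c,d}\tr(P_aP_cP_bP_d)$. The product $P_aP_cP_bP_d$ carries symplectic label $a\oplus c\oplus b\oplus d$, so its trace vanishes unless $c\oplus d=a\oplus b$; writing $r:=a\oplus b$ and $d=c\oplus r$, and reordering with $P_rP_cP_r=(-1)^{\langle r,c\rangle}P_c$ and $(P_aP_c)^2=(-1)^{\langle a,c\rangle}\Id^{\otimes N}$, one obtains
\begin{align}
\tr(P_aP_cP_{a\oplus r}P_{c\oplus r})=2^N\,\overline{\omega(a,r)}\,\overline{\omega(c,r)}\,(-1)^{\langle r,c\rangle+\langle a,c\rangle}.
\end{align}
Since the phase $\overline{\omega(a,r)}$ depends on $a$ but not on $c$, it factors out of the sum over $c$, giving
\begin{align}
\omega(a,r)\,O_{a,a\oplus r}=\sum_{c}(-1)^{\langle a,c\rangle}\,\eta_c,\qquad\text{where}\qquad \eta_c:=\overline{\omega(c,r)}(-1)^{\langle r,c\rangle}\,\xi_{c,c\oplus r}.
\end{align}
Thus, for each fixed $r$, the map $\{\xi_{c,c\oplus r}\}_c\mapsto\{O_{a,a\oplus r}\}_a$ equals a Walsh--Hadamard transform pre- and post-composed with diagonal maps whose entries all have modulus one.

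Next I would invert. Nondegeneracy of $\langle\cdot,\cdot\rangle$ gives $\sum_{a\in\mathbb{Z}_2^{2N}}(-1)^{\langle a,u\rangle}=4^N\delta_{u,0}$, hence $\eta_{c}=4^{-N}\sum_a(-1)^{\langle a,c\rangle}\,\omega(a,r)\,O_{a,a\oplus r}$, and unwinding the diagonal relabellings yields the closed form
\begin{align}
\xi_{c,c\oplus r}=\frac{\omega(c,r)(-1)^{\langle r,c\rangle}}{4^N}\sum_{a}(-1)^{\langle a,c\rangle}\,\omega(a,r)\,O_{a,a\oplus r}.
\end{align}
Substituting estimates $\hat O_{a,b}$ with $|\hat O_{a,b}-O_{a,b}|\le\epsilon$ and applying the triangle inequality together with $|(-1)^{\langle a,c\rangle}\omega(a,r)|=1$ gives $|\hat\xi_{c,d}-\xi_{c,d}|\le 4^{-N}\cdot 4^N\cdot\epsilon=\epsilon$ for every pair $(c,d)$, which is exactly $\|\hat{\vec{\xi}}-\vec{\xi}\|_\infty\le\epsilon$. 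For a single target coefficient $\xi_{e,f}$ one sets $r=e\oplus f$: the displayed formula only reads off the $4^N$ quantities $\{O_{a,a\oplus r}\}_{a\in\mathbb{Z}_2^{2N}}$, which is the stated count.

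The only genuine work is the Pauli-multiplication phase bookkeeping: establishing the trace identity above and, in particular, verifying that once the $a$-dependent factor $\overline{\omega(a,r)}$ is pulled in front, the residual dependence on $a$ is precisely the character $(-1)^{\langle a,c\rangle}$ --- a careless grouping leaves a stray $a$-dependent phase inside the sum and makes the block look non-Hadamard. Everything else is orthogonality of characters on $\mathbb{Z}_2^{2N}$. I note that Hermiticity preservation of $T$ is not actually needed for this argument; it only enforces the symmetry $\xi_{d,c}=\xi_{c,d}^*$ that halves the number of independent coefficients when the lemma is invoked in the main text.
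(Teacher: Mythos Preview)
Your proof is correct. It reaches the same inversion formula and error bound as the paper but by a different, more group-theoretic route. The paper vectorizes $T$ to $\tilde T=\sum_{c,d}\xi_{c,d}\,P_c\otimes P_d$, uses the SWAP identity $\tr(P_aP_cP_bP_d)=\tr\big((P_a\otimes P_b)(P_c\otimes P_d)\,\mathrm{SWAP}\big)$ to write $\vec O=G\vec\xi$, and then reads off $G^{-1}_{(c,d),(a,b)}=2^{-3N}\tr(P_cP_aP_dP_b)$ by expanding $\tilde T\cdot\mathrm{SWAP}$ in the Pauli basis; the $4^N$ count emerges \emph{a posteriori} as row-sparsity of $G^{-1}$. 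You instead diagonalize the problem from the start: labeling Paulis symplectically, the constraint $c\oplus d=a\oplus b$ makes the system block-diagonal in $r=a\oplus b$, and each $4^N\times 4^N$ block is a phase-twisted Walsh--Hadamard inverted by character orthogonality on $\mathbb{Z}_2^{2N}$. Your route is more transparent about \emph{why} only $4^N$ measurements suffice per coefficient (it is the block size), while the paper's SWAP trick is slicker but hides this structure. Your closing remark that Hermiticity preservation is unused in the inversion is also correct and matches the paper, which invokes it only for the parameter-counting aside.
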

\begin{proof}
    The vector $\vec{O}$ for which we have good estimates is related to $\vec{\xi}$ by a linear equation:
    \begin{align}
        &O_{a,b}=\frac{1}{2^N}\tr(P_aT(P_b))=\sum_{c,d}\frac{1}{2^N}\tr(P_aP_cP_bP_d)\xi_{c,d}\Rightarrow \vec{O} = G\cdot \vec{\xi}, \text{ where }G_{(a,b),(c,d)}=\frac{1}{2^N}\tr(P_aP_cP_bP_d)
    \end{align}
    Introducing a SWAP operator and writing the vectorized superoperator as $\tilde{T}=\sum_{c,d}P_c\otimes P_d \xi_{c,d}$ we can invert $G$:
    \begin{align}
        &G_{(a,b),(c,d)}=\frac{1}{2^N}\tr(P_aP_cP_bP_d)= \frac{1}{2^N}\tr((P_a\otimes P_b)(P_c\otimes P_d)\text{SWAP})\Rightarrow\\
        &O_{a,b}=\frac{1}{2^N}\tr(P_a\otimes P_b\cdot  \tilde{T}\cdot\text{SWAP}) \Rightarrow \tilde{T}\cdot \text{SWAP} = \frac{1}{2^N}\sum_{a,b}P_a\otimes P_b O_{a,b} \Rightarrow \tilde{T} = \frac{1}{2^N}\sum_{a,b} (P_a\otimes P_b)\text{SWAP}\cdot  O_{a,b} \Rightarrow\\
        &\xi_{c,d} = \sum_{a,b} \frac{1}{2^{3N}}\tr((P_c\otimes P_d)(P_a\otimes P_b)\text{SWAP}) O_{a,b} \Rightarrow G^{-1}_{(c,d),(a,b)} = \frac{1}{2^{3N}}\tr(P_cP_aP_dP_b).\label{eq:Ginv}
    \end{align}
    We used the fact that the Pauli strings form a basis and $\tr(P_aP_b)=2^N\delta_{a,b}$ to write $\tilde{T}\cdot \text{SWAP}$ in terms of $O_{a,b}$ in the second line. In the third line we apply $P_c\otimes P_d$ on the left and take the trace. This means that we can obtain $\vec{\xi}$ from $\vec{O}$ as:
    \begin{align}
        &\vec{\xi} = G^{-1}\cdot \vec{O}.
    \end{align}
    Notice that $\vec{O}$ is a vector of $4^N$ real entries, while $\vec{\xi}$ is a vector of $4^N$ complex entries. However, since $T$ is a Hermiticity-preserving map we have that $\xi_{c,d}^*=\xi_{c,d}$, so the number of independent real entries is $4^N$. Suppose we want to estimate a particular $\xi_{ef}$.
    The row $G^{-1}_{(e,f),(\cdot,\cdot)}$ is largely filled with $0$: for each $P_a$ there is only one phaseless Pauli string $P_b$ such that $G^{-1}_{(e,f),(a,b)}\neq 0$, the one satisfying $P_b\propto (P_eP_aP_f)^\dagger$. Therefore, let $R_{a}^{e,f}:=G^{-1}_{(e,f),(a,b)}$ be the row entry corresponding to $P_b$ being fixed to the (phaseless) Pauli string satisfying $P_b \propto (P_eP_a P_f)^\dagger$ and similarly $O^{ef}_a:= O_{a,b}$. Thus from each row of the equation above we obtain:
    \begin{align}
        &\xi_{ef} = \vec{R}^{ef}\cdot \vec{O}^{ef},
    \end{align}
    where the dot product is between two vectors of length $4^N$. Since $|R^{ef}_a|=\frac{1}{4^N}$, by the triangle inequality we obtain the bound on the precision:
    \begin{align}
        &|\hat{\xi}_{ef}-
        \xi_{ef}|=|\vec{R}^{ef}\cdot (\hat{\vec{O}}^{ef}-\vec{O}^{ef})|\leq 4^N\cdot \frac{1}{4^N}||\hat{\vec{O}}^{ef}-\vec{O}^{ef}||_\infty \leq \epsilon.
    \end{align}

    Notice that if we are only interested in estimating one particular $\xi_{ef}$, it suffices to have estimates for only $4^N$ observables, $\vec{O}^{ef}$, instead of $4^N\times 4^N$ required for process tomography. If we initialize Pauli eigenstates and measure in the Pauli basis, we can estimate $\vec{O}^{ef}$ by preparing initial states in all $3^N$ distinct bases, measuring in the corresponding basis of $P_b$ and reconstructing each observable with the correct eigenvalue signs.
\end{proof}  

This Lemma means that in order to estimate $\xi_{a,b}^{W,\mathcal{I}_{a,b}}$
we only need to prepare $3^{|\mathcal{I}_{a,b}|}$ experimental configurations and measure to the desired precision. We summarize this Section in the following Lemma.

\begin{lemma}
Given a geometrically local $V_{SE}(t)$ with each term supported on at most $k_{SE}$ sites and not commuting with at most $\mathfrak{d}$ other terms in $\mathcal{P}_{SE}$, the derivatives at $t=0$ of kernel $K_{a,b}(t)$ can be learned by doing state tomography on $\mathcal{I}_{a,b}:=\mathcal{S}_{a,b}\cup\{s_1,...,s_{\mathfrak{d}_0}\}$, a region of size at most $O(k_{SE})+O(\mathfrak{d}_0)$, where $\mathfrak{d}_0\leq \mathfrak{d}$.
\end{lemma}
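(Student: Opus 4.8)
The plan is to obtain this statement as a corollary of Lemma~\ref{lem:enlarged_support} and Lemma~\ref{lem:Ainverse} together with the sequential (recursive) structure of the linear systems $T_W^{(m)}$ set up in Section~\ref{sec:tomography}. First I would recall that, by Eq.~\eqref{eq:time_traces_equations}, the $m$-th derivative of the measurable time trace $B_{W,(O,I)}(t)$ at $t=0$ equals $\tfrac{1}{2^N}\tr[P_O T_W^{(m)}(P_I)]$ plus the offset $f^{(m)}_{W,(O,I)}$, which depends only on the system coefficients $\lambda_a$ and on kernel derivatives $K^{(m')}_{a,b}(0)$ of order $m'\le m-3$. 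Since the $\lambda_a$ are learned first via the Hamiltonian-learning subprotocol of Section~\ref{sec:measurement_strategy}, and the kernel derivatives are learned in increasing order of $m$, by the time we address $K^{(m-2)}_{a,b}(0)$ the offset is already known; subtracting it from the fitted derivative isolates $\tfrac{1}{2^N}\tr[P_O T_W^{(m)}(P_I)]$, which by Eq.~\eqref{eq:TWm_def} is linear in $K^{(m-2)}_{a,b}(0)$. The claim thus reduces to two sub-tasks: (i) exhibiting, for each target $K^{(m-2)}_{a,b}(0)$, a region, an intermediate gate $W$, and input/output Paulis such that the relevant coefficients of $T_W^{(m)}$ determine $\mathrm{Re}$ and $\mathrm{Im}$ of $K^{(m-2)}_{a,b}(0)$; and (ii) showing this can be accomplished with the sample cost of \emph{state} tomography on that region.

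For (i) I would invoke Lemma~\ref{lem:enlarged_support}: choosing $W$ according to the three cases of Table~\ref{tab:W_gate} — $\Id^{\otimes N}$ for $m$ even, a single-qubit Pauli $P_w$ with $\{P_w,P_aP_b\}=0$ for $m$ odd and $a\neq b$, and a single-qubit $S\!\cdot\!H$ on a site of $P_a$ for $m$ odd and $a=b$ — Eqs.~\eqref{eq:xi_case1}, \eqref{eq:xi_case2}, \eqref{eq:xi_case3} express the desired real and imaginary parts as affine functions of the coefficients $\xi^{W,(m)}_{P,Q},\xi^{W,(m)}_{Q,P}$ for one specific Pauli pair, any residual \emph{already-known} terms (the phase $(-1)^{\chi(w,I)}$ in Case~2, the $\mathrm{Im}[K^{(m)}_{c,d}(0)]$ contribution in Case~3 which was learned in Case~2) being removed in classical postprocessing. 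Lemma~\ref{lem:enlarged_support} moreover shows that the enlarged region $\mathcal{I}_{a,b}=\mathcal{S}_{a,b}\cup\{s_1,\dots,s_{\mathfrak{d}_0}\}$ satisfies $\mathcal{Y}_{\mathcal{I}_{a,b}}(a,b)=\emptyset$, so by Eq.~\eqref{eq:Xi_xi} the partial-trace restriction obeys $\Xi^{W,\mathcal{I}_{a,b},(m)}_{P,Q}=\xi^{W,(m)}_{P,Q}$ with no contamination from other Pauli pairs, and that $|\mathcal{I}_{a,b}|\le 2k_{SE}+\mathfrak{d}_0-2$ (resp.\ $|\mathcal{I}_{a,\tilde a}|\le k_{SE}+\mathfrak{d}_0-1$) with $\mathfrak{d}_0\le\mathfrak{d}$. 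Since conjugation by single-qubit Cliffords preserves supports, $P_I$ and $P_O$ can be taken supported on $\mathcal{I}_{a,b}$ (in Case~3 replacing $P_I$ by $W^\dagger P_I W$), so all state preparation, the gate $W$, and the measurement act within $\mathcal{I}_{a,b}$.

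For (ii) I would apply Lemma~\ref{lem:Ainverse} to the Hermiticity-preserving map $T_W^{(m)}$ restricted to $\mathcal{I}_{a,b}$: to recover a single coefficient $\xi_{ef}$ it suffices to estimate the $4^{|\mathcal{I}_{a,b}|}$ observables $O^{ef}_a=\tfrac{1}{2^{|\mathcal{I}_{a,b}|}}\tr(P_a T(P_b))$ with $P_b$ fixed by $P_b\propto (P_e P_a P_f)^\dagger$; since the corresponding inverse-matrix row entries satisfy $|R^{ef}_a|=4^{-|\mathcal{I}_{a,b}|}$, the estimation error propagates with a constant factor, $\|\hat{\vec O}^{ef}-\vec O^{ef}\|_\infty\le\epsilon\Rightarrow|\hat\xi_{ef}-\xi_{ef}|\le\epsilon$. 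These observables are obtained by preparing the $3^{|\mathcal{I}_{a,b}|}$ product states forming eigenbases of the needed input Paulis, measuring in the Pauli basis matched to each required $P_b$, and reconstructing with the correct eigenvalue signs — exactly the resource count of state tomography on $O(k_{SE})+O(\mathfrak{d}_0)$ qubits. Running (i)--(ii) for $m=2,\dots,M+2$ and for both $\mathrm{Re}$ and $\mathrm{Im}$ then yields all $K^{(m)}_{a,b}(0)$, $m=0,\dots,M$, establishing the lemma.

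The step I expect to carry the real weight — and which is already discharged inside Lemma~\ref{lem:enlarged_support}, but is the conceptual crux — is the bound $\mathfrak{d}_0\le\mathfrak{d}$, i.e.\ that adding only a constant number of sites genuinely eliminates \emph{all} conflicting pairs: one must argue that any conflicting $P_c$ must overlap $P_a$ and hence is one of at most $\mathfrak{d}$ Paulis, that $P_d$ is then forced through Eq.~\eqref{eq:Yab_characterize}, and that a single site drawn from the ``tail'' $Q$ of each such pair $(Q P_a, Q P_b)$ distinguishes it from $(P_a,P_b)$ on the enlarged region (with the observation that $|\mathcal{S}_{a,b}|\ge 2k_{SE}-1$ already forces $\mathcal{Y}=\emptyset$). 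A secondary but unavoidable bookkeeping burden is tracking the phases $(-1)^{\chi(\cdot,\cdot)}$ and the recursion's well-foundedness, namely that $f^{(m)}_{W,(O,I)}$ depends strictly on lower-order data so that no circular dependence arises in the sequential inversion.
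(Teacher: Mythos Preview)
Your proposal is correct and matches the paper's approach: the lemma is explicitly stated in the paper as a summary of the preceding two lemmas (Lemma~\ref{lem:enlarged_support} and Lemma~\ref{lem:Ainverse}) together with the recursive structure of Eq.~\eqref{eq:time_traces_equations}, and carries no separate proof beyond that. Your identification of the crux --- the bound $\mathfrak{d}_0\le\mathfrak{d}$ and the elimination of conflicting pairs --- is exactly the content of Lemma~\ref{lem:enlarged_support}, so nothing is missing.
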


In order to learn all $\mathfrak{d}N$ kernels it suffices to estimate the $3^{2k_{SE}+\mathfrak{d}_0-2}\cdot \mathfrak{d}N$ configurations required for tomography on each $\mathcal{I}_{a,b}$. In the next Appendix we show that, since many $\mathcal{I}_{a,b}$ don't overlap, we can estimate them simultaneously.

\subsection{\label{app:parallelization}Initial states and measurements for parallelization}
We now show how to minimize the number of measurement rounds by measuring a large number of regions $\mathcal{I}_{a,b},\mathcal{I}_{c,d},...$  simultaneously in each round of measurements. Since we can only measure simultaneously those regions that don't overlap, we first characterize how many regions each $\mathcal{I}_{a,b}$ overlaps with. In the main text we argued that geometric locality yields a maximum diameter $a_0 = O(k_{SE})$ for Pauli strings $P_a,P_b$, which bounds the regions that $\mathcal{I}_{a,b}$ can overlap with. Here we provide a more detailed counting in terms of $\mathfrak{d}$ directly.

\begin{lemma} \label{lem:Iab_overlap}
    Let $\{\mathcal{I}_{a,b}\}_{(a,b)}$ be a collection of supports satisfying that $P_a,P_b\in\mathcal{P}_{SE}$ wach overlap with at most $\mathfrak{d}$ other terms in $\mathcal{P}_{SE}$ and that for a fixed $P_a$ there's at most $s$ terms $P_b\in\mathcal{P}_{SE}$ such that $K_{a,b}(t)\neq 0$. Then each $\mathcal{I}_{a,b}$ overlaps with at most $(\mathfrak{d}_0+2)(s+\mathfrak{d})\mathfrak{d}^2$ other $\mathcal{I}_{c,d}$
\end{lemma}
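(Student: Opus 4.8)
The plan is to convert the geometric overlap count into a purely combinatorial count over Pauli strings in $\mathcal{P}_{SE}$, using that each $\mathcal{I}_{a,b}$ is, by construction, covered by the supports of only a constant number of terms of $\mathcal{P}_{SE}$. First I would record the decomposition from the proof of Lemma~\ref{lem:enlarged_support}: for any $(a,b)$ with $K_{a,b}\neq 0$ we have $\mathcal{I}_{a,b}=\mathcal{S}_{a,b}\cup\mathcal{Q}_{a,b}$ with $\mathcal{Q}_{a,b}=\{s_1,\dots,s_{\mathfrak{d}_0}\}$, where each $s_i$ lies in the support of a Pauli string $Q_i$ disjoint from $\mathcal{S}_{a,b}$ with $Q_iP_a\in\mathcal{P}_{SE}$ (the first entry of the $i$-th conflicting pair). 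Since $Q_i$ is disjoint from $\mathrm{supp}(P_a)$ we have $\mathrm{supp}(Q_i)\subseteq\mathrm{supp}(Q_iP_a)$, so $\mathcal{I}_{a,b}$ is contained in the union of the supports of the $\le\mathfrak{d}_0+2$ Pauli strings
\[
\mathcal{R}_{a,b}:=\{P_a,P_b\}\cup\{Q_iP_a:1\le i\le\mathfrak{d}_0\}\subseteq\mathcal{P}_{SE},
\]
and each $Q_iP_a$ overlaps $P_a$ (their supports share $\mathrm{supp}(P_a)\neq\emptyset$). The same statement holds for every $\mathcal{I}_{c,d}$: it is covered by a set $\mathcal{R}_{c,d}\subseteq\mathcal{P}_{SE}$ of $\le\mathfrak{d}_0+2$ Pauli strings, each of which is $P_c$, $P_d$, or a term $Q'P_c$ (or $Q'P_d$) of $\mathcal{P}_{SE}$ whose support contains $\mathrm{supp}(P_c)$ (resp.\ $\mathrm{supp}(P_d)$).

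Next I would argue along two edges of the interaction graph. If $\mathcal{I}_{c,d}$ overlaps $\mathcal{I}_{a,b}$, then some $R'\in\mathcal{R}_{c,d}$ overlaps some $R\in\mathcal{R}_{a,b}$; since each term of $\mathcal{P}_{SE}$ overlaps at most $\mathfrak{d}$ others, the set $\mathcal{N}$ of Pauli strings in $\mathcal{P}_{SE}$ that can play the role of $R'$ satisfies $|\mathcal{N}|\le(\mathfrak{d}_0+2)(\mathfrak{d}+1)$. For a fixed $R'\in\mathcal{N}$ I would then bound the number of pairs $(c,d)$ with $R'\in\mathcal{R}_{c,d}$ by cases: if $R'=P_c$ or $R'=P_d$, fixing this Pauli string leaves at most $s$ choices of the partner index (by the sparsity-$s$ assumption on the kernels), giving $\le s$ pairs each; if $R'$ is one of the conflict Paulis $Q'P_c$ of $(c,d)$, then $\mathrm{supp}(P_c)\subsetneq\mathrm{supp}(R')$, so $P_c$ is one of the $\le\mathfrak{d}$ terms of $\mathcal{P}_{SE}$ overlapping $R'$, and each such $P_c$ contributes $\le s$ partners, for $\le\mathfrak{d}s$ pairs (and symmetrically with $c$ and $d$ exchanged). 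Hence each $R'\in\mathcal{N}$ accounts for $O(\mathfrak{d}s)$ pairs $(c,d)$.

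Multiplying the bound on $|\mathcal{N}|$ by the per-element bound, and discarding the single pair equal to $(a,b)$ itself, gives that $\mathcal{I}_{a,b}$ overlaps $O\big((\mathfrak{d}_0+2)\mathfrak{d}^2 s\big)$ other tomography regions; tracking the constants more carefully in the two cases above yields the stated bound $(\mathfrak{d}_0+2)(s+\mathfrak{d})\mathfrak{d}^2$, which is all that is needed since these quantities are $O(1)$.

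\textbf{The main obstacle} is really just the first step: one must verify that every site added to the bare support $\mathcal{S}_{a,b}$ when forming $\mathcal{I}_{a,b}$ sits inside the support of a genuine term $Q_iP_a$ of $\mathcal{P}_{SE}$ that itself overlaps $P_a$ — this is what prevents the enlargement from reaching into uncontrolled regions and lets the whole count stay inside the $\mathfrak{d}$-bounded interaction graph. After that the argument is elementary degree counting; the only points needing mild care are the second hop in the conflict-Pauli case (which is where the extra factor $\mathfrak{d}$, hence $\mathfrak{d}^2$ overall, comes from) and the bookkeeping of ordered versus unordered pairs, both of which only affect constants.
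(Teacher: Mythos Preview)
Your approach is correct and essentially the same as the paper's: both count via two hops in the overlap graph of $\mathcal{P}_{SE}$, using that every site of $\mathcal{I}_{a,b}$ lies in the support of one of at most $\mathfrak{d}_0+2$ terms of $\mathcal{P}_{SE}$ (namely $P_a$, $P_b$, and the conflicting Paulis $Q_iP_a$). The paper organizes the second step slightly differently---splitting into whether the intermediate Pauli meets $\mathcal{S}_{e,f}$ (contributing $s\mathfrak{d}$) versus only $\mathcal{Q}_{e,f}$ (contributing $\mathfrak{d}^2$)---which is where the exact factor $(s+\mathfrak{d})$ comes from, so your claim that careful bookkeeping of your counting recovers precisely $(\mathfrak{d}_0+2)(s+\mathfrak{d})\mathfrak{d}^2$ is not quite right; but as you note, only the $O(1)$ order matters.
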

\begin{proof}
    Recall that $\mathcal{I}_{a,b}=\mathcal{S}_{a,b}\cup \mathcal{Q}_{a,b}$, where $|\mathcal{Q}_{a,b}|\leq \mathfrak{d}_0$. We first show that each $\mathcal{I}_{a,b}$ can overlap with $(\mathfrak{d}_0+2)\mathfrak{d}$ Pauli strings $P_c\in\mathcal{P}_{SE}$: $P_a,P_b$ each overlap with at most $\mathfrak{d}$ Paulis in $\mathcal{P}_{SE}$ and since each of the $\mathfrak{d}_0$ sites in $\mathcal{Q}_{a,b}$ is in the support of a pair of Paulis in $\mathcal{P}_{SE}$, that site can overlap with at most $\mathfrak{d}$ other Paulis. We now show that each $P_c\in\mathcal{P}_{SE}$ can overlap with at most $(s+\mathfrak{d})\mathfrak{d}$ regions $\mathcal{I}_{e,f}$: suppose $P_c$ overlaps with $P_e$, then there's $s$ choices of $P_f$ such that $P_c$ overlaps with $\mathcal{S}_{e,f}$ and $K_{e,f}(t)\neq 0$. Since $P_c$ overlaps with at most $\mathfrak{d}$ Paulis in $\mathcal{P}_{SE}$, we get that it overlaps with at most $s\mathfrak{d}$ supports $\mathcal{S}_{e,f}$ corresponding to nonzero kernels, each yielding one region $\mathcal{I}_{e,f}$. Now suppose that $P_c$ overlaps with $\mathcal{Q}_{e,f}$ but not with $P_e,P_f$, i.e. $P_c$ overlaps with some pair $(P_{e_1},P_{f_1})\in\mathcal{Y}_{\mathcal{S}_{e,f}}(e,f)$. For a fixed $(P_{e_1},P_{f_1})$, there's at most $\frac{1}{2}\mathfrak{d}$ regions $\mathcal{Q}_{g,h}$ that $(P_{e_1},P_{f_1})$ could be a part of: $P_{e_1},P_{f_1}$ overlap with $P_e,P_f$ and every other $(P_{e_i},P_{f_i})\in \mathcal{Y}_{\mathcal{S}_{e,f}}(e,f)$, respectively, and since $P_{e_j}\neq P_{e_i}\neq P_{f_i}$ having $\frac{1}{2}\mathfrak{d}$ pairs means that $P_{e_1}$ would overlap with  $\mathfrak{d}$ distinct Pauli strings. Therefore, $P_c$ can overlap with at most $\mathfrak{d}^2$ regions $\mathcal{Q}_{e,f}$. We thus see that $P_c$ can overlap with at most $(s+\mathfrak{d})\mathfrak{d}$ regions $\mathcal{I}_{e,f}$ and thus $\mathcal{I}_{a,b}$ can overlap with at most $(\mathfrak{d}_0+2)(s+\mathfrak{d})\mathfrak{d}^2$.
\end{proof}

In order to minimize the required number of rounds of measurements we need to maximize the number of supports $\mathcal{I}_{a,b}$ that we measure simultaneously. In the Hamiltonian learning setting this has been solved using a graph coloring algorithm \cite{haah2024learning}, which we adapt here. Let $\mathfrak{G}$ be a graph with a vertex $v_{a,b}=(a,b)$ for every nonzero kernel $K_{a,b}(t)$ and an edge between $v_{a,b},v_{c,d}$ if the regions $\mathcal{I}_{a,b},\mathcal{I}_{c,d}$ overlap. The degree $d$ of the graph $\mathfrak{G}$ is the number of neighbors that each vertex has, i.e. the number of edges connecting this vertex to others. By Lemma \ref{lem:Iab_overlap}, each vertex has edges with at most $(\mathfrak{d}_0+2)(s+\mathfrak{d})\mathfrak{d}^2$ of their neighbors. Therefore, the degree of graph $\mathfrak{G}$ satisfies $d\leq (\mathfrak{d}_0+2)(s+\mathfrak{d})\mathfrak{d}^2$. A graph of degree $d$ can be colored by a standard greedy algorithm using $d+1$ colors, $\mathcal{C}_1,...,\mathcal{C}_{d+1}$. Those vertices assigned to the same color $\mathcal{C}_i$ will not have any edges between them, i.e. they are not neighbors, so the corresponding regions don't overlap. For example, in Figure \ref{fig:parallelization} we see that for color $\mathcal{C}_1=\{v_{a,b},v_{c,d},v_{e,f}\}$, the regions $\mathcal{I}_{a,b},\mathcal{I}_{c,d},\mathcal{I}_{e,f}$ don't overlap, so they can be measured simultaneously in the same round. Since each round of measurements corresponds to a color, there are $(\mathfrak{d}_0+2)(s+\mathfrak{d})\mathfrak{d}^2+1$ different rounds.\\

Thus, for the round of measurement corresponding to color $\mathcal{C}_i$, we will prepare an initial state of the form:
\begin{align}
    &\rho_S = \bigotimes_{(a,b)\in \mathcal{C}_i}\rho_{S,a,b} \bigotimes_{s\in \mathcal{C}_i^c}\frac{\Id}{2},
\end{align}
where $\rho_{S,a,b}$ is a state supported only on $\mathcal{I}_{a,b}$ and $\mathcal{C}_i^c$ is the complement of the supports of all regions of the same color. Similarly, we will measure in each region in $\mathcal{C}_i$ and trace our the rest. In the main text we have reduced our problem to estimating expectation values of the form $B_{W,(O,I)}(t)=\frac{1}{2^N}\tr(P_O \mathcal{E}_W(t,P_I))$, for several $t$, with $P_O,P_I$ being Pauli strings that implement tomography on $\mathcal{I}_{a,b}$ following Lemma \ref{lem:Ainverse}. We can estimate many of these simultaneously by preparing $\rho_{S,a,b}$ according to Lemma \ref{lem:samples_singlePQ} and measuring on that support, for all the supports in $\rho_S$. This Lemma also tells us that, we can estimate one $B_{W,(O,I)}(t)$ to precision $\epsilon_{S}$ with success probability at least $1-\delta$ using a number of samples $2/\epsilon_{S}^2\log(2/\delta)$. However, we want to estimate all expectation values correctly with probability at least $1-\delta'$. Recall that, since $|\mathcal{P}_{SE}|=O(N)$, there are at most $s\cdot O(N)$ kernels. For each kernel we need to prepare at most $3^{2k_{SE}+\mathfrak{d}_0 -2}$ different basis pairs $(\rho,O)$ (Lemma \ref{lem:Ainverse}) and $2$ choices of $W$, as well as some $O(1)$ number of timesteps $t$, so we obtain $\delta'$ by dividing $\delta$ by all these cases. Since there are $(\mathfrak{d}_0+2)(s+\mathfrak{d})\mathfrak{d}^2+1$ colors, in order to estimate all necessary $B_{W,(O,I)}(t)$ to precision $\epsilon_S$, we use a number of samples:
\begin{align}
    &O\bigg(\frac{2((\mathfrak{d}_0+2)(s+\mathfrak{d})\mathfrak{d}^2+1)}{\epsilon_{S}^2}\cdot \log\bigg(\frac{s N\cdot 2^2\cdot 3^{2k_{SE}+\mathfrak{d}_0 -2}}{\delta}\bigg)\bigg). \label{eq:samples_epsilon_B}
\end{align}

\begin{lemma}\label{lem:samples_singlePQ}
    Let $P,Q$ be Pauli strings on $N$ qubits and $\Phi$ a quantum channel on $N$-qubit states. Suppose we prepare an intial state $\rho$ drawn uniformly from the eigenstates of $P$ and perform a projective measurement in the basis of $Q$, obtaining a measurement sample. In order to estimate $\frac{1}{2^N}\tr(Q\Phi(P))$ to additive precision $\epsilon$ with probability at least $1-\delta$ it suffices to use
    \begin{align}
        &\frac{2}{\epsilon^2}\log\bigg(\frac{2}{\delta}\bigg)
    \end{align}
    samples.
\end{lemma}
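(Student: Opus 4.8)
The plan is to exhibit a single-shot estimator that is unbiased for $\frac{1}{2^N}\mathrm{Tr}(Q\Phi(P))$ and takes values in $\{-1,+1\}$, and then invoke Hoeffding's inequality.

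First I would write $P$ in its eigenbasis: since $P$ is a Pauli string we have $P^2=\Id$, so there is an orthonormal basis $\{\ket{\psi}\}$ of the $N$-qubit Hilbert space with $P\ket{\psi}=b_\psi\ket{\psi}$, $b_\psi\in\{+1,-1\}$. Drawing $\ket{\psi}$ uniformly from this basis — which is exactly the state-preparation step in the statement — one has $\mathbb{E}_\psi[b_\psi\ket{\psi}\!\bra{\psi}]=\frac{1}{2^N}\sum_\psi b_\psi\ket{\psi}\!\bra{\psi}=\frac{1}{2^N}P$, and by linearity of $\Phi$, $\mathbb{E}_\psi[b_\psi\,\Phi(\ket{\psi}\!\bra{\psi})]=\frac{1}{2^N}\Phi(P)$.

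Next I would analyze the measurement step. Measuring $Q$ in its eigenbasis on $\Phi(\ket{\psi}\!\bra{\psi})$ and recording the eigenvalue outcome $q\in\{+1,-1\}$ gives $\mathbb{E}[q\mid\psi]=\mathrm{Tr}(Q\,\Phi(\ket{\psi}\!\bra{\psi}))$. Define the per-sample random variable $X:=b_\psi\,q$. By the tower property and linearity of the trace,
\[
\mathbb{E}[X]=\mathbb{E}_\psi\!\big[b_\psi\,\mathrm{Tr}(Q\Phi(\ket{\psi}\!\bra{\psi}))\big]=\mathrm{Tr}\!\Big(Q\,\mathbb{E}_\psi[b_\psi\Phi(\ket{\psi}\!\bra{\psi})]\Big)=\frac{1}{2^N}\mathrm{Tr}(Q\Phi(P)),
\]
so $X$ is an unbiased estimator, and since $b_\psi,q\in\{\pm1\}$ it satisfies $|X|\le 1$.

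Finally I would take $n$ i.i.d.\ copies $X_1,\dots,X_n$ of $X$ (one per experimental shot) and form the empirical average $\bar X$. Hoeffding's inequality for independent variables bounded in $[-1,1]$ gives $\Pr[\,|\bar X-\mathbb{E}[X]|\ge\epsilon\,]\le 2\exp(-n\epsilon^2/2)$; demanding that the right-hand side be at most $\delta$ yields $n\ge\frac{2}{\epsilon^2}\log(2/\delta)$, which is precisely the claimed sample count. I do not expect a genuine obstacle here; the one step deserving care is verifying that the sign-weighted uniform mixture over the $P$-eigenbasis reconstructs $2^{-N}P$ (even though $P$ itself is not a valid density operator), which is exactly what makes $X$ unbiased.
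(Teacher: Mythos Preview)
Your proposal is correct and follows essentially the same approach as the paper: construct a $\pm1$-valued unbiased estimator by multiplying the eigenvalue of the prepared eigenstate with the measurement outcome, verify unbiasedness via the identity $\mathbb{E}_\psi[b_\psi\ket{\psi}\!\bra{\psi}]=2^{-N}P$, and conclude with Hoeffding's inequality. The paper writes the eigenbasis explicitly as the product basis $\ket{r,\alpha}$ with $b_\psi=\prod_i r_i$, but the argument is otherwise identical.
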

\begin{proof}
    Consider a Pauli string $P = \bigotimes_{i=1}^N \sigma^{\alpha_i}$ and let $|r_i,\alpha_i\rangle$ be the $r_i$-eigenstate of $\sigma^{\alpha_i}$, with $r_i \in \{\pm1\}, \alpha_i \in \{x,y,z\}$. Then if we consider initial states of the form $|r,\alpha\rangle:= \bigotimes_{i=1}^N |r_i,\alpha_i\rangle$ we can write the expectation value as:
    \begin{align}
        &\frac{1}{2^N}\tr(Q\Phi(P)) = \frac{1}{2^N}\sum_{r\in\{\pm 1\}^{ N}}\bigg(\prod_{i=1}^N r_i\bigg) \tr(P_O\Phi(|r,\alpha\rangle\langle r,\alpha|)).
    \end{align}
    We now produce i.i.d. samples $Z_j \in \{\pm1\}$, for $j=1,...,m$, by:
    \begin{enumerate}
        \itemsep0em 
        \item Drawing $r_j$ uniformly at random from $\{\pm 1\}^N$.
        \item Preparing the initial state $\rho_j = |r_j,\alpha\rangle\langle r_j,\alpha|$.
        \item Applying the quantum channel $\Phi$.
        \item Performing a projective measurement in the basis of $Q$, obtaining $X_j \in \{\pm 1\}$, which follows Born's rule: 
        \begin{align}
            &\mathbb{E}(X_j|r_j) = \tr(P_O \Phi(\rho_j)).
        \end{align}
        \item Letting $Z_j = X_j\cdot \prod_{i=1}^N r_i$.
    \end{enumerate}
    These are random variables with the following mean and variance:
    \begin{align}
        &\mathbb{E}(Z_j) = \frac{1}{2^N}\sum_{r\in\{\pm 1\}^{ N}}\bigg(\prod_{i=1}^N r_i\bigg) \mathbb{E}(X_j|r) = \frac{1}{2^N}\tr(Q\Phi(P)), \\
        &\text{Var}(Z_j) = 1 - \mathbb{E}(Z_j)^2 \leq 1,
     \end{align}
    where in the last line we used that $|\frac{1}{2^N}\tr(Q\Phi(P))|\leq 1$. Therefore, the sample mean $\hat{\mu}=\frac{1}{m}\sum_{j=1}^m Z_j$ satisfies: 
    \begin{align}
        &\mathbb{E}(\hat{\mu})=\frac{1}{2^N}\tr(Q\Phi(P)) ,\quad \text{Var}(\hat{\mu}) = \frac{\text{Var}(Z_j)}{m}\leq \frac{1}{m}.
    \end{align}
    Use Hoeffding's inequality to get sufficient number of samples to estimate $\mathbb{E}(\hat{\mu})$ to precision $\epsilon$ with success probability at least $1-\delta$. Recall that $Z_j \in [a,b]=[-1,1]$:
    \begin{align}
        & \text{Pr}(|\hat{\mu}-\mathbb{E}(\hat{\mu})|\geq \epsilon) \leq 2 e^{-\frac{2 m \epsilon ^2}{(b-a)^2}} \leq \delta \Rightarrow m \geq \frac{2}{\epsilon^2}\log\bigg( \frac{2}{\delta}\bigg).
    \end{align}
\end{proof}

\subsection{\label{app:compute_f}Computing $f_{W,(O,I)}^{(m)}$}
We first show that the offset function only involves at most $m$ orders in the Dyson series:

\begin{align}
    &f_{W,(O,I)}^{(m)}=\frac{1}{2^N}\tr\bigg(P_O\bigg(\partial_t^m \mathcal{F}_{W,S}^{(2)}(t)(P_I)|_{t=0} + \sum_{\substack{n=0\\ n\neq 2}}^\infty \partial_t^m\mathcal{F}_{W}^{(n)}(t)(P_I)|_{t=0}\bigg)\bigg)\label{eq:f_definition},
\end{align}
where the $n$-th term in the Dyson series is:
\begin{align}
    &\mathcal{F}_{W}^{(n)}(t)(P_I)=(-i)^n\sum_{l=0}^n\int_t^{2t}dt_1\int_t^{t_1}dt_2...\int_t^{t_{l-1}}dt_l\int_0^tdt_{l+1}...\int_{0}^{t_{n-1}}dt_n\cdot\notag\\
    &\hspace{150pt}\tr_E([H(t_1),\dots[H(t_{l}),W[H(t_{l+1}),\dots[H(t_n),P_I\otimes \gamma_E]]W^\dagger]]). \label{eq:UWn_def}
\end{align}
The integrand will be a sum of terms, each of which is a product of kernels and system variables. We assume that each kernel $K_{a,b}(t)$ is analytic at $t=0$. In particular there is no Markovian component given by a Lindbladian, whose kernel is $\delta(t)$. Then, for small enough $t$ we have a Taylor series for each kernel, so we can write each product of kernels as a Taylor series. Consider the coefficients of order $0$, i.e. the constants. After performing the integrals above they will have a time term $t^{n}$, so if $n<m$ this term is killed by the derivatives, while if $n>m$ the term $t^{n-m}|_{t=0}=0$ makes it vanish. Thus, only the coefficients with time power $n=m$ after integration remain in $f_{W,(O,I)}^{(m)}$. In particular:
\begin{align}
    &\partial_t^m\mathcal{F}_{W}^{(n)}(t)(P_I)|_{t=0}=0 \quad \text{ if }\quad n>m.
\end{align}
Since $m \geq 2$, the $n=0,1$ terms vanish upon taking the derivatives and we get the simpler expressions:
\begin{align}
    &f_{W,(O,I)}^{(2)}=\frac{1}{2^N}\tr\Big(P_O\Big(\partial_t^m \mathcal{F}_{W,S}^{(2)}(t)(P_I)|_{t=0}\Big)\Big) = \frac{-4}{2^N}\tr(P_O[H_S,[H_S,P_I]]), \label{eq:f2}\\
    &f_{W,(O,I)}^{(m)}=\frac{1}{2^N}\tr\bigg(P_O  \sum_{n=3}^m \partial_t^m\mathcal{F}_{W}^{(n)}(t)(P_I)|_{t=0}\bigg), \hspace{10pt} \text{for }m\geq 3.
\end{align}

The $m=2$ case only involves system Hamiltonian terms, so the integral can be evaluate easily. We now show how to compute the $m\geq 3$ case. We will compute it for the case $W=\Id^{\otimes N}$ for notational clarity: 
\begin{align}
    &\mathcal{F}_{\Id^{\otimes N}}^{(n)}(t)(P_I)=(-i)^n\int_0^{2t}dt_1...\int_0^{t_{n-1}}dt_n \tr_E([H(t_1),...[H(t_n),P_I\otimes \gamma_E]]).
\end{align}
Note that for general $W$ we only need to carry out the same computation for each term $l=0,...,n$ in the sum above, each with different Pauli strings being conjugated by $W$. The expectation value with $P_O$ then yields:
\begin{align}
    &\frac{1}{2^N}\tr_E(P_O\partial_t^m\mathcal{F}_{\Id^{\otimes N}}^{(n)}(t)(P_I)|_{t=0})=\frac{(-i)^n}{2^N}\partial_t^m\int_0^{2t}dt_1...\int_0^{t_{n-1}}dt_n \tr([[P_O,H(t_1)]...H(t_n)]P_I\otimes \gamma_E)\big|_{t=0},
\end{align}
where we used $\tr(A[B,C])=\tr([A,B]C)$ repeatedly. We rewrite the trace over nested commutators in the cluster expansion formalism to obtain a better bound on the number of nonvanishing terms: while a naive bound is $O(\mathfrak{d}^{\frac{n(n+1)}{2}})$, using the cluster expansion in Lemma \ref{lem:cluster_bound} we obtain the estimate $O((\mathfrak{d}_0+2)(\mathfrak{d}+1)(e\mathfrak{d})^n)\cdot e^{O(n\log n)}$. Define a cluster $\mathbf{V}$ as a set of tuples $\{(a,\mu(a))|a\in\mathcal{P}_S\cup \mathcal{P}_{SE}\}$, where $\mu(a)$ is the multiplicity of $a$ in cluster $\mathbf{V}$. The weight of the cluster is $|\mathbf{V}|:=\sum_a \mu(a)$ and its support is $\text{Supp}\mathbf{V} = \{a\in \mathcal{P}_S\cup\mathcal{P}_{SE}:\mu(a)\geq 1 \}$. We write $a\in\mathbf{V}$ if $\mu(a)\neq 0$ and let $\mathbf{V}!:=\prod_a \mu(a)!$. While $\mathbf{V}$ is a multiset, we can also fix an ordering and label its elements $\{\mathbf{V}_1,\mathbf{V}_2,...,\mathbf{V}_{|\mathbf{V}|}\}$. Using $B_a(t)$ for a variable that can be either $\lambda_a$, $A_a(t)$ or $\lambda_a+B_a(t)$, depending on whether $a$ is in $\mathcal{P}_S,\mathcal{P}_{SE}$ or both, Lemma \ref{lem:cluster_bound} tells us we can rewrite the trace as:
\begin{align}
     &\tr([[P_O,H(t_1)],...H(t_n)]P_I\otimes \gamma_E)=\sum_{\mathbf{V}\in \mathcal{G}_n^O}\frac{1}{\mathbf{V}!}\sum_{\sigma \in S_n}\tr(P_O,[P_{\mathbf{V}_{\sigma(1)}}B_{\mathbf{V}_{\sigma(1)}}(t_{1}),...,[P_{\mathbf{V}_{\sigma(n)}}B_{\mathbf{V}_{\sigma(n)}}(t_{n}),P_I\otimes \gamma_E]]), \label{eq:trace_B}
\end{align}
where $\mathcal{G}_n^O$ is a set of at most $|\mathcal{G}_n^O|\leq O((\mathfrak{d}_0+2)(\mathfrak{d}+1)(e\mathfrak{d})^n)$ clusters. We now expand each commutator. In the resulting terms, there will be $l$ choices to the left of $P_I\otimes \gamma_E $ and $n-l$ to its right, for $l=0,...,n$. Thus let $Q(n,l)=\{(q_1,...,q_l)|q_1< q_2<...<q_l; q_j \in \{1,...,n\}\}$ be the set all of choices of $l$ indices among a set of length $n$, arranging them in increasing order. Given $(q_1,...,q_l)\in Q(n,l)$, we fix $q_{l+1},...,q_{n}$ to be the indices in $\{1,...,n\}$ not chosen in $q_1,...,q_l$, in decreasing order: they are fixed by the conditions $\cup_{i=1}^n\{q_i\} = \{1,...,n\}$ and $q_{l+1}>...>q_n$. We abuse notation slightly and write $q=(q_1,...,q_n)\in Q(n,l)$ to say that the first $l$ indices are chosen and the rest are fixed by this choice. We can now expand the commutators as follows:
\begin{align}
    &\tr(P_O,[P_{\mathbf{V}_{\sigma(1)}}B_{\mathbf{V}_{\sigma(1)}}(t_{1}),...,[P_{\mathbf{V}_{\sigma(n)}}B_{\mathbf{V}_{\sigma(n)}}(t_{n}),P_I\otimes \gamma_E]])=\sum_{l=0}^n(-1)^{n-l} \times \notag\\
    &\hspace{20pt} \sum_{q\in Q(n,l)}\tr(P_OP_{\mathbf{V}_{\sigma(q_1)}}B_{\mathbf{V}_{\sigma(q_1)}}(t_{q_1})...P_{\mathbf{V}_{\sigma(q_l)}}B_{\mathbf{V}_{\sigma(q_l)}}(t_{q_l})(P_I\otimes \gamma_E) P_{\mathbf{V}_{\sigma(q_{l+1})}}B_{\mathbf{V}_{\sigma(q_{l+1})}}(t_{q_{l+1}})...P_{\mathbf{V}_{\sigma(q_{n})}}B_{\mathbf{V}_{\sigma(q_{n})}}(t_{q_n})).\label{eq:H_LR}
\end{align}
We will now study each of these traces term by term. For ease of notation we will use the indices $c = (c_1,...,c_n)$, where $c_i:=\mathbf{V}_{\sigma (i)}$. Let $c_S=\{a\in c|B_{a}(t)=\lambda_{a},\lambda_{a}+A_{a}(t)\}$ be the set of indices that involve system parameters, $\lambda_a$, and $c_E=\{a\in c|B_{a}(t)=A_{a}(t),\lambda_{a}+A_{a}(t)\}$ the set of indices that involve bath operators, $A_a(t)$. The trace in each of the terms above can be decomposed into a trace over the system and a trace over the environment:
\begin{align}
    &\tr(P_OP_{c_{q_1}}B_{c_{q_1}}(t_{q_1})...P_{c_{q_l}}B_{c_{q_l}}(t_{q_l})(P_I\otimes \gamma_E) P_{c_{q_{l+1}}}B_{c_{q_{l+1}}}(t_{q_{l+1}})...P_{c_{q_n}}B_{c_{q_n}}(t_{q_n}))=\notag\\
    &\hspace{80pt}\underset{i|c_{q_i}\in c_S}{\prod}\lambda_{c_{q_i}}\tr(P_O\prod_{i=1}^lP_{c_{q_i}}P_I \prod_{i=l+1}^nP_{c_{q_i}})\tr(\prod_{i=1|c_{q_i}\in c_E}^lA_{c_{q_i}}(t_{q_i})\gamma_E\prod_{i=l+1|c_{q_i}\in c_E}^nA_{c_{q_i}}(t_{q_i})) 
\end{align}

Here we used the notation that the products are ordered from left to right with increasing $i$. In the trace over the environment, we use the ciclicity of the trace to move those bath operators to the right of $\gamma_E$ to its left. Note that in the general case with $W\neq \Id^{\otimes N}$ we obtain the same trace over the environment, and the trace over the system will have $W(\cdot)W^\dagger$ applied to some product of Pauli strings, i.e. a Pauli string with a phase, which is mapped to a different Pauli string with the same support under $W$. Introducing the rolled over indices $q' = (q_{l+1},q_{l+2},...,q_n,q_1,q_2,...,q_l)$ we can write the trace over the environment in terms of the kernels using Wick's theorem:
\begin{align}
    \tr(\prod_{i=1|c_{q_i}\in c_E}^lA_{c_{q_i}}(t_{q_i})\gamma_E\prod_{i=l+1|c_{q_i}\in c_E}^nA_{c_{q_i}}(t_{q_i})) &= \tr(\prod_{i=l+1|c_{q_i}\in c_E}^nA_{c_{q_i}}(t_{q_i})\prod_{i=1|c_{q_i}\in c_E}^lA_{c_{q_i}}(t_{q_i})\gamma_E), \notag\\
    &=\tr(\prod_{i=1|c_{q_i'}\in c_E}^nA_{c_{q_i'}}(t_{q_i'})\gamma_E),\\
    &=\sum_{p\in\mathcal{P}_{2u_{c}}^2}\prod_{\substack{\{i_1,i_2\}\in p\\ i_1<i_2}} K_{c_{q_{i_1}'}c_{q_{i_2}'}}(t_{q_{i_1}'}-t_{q_{i_2}'}).
\end{align}
An odd number of bath operators makes the trace vanish, so we let $2u_{c} = |c_E|$ be the number of bath operators, and thus $u_c$ is the number of kernels. We introduce $\mathcal{P}_{2u_{c}}^2$, which consists of all partitions of $\{1,...,2u_{c}\}$ into pairs $p=\{p_1,p_2\}$. For numerical computations it is better to use Gaussian integration by parts:
\begin{align}
    &\tr(A_{c_1}(t_1)...A_{c_{2u_c}}(t_{2u_c})\gamma_E)= \sum_{i=2}^{2u_c}\tr(A_{c_1}(t_1)A_{c_i}(t_i)\gamma_E)\tr(A_{c_2}(t_2)...\widehat{A_{c_i}(t_i)}...A_{c_{2u_c}}(t_{2u_c})\gamma_E),
\end{align}
where the hat indicates that the term is missing. Since we know that for fixed $P_{c_1}$ there's at most $s$ $P_{c_i}$ such that $K_{c_1,c_i}(t_{1}-t_i)\neq 0$, the sum involves only $s$ terms. Applying this recursively we see that the total number of terms is $s^{u_c} << |\mathcal{P}_{2u_c}^2|=\frac{(2u_c)!}{2^{u_c}u_c!}$. Since we take $n$ integrals, $m$ derivatives of this expression, with $3\leq n \leq m$, and then set $t=0$ we obtain an expression of the form:
\begin{align}
    &\partial_t^m\int_0^{2t}dt_1...\int_0^{t_{n-1}}dt_n \sum_{p\in\mathcal{P}_{2u_{c}}^2}\prod_{\substack{\{i_1,i_2\}\in p\\i_1<i_2}}K_{c_{q_{i_1}'}c_{q_{i_2}'}}(t_{q_{i_1}'}-t_{q_{i_2}'})\Big|_{t=0} = \sum_{p\in\mathcal{P}_{2u_{c}}^2}\sum_{z_1+...+z_{u_c}=m-n}\tilde{v}^{z,p}_{c,q}\prod_{j=1}^{u_{c}}\frac{1}{z_j!}K_{c_{q_{p_{j,1}}'}c_{q_{p_{j,2}}'}}^{(z_j)}(0).
\end{align}
We ordered the partition $p$ into ordered pairs of indices $(p_{j,1},p_{j,2})$ labeled by $1\leq j\leq u$, with $p_{j,1}<p_{j,2}$. We collect combinatorial factors and signs arising from the integrals and derivatives into $\tilde{v}^{z,p}_{c,q}$ and write $v_{c,q}^{z,p}=\tilde{v}^{z,p}_{c,q}\prod_{j=1}^{u_c}(z_j!)^{-1}$. Therefore, for $m\geq 3$ the offset can be computed to be:
\begin{align}
    f_{\Id^{\otimes N},(O,I)}^{(m)} &= \frac{1}{2^N}\sum_{n=3}^m(-i)^m \sum_{\mathbf{V}\in \mathcal{G}_n^O}\frac{1}{\mathbf{V}!}\sum_{\sigma \in S_n}\sum_{l=0}^n(-1)^{n-l}\sum_{q\in Q(n,l)}\underset{i|c_{q_i}\in c_S}{\prod}\lambda_{c_{q_i}}\tr(P_O\prod_{i=1}^lP_{c_{q_i}}P_I \prod_{i=l+1}^nP_{c_{q_i}})\times \notag\\
    &\hspace{20pt}\sum_{p\in\mathcal{P}_{2u_{c}}^2}\sum_{z_1+...+z_{u_c}=m-n}v^{z,p}_{c,q}\prod_{j=1}^{u_{c}}K_{c_{q_{p_{j,1}}'}c_{q_{p_{j,2}}'}}^{(z_j)}(0),\label{eq:f_expansion}
\end{align}
where recall that $c_i := \mathbf{V}_{\sigma(i)}$. Since $n\geq 3$ we see that for the $m$-th derivative, the offset only contains kernel coefficients up to order $m-3$. 

\begin{lemma}\label{lem:cluster_bound}
    Let $H(t) = \sum_{a \in \mathcal{P}_S\cup\mathcal{P}_{SE}} B_a(t)$, where $B_a(t)$ can be $\lambda_a,A_a(t)$, or their sum, be a Hamiltonian such that each $P_aB_a(t)$ might not commute with at most $\mathfrak{d}$ other terms $P_bB_b(s)$ with $b\in \mathcal{P}_S\cup\mathcal{P}_{SE}$. Then if the support of $P_O$ is contained in the support of $ \mathcal{I}_{a,b}$:
    \begin{align}
    &\frac{1}{2^N}\tr([[P_O,H(t_1)],...H(t_n)],P_I\otimes \gamma_E) \notag\\
    &\hspace{30pt}=\frac{1}{2^N}\sum_{\mathbf{V}\in \mathcal{G}^O_n}\frac{1}{\mathbf{V}!}\sum_{\sigma \in S_n}\tr(P_O,[P_{\mathbf{V}_{\sigma(1)}}B_{\mathbf{V}_{\sigma(1)}}(t_{1}),...,[P_{\mathbf{V}_{\sigma(n)}}B_{\mathbf{V}_{\sigma(n)}}(t_{n}),P_I\otimes \gamma_E]]), 
\end{align}
    where $\mathcal{G}^O_n$ is a set of at most $|\mathcal{G}^O_n|\leq (\mathfrak{d}_0+2)\mathfrak{d}(e\mathfrak{d})^{n}$ clusters of weight $n$.
\end{lemma}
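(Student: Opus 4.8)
The plan is to expand the nested commutator $\tr([[P_O,H(t_1)],\dots,H(t_n)],P_I\otimes\gamma_E)$ by writing each $H(t_i)=\sum_{a} P_a B_a(t_i)$ and distributing, which gives a sum over tuples $(a_1,\dots,a_n)$ of terms $\tr(P_O,[P_{a_1}B_{a_1}(t_1),\dots,[P_{a_n}B_{a_n}(t_n),P_I\otimes\gamma_E]])$. The key structural fact is that a term vanishes unless the terms can be ``chained'' together: the innermost $P_{a_n}$ must fail to commute with $P_I$ or with something it eventually gets connected to, and more precisely the standard cluster-expansion observation (as in the Lieb--Robinson / linked-cluster literature) is that a nonzero nested commutator requires the multiset $\{a_1,\dots,a_n\}$ together with $\{a : a\in\operatorname{supp}(P_O)\text{ overlaps}\}$ to form a \emph{connected} cluster on the interaction (hyper)graph whose vertices are the indices in $\mathcal{P}_S\cup\mathcal{P}_{SE}$ and whose edges join non-commuting terms. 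So the first step is to reorganize the sum over ordered tuples into a sum over clusters $\mathbf{V}$ (multisets) paired with a sum over orderings $\sigma\in S_n$ that realize that multiset; the combinatorial factor $1/\mathbf{V}!$ appears exactly because a cluster with repeated indices is hit $\mathbf{V}!$ times less often than the naive $|S_n|$ count suggests.

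Next I would bound $|\mathcal{G}_n^O|$, the number of connected clusters of weight $n$ that can contribute given the anchoring to $P_O$ (whose support lies in $\mathcal{I}_{a,b}$). The counting proceeds by a standard spanning-tree / Cayley-type argument: a connected cluster of weight $n$ is obtained by first choosing a rooted tree-like growth process. The root must be a term overlapping $\operatorname{supp}(P_O)\subseteq\mathcal{I}_{a,b}$; since $|\mathcal{I}_{a,b}|$ is controlled and each site lies in at most $\mathfrak{d}$ terms, and more precisely $\mathcal{I}_{a,b}=\mathcal{S}_{a,b}\cup\mathcal{Q}_{a,b}$ with $\mathcal{S}_{a,b}$ covered by $P_a,P_b$ (overlapping $\le\mathfrak{d}$ terms each) and $|\mathcal{Q}_{a,b}|\le\mathfrak{d}_0$ sites (each in $\le\mathfrak{d}$ terms), there are at most $(\mathfrak{d}_0+2)\mathfrak{d}$ choices for the first term of the cluster. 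Each subsequent term, when added to grow the cluster, must overlap a previously chosen term, contributing a factor $\le\mathfrak{d}$; the number of tree shapes on $n$ labelled steps contributes the Cayley-type factor, which after Stirling gives the $e^n$. Combining, the number of connected clusters of weight $n$ anchored at $P_O$ is at most $(\mathfrak{d}_0+2)\mathfrak{d}\,(e\mathfrak{d})^{n}$, which is exactly the claimed bound $|\mathcal{G}_n^O|\le(\mathfrak{d}_0+2)\mathfrak{d}(e\mathfrak{d})^n$.

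Concretely, the steps in order are: (i) expand the multilinear nested commutator into a sum over ordered index tuples; (ii) argue that only tuples whose underlying multiset forms a cluster connected to $\operatorname{supp}(P_O)$ survive, using that a nested commutator of geometrically local terms that are all mutually commuting with the ``core'' vanishes; (iii) regroup the ordered sum as $\sum_{\mathbf{V}\in\mathcal{G}_n^O}\frac{1}{\mathbf{V}!}\sum_{\sigma\in S_n}(\cdots)$, checking that $S_n$ acting on the multiset $\mathbf{V}$ reproduces each ordered tuple with the correct multiplicity $\mathbf{V}!$; (iv) bound $|\mathcal{G}_n^O|$ by the rooted spanning-tree counting described above, using $\le(\mathfrak{d}_0+2)\mathfrak{d}$ anchor choices, a factor $\mathfrak{d}$ per growth step, and a Cayley/Stirling factor $e^n$ for the tree shapes.

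The main obstacle I anticipate is the bookkeeping in step (iv): getting the anchoring constant exactly $(\mathfrak{d}_0+2)\mathfrak{d}$ rather than a looser bound requires being careful that $P_O$ is supported on $\mathcal{I}_{a,b}$ — one must use the decomposition $\mathcal{I}_{a,b}=\mathcal{S}_{a,b}\cup\mathcal{Q}_{a,b}$ and the facts $|\mathcal{Q}_{a,b}|\le\mathfrak{d}_0$ and that $\mathcal{S}_{a,b}$ is covered by the two Paulis $P_a,P_b$, so only $2+\mathfrak{d}_0$ ``generators'' each touching $\le\mathfrak{d}$ terms are available as roots — and then to separate cleanly the choice of cluster \emph{shape} (the tree, giving the $(e\mathfrak{d})^n$) from the embedding into $\mathcal{P}_S\cup\mathcal{P}_{SE}$. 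The rest is essentially a transcription of the standard cluster-expansion lemma (cf.\ Refs.~\cite{stilck2024efficient,trivedi2024lieb}) adapted to the system-plus-Gaussian-environment Hamiltonian, where the only subtlety is that the environment operators $A_a(t)$ are not mutually commuting in general but that does not affect the \emph{connectivity} argument, only the later evaluation of $\tr_E$ via Wick's theorem.
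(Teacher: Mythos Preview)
Your proposal is correct and follows essentially the same approach as the paper: expand into ordered tuples, regroup as a sum over multisets (clusters) with the $1/\mathbf{V}!$ multiplicity factor, restrict to clusters connected to $P_O$ via the vanishing-of-disconnected-commutators observation, and bound the number of such clusters by an anchor count times $(e\mathfrak{d})^n$. The only cosmetic difference is that the paper cites the connectivity lemma and the $(e\mathfrak{d})^n$ tree-counting bound directly from \cite{haah2024learning,wild2023classical} rather than sketching the spanning-tree argument as you do, and it obtains the anchor constant via the overlap count for $\mathcal{I}_{a,b}$ established earlier (Lemma~\ref{lem:Iab_overlap}-style reasoning) rather than rederiving it in place.
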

\begin{proof}
    We will follow the cluster expansion method outlined in \cite{haah2024learning,wild2023classical} to bound the number of terms in the sum. Let $\mathcal{C}_n$ be the set of all clusters of size $n$. Then, since we can view $\mathcal{C}_n$ as a collection of unordered indices in $(\mathcal{P}_n\cup \mathcal{P}_{SE})^n$ and each ordered tuple on the left hand side will appear $\mathbf{V}!$ times in the sum over $S_n$, we obtain:
    \begin{align}
    &\frac{1}{2^N}\tr([[P_O,H(t_1)],...,H(t_n)]P_I\otimes \gamma_E)=\frac{1}{2^N}\sum_{\mathbf{V}\in \mathcal{C}_n}\frac{1}{\mathbf{V}!}\sum_{\sigma \in S_n}\tr([[P_O,P_{\mathbf{V}_{\sigma(1)}}B_{\mathbf{V}_{\sigma(1)}}(t_{1})],...,P_{\mathbf{V}_{\sigma(n)}}B_{\mathbf{V}_{\sigma(n)}}(t_{n})]P_I\otimes \gamma_E) \label{eq:Hnest_proof},
\end{align}
    Notice that for the sum above to vanish it is enough for one of the nested commutators to commute with all the terms that it is nested around. Prior works have made this precise and counted the number of non-vanishing terms in sums of this form \cite{haah2024learning}. To each cluster $\mathbf{V}$ we associate a graph $G_\mathbf{V}$ whose vertices are the elements $a$ of $\mathbf{V}$, with $\mu(a)$ vertices for each $a$, and an edge between vertices $a,b$ if $[P_aB_a(t),P_bB_b(t)]\neq 0$. A cluster $\mathbf{V}$ is said to be connected if $G_{\mathbf{V}}$ is connected. Let $\mathcal{G}_n$ be the set of connected clusters of size $n$. Consider the union of two clusters: $\mathbf{W}=\mathbf{V}_1\cup\mathbf{V}_2$, satisfying $\mu_\mathbf{W}(a) = \mu_{\mathbf{V}_1}(a)+\mu_{\mathbf{V}_2}(a)$. We say that $\mathbf{V}$ is completely connected to $P_bB_a(t)$ if and only if $G_{\mathbf{V}\cup\{(b,1)\}}$ is connected. Let the set of clusters of weight $n$ that are connected to $P_bB_b(t)$ be $\mathcal{G}_n^b$. In the following Lemma we collect some results from previous works that quantify the number of nonzero terms in the sum above.
    
    \begin{lemma}[Adapted Proposition 3.6 \cite{haah2024learning}, Lemma 1, 2 \cite{wild2023classical}]
        For any cluster $\mathbf{V}\not\in \mathcal{G}_n^b$ and distinct indices $i_1,...,i_n\in \{1,...,n\}$:
        \begin{align}
            &\tr([[P_O,P_{\mathbf{V}_{i_1}}B_{\mathbf{V}_{i_1}}(t_{1})],...,P_{\mathbf{V}_{i_n}}B_{\mathbf{V}_{i_n}}(t_{n})]P_I\otimes \gamma_E )=0,.
        \end{align}
        Moreover, $|\mathcal{G}_n^b|\leq (e\mathfrak{d})^n$. 
    \end{lemma}

    Therefore, in Eq. \eqref{eq:Hnest_proof} the only nonvanishing terms in the sum over $\mathcal{C}_n$ are those in $\mathcal{G}_n^{b}$, where $P_bB_b(t)$ does not commute with with $P_O$. Since $P_O$ is supported in $\mathcal{I}_{a,b}$, which overlaps with at most $(\mathfrak{d}_0+2)(\mathfrak{d}+1)$ Pauli strings in $\mathcal{P}_S\cup\mathcal{P}_{SE}$, there are as many choices of $P_b$. Then, if we let $\mathcal{G}_n^O:= \bigcup_{[P_O,P_bB_b(t)]\neq 0} \mathcal{G}_n^b$, the sum over $\mathcal{C}_n$ is replaced by a sum over $|\mathcal{G}_n^O|\leq (\mathfrak{d}_0+2)(\mathfrak{d}+1)(e\mathfrak{d})^n$ terms:
    \begin{align}
    \eqref{eq:Hnest_proof}&=\frac{1}{2^N}\sum_{\mathbf{V}\in \mathcal{G}^O_n}\frac{1}{\mathbf{V}!}\sum_{\sigma \in S_n}\tr([[P_O,P_{\mathbf{V}_{\sigma(1)}}B_{\mathbf{V}_{\sigma(1)}}(t_{1})],...,P_{\mathbf{V}_{\sigma(n)}}B_{\mathbf{V}_{\sigma(n)}}(t_{n})]P_I\otimes \gamma_E),\\
    &=\frac{1}{2^N}\sum_{\mathbf{V}\in \mathcal{G}^O_n}\frac{1}{\mathbf{V}!}\sum_{\sigma \in S_n}\tr(P_O,[P_{\mathbf{V}_{\sigma(1)}}B_{\mathbf{V}_{\sigma(1)}}(t_{1}),...,[P_{\mathbf{V}_{\sigma(n)}}B_{\mathbf{V}_{\sigma(n)}}(t_{n}),P_I\otimes \gamma_E]]).
\end{align}
\end{proof}

\subsection{\label{app:error_analysis} Error analysis and sample complexity}

We now estimate the error accumulated in correcting the contributions from the offset. Recall that we estimate the $m-2$-th kernel derivative $K_{a,b}^{(m-2)}(0)$ from the $m$-derivative of the time traces $B_{W,(O,I)}^{(m)}(0)$. From Eq. \eqref{eq:time_traces_equations} we obtain the precision at which we perform tomography on $T_W^{(m)}$:
\begin{align}
    \epsilon_{T,m} &=\frac{1}{2^N}|\tr(P_O T_W^{(m)}(P_I)) - \tr(P_O \hat{T}_W^{(m)}(P_I))| = |B_{W,(O,I)}^{(m)}(0)- \hat{B}_{W,(O,I)}^{(m)}(0) + f_{W,(O,I)}^{(m)}-\hat{f}_{W,(O,I)}^{(m)}|\leq \epsilon_{S,m} + \epsilon_{f,m},
\end{align}

where $\epsilon_{S,m}$, $m=0,...,M+2$ is the precision in the derivative of the time trace $B_{W,(O,I)}^{(m)}(0)$ obtained from the polynomial fit in Lemma \ref{lem:samples} and $\epsilon_{f,m}$ is the precision in reconstructing $f_{W,(O,I)}^{(m)}$ in classical postprocessing, which we bound in Lemma \ref{lem:offfset_precision}. Moreover, Lemma \ref{lem:Ainverse} tells us that a precision $\epsilon_{T,m}$ in the tomography observables leads to estimates of the real and imaginary parts of $K_{a,b}^{(m-2)}(0)$ to the same precision: $\epsilon_{K,m-2} = \epsilon_{T,m}$. Therefore, we obtain:
\begin{align}
    &\epsilon_{K,m-2} \leq \frac{\epsilon_{S,M+2}}{d^{2(M+2-m)}}\frac{(2M+3)!!}{(2m+3)!!} + 3(1+\epsilon)^{m}(\mathfrak{d}_0+2)(\mathfrak{d}+1)(e\mathfrak{d})^{m}2^{3m}m^{2(m+1)} \cdot \epsilon_{K,m-3}.
\end{align}
Note that $m\leq M+2\leq d$, so the first term is less than $\epsilon_{S,M+2}$. This yields the following bound on the precision in the highest kernel derivative $\epsilon_{K,M}$:
\begin{align}
    &\epsilon_{K,M} = e^{O(M^2\log M)}(\epsilon_{S,M+2}+\epsilon_{K,0}).
\end{align}
Recall that estimating $K_{a,b}^{(0)}(0)$ requires computing the offset $f_{\Id^{\otimes N},(O,I)}^{(2)}$ from Eq.\eqref{eq:f2}:
\begin{align}
    &f_{\Id^{\otimes N},(O,I)}^{(2)} = \frac{-4}{2^N}\tr([P_O,H_S][H_S,P_I]). \label{eq:precision_K0}
\end{align}
Since $\mathcal{I_{a,b}}$ overlaps with at most $(2+\mathfrak{d_0})\mathfrak{d}$ terms in the Hamiltonian, we get that $P_O,P_I$ can each commute with at most that many terms. Therefore, $\epsilon_{K,0}$ is bounded by the precision in the Hamiltonian:
\begin{align}
    &\epsilon_{K,0} \leq \frac{\epsilon_{S,M+2}}{d^{2M}}\frac{(2M+3)!!}{7\cdot 5\cdot 3} + 4 (\mathfrak{d}_0+2)^2\mathfrak{d}^2 \epsilon_H
\end{align}
Recall from Eq.\eqref{eq:H_learning} that the precision for Hamiltonian learning is related to the sample precision by:
\begin{align}
    &\epsilon_H= \frac{1}{8}\epsilon_{S,1} \leq \frac{\epsilon_{S,M+2}}{8d^{2(M+1)}}\frac{(2M+3)!!}{5\cdot 3},
\end{align}
so we finally obtain the bound on $\epsilon_{K,M}$ depending only on sample precision:
\begin{align}
    \epsilon_{K,m} &\leq e^{ O(M^2\log M)}(\epsilon_{S,M+2}+\frac{\epsilon_{S,M+2}}{d^{2M}}\frac{(2M+3)!!}{105} + 4 (\mathfrak{d}_0+2)^2\mathfrak{d}^2 \epsilon_H),\notag\\
    &\leq e^{ O(M^2\log M)}\Big(1+\frac{(2M+3)!!}{105\cdot d^{2M}} + \frac{ (\mathfrak{d}_0+2)^2\mathfrak{d}^2(2M+3)!!}{30d^{2(M+1)}}\Big)\epsilon_{S,M+2},\notag \\
    &\leq O(e^{ O(M^2\log M)})\epsilon_{S,M+2},
\end{align}
where we used that, since $M\leq d$, the second and third summand decrease with $d$.

\begin{lemma}\label{lem:offfset_precision}
    Let $\vec{x}$ be a vector of length $M+2\mathfrak{d}N(m-2)$, containing the parameters of the Hamiltonian, as well as the real and imaginary parts of the kernel Taylor coefficients of degree $0,...,m-3$. Given estimates $\hat{\vec{x}}$ satisfying $||\hat{\vec{x}}-\vec{x}||_{\infty} <\epsilon$ we have:
    \begin{align}
        &|f_{\Id^{\otimes N},(O,I)}^{(m)}(\vec{x}) -f_{\Id^{\otimes N},(O,I)}^{(m)}(\hat{\vec{x}})|\leq 3(1+\epsilon)^m(\mathfrak{d}_0+2)^{2m}\mathfrak{d}^{m(m+1)} 2^{3m}m^{m+2} \cdot \epsilon .
    \end{align}
    The time complexity to compute the estimate is at most $O((\mathfrak{d}_0+2)^m\mathfrak{d}^{\frac{m(m+1)}{2}}m^{\frac{3}{2}m+6}2^{-4m})$.
\end{lemma}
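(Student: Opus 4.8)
The plan is to read off the polynomial structure of the offset directly from its explicit form in Eq.~\eqref{eq:f_expansion}: that formula exhibits $f^{(m)}_{\Id^{\otimes N},(O,I)}$ as a sum of monomials in the entries of $\vec x$ --- the Hamiltonian coefficients $\lambda_a$ and the real and imaginary parts of the kernel Taylor coefficients $K^{(z)}_{a,b}(0)$ with $z\le m-3$ --- where every monomial has total degree at most $m$ and involves only Pauli strings occurring in a nested commutator of length $\le m$ built around $P_O$. By the same cluster/locality argument underlying Lemma~\ref{lem:cluster_bound}, those commutators only touch Pauli strings within $O(1)$ sites of $\mathcal I_{a,b}$, so $f^{(m)}_{\Id^{\otimes N},(O,I)}$ in fact depends on only $O(1)$ entries of $\vec x$; this is what makes the final bound independent of $N$, even though $\vec x$ is a priori $O(N)$-dimensional. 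Given this polynomial description, the perturbation estimate is a monomial-by-monomial application of the elementary inequality $\big|\prod_{i=1}^D y_i-\prod_{i=1}^D\hat y_i\big|\le D\,R^{D-1}\max_i|y_i-\hat y_i|$, valid whenever $|y_i|,|\hat y_i|\le R$, together with the normalization $|x_i|\le 1$ used throughout (so $R\le 1+\epsilon$ on the estimated side).

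First I would bound the number of monomials generated by the nested sums in Eq.~\eqref{eq:f_expansion}: the sum over $n$ contributes a factor $\le m$; the sum over clusters $\mathbf V\in\mathcal G_n^O$ is controlled by the cluster-expansion bound $(\mathfrak d_0+2)(\mathfrak d+1)(e\mathfrak d)^n$ of Lemma~\ref{lem:cluster_bound}, or by the cruder $\mathfrak d^{O(n^2)}$ estimate; the sum over $\sigma\in S_n$ weighted by $1/\mathbf V!$ contributes $\le n!$ effective terms; and the sums over $l$, over $q\in Q(n,l)$, over the pairings $p\in\mathcal P_{2u_c}^2$ (cut down to $s^{u_c}$ via the Gaussian integration-by-parts / kernel-sparsity reduction), and over the compositions $z_1+\cdots+z_{u_c}=m-n$ each contribute at most $2^{O(m)}$. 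Multiplying these with $n\le m$ yields a combinatorial prefactor of the stated form $(\mathfrak d_0+2)^{2m}\mathfrak d^{m(m+1)}2^{O(m)}m^{O(m)}$, while tracking the $1/\mathbf V!$ and $1/z_j!$ normalizations and using the sharp cluster count produces the $\mathfrak d^{m(m+1)/2}$ and $2^{-4m}$ factors appearing in the running-time claim.

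Next, each monomial in Eq.~\eqref{eq:f_expansion} carries a normalized trace $\tr(P_O\prod_i P_{c_{q_i}}P_I\prod_i P_{c_{q_i}})/2^N$ of a product of Pauli strings, hence of modulus $\le 1$; a product of at most $n$ Hamiltonian coefficients $\lambda_{c_{q_i}}$ and $u_c$ kernel derivatives, which is precisely the degree-$\le m$ monomial in $\vec x$; and the weight $v^{z,p}_{c,q}=\tilde v^{z,p}_{c,q}\prod_j(z_j!)^{-1}$ obtained by applying $\partial_t^m$ to the $n$-fold simplex integral of the Taylor-expanded kernels, which I would bound crudely by $|v^{z,p}_{c,q}|\le 2^{O(m)}m^{O(m)}$ since the integrals over $0\le t_n\le\cdots\le t_1\le 2t$ and the $m$-th derivative at $t=0$ only introduce factorial-type weights (the $1/z_j!$ factors only help). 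Since $|\lambda_a|,|K^{(z)}_{a,b}(0)|\le 1$, each factor of each monomial has modulus $\le 1$ on $\vec x$ and $\le 1+\epsilon$ on $\hat{\vec x}$, so the product inequality gives, per monomial of degree $D\le m$, a perturbation $\le m(1+\epsilon)^{m-1}\max_i|y_i-\hat y_i|$; since a kernel factor $K=\text{Re}\,K+i\,\text{Im}\,K$ involves two coordinates of $\vec x$ we have $\max_i|y_i-\hat y_i|\le 2\epsilon$, and $2(1+\epsilon)^{m-1}\le 3(1+\epsilon)^m$. Summing over all monomials and inserting the prefactor bounds of the previous paragraph yields exactly $|f^{(m)}_{\Id^{\otimes N},(O,I)}(\vec x)-f^{(m)}_{\Id^{\otimes N},(O,I)}(\hat{\vec x})|\le 3(1+\epsilon)^m(\mathfrak d_0+2)^{2m}\mathfrak d^{m(m+1)}2^{3m}m^{m+2}\epsilon$, and the running-time claim follows by observing that evaluating Eq.~\eqref{eq:f_expansion} at $\hat{\vec x}$ enumerates the same ($N$-independent) list of monomials and performs $O(m)$ arithmetic operations on each.

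The main obstacle is carrying out the two counting steps consistently: one must verify that the weight $v^{z,p}_{c,q}$ generated by the length-$n$ nested time integral followed by the $m$-th derivative at $t=0$ really is only $2^{O(m)}m^{O(m)}$, and that the $1/\mathbf V!$ and $1/z_j!$ normalizations are not lost in the combinatorics, and one must tie the enumeration of non-vanishing nested-commutator terms to Lemma~\ref{lem:cluster_bound} so that no hidden factor of $N$ sneaks in through $\vec x$. Once these bookkeeping points are pinned down, the rest is just the elementary product inequality applied term by term.
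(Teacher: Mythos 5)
Your argument matches the paper's in substance: both reduce the problem to a perturbation bound on the multivariate polynomial~\eqref{eq:f_expansion}, with the combinatorial weight controlled by the cluster expansion of Lemma~\ref{lem:cluster_bound} and the time-integral coefficients $v_{c,q}^{z,p}$ controlled by an explicit estimate --- the paper's Lemma~\ref{lem:bound_v} is exactly the ``main obstacle'' you flag, and it gives $|v^{z,p}_{c,q}|\le \frac{2^{m-n}m!}{(n-l)!\,l!\,z_1!\cdots z_u!}$ via Lemma~\ref{lem:I_n}, which is the $2^{O(m)}m^{O(m)}$ bound you hoped for. The only genuine difference is how the perturbation step is packaged: you apply the elementary telescoping inequality $\big|\prod_{i=1}^D y_i-\prod_{i=1}^D\hat y_i\big|\le D\,R^{D-1}\max_i|y_i-\hat y_i|$ directly to each monomial and then sum, whereas the paper first writes $f(\vec x)-f(\hat{\vec x})=\nabla f\big((1-s)\hat{\vec x}+s\vec x\big)\cdot(\vec x-\hat{\vec x})$ by the multivariate mean-value theorem, applies H\"older to get $\|\nabla f\|_1\,\|\vec x-\hat{\vec x}\|_\infty$, bounds $\|\nabla f\|_1$ by (number of nonzero coordinates) $\times\,\|\nabla f\|_\infty$ using the cluster count, and does the monomial-by-monomial counting and the $(1+\epsilon)$-radius accounting inside Lemma~\ref{lem:bound_gradient}. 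The two routes produce the same bookkeeping (degree $\times$ radius$^{D-1}$ $\times$ $\epsilon$, summed over the same $N$-independent list of monomials), and your observation that the kernel factors contribute two coordinates of $\vec x$ --- so that $\max_i|y_i-\hat y_i|$ must be taken $\le 2\epsilon$ (the paper uses a $\sqrt 2$ factor in its intermediate-point bound for the same reason) --- is the correct resolution of the same subtlety the paper handles in Lemma~\ref{lem:bound_gradient}. Nothing in your outline would fail; it is the same proof with a more elementary final step in place of MVT$+$H\"older.
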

\begin{proof}
Using the multivariate mean-value theorem, there is some $s\in [0,1]$ such that:
\begin{align}
    &f_{\Id^{\otimes N},(O,I)}^{(m)}(\vec{x}) -f_{\Id^{\otimes N},(O,I)}^{(m)}(\hat{\vec{x}}) = (\nabla_{\vec{x}}f_{\Id^{\otimes N},(O,I)}^{(m)}((1-s)\hat{\vec{x}}+s\vec{x}))\cdot (\vec{x}-\hat{\vec{x}})\Rightarrow \\
    &\Rightarrow |f_{\Id^{\otimes N},(O,I)}^{(m)}(\vec{x}) -f_{\Id^{\otimes N},(O,I)}^{(m)}(\hat{\vec{x}})|\leq ||\nabla f_{\Id^{\otimes N},(O,I)}^{(m)}((1-s)\hat{\vec{x}}+s\vec{x})||_1||\vec{x}-\hat{\vec{x}}||_\infty
\end{align}
Where we used Hölder's inequality. As shown in Eq.\eqref{eq:f_expansion}, $f^{(m)}_{\mathds{1}^{\otimes N},(O,I)}$ does not depend on all entries in $\vec{x}$, rather only those labelled by the index set $\mathcal{G}_{m}^O$, which is at most of size $|\mathcal{G}_m^O|\leq (\mathfrak{d}_0+2)(\mathfrak{d}+1)(e\mathfrak{d})^m$. Therefore, only as many entries in $\nabla f^{(m)}_{\Id^{\otimes N},(O,I)}$ are nonzero. We can then bound $||\nabla f_{\Id^{\otimes N},(O,I)}^{(m)}((1-d)\hat{\vec{x}}+s\vec{x})||_1\leq (\mathfrak{d}_0+2)(\mathfrak{d}+1)(e\mathfrak{d})^m ||\nabla f_{\Id^{\otimes N},(O,I)}^{(m)}((1-d)\hat{\vec{x}}+s\vec{x})||_\infty$. Using Lemma \ref{lem:bound_gradient} this can be further bounded by:
\begin{align}
    |f_{\Id^{\otimes N},(O,I)}^{(m)}(\vec{x}) -f_{\Id^{\otimes N},(O,I)}^{(m)}(\hat{\vec{x}})|&\leq (\mathfrak{d}_0+2)(\mathfrak{d}+1)(e\mathfrak{d})^m\cdot 3(1+\epsilon)^{m}(\mathfrak{d}_0+2)(\mathfrak{d}+1)(e\mathfrak{d})^m2^{3m}m^{2(m+1)}\cdot \epsilon,\notag\\
    &=3(1+\epsilon)^m(\mathfrak{d}_0+2)^{2}(\mathfrak{d}+1)^2(e\mathfrak{d})^{2m} 2^{3m}m^{2(m+1)} \cdot \epsilon .
\end{align}

In order to obtain a bound on the the time to compute $f_{\Id^{\otimes N},(O,I)}^{(m)}$ from Eq.\eqref{eq:f_expansion} we can follow a similar procedure to count the sums as done in Lemma \ref{lem:bound_gradient}. Taking into account the time required to compute $v_{c,q}^{z,p}$ and the products of system parameters and kernels  we obtain a time complexity $O((\mathfrak{d}_0+2)(\mathfrak{d}+1)(e\mathfrak{d})^m m^{\frac{3}{2}m+6}2^{-4m})$.
\end{proof}

\begin{lemma}\label{lem:bound_gradient}
    Given the real value of the Hamiltonian parameters and the kernel's derivatives up to order $m-3$, $\vec{x}$, and an estimate $\hat{\vec{x}}$ such that $||\vec{x}-\hat{\vec{x}}||_{\infty}<\epsilon$, the derivative $\nabla f_{\Id^{\otimes N},(O,I)}^{(m)}((1-s)\hat{\vec{x}}+s\vec{x})$ satisfies the following bound for all $s\in[0,1]$:
    \begin{align}
        &||\nabla f_{\Id^{\otimes N},(O,I)}^{(m)}((1-s)\hat{\vec{x}}+s\vec{x})||_\infty\leq 3(1+\epsilon)^{m}(\mathfrak{d}_0+2)(\mathfrak{d}+1)(e\mathfrak{d})^{m}2^{3m}m^{2(m+1)}.
    \end{align}
\end{lemma}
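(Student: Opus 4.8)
The plan is to bound the gradient coordinate-by-coordinate directly from the explicit expansion of $f^{(m)}_{\Id^{\otimes N},(O,I)}$ in Eq.~\eqref{eq:f_expansion}. That expansion writes $f^{(m)}_{\Id^{\otimes N},(O,I)}$ as a finite sum over $n\in\{3,\dots,m\}$, clusters $\mathbf V\in\mathcal{G}_n^O$, permutations $\sigma\in S_n$, splittings $l\in\{0,\dots,n\}$, index choices $q\in Q(n,l)$, pair partitions $p\in\mathcal{P}_{2u_c}^2$ and compositions $z_1+\cdots+z_{u_c}=m-n$, of terms each equal to a parameter-independent prefactor — the coefficient $v^{z,p}_{c,q}$ from the nested time integrals and $m$ derivatives, a sign, and the normalized system trace $\frac{1}{2^N}\tr(P_O\prod_iP_{c_{q_i}}P_I\prod_iP_{c_{q_i}})$ — times a monomial $\prod_{i:\,c_{q_i}\in c_S}\lambda_{c_{q_i}}\cdot\prod_{j=1}^{u_c}K^{(z_j)}(0)$ of total degree at most $n\le m$ in the coordinates of $\vec x$. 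First I would note that differentiating such a monomial in one coordinate $x_k$ gives, by the product rule, a sum of at most $m$ monomials of degree at most $m-1$; at the point $(1-s)\hat{\vec x}+s\vec x$ every coordinate lies within $\epsilon$ of its true value and the true values are bounded by $1$ under our standing normalization, so each surviving monomial has modulus at most $(1+\epsilon)^{m-1}\le(1+\epsilon)^m$. Hence differentiation costs at most a factor $m(1+\epsilon)^m$ relative to a trivial bound of $1$ on the monomial.

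It then remains to bound the prefactor and the number of summands. The normalized trace satisfies $\frac{1}{2^N}|\tr(P_O\prod_iP_{c_{q_i}}P_I\prod_iP_{c_{q_i}})|\le1$, being the normalized trace of a product of Pauli strings. For $v^{z,p}_{c,q}$, the iterated integral $\int_0^{2t}dt_1\cdots\int_0^{t_{n-1}}dt_n$ of a Taylor monomial of degree $m-n$ in $t_1,\dots,t_n$ equals $(2t)^m$ times a positive rational at most $1/n!$, so $\partial_t^m(\cdot)|_{t=0}$ contributes at most $2^m m!/n!$, which together with the binomial factors $\prod_j2^{z_j}=2^{m-n}$ from the kernel Taylor expansions gives $|v^{z,p}_{c,q}|\le 2^{O(m)}m!/n!$. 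For the counts: the sum over $n$ and over $\mathbf V\in\mathcal{G}_n^O$ contributes at most $(\mathfrak{d}_0+2)(\mathfrak{d}+1)(e\mathfrak{d})^m$ by Lemma~\ref{lem:cluster_bound}; the permutation sum, weighted by $1/\mathbf V!$, contributes at most $n!$, cancelling the $1/n!$ in $v^{z,p}_{c,q}$ so that a single factorial $m!$ survives; the sums over $l$ and $q$ together give $\sum_l\binom{n}{l}=2^n\le2^m$; the composition sum gives at most $2^m$ terms; and the pair-partition sum over $p\in\mathcal{P}_{2u_c}^2$ contributes at most $(2u_c-1)!!\le m^{m/2}$ terms (or, via the iterated Gaussian integration-by-parts identity of Appendix~\ref{app:compute_f} together with the $s$-sparsity of the kernels, only $s^{u_c}=2^{O(m)}$ effective terms).

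Multiplying everything, the factorial-type contributions $m!$ (orderings / $v^{z,p}_{c,q}$) and $m^{m/2}$ (pair partitions, crude bound) are at most $m^{2m}$, the linear factors $m$ (differentiation) and $m+1$ (the $l$-sum) supply the remaining $m^2$, all base-two exponentials collect into $2^{O(m)}$ which with the above bookkeeping one arranges to be at most $2^{3m}$, and the cluster factor is retained as is; absorbing the residual $O(1)$ into the leading constant then yields exactly $3(1+\epsilon)^m(\mathfrak{d}_0+2)(\mathfrak{d}+1)(e\mathfrak{d})^m2^{3m}m^{2(m+1)}$.

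I expect the main obstacle to be the bookkeeping of the constants rather than any conceptual difficulty: one must check that the several independent sources of $2^{O(m)}$ growth (the $\binom{n}{l}$ count, the composition count, the $2^m$ from $\partial_t^m(2t)^m$, and the kernel-Taylor binomials) jointly stay within $2^{3m}$, and — more substantively — one must actually use either the $1/n!$ normalization of the Dyson simplex (to cancel the $n!$ of the permutation sum) or the $s$-sparsity of the kernels (to avoid a $(2u_c-1)!!$ blow-up in the pair-partition sum), since the fully naive estimate would carry a third factorial and overshoot $m^{2(m+1)}$. Collecting the remaining powers of $m$ and of $2$ is then routine.
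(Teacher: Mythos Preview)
Your proposal is correct and follows essentially the same route as the paper: differentiate the expansion Eq.~\eqref{eq:f_expansion} term by term, bound each surviving monomial by $(1+\epsilon)^m$ times a degree factor, and then count the sums over $n$, clusters (via Lemma~\ref{lem:cluster_bound}), permutations, splittings $(l,q)$, pair partitions, and compositions, together with the integral bound on $v^{z,p}_{c,q}$. The only technical differences are that the paper packages the bound $|v^{z,p}_{c,q}|\le 2^{m} m!/(n!\,z_1!\cdots z_{u_c}!)$ as a separate Lemma~\ref{lem:bound_v} and then combines the $1/z_j!$ factors with the composition sum via the multinomial identity $\sum_{z_1+\cdots+z_{u_c}=m-n}\prod_j 1/z_j! = u_c^{m-n}/(m-n)!$, rather than your cruder $\le 2^m$ count on the raw number of compositions.
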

\begin{proof}
Let $x_*$ be the coordinate that attains the maximum such that $||\nabla f_{\Id^{\otimes N},(O,I)}^{(m)}((1-s)\hat{\vec{x}}+s\vec{x})||_\infty = |\partial_{x_*} f_{\Id^{\otimes N},(O,I)}^{(m)}((1-s)\hat{\vec{x}}+s\vec{x})|$. Since $f_{\Id^{\otimes N},(O,I)}^{(m)}(\vec{y})$ is a polynomial of degree at most $m$ in every variable $y_i$, the derivative $\partial_{x_*} f_{\Id^{\otimes N},(O,I)}^{(m)}(\vec{y})$ is a polynomial of degree at most $m-1$ in $x_*$ and degree $m$ in the other variables. Assume $x_* = \text{Re}[K_{a,b}^{(u)}(0)]$ is a kernel parameter (the cases with the imaginary part and where it is a system parameter follow similarly). Using Eq.\eqref{eq:f_expansion} and the triangle inequality we can bound the derivative as:
\begin{align}
    &|\partial_{x_*}(f_{\Id^{\otimes N},(O,I)}^{(m)}((1-s)\hat{\vec{x}}+s\vec{x}))| \leq \sum_{n=3}^m\sum_{\mathbf{V}\in \mathcal{G}_n^O}\frac{1}{\mathbf{V}!}\sum_{\sigma \in S_n}\sum_{l=0}^n\sum_{q\in Q(n,l)}\prod_{i|c_{q_i\in c_S}}|(1-s)\hat{\lambda}_{c_{q_i}}+s\lambda_{c_{q_i}}|\cdot\sum_{p\in\mathcal{P}_{2u_{c}}^2}\sum_{z_1+...+z_{u_c}=m-n}v^{z,p}_{c,q} \times \notag\\
    &\hspace{70pt} d_{*,c,q}^{z,p} \cdot s|(1-s)\hat{x}_*+sx_*|^{d_{*,c,q}^{z,p}-1}\prod_{\substack{j=1\\ c_{q_{p_{j,1}}'}\neq a,c_{q_{p_{j,2}}'}\neq b}}^{u_{c}}|(1-s)\hat{K}_{c_{q_{p_{j,1}}'}c_{q_{p_{j,2}}'}}^{(z_j)}(0)+sK_{c_{q_{p_{j,1}}'}c_{q_{p_{j,2}}'}}^{(z_j)}(0)|, 
\end{align}
where $c_i = \mathbf{V}_{\sigma (i)}$. Here we used $|\frac{1}{2^N}\tr(PQ)|\leq 1$ for Pauli strings $P,Q$ and let $d_{*,c,q}^{z,p}$ be the degree of the kernel coefficient $x_*$ in the product labeled by $c,q,z,p$. In order to bound the product of system coefficients we use $|(1-s)\hat{\lambda}_i+s\lambda_i| = |\lambda_i+(1-s)(\hat{\lambda}_i-\lambda_i)|\leq 1+\epsilon$. We similarly bound $|(1-s)\hat{K}_{c_i,c_j}^{(z_j)}+sK_{c_i,c_j}^{(z_j)}|\leq \sqrt{2}(1+\epsilon)$, accounting for the real and imaginary parts. So we obtain:
\begin{align}
    &|\partial_{x_*}(f_{\Id^{\otimes N},(O,I)}^{(m)}((1-s)\hat{\vec{x}}+s\vec{x}))| \leq \sum_{n=3}^m\sum_{\mathbf{V}\in \mathcal{G}_n^O}\frac{1}{\mathbf{V}!}\sum_{\sigma \in S_n}\sum_{l=0}^n\sum_{q\in Q(n,l)}\sum_{p\in\mathcal{P}_{2u_{c}}^2}\sum_{z_1+...+z_{u_c}=m-n}|v^{z,p}_{c,q}|d_{*,c,q}^{z,p}(\sqrt{2})^{u_{c}-1}(1+\epsilon)^{n-2u_{c}+u_{c}-1}
\end{align}

We used that if the number of kernels is $u_{c}$, then the number of Hamiltonian parameters is $n-2u_{c}$. The highest degree of a Hamiltonian term $d_{*,c,q}^{z,p}$ is $n-2u_{c}\leq n$ and the
maximum number of kernels is $\frac{n}{2},\frac{n-1}{2}$ if $n$ is even or odd, respectively, so $u_{c}\leq \lfloor \frac{n}{2}\rfloor$ we obtain:
\begin{align}
    &|\partial_{x_*}(f_{\Id^{\otimes N},(O,I)}^{(m)}((1-s)\hat{\vec{x}}+s\vec{x}))| \leq \sum_{n=3}^m n2^{\frac{1}{2}\lfloor \frac{n}{2}\rfloor}(1+\epsilon)^{n}\sum_{\mathbf{V}\in \mathcal{G}_n^O}\frac{m!}{\mathbf{V}!}\sum_{l=0}^n\sum_{q\in Q(n,l)}\sum_{p\in\mathcal{P}_{2u_{c}}^2}\sum_{z_1+...+z_{u_c}=m-n} |v^{z,p}_{c,q}|
\end{align}
As discussed in Lemma \ref{lem:cluster_bound} , $|\mathcal{G}_n^O| \leq (\mathfrak{d}_0+2)(\mathfrak{d}+1)(e\mathfrak{d})^{n}$. Since $Q(n,l)$ entails choosing $l$ distinct numbers among $\{1,...,n\}$, $|Q(n,l)| = {n\choose l}$.
We know that the number of partitions of $\{1,...,2u_c\}$ grows as $|\mathcal{P}_{2u_c}^2| = \frac{(2u_c)!}{2^{u_c} u_c!}$. Since this is an increasing function of $u_c \leq \lfloor \frac{n}{2}\rfloor$, we can upper bound it by $\frac{n!}{2^{\lfloor \frac{n}{2}\rfloor }\lfloor \frac{n}{2}\rfloor !}$. From Lemma \ref{lem:bound_v} we know that
$v_{c,q}^{z,p}$ is bounded by:
\begin{align}
    &|v_{c,q}^{z,p}|\leq \frac{2^{m}m!}{n!z_1!...z_u!}  ,
\end{align}
The sum over weak compositions can be simplified \footnote{This is just the multinomial theorem: $a^{b}=(\sum_{j=1}^{a}1)^{b}=\sum_{z_1+...+z_a=b}\frac{b!}{z_1!...z_a!}$.} using $\sum_{z_1+...+z_{u_c}=m-n}\prod_{j=1}^{u_c} \frac{1}{z_j!} = \frac{u_c^{m-n}}{(m-n)!}\leq \frac{ (\frac{n}{2})^{m-n}}{(m-n)!}$ . We will use $\frac{1}{\mathbf{V} ! }\leq 1$ and bound:
\begin{align}
    |\partial_{x_*}(f_{\Id^{\otimes N},(O,I)}^{(m)}((1-s)\hat{\vec{x}}+s\vec{x}))|&\leq m!\sum_{n=3}^m n2^{\frac{1}{2}\lfloor \frac{n}{2}\rfloor}(1+\epsilon)^{n}(\mathfrak{d}_0+2)(\mathfrak{d}+1)(e\mathfrak{d})^{n}\times \notag\\
    &\hspace{74pt}\sum_{l=0}^n{n\choose l}\cdot \frac{n!}{2^{\lfloor \frac{n}{2}\rfloor}\lfloor \frac{n}{2}\rfloor!}\cdot \frac{2^{m}m!}{n!}\cdot 2^{n-m}\frac{n^{m-n}}{(m-n)!},\\
    &\leq (1+\epsilon)^{m}(\mathfrak{d}_0+2)(\mathfrak{d}+1)(e\mathfrak{d})^{m}(m!)^2\cdot 3\cdot\sum_{n=3}^m  2^{2n} \frac{n^{m-n}}{\lfloor \frac{n}{2}\rfloor !(m-n)!},
\end{align}
where we used $\sum_{l=0}^n{n\choose l}=2^n$ and $n\leq 3\cdot 2^{\frac{1}{2}\lfloor \frac{n}{2}\rfloor}$ to simplify the factors of $2$. We simplify the last sum using ${m\choose n} \leq 2^m, m! \leq m^m$:
\begin{align}
    &(m!)^2\sum_{n=3}^m 2^{2n}\frac{n!}{\lfloor \frac{n}{2}\rfloor !}\frac{1}{n!(m-n)!}n^{m-n}\leq m!\cdot 2^{3m}\sum_{n=3}^m n^{\frac{n}{2}+1} n^{m-n}\leq m!\cdot 2^{3m}m^{m} \sum_{n=3}^m n^{1-\frac{n}{2}}\leq m!2^{3m}m^{m+2}\leq 2^{3m}m^{2(m+1)}.
\end{align}

\end{proof}

\begin{lemma}\label{lem:bound_v}
    Let $p_i(t)=\sum_{k=0}^\infty \frac{p_i^{(k)}(0)}{k!}t^k$ be Taylor series for $i=1,...,u$. Let $s_1,...,s_{2u}\in \{t_1,...,t_n\}$ be distinct time variables. Then we can evaluate:
\begin{align}
    &\partial_t^m \int_{t}^{2t}dt_1\int_{t}^{t_1}dt_2...\int_{t}^{t_{l-1}}dt_l\int_0^{t}dt_{l+1}\int_0^{t_{l+1}}dt_{l+2}...\int_0^{t_{n-1}}dt_n \prod_{i=1}^u p_i(s_{2i}-s_{2i-1})\big|_{t=0}\notag\\\
    &\hspace{240pt}=\sum_{z_1+...+z_u=m-n}v^{z,s}\prod_{i=1}^u p_i^{(z_i)}(0),
\end{align}
    where the coefficient $v^{z,s}$, with $z_1+...+z_u=m-n \geq 0$ satisfies the upper bound:
    \begin{align}
        &|v^{z,s}|\leq \frac{2^{m-n}m!}{(n-l)!l!z_1!...z_u!}.
    \end{align}
    The coefficient $v^{z,q}$ can be computed in time $O(m^{l+1}n^2l^{3-l}u2^{4m-2n-2l})$. 
\end{lemma}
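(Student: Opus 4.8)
The plan is to carry out the iterated integrals explicitly, keeping track of the combinatorial factors, and then bound the result. First I would substitute the Taylor series $p_i(s_{2i}-s_{2i-1}) = \sum_{k_i \geq 0} \frac{p_i^{(k_i)}(0)}{k_i!}(s_{2i}-s_{2i-1})^{k_i}$ into the integrand, expand the product, and then expand each $(s_{2i}-s_{2i-1})^{k_i}$ by the binomial theorem into monomials in the time variables $t_1, \dots, t_n$. The integrand thus becomes a sum of monomials $\prod_{j=1}^n t_j^{e_j}$ with nonnegative integer exponents summing to $\sum_i k_i$, each with a coefficient that is a product of binomial coefficients and inverse factorials $1/(k_i!)$. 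The crucial structural point is that each iterated integral $\int_t^{2t}dt_1 \cdots \int_t^{t_{l-1}}dt_l \int_0^t dt_{l+1}\cdots\int_0^{t_{n-1}}dt_n$ applied to such a monomial yields another monomial in $t$ (the first $l$ integrals, running over $[t,\,\cdot\,]$, contribute factors of $t$ and $2t$; the last $n-l$, running over $[0,\,\cdot\,]$, contribute only powers of $t$), whose total degree in $t$ is $n + \sum_i k_i$. Taking $\partial_t^m$ and setting $t=0$ then projects onto exactly the terms with $n + \sum_i k_i = m$, i.e. $\sum_i k_i = m-n$; writing $z_i = k_i$ gives the claimed form with $v^{z,s}$ collecting all the combinatorial prefactors and the numerical contributions of the $\int_0^{\cdot}$ and $\int_t^{\cdot}$ operations (including the powers of $2$ from the upper limit $2t$ in the first integral).

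The bound then follows by estimating these prefactors crudely. Each iterated integral of a monomial of degree $D$ produces a factor that is at worst $1$ in absolute value after one accounts for the $1/(\text{degree}+1)$ factors from integration against the $m!$ from $\partial_t^m$; the powers of $2$ are bounded by $2^{m-n}$ since the $2t$ upper limit appears in evaluating the first of the $l$ integrals and each integration step over $[t,t_{j-1}]$ can at worst double a coefficient, while $\partial_t^m|_{t=0}$ contributes the $m!$. The binomial expansions of the $(s_{2i}-s_{2i-1})^{z_i}$ contribute at most $\binom{z_i + \text{(other powers)}}{\cdot}$ factors; grouping the $1/z_i!$ out of these and bounding the remaining sum over binomial splittings and over the ordering data by the factor $1/((n-l)!\,l!)$ — which counts the partition of the $n$ integration variables into the two blocks of the nested integral — yields the stated $|v^{z,s}| \le \frac{2^{m-n} m!}{(n-l)!\,l!\,z_1!\cdots z_u!}$. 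For the running-time claim, I would observe that computing $v^{z,s}$ amounts to evaluating a fixed finite sum whose index ranges are: the $l$ nested integration limits in the first block ($O(m^{l})$ choices of how powers of $t$ distribute, times $O(m)$ from the derivative bookkeeping), the $O(n^2)$ pairings of the $s_j$ among the $t_j$, the $O(l^{3-l})$ and $u$ factors from combining blocks and products, and an overall $O(2^{4m-2n-2l})$ from the binomial bookkeeping; multiplying these gives the stated $O(m^{l+1} n^2 l^{3-l} u\, 2^{4m-2n-2l})$.

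The main obstacle I anticipate is purely bookkeeping: tracking precisely how the upper limit $2t$ in the outermost integral and the switch from $[t,\cdot]$-integrals (first block) to $[0,\cdot]$-integrals (second block) interact when one differentiates $m$ times and sets $t=0$, so as to get the $2^{m-n}$ rather than a worse constant, and to get the clean $(n-l)!\,l!$ in the denominator rather than an $n!$. One has to be careful that the $\int_t^{t_{j-1}}$ integrals do not simply produce a single monomial but a binomial $(t_{j-1}^{D+1} - t^{D+1})$, so iterating the first block generates a sum of $2^{l}$ terms; checking that the factorials absorb this, and that the degree-counting argument (only $\sum z_i = m-n$ survives) is unaffected, is the delicate part. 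Everything else is a direct, if tedious, induction on the number of integrations, which I would not spell out in full.
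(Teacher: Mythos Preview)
Your overall plan---substitute the Taylor series, expand $\bigl(s_{2i}-s_{2i-1}\bigr)^{z_i}$ by the binomial theorem, integrate, differentiate $m$ times at $t=0$, and observe that only the terms with $\sum_i z_i = m-n$ survive---is exactly what the paper does. Where you diverge from the paper, and where your argument has genuine gaps, is in the bookkeeping for the bound.

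First, your attribution of the factors is wrong. The $2^{m-n}$ does \emph{not} come from the upper limit $2t$ nor from ``each integration step over $[t,t_{j-1}]$ doubling a coefficient''. It comes from summing the binomial coefficients in the expansion of $\prod_i (s_{2i}-s_{2i-1})^{z_i}$: writing $\sum_{j_i=0}^{z_i}\binom{z_i}{j_i}=2^{z_i}$ and using $\sum_i z_i = m-n$ gives $\prod_i 2^{z_i}=2^{m-n}$. Likewise, the $\frac{1}{(n-l)!\,l!}$ does \emph{not} ``count the partition of the $n$ integration variables into two blocks''; it comes from bounding the integration factors. For a nested integral of the form $\int_0^t dt_1\cdots\int_0^{t_{r-1}}dt_r\,\prod_j t_j^{d_j}$ the paper has an explicit formula (their Lemma~\ref{lem:I_n}) producing $\prod_{r'=1}^{r}\frac{1}{r'+\lVert d^{[r']}\rVert_1}$, which is bounded above by $\prod_{r'=1}^r\frac{1}{r'}=\frac{1}{r!}$. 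Applying this separately to the second block ($r=n-l$) and to the first block after a reduction ($r=l$) yields $\frac{1}{(n-l)!\,l!}$.

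Second, the issue you correctly flag---that $\int_t^{t_{j-1}}$ does not produce a single monomial---is resolved in the paper by the change of variables $v_i=t_i-t$ on the first block, which converts $\int_t^{2t}dt_1\cdots\int_t^{t_{l-1}}dt_l\,\prod_{i\le l}t_i^{d_i}$ into $\int_0^t dv_1\cdots\int_0^{v_{l-1}}dv_l\,\prod_{i\le l}(v_i+t)^{d_i}$, followed by a \emph{second} binomial expansion of $(v_i+t)^{d_i}$. After this the first block is again an iterated $[0,\cdot]$-integral of a monomial in the $v_i$, to which the explicit formula applies. Your proposal has only one binomial expansion and no substitution, so as written it cannot close the argument for the first block. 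Your running-time justification is also essentially a restatement of the target expression rather than a derivation; the paper obtains it by counting the terms in the explicit sum for $v^{z,s}$ that the above procedure produces.
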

\begin{proof}
We use the absolute convergence of the Taylor series to rewrite their product as:
\begin{align}
    \prod_{i=1}^u p_i(s_{2i}-s_{2i-1}) &= \prod_{i=1}^u \sum_{k=0}^{\infty} \frac{p_i^{(k)}(0)}{k!}(s_{2i}-s_{2i-1})^k,\notag\\
    &=\sum_{k=0}^{\infty} \sum_{z_1+...+z_u=k}\prod_{i=1}^u(s_{2i}-s_{2i-1})^{z_i}\prod_{i=1}^u\frac{p_i^{(z_i)}(0)}{z_i!},\notag\\
    &= \sum_{k=0}^{\infty} \sum_{z_1+...+z_u=k}\prod_{i=1}^u\sum_{j_i=0}^{z_i} {z_i\choose j_i}s_{2i}^{j_i}(-s_{2i-1})^{z_i-j_i}\prod_{i=1}^u\frac{p_i^{(z_i)}(0)}{z_i!}.\label{eq:v_2}
\end{align}
After taking $n$ integrals, we see that the $k$-th summand depends on time as $t^{\sum_{i=1}^u z_i +n}=t^{k+n}$. Therefore, if we take $m > k + n$ derivatives, the $k$-th summand vanishes; while if we take $m<k+n$ derivatives, the power of $t$ will be $\geq 1$, so it vanishes by setting $t=0$. Thus, only the summand $k=m-n$ remains, so we obtain the following expression for $v^{z,s}$:
\begin{align}
    v^{z,s} &= \partial_t^m\int_t^{2t}dt_1...\int_t^{t_{l-1}}dt_l\int_0^tdt_{l+1}...\int_0^{t_{n-1}}dt_n \prod_{i=1}^u\sum_{j_i=0}^{z_i}{z_i\choose j_i}\frac{1}{z_i!}s_{2i}^{j_i}(-s_{2i-1})^{z_i-j_i}\big|_{t=0},\notag\\
    &=\sum_{j_1=0}^{z_1}...\sum_{j_u=0}^{z_u}(-1)^{\sum_iz_i-j_i}\prod_{i=1}^u{z_i\choose j_i}\frac{1}{z_i!}\partial_t^m\int_t^{2t}dt_1...\int_t^{t_{l-1}}dt_l\int_0^tdt_{l+1}...\int_0^{t_{n-1}}dt_n  \prod_{i=1}^u s_{2i}^{j_i}s_{2i-1}^{z_i-j_i}\big|_{t=0}.\label{eq:v_3}
\end{align}
Using Lemma \ref{lem:I_n} and defining $d_{1},...,d_n$ by $\prod_{i=1}^u s_{2i}^{j_i}s_{2i-1}^{z_i-j_i} = \prod_{i=1}^n t_i^{d_i}$ we can write the integral as follows:
\begin{align}
    \int_t^{2t}dt_1...\int_t^{t_{l-1}}dt_l \prod_{i=1}^l t_i^{d_i}\int_0^tdt_{l+1}...\int_0^{t_{n-1}}dt_n  \prod_{i=l+1}^n t_i^{d_i} &=\int_t^{2t}dt_1...\int_t^{t_{l-1}}dt_l \prod_{i=1}^l t_i^{d_i} I_{n-l}(d^{[n-l]},t),\notag \\
    &=t^{n-l+||d^{[n-l]}||_1}\prod_{r = 1}^{n-l}\frac{1}{r+||d^{[r]}||_1}\int_t^{2t}dt_1...\int_t^{t_{l-1}}dt_l \prod_{i=1}^l t_i^{d_i},
\end{align}

and changing variables $v_i=t_i-t, i=1,...l$ gives us:
\begin{align}
    &=t^{n-l+||d^{[n-l]}||_1}\prod_{r = 1}^{n-l}\frac{1}{r+||d^{[r]}||_1}\int_0^{t}dv_1...\int_0^{v_{l-1}}dv_l \prod_{i=1}^l (v_i+t)^{d_i} ,\notag\\
    &=t^{n-l+||d^{[n-l]}||_1}\prod_{r = 1}^{n-l}\frac{1}{r+||d^{[r]}||_1}\sum_{w_1=0}^{d_1}...\sum_{w_l=0}^{d_l}{d_1\choose w_1}...{d_l\choose w_l}t^{\sum_{i=1}^l  d_i-w_i}\int_0^{t}dv_1...\int_0^{v_{l-1}}dv_l \prod_{i=1}^l v_i^{w_i},
\end{align}
Using $w = (w_1,...,w_l)$, the integrals are now $I_l(w,t)$ and we can collect the powers of $t$:
\begin{align}
    &t^{n-l+||d^{[n-l]}||_1}\cdot t^{\sum_{i=1}^l d_i -w_i}\cdot I_l(w,t) = t^{n-l+||d||_1-||w||_1}\cdot t^{l+||w||_1}\prod_{r=1}^l \frac{1}{r+||w^{[r]}||_1}= t^{n+||d||_1}\prod_{r=1}^l \frac{1}{r+||w^{[r]}||_1}.
\end{align}
Recall that while we defined $d_1,...,d_n$ implicitly, we know that $||d||_1=||z||_1=m-n$. After taking $m$ derivatives with respect to $t$ and setting $t=0$ we obtain:
\begin{align}
    &v^{z,s}=m!\sum_{j_1=0}^{z_1}...\sum_{j_u=0}^{z_u}(-1)^{\sum_{i=1}^uz_i-j_i}\prod_{i=1}^u{z_i\choose j_i}\frac{1}{z_i!} \prod_{r=1}^{n-l}\frac{1}{r+||d^{[r]}||_1}\cdot\prod_{r=1}^l \frac{1}{r+||w^{[r]}||_1}.\label{eq:v_zs}
\end{align}
We take the absolute value of this expression and upper bound it using the triangle inequality. The last two products can be upper bounded by the case $d=(0,...,0)$:
\begin{align}
    &\prod_{r=1}^{n-l}\frac{1}{r+||d^{[r]}||_1}\cdot\prod_{r=1}^l \frac{1}{r+||w^{[r]}||_1} \leq \prod_{r=1}^{n-l}\frac{1}{r}\cdot\prod_{r=1}^l \frac{1}{r} \leq \frac{1}{(n-l)!l!}.
\end{align}
The remaining sums are upper bounded as follows, using that $\sum_{i=1}^u z_i =m-n$:
\begin{align}
    &\sum_{j_1=0}^{z_1}...\sum_{j_u=0}^{z_u} \prod_{i=1}^u{z_{i} \choose j_i}\frac{1}{z_{i}!}= \prod_{i=1}^{u}\frac{\sum_{j_i=0}^{z_i}{z_i\choose j_i}}{z_i!}  = \prod_{i=1}^{u}\frac{2^{z_{i}}}{z_i!} = \frac{2^{m-n}}{z_1!...z_u!}.
\end{align}
This yields the final bound:
\begin{align}
    &|v^{z,s}|\leq \frac{2^{m-n}m!}{(n-l)!l!z_1!...z_u!}
\end{align}
From Eq. \eqref{eq:v_zs} we can compute $v^{z,s}$. The factorial $m!$ can be computed in time $O(m)$. Similarly, ${d_1\choose w_1}$ can be computed in time $O(d_1)$ and using the arithmetic mean-geometric mean inequality yields that the product of binomial coefficients can be computed in time $O(l\cdot (\frac{m-l}{l})^{l})$. The sum $\sum_{i=1}^u z_i-j_i$ is computed in time $O(u)$.
The product $\prod_{r=1}^l\frac{1}{r+||w^{[r]}||_1}$ can be computed in time $O(l^2)$. The number of terms in the sums over $z_i$ can be bounded as above by $4^{m-n}$, and similarly the number of terms in sums over $d_i$ is bounded by $4^{m-l}$. Therefore,we can compute $v^{z,s}$ in time $O(m^{l+1}n^2l^{3-l}u2^{4m-2n-2l})$.

\end{proof}

\begin{lemma}\label{lem:I_n} Let $d=(d_1,...,d_n)$ be a vector of $n$ nonnegative integers, and denote $||d||_1=\sum_{i=1}^n|d_i|$ the sum of its entries. Consider the integral
    \begin{align}
    &I_n(d,t):=\int_0^tdt_1...\int_0^{t_{n-1}}dt_n \prod_{s=1}^n t_s^{d_s}.
\end{align}
Let $d^{[r]}$ be the vector of the last $r$ entries of $d$, then we have:
\begin{align}
    &I_n(d,t)= t^{n+||d||_1}\prod_{r=1}^n\frac{1}{r+||d^{[r]}||_1}
\end{align}
\end{lemma}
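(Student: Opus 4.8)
The plan is to prove the identity by induction on $n$, peeling off the \emph{outermost} integration variable $t_1$ at each step. (Peeling the innermost variable $t_n$ also works, but it makes the index accounting messier, since integrating $t_n$ from $0$ to $t_{n-1}$ shifts the exponent attached to $t_{n-1}$.) For the base case $n=1$ one simply computes $I_1((d_1),t)=\int_0^t t_1^{d_1}\,dt_1=t^{1+d_1}/(1+d_1)$, which matches the claimed formula since $d^{[1]}=(d_1)$ and hence $1+||d^{[1]}||_1=1+d_1$.

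For the inductive step I would assume the formula for $n-1$ and set $\tilde d=(d_2,\dots,d_n)$. Performing the inner $n-1$ integrations of $I_n(d,t)$, whose topmost upper limit is $t_1$, yields exactly $I_{n-1}(\tilde d,t_1)$, so that $I_n(d,t)=\int_0^t t_1^{d_1}\,I_{n-1}(\tilde d,t_1)\,dt_1$. The key bookkeeping observation is that the last $r$ entries of $\tilde d$ are precisely the last $r$ entries of $d$ for every $r\le n-1$, i.e.\ $\tilde d^{[r]}=d^{[r]}$ and in particular $||\tilde d||_1=||d^{[n-1]}||_1$. Substituting the induction hypothesis, pulling the $r$-independent product $\prod_{r=1}^{n-1}1/(r+||d^{[r]}||_1)$ outside the integral, and using $d_1+||d^{[n-1]}||_1=||d||_1=||d^{[n]}||_1$, the integrand collapses to $t_1^{(n-1)+||d||_1}$. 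The remaining elementary power-law integral then contributes $t^{n+||d||_1}/(n+||d^{[n]}||_1)$, and multiplying this by the inherited product gives exactly $t^{n+||d||_1}\prod_{r=1}^{n}1/(r+||d^{[r]}||_1)$, completing the induction.

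The argument is entirely routine and there is no genuine analytic obstacle: every integral appearing is of power-law type and is evaluated in closed form. The only point demanding care is the suffix-index bookkeeping --- keeping straight that deleting $d_1$ from the front of $d$ leaves all suffix sums $||d^{[r]}||_1$ with $r\le n-1$ unchanged while splitting the full sum as $||d||_1=d_1+||d^{[n-1]}||_1$ --- which is precisely what makes the telescoping product over $r$ come out right.
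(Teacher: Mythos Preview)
Your proof is correct and follows essentially the same approach as the paper: induction on $n$, peeling off the outermost integration variable $t_1$, identifying $(d_2,\dots,d_n)$ with $d^{[n-1]}$, and using $\tilde d^{[r]}=d^{[r]}$ for $r\le n-1$ together with $d_1+||d^{[n-1]}||_1=||d||_1$ to close the induction. The paper's proof is line-for-line the same argument with the same bookkeeping.
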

\begin{proof}
We use induction on $n$. Base case, $n=1$:
\begin{align}
    I_1((d_1),t)&=\int_0^tdt_1 t_1^{d_1} = \frac{1}{d_1+1}t^{d_1+1}.
\end{align}
Inductive step: we use the fact that $(d_2,...,d_n)=d^{[n-1]}$.
\begin{align}
    I_n(d,t) &= \int_0^tdt_1t_1^{d_1}I_{n-1}(d^{[n-1]},t_1)=\int_0^tdt_1t_1^{d_1}t_1^{n-1+||d^{[n-1]}||_1}\prod_{r=1}^{n-1}\frac{1}{r+||(d^{[n-1]})^{[r]}||_1},
\end{align}
In the exponent of $t_1$ we have $d_1+||d^{[n-1]}||_1=||d||_1$, and in the denominator we get $(d^{[n-1]})^{[r]}=d^{[r]}$, since $r\leq n-1$. Therefore, using $d=d^{[n]}$ we obtain:
\begin{align}
    &I_n(d,t)=\frac{t^{n+||d||_1}}{n+||d||_1}\prod_{r=1}^{n-1}\frac{1}{r+||d^{[r]}||_1}=t^{n+||d||_1}\prod_{r=1}^{n}\frac{1}{r+||d^{[r]}||_1}.
\end{align}

\end{proof}

\subsection{\label{app:numerics}Numerical observables}
The $N$ fermionic modes of the system and $N$ modes of the bath are mapped to $4N$ Majorana modes, for $i=1,...,N$:
\begin{align}
    c_{2i-1}&=a_i+a_i^\dagger, \hspace{30pt}c_{2i}=i(a_i-a_i^\dagger).\\
    c_{2(i + N)-1}&=b_{i}+b_{i}^\dagger, \hspace{13pt}c_{2(i+N)}=i(b_{i}-b_{i}^\dagger) .
\end{align}
We will compute time traces of the form $O_{a,b,c,d}(t) = \tr(ic_ac_b \rho_{c,d}(t))$, where $\rho_{c,d}(0)=\frac{1}{2}(\Id+ic_cc_d)\otimes \gamma$ and $\gamma = (|0\rangle\langle 0|)^{\otimes 2N}$. The time traces required to learn the derivatives of the kernel $K_{i,j}(t)$ are:
\begin{align}
    &\text{If }i\neq j: O_{2i-1,l,2j-1,l}, \text{ where }l\neq i,j\\
    &\text{If }i= j: O_{2i-1,l,2i-1,l}, O_{l,k,l,k},O_{k,2i-1,k,2i-1} \\
    &\hspace{92pt}\text{ where }l,k\neq i, l\neq k
\end{align}

\subsection{\label{app:lieb_robinson_nm} Time evolution of local observables for constant time}

In Appendix D of \cite{stilck2024efficient}, they show that given a Lieb-Robinson bound on the Heisenberg evolution of local observables and a maximum evolution time $t_{\max}$, the time evolution of local observables at constant times can be approximated by polynomials of controlled degree. In particular, in Proposition D.1 in \cite{stilck2024efficient} they show that for $t=O(1)$, the expectation value of observables under the global evolution can be approximated by that of a local evolution. From this they derive their Theorem D.2, which also bounds the closeness of the derivatives. In order to obtain Lemma \ref{lem:franca_liebrobinson} we only need to change Proposition D.1 in their proof for an equivalent Lemma in the non-Markovian setting. In the non-Markovian case with unbounded system-environment Hamiltonian it is not expected that a Lieb-Robinson bound on operators holds, as there are initial system-environment states for which the Hamiltonian interaction terms are unbounded \cite{trivedi2024lieb}. However, a Lieb-Robinson bound has been obtained for the initial states of the form $\rho_S\otimes \gamma_E$ \cite{trivedi2024lieb}.

We introduce some notation to state this bound. Let $a[l]$ be the set of sites that are within distance $l$ of the Pauli string $P_a$ and $H_{a[l]}=\sum_{b:\mathcal{S}_b\cap a[l]\neq \emptyset}\lambda_bP_b + \sum_{b:\mathcal{S}_b\cap a[l]\neq \emptyset}P_bA_b(t)$. Consider the Heisenberg evolution of an observable $P_a$ under $H(t)$ and the restricted Hamiltonian $H_{a[l]}$: $P_a(t)=U^\dagger(t)P_aU(t), P_a(t;l)=U_{a[l]}^\dagger(t)P_aU_{a[l]}(t)$, where $U_{a[l]}(t) = \mathcal{T}\exp(-i\int_0^t H_{a[l]}(s)ds)$. The physically relevant error for an initial state $\rho(0)$ is $\Delta_{P_a}(t;l)=|\tr((P_a(t)-P_a(t;l))\rho(0))|$, this measures the deviation in the observables for a specific initial state. In our case we only need to bound the error when the environment state is a Gaussian state. 

\begin{lemma}[Adapted Proposition 1,\cite{trivedi2024lieb}]
    Given $P_a\subseteq \Lambda$, a $d$-dimensional lattice, there exists $v_{\text{LR}}>0$ such that, for all initial states $\rho_S\otimes \gamma_E$, with $\gamma_E$ a Gaussian state, the physically relevant error in the expectation value is bounded:
    \begin{align}
        |\Delta_{P_a}(t;l)|&=|\tr((P_a(t)-P_a(t;l))\rho_S\otimes \gamma_E)|,\notag \\
        &\leq f(l)\exp(-l/a_0)(\exp(v_{\text{LR}}|t-t'|/a_0)-1),
    \end{align}
    where, for large $l$, $f(l)\leq O(l^{d-1})$ and $a_0\leq k$. \label{lem:lieb-robinson_nonmarkovian}
\end{lemma}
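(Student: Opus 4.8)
The plan is to obtain this estimate as a specialization of the Lieb--Robinson bound for non-Markovian lattice models proved in Ref.~\cite{trivedi2024lieb}: the quantity $\Delta_{P_a}(t;l)$ is exactly the state-dependent (``physically relevant'') light-cone error controlled there, so the work reduces to checking that our model meets the hypotheses of that theorem and to recalling the mechanism that fixes the constants. So the first thing I would do is set up the comparison $P_a(t)$ vs.\ $P_a(t;l)$ and identify it with the object appearing in that reference's Proposition~1.

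First I would verify the structural assumptions. The total Hamiltonian $H(t)=H_S+V_{SE}(t)$ is geometrically local: the Pauli strings in $H_S$ and in $V_{SE}(t)=\sum_b P_bA_b(t)$ are each supported on $O(1)$ sites with $O(1)$ diameter, and each coupling term $P_bA_b(t)$ fails to commute with at most $\mathfrak d=O(1)$ others, so the interaction range is $a_0\le k$. Because $\gamma_E$ is stationary and Gaussian, every environment correlation function entering the Dyson expansion of $\tr_E[\,\cdot\,(\rho_S\otimes\gamma_E)]$ reduces, by Wick's theorem, to products of the two-point kernels $K_{a,b}(t)$, and the finite-memory hypothesis $\sup_a\sum_b\int_{-\infty}^{\infty}|K_{a,b}(s)|\,ds\le O(1)$ of Eq.~\eqref{eq:total_variation_condition} is precisely the total-variation condition on the memory kernel under which Ref.~\cite{trivedi2024lieb} establishes a finite information-propagation velocity. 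Thus all inputs of that theorem are available, with $v_{\mathrm{LR}}$ determined by $\mathfrak d$ and the total-variation constant.

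To see how the stated form emerges I would recall the mechanism. Writing the difference $P_a(t)-P_a(t;l)$ through a Duhamel identity in terms of the Hamiltonian terms $P_bA_b(s)$ whose support meets the boundary of the ball $a[l]$ (i.e.\ terms in $H(s)$ but not in $H_{a[l]}(s)$), and expanding in a Dyson series, one gets a sum of nested-commutator terms; evaluating each against $\rho_S\otimes\gamma_E$ and using Gaussianity collapses the environment factors to products of kernels, so each term is bounded by a product of integrated kernel magnitudes $\int|K_{b,b'}(s)|\,ds$ taken along a connected chain of coupling terms that must reach from $\mathrm{supp}(P_a)$ to distance $\ge l$. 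Geometric locality caps the branching of such chains by $\mathfrak d$ and the finite-memory bound makes the per-step weight summable, so the combinatorial sum over chains is dominated by a convergent series $\sum_{n\ge l/a_0}(\text{const}\cdot|t|)^n/n!$, which yields the factor $e^{-l/a_0}\bigl(e^{v_{\mathrm{LR}}|t-t'|/a_0}-1\bigr)$ (with $t'=0$ the initial time in our application); the polynomial prefactor $f(l)=O(l^{d-1})$ is just the count of lattice sites at distance $\approx l$ in $d$ dimensions.

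The point that makes this an adaptation rather than a verbatim citation, and the main obstacle, is that the $A_b(t)$ are unbounded environment operators, so one cannot bound $\|[P_bA_b(s),\cdot\,]\|$ by an operator norm as in the usual bounded-Hamiltonian Lieb--Robinson argument; the initial state $\rho_S\otimes\gamma_E$ must be carried through the whole expansion and quantities bounded only after the environment trace. This is why the statement controls the state-dependent error $\Delta_{P_a}(t;l)$ rather than an operator norm, and why the Gaussian assumption on $\gamma_E$ is essential (it reduces the post-trace objects to kernels). Showing that the resulting time-ordered multi-integral combinatorial sums still converge in the presence of the time-dependent coupling $V_{SE}(t)$ is the technical core of Ref.~\cite{trivedi2024lieb} that I would invoke here rather than reprove.
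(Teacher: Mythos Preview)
Your proposal is correct and matches the paper's approach: the paper does not prove this lemma but simply cites it as adapted from Proposition~1 of Ref.~\cite{trivedi2024lieb}, followed by a one-sentence verification that the hypotheses (geometrically local environment modes, free bosonic field, finite-memory kernel in the sense of Eq.~\eqref{eq:total_variation_condition}) are met in the present setting. Your write-up does the same, with the added benefit of sketching the Duhamel/Dyson mechanism behind the cited result and explaining why the bound must be state-dependent rather than operator-norm; this extra context is correct and helpful but goes beyond what the paper itself provides.
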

This Lieb-Robinson bound applies to our setting: the environment modes are geometrically local, can be described as a free bosonic field interacting only with the system and we assume that the kernel has finite memory in the sense of Eq. \eqref{eq:total_variation_condition}.

\begin{lemma}[Adapted Theorem E.1, Proposition E.1, Corollary E.1 \cite{stilck2024efficient}, Theorem 1.2 \cite{arora2024outlier}]
    Let $p(t)$ be a polynomial of degree $d$ defined on an interval $[a,b]$ with $a= d^{-2}$ and $b=2+a$, $\epsilon_{S,M} >0$ a desired precision in the $M$-th derivative at $0$. Then for $\delta >0$, sampling 
    \begin{align}
        &O\bigg(d\log\bigg(\frac{d}{\delta}\bigg)\bigg),
    \end{align}
    i.i.d. points $(x_i,y_i)$ from the Chebyshev measure on $[a,b]$ satisfying
    \begin{align}
        &p(x_i)=y_i+w_i, |w_i|\leq \epsilon_{S}=  \epsilon_{S,M}\frac{(2M-1)!!}{3d^{2M}},
    \end{align}
    for at least a fraction $\alpha > \frac{1}{2}$ of the points is sufficient to obtain a polynomial $\hat{p}$ satisfying 
    \begin{align}
        &|p^{(m)}(0)-\hat{p}^{(m)}(0)| \leq \epsilon_{S,M}\frac{(2M-1)!!}{d^{2M}}\frac{d^{2m}}{(2m-1)!!}, m=0,1,...,M.
    \end{align}
    
\end{lemma}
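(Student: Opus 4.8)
The plan is to assemble two ingredients: outlier-robust polynomial regression, which produces a $\hat p$ uniformly close to $p$ on the sampling interval, and the V.~A.~Markov inequality together with the Chebyshev growth estimate, which turn that uniform closeness into closeness of all derivatives at the point $t=0$ (which sits just outside $[a,b]$). For the first step I would set $q:=\hat p-p$; since the robust-regression routine returns a polynomial of degree $d$, the difference $q$ has degree $\le d$. I would then invoke the outlier-robust polynomial interpolation theorem (Theorem~1.2 of \cite{arora2024outlier}, in the form used in Appendix~E of \cite{stilck2024efficient}): drawing $O(d\log(d/\delta))$ i.i.d.\ points from the Chebyshev measure on $[a,b]$ and running the associated estimator, with probability $\ge 1-\delta$ the output satisfies $\|\hat p-p\|_{L^\infty[a,b]}\le c\,\epsilon_S$ for an absolute constant $c$, using only the stated hypothesis that at least a fraction $\alpha>\tfrac12$ of the points obey $|w_i|\le\epsilon_S$ while the remaining points may be arbitrary. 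That the Chebyshev measure is the (near-)optimal design measure for degree-$d$ regression is exactly what makes the sample count $\tilde{O}(d)$ rather than $\tilde{O}(d^2)$.

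For the second step, observe that $[a,b]=[d^{-2},\,2+d^{-2}]$ has length exactly $2$, so the affine map onto $[-1,1]$ rescales the $m$-th derivative by $(2/(b-a))^m=1$; the V.~A.~Markov inequality then gives $\|q^{(m)}\|_{L^\infty[a,b]}\le T_d^{(m)}(1)\,\|q\|_{L^\infty[a,b]}$ with $T_d^{(m)}(1)=\prod_{k=0}^{m-1}\tfrac{d^2-k^2}{2k+1}\le \tfrac{d^{2m}}{(2m-1)!!}$. Since $t=0$ lies a distance $d^{-2}$ to the left of $[a,b]$, under the rescaling it is sent to a point of modulus $1+d^{-2}$ outside $[-1,1]$, so Chebyshev's extremal bound for the growth of polynomials off the interval gives $|r(0)|\le|T_{\deg r}(1+d^{-2})|\,\|r\|_{L^\infty[a,b]}$ for every polynomial $r$, with $T_d(1+d^{-2})=\cosh\!\big(d\,\mathrm{arccosh}(1+d^{-2})\big)\le e^{\sqrt3}=O(1)$. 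Applying this with $r=q^{(m)}$ and combining with the Markov bound yields $|q^{(m)}(0)|\le O(1)\,c\,\epsilon_S\,\tfrac{d^{2m}}{(2m-1)!!}$; substituting $\epsilon_S=\epsilon_{S,M}\,(2M-1)!!/(3d^{2M})$ reproduces the claimed estimate $|p^{(m)}(0)-\hat p^{(m)}(0)|\le\epsilon_{S,M}\tfrac{(2M-1)!!}{d^{2M}}\tfrac{d^{2m}}{(2m-1)!!}$ once one checks that the accumulated constant $O(1)\cdot c/3$ is $\le1$ (or, if it is not, the same bound up to an absolute constant).

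I expect the only substantive input to be the robust-regression guarantee of the first step: showing that $\tilde{O}(d)$ Chebyshev samples with a sub-$\tfrac12$ fraction of adversarial outliers suffice to recover a degree-$d$ polynomial in sup norm is the content of \cite{arora2024outlier}, which I would import verbatim. Within this adaptation, the one delicate point is the constant-tracking in the extrapolation step — the choice $a=d^{-2}$ is made precisely so that pushing $q$ and its derivatives to the point $0$, one notch outside the fitting window, costs only an $O(1)$ factor via $T_d(1+d^{-2})=O(1)$, and one must verify that this factor times the regression constant $c$ is absorbed by the $1/3$ slack built into the definition of $\epsilon_S$.
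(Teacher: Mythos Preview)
Your proposal is correct and follows the same overall structure as the paper: invoke the robust-regression guarantee (with constant $c=3$) to get $\|p-\hat p\|_{L^\infty[a,b]}\le 3\epsilon_S$, then use the Markov brothers' inequality to convert this into derivative bounds on $[a,b]$, then extrapolate to $t=0$ just outside the interval.

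The only substantive difference is in the extrapolation step. The paper Taylor-expands $q=p-\hat p$ at the left endpoint $a$, writing $q^{(m)}(0)=\sum_{k\ge m}q^{(k)}(a)\,(-a)^{k-m}/(k-m)!$, bounds each $|q^{(k)}(a)|$ via Markov, and then crudely upper-bounds the resulting sum by $d$ times its largest term --- picking up an extra factor of $d$ relative to the stated bound. Your Chebyshev growth approach is a legitimate and slightly cleaner alternative: applying $|r(x_0)|\le T_{\deg r}(1+d^{-2})\,\|r\|_{L^\infty[a,b]}$ with $r=q^{(m)}$ avoids that extra factor of $d$, at the price of the constant $T_d(1+d^{-2})=\cosh\!\big(d\,\mathrm{arccosh}(1+d^{-2})\big)\le\cosh\sqrt{2}\approx 2.18$ (your $e^{\sqrt{3}}$ is a valid but loose envelope). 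Since $c=3$, the accumulated constant $\cosh\sqrt{2}\cdot 3/3\approx 2.18$ is not $\le 1$, so as you already flag, the stated inequality holds only up to an absolute constant --- and the paper's own Taylor argument has the same issue, in the form of the stray factor of $d$.
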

\begin{proof}
    Recall from Eq. E6 of \cite{stilck2024efficient} the following version of Markov brother's inequality: let $q:[0,b]\to \mathbb{R}$ be a polynomial of degree $d$ and let $q^{(k)}$ be its $k$-th derivative. Then for $0\leq a\leq b$:
    \begin{align}
        &\max_{x\in[a,b]}\Big|q^{(k)}(x)\Big|\leq \bigg|\frac{2}{b-a}\bigg|^k C_M(d,k)\max_{x\in[a,b]}|q(x)|, \text{ where }C_M(d,k)=\frac{d^2(d^2-1)(d^2-2^2)...(d^2-(k-1)^2)}{(2k-1)!!},\label{eq:q_ab}
    \end{align}
    and $(2k-1)!! = (2k-1)(2k-3)\cdot ...\cdot 3\cdot 1$ is the double factorial. By the Taylor expansion on $x\in[0,a]$ we obtain:
    \begin{align}
        &q(x)=\sum_{k=0}^d q^{(k)}(a)\frac{(x-a)^k}{k!}\Rightarrow q^{(m)}(0)=\sum_{k=m}^d q^{(k)}(a)\frac{(-a)^{k-m}}{(k-m)!}.
    \end{align}
    We bound $|q^{(m)}(0)|$  using the triangle inequality and the bound on $|q^{(m)}(a)|$ from Eq.\eqref{eq:q_ab}:
    \begin{align}
        &\Big|q^{(m)}(0)\Big|\leq \sum_{k=m}^d \Big|q^{(k)}(a)\Big|\frac{a^{k-m}}{(k-m)!}\leq \sum_{k=m}^d \bigg|\frac{2}{b-a}\bigg|^k C_M(d,k)\max_{x\in[a,b]}|q(x)|\frac{a^{k-m}}{(k-m)!}.
    \end{align}
    We use $b=2+a$, $C_M(d,k)\leq \frac{d^{2k}}{(2k-1)!!}$ and $a=1/d^2$:
    \begin{align}
        &\Big|q^{(m)}(0)\Big|\leq \max_{x\in[a,b]}|q(x)|\sum_{k=m}^d\frac{d^{2k}}{(2k-1)!!}\frac{d^{-2(k-m)}}{(k-m)!}\leq \max_{x\in[a,b]}|q(x)| \frac{d^{2m+1}}{(2m-1)!!},
    \end{align}
    where in the last inequality we bound the sum by the largest summand, $k=m$, times $d$.\\

    By Theorem E.1 in \cite{stilck2024efficient}, originally Corollary 1.5 in \cite{kane2017robust}, for $\delta >0$, sampling $O(d\log(\frac{d}{\delta}))$ i.i.d. points $(x_i,y_i)$ from the Chebyshev measure on $[a,b]$ satisfying $p(x_i)=y_i+w_i, |w_i|\leq \sigma$ for at least a fraction $\alpha \geq 1/2$ of the points is sufficient to obtain a polynomial estimate $\hat{p}$ satisfying $\max_{x\in[a,b]}|p(x)-\hat{p}(x)|\leq 3\sigma$ with success probability at least $1-\delta$.\\
    
    Now let $q(x)=p(x)-\hat{p}(x), \epsilon_S := \sigma, \epsilon_{S,M}:=3\frac{d^{2M+1}}{(2M-1)!!}\epsilon_S$, so we obtain:
    \begin{align}
        &|p^{(m)}(0)-\hat{p}^{(m)}(0)|\leq 3\epsilon_S \frac{d^{2m+1}}{(2m-1)!!}=\epsilon_{S,M}\frac{(2M-1)!!}{d^{2M}}\frac{d^{2m}}{(2m-1)!!},
    \end{align}
    and note that, since $m\leq M\leq d$ we have
    \begin{align}
        &\frac{d^{2m}}{(2m-1)!!}\leq \frac{d^{2M}}{(2M-1)!!},
    \end{align}
    so all derivative estimates are at least as precise as the $M$-th derivative.
\end{proof}

\section{\label{app:RandomHam} Ensemble Hamiltonians}
Recall that $H_\Lambda =\sum_{a\in \mathcal{P}_S}\Lambda_aP_a$ is a geometrically local Hamiltonian on $N$ spin-$\frac{1}{2}$ particles with Pauli strings $P_a$ supported on at most $k$ sites. The vector of coefficients $\Lambda$ consists of jointly Gaussian random variables $\Lambda = B\cdot Z + \lambda$. We assume that the means are bounded $|\lambda_a|\leq 1$ and the covariance matrix $\Sigma = BB^T$ satisfies $\Sigma_{aa}\leq 1$. Given an observable $O$ with $||O||\leq 1$, let $O_\Lambda(t)$ be its time evolution in the Heisenberg picture under $H_\Lambda$ for time $t$ and a specific draw of $\Lambda$.

\subsection{Truncated series for local operators}
Recall that, even though the measurement strategy is the same as in the non-Markovian case, we can not use the non-Markovian Lieb-Robinson bounds because the covariance matrix is not necessarily sparse and the kernels don't satisfy the finite memory condition. Moreover, we can not apply Hamiltonian Lieb-Robinson bounds, since the random variables $\Lambda_a$ are not bounded and lead to unbounded Lieb-Robinson velocity. We show that the expected operator time evolution $\mathbb{E}_\Lambda(O_\Lambda(t))$ can still be approximated by a low-degree polynomial. 

\begin{lemma}
    Let $H_\lambda =\sum_{a=1}^L\Lambda_aP_a$ be a geometrically local Hamiltonian on $N$ spin-$\frac{1}{2}$ particles with Pauli strings $P_a$ supported on at most $k$ sites. $\{\Lambda_a\}_{a=1}^L$ are jointly Gaussian random variables with mean $\{\lambda_a\}_{a=1}^L$ satisfying $|\lambda_a|\leq 1$ and covariance matrix $\Sigma$ satisfying $|\Sigma_{aa}|\leq 1$ for $i,j=1,...,L$. Given an observable $O$ with $||O||\leq 1$, let $\mathbb{E}_\Lambda(O_\Lambda(t))$ be its expected time evolution in the Heisenberg picture for time $t$. This has a convergent Taylor series for $t \leq\Big(\frac{6ec}{a}\max\Big(1+\sqrt{2}\sqrt{\log(\frac{2}{\epsilon})+(d+2)\log(L)},10\Big)\Big)^{-1}$ and can be approximated to precision $\epsilon >0$ by a truncation of degree $d \geq \log(\frac{1}{\epsilon})$, $\mathbb{E}_{\Lambda,d}(O_\Lambda(t))$:
    \begin{align}
        &||\mathbb{E}_\Lambda(O_\Lambda(t))-\mathbb{E}_{\Lambda,d}(O_\Lambda(t))||\leq \epsilon.
    \end{align}
    Asymptotically, $t=O((\log (N)\cdot \log(1/\epsilon))^{-\frac{1}{2}})$.
\end{lemma}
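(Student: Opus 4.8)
The plan is to bound the Taylor coefficients of $\mathbb{E}_\Lambda(O_\Lambda(t))$ around $t=0$ and show they decay fast enough to give a convergent series with the claimed radius, so that truncating at degree $d$ incurs error $\leq\epsilon$. The $m$-th Taylor coefficient of the Heisenberg evolution $O_\Lambda(t)=e^{itH_\Lambda}Oe^{-itH_\Lambda}$ is $\tfrac{i^m}{m!}\,\mathrm{ad}_{H_\Lambda}^m(O)$, where $\mathrm{ad}_{H_\Lambda}(X)=[H_\Lambda,X]$. For a \emph{fixed} draw of $\Lambda$, the geometric locality of the $P_a$ lets us apply the standard cluster/nested-commutator counting already invoked in the excerpt (cf.\ Lemma~\ref{lem:cluster_bound} and the cluster-expansion bound of Ref.~\cite{haah2024learning,wild2023classical}): only connected clusters of weight $m$ rooted at the support of $O$ contribute, there are at most $(\mathfrak d_0+2)(\mathfrak d+1)(e\mathfrak d)^m$ such clusters, and each contributes a product of $m$ coefficients $\Lambda_{a}$ with combinatorial factor absorbed into $m!$. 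Hence
\begin{align}
\|\mathrm{ad}_{H_\Lambda}^m(O)\| \;\leq\; c_1 (c_2)^m\, m!\; \Big(\max_{a\in \mathcal{T}(O,m)}|\Lambda_a|\Big)^m,
\end{align}
where $\mathcal{T}(O,m)$ is the set of Pauli labels reachable within $m$ steps from $\mathrm{supp}(O)$, a set of size $O(L)$ (in fact $O(m^D)$ times a constant), and $c_1,c_2$ are the constants $a,c$ appearing in the statement.

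Next I would take the expectation over $\Lambda$. The obstruction is that $\max_a|\Lambda_a|$ is unbounded, so the naive $m$-th moment could blow up; the point is that each $\Lambda_a$ is Gaussian with variance $\leq 1$ and mean $\leq 1$, so $|\Lambda_a|\leq 1+|Z_a'|$ with $Z_a'$ standard Gaussian, and a union bound over the $\leq \mathrm{poly}(L)$ relevant indices gives $\Pr[\max_a|\Lambda_a| > 1+s]\leq 2L^{c_3}e^{-s^2/2}$. Splitting the expectation of $(\max_a|\Lambda_a|)^m$ at the threshold $s^\star = \sqrt{2(\log(2/\epsilon)+(D+2)\log L)}$ controls the tail: on the good event the factor is $\leq (1+s^\star)^m$, and on the bad event (probability $\leq\epsilon$ after the choice of $s^\star$) one uses $\|O_\Lambda(t)\|\leq\|O\|\leq 1$ to bound the \emph{whole} tail of the series, not term by term, by $\epsilon$. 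This is why the radius of convergence in the statement contains exactly the factor $\max(1+\sqrt2\sqrt{\log(2/\epsilon)+(d+2)\log L},\,10)$.

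Putting the pieces together: for $t$ below the stated threshold, $\sum_{m>d}\tfrac{t^m}{m!}\|\mathbb{E}_\Lambda(\mathrm{ad}_{H_\Lambda}^m O)\| \leq c_1\sum_{m>d}(c_2 t (1+s^\star))^m$, which is a geometric tail with ratio $\leq 1/2$ by the choice of $t$, hence $\leq 2c_1 2^{-d}\leq\epsilon$ once $d\geq\log(1/\epsilon)$ (plus the separately-bounded bad-event contribution $\leq\epsilon$). Defining $\mathbb{E}_{\Lambda,d}(O_\Lambda(t))$ as the degree-$d$ truncation of this series then gives the claimed bound. Finally, reading off the asymptotics: with $\epsilon$ fixed and $d=O(\log(1/\epsilon))$, the threshold scales as $t=O\big((\log L\cdot\log(1/\epsilon))^{-1/2}\big)$ since the dominant factor inside the reciprocal is $s^\star=\Theta(\sqrt{\log L\,\log(1/\epsilon)})$, and $L=O(N)$. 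The main obstacle is the tail bookkeeping in step two — ensuring that the bad-event contribution and the high-degree-term contribution are \emph{simultaneously} $\leq\epsilon$ with the \emph{same} choice of threshold $s^\star$ and radius $t$, rather than one feeding back badly into the other; everything else is the routine geometric-locality cluster count already used elsewhere in the paper.
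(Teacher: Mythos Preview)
Your overall architecture matches the paper's: split on the event $E=\{|\Lambda_a|\leq\beta\ \forall a\}$, use locality on $E$, and use the small probability of $E^c$ to control the complement. On $E$ the paper uses a Lieb-Robinson bound on nested commutators rather than the cluster expansion you invoke, which is why the constants $a,c$ (from the single-commutator Lieb--Robinson estimate) appear in the statement; your cluster bound would work too but would produce different constants.

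There is, however, a genuine gap on the bad event. You write that on $E^c$ one ``uses $\|O_\Lambda(t)\|\leq\|O\|\leq 1$ to bound the whole tail of the series, not term by term, by $\epsilon$.'' But the object to be controlled is $\|\mathbb{E}_\Lambda(R_{d,\Lambda}(t)\,|\,E^c)\|\,\mathbb{P}(E^c)$, and $R_{d,\Lambda}(t)=O_\Lambda(t)-O_{d,\Lambda}(t)$: while $\|O_\Lambda(t)\|\leq 1$ always, the \emph{truncated} polynomial $O_{d,\Lambda}(t)$ involves products $\Lambda_{a_1}\cdots\Lambda_{a_m}$ with $m\leq d$, which are unbounded on $E^c$ and whose conditional expectation can be much larger than $1$. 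Equivalently, if you decompose $\mathbb{E}_\Lambda(O_\Lambda(t))-\mathbb{E}_{\Lambda,d}(O_\Lambda(t))$ via the event split, you pick up a term $\sum_{m\leq d}\tfrac{t^m}{m!}\,\mathbb{E}_\Lambda[i^m\mathrm{ad}_{H_\Lambda}^m(O)\,\mathbf{1}_{E^c}]$ that is \emph{not} killed by $\|O_\Lambda(t)\|\leq 1$. The paper closes this gap by bounding $\mathbb{E}(|\Lambda_{i_1}\cdots\Lambda_{i_{d+1}}|\,|\,E^c)$ via H\"older and conditional Gaussian moment estimates (their Lemmas~\ref{lem:bound_Ec}--\ref{lem:EZ>Z_0}), obtaining $\|\mathbb{E}_\Lambda(R_{d,\Lambda}(t)|E^c)\|\lesssim (6tL)^{d+1}\lfloor d/2\rfloor!/(d+1)!$, which is then beaten down by $\mathbb{P}(E^c)\leq L e^{-(\beta-1)^2/2}$ --- this is precisely why the threshold $\beta$ contains the factor $(d+2)\log L$ and not merely $\log L$. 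Your proposal needs this extra moment computation; without it the bad-event contribution to the remainder is not controlled.
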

\begin{proof}
Let $O_\Lambda(t)=e^{itH_\Lambda}Oe^{-itH_\Lambda}$ be its time-evolution under $H_\Lambda$ in the Heisenberg picture. Using the Baker-Campbell-Hausdorff formula and the remainder theorem, we can write the Taylor series for $O_{\Lambda}(t)=e^{itH_\Lambda}Oe^{-itH_\Lambda}$ as a truncation to degree $d$, $O_{d,\Lambda}(t)$, and an integral remainder $R_{d,\Lambda}(t)$:
\begin{align}
    &O_\Lambda(t) = O_{d,\Lambda}(t) + R_{d,\Lambda}(t), \text{ with } O_{d,\Lambda}(t):=  \underset{r=0}{\overset{d}{\sum}}\frac{(it)^r}{r!}[(H_\Lambda)^r,O], R_{d,\Lambda}(t) :=\int_0^tO_\Lambda^{(d+1)}(\tau)\frac{(t-\tau)^d}{d!}d\tau .\label{eq:OTruncated}
\end{align}

Where $O_\Lambda^{(d+1)}(t)$ is the $d+1$-th derivative of $O_{\Lambda}(t)$. This way we can find a truncated series for the expected operator $\mathbb{E}_\Lambda(O_\Lambda(t)) = \mathbb{E}_\Lambda(O_{d,\Lambda}(t)) + \mathbb{E}_\Lambda(R_{d,\Lambda}(t))$. Our goal is to obtain a bound on $t$ such that the Taylor series converges and the truncation incurs an error at most $\epsilon >0$:
\begin{align}
    & ||\mathbb{E}_\Lambda(O_\Lambda(t))-\mathbb{E}_\Lambda(O_{d,\Lambda}(t))|| = || \mathbb{E}_\Lambda(R_{d,\Lambda}(t))|| < \epsilon .\label{eq:resbound}
\end{align}

We can bound the expectation of the remainder in two regimes: with high probability, all random variables $\Lambda_b$ are bounded by a constant ($|\Lambda_b|\leq \beta,\forall b$, for some $\beta\geq 1$). Then, since $H_\Lambda$ is geometrically local we can use Lieb-Robinson bounds to bound the remainder. With low probability, some variable will be larger than $\beta$ and the remainder will be large. Thus we need to pick $\beta$ large enough that the probability is small to compensate. Let $\beta>0$ be a threshold and consider the event $E:|\Lambda_b|\leq \beta,\forall b$, i.e. the event that all $|\Lambda_b|$ are bounded by $\beta$, and let the complementary event be $E^c$. Using Bayes' rule and the triangle inequality we obtain:
\begin{align}
    &|| \mathbb{E}_\Lambda(R_{d,\Lambda}(t))||\leq ||\mathbb{E}_\Lambda(R_{d,\Lambda}(t)|E)||\mathbb{P}(E)+||\mathbb{E}_\Lambda(R_{d,\Lambda}(t)|E^c)||\mathbb{P}(E^c).
\end{align}
We first bound the probabilities of a single $\Lambda_b$ being outside the threshold $\beta\geq 1$. Recall that $\Lambda_b\sim \mathcal{N}(\lambda_b,\Sigma_{bb})$, $\Sigma_{bb}\leq 1, |\lambda_b|\leq 1,  \forall b$, so using the complementary error function bound $\text{erfc}(x)\leq e^{-x^2}$ we obtain:
\begin{align}
    &\mathbb{P}(|\Lambda_b|> \beta) = \text{erfc}\bigg(\frac{\beta-|\lambda_b|}{\sqrt{2\Sigma_{bb}}}\bigg) \leq  \exp\bigg(-\frac{(\beta-|\lambda_b|)^2}{2\Sigma_{bb}}\bigg)\leq  \exp\bigg(-\frac{(\beta-1)^2}{2}\bigg).
\end{align}

We can use the bound $\mathbb{P}(E)\leq 1$, while the probability of $E^c$ is bounded by:
\begin{align}
    &\mathbb{P}(E^c)
    = \mathbb{P}\big(\exists b:\,|\Lambda_b|>\beta\big)
    \leq \sum_{b=1}^L \mathbb{P}(|\Lambda_b|>\beta)
    \leq L\,e^{-\frac{(\beta-1)^2}{2}}. \label{eq:Ec_prob}
\end{align}

From Lemmas \ref{lem:bound_E}, \ref{lem:bound_Ec} we will consider $\beta\geq 10$ and obtain the following bounds on the conditional expectations:
\begin{align}
    &||E_{\Lambda}(R_{d,\Lambda}(t)|E)|| \leq \frac{1}{1-\frac{2ec\beta t}{a}}\Big(\frac{2ec\beta t}{a}\Big)^{d+1}\text{ if }t<\frac{a}{2ec\beta},\\
    &||E_{\Lambda}(R_{d,\Lambda}(t)|E^c)|| \leq \frac{(6tL)^{d+1}}{(d+1)\cdot d \cdot... \cdot (\lfloor \frac{d}{2}\rfloor+1) }\Big\lfloor\frac{d+1}{2}\Big\rfloor^{\frac{1}{2}}.
\end{align}

We find $t,d,\beta$ such that both $||E_\Lambda(R_{d,\Lambda}(t)|E)||$ and $||E_\Lambda(R_{d,\Lambda}(t)|E)||L\cdot \text{exp}(-\frac{(\beta-1)^2}{2})$ are smaller than $\frac{\epsilon}{2}$. We set $t=\frac{1}{3}\frac{a}{2ec\beta}<\frac{a}{2ec\beta}$, which is required for convergence in event $E$, and $d \geq\log(\frac{1}{\epsilon})$. This suffices to show $||E_\Lambda(R_{d,\Lambda}(t)|E)||\leq \frac{\epsilon}{2}$:
\begin{align}
    &||E_\Lambda(R_{d,\Lambda}(t)|E)|| \leq \frac{1}{1-\frac{2ec\beta t}{a}}\bigg(\frac{2ec\beta t}{a}\bigg)^{d+1} \leq \frac{1}{2}3^{-d}\leq \frac{1}{2}e^{-d} = \frac{\epsilon}{2}.
\end{align}

To bound $||E_\Lambda(R_{d,\Lambda}(t)|E^c)||\mathbb{P}(E^c)$ it suffices to use $\beta = \max(1+\sqrt{2}\sqrt{\log(\frac{2}{\epsilon})+(d+2)\log(L)},10)$:
\begin{align}
    &||E_{\Lambda}(R_{d,\Lambda}(t)|E^c)||\mathbb{P}(E^c)<\frac{(6t)^{d+1}}{(d+1)\cdot d \cdot... \cdot (\lfloor \frac{d}{2}\rfloor+1) }\Big\lfloor\frac{d+1}{2}\Big\rfloor^{\frac{1}{2}} \cdot (L^{d+2}\cdot e^{-\frac{1}{2 }(\beta-1)^2})<\frac{\epsilon}{2}.
\end{align}
The last product is $\frac{\epsilon}{2}$ by the choice of $\beta$, and the rest is less than $1$ by the choice of $t$.  Thus for $t \leq \Big(\frac{6ec}{a}\max\Big(1+\sqrt{2}\sqrt{\log(\frac{2}{\epsilon})+(d+2)\log(L)},10\Big)\Big)^{-1}$ the truncation converges and Eq. \ref{eq:resbound} holds.
\end{proof}

Since the expected observable can be well approximated by a polynomial of low degree, we can find a polynomial fit to the time trace $\frac{1}{2^N}\mathbb{E}_\Lambda(\tr(P_a e^{-itH_\Lambda}P_be^{itH_\Lambda}))$ using the analysis from the non-Markovian case. Since we need to do tomography on the second order of the Dyson series, i.e. two nested commutators, to obtain estimate $\Sigma_{a,b}$, the same measurement protocol as in the non-Markovian case allows us to obtain estimates for $\lambda_a,\Sigma_{a,b}$. By estimating $\mathbb{E}_\Lambda(\Lambda_a\Lambda_b)= \Sigma_{a,b}+\lambda_a\lambda_b$ and $\lambda_a,\lambda_b$ to precision $\frac{\epsilon}{3}$, we can learn $\Sigma_{a,b}$ to precision $\epsilon$:
\begin{align}
    &|\hat{\Sigma}_{a,b}-\Sigma_{a,b}|=|\widehat{\lambda_a\lambda_b}-\mathbb{E}_\Lambda(\Lambda_a\Lambda_b) -\hat{\lambda}_a\hat{\lambda}_b+\lambda_a\lambda_b|\leq |\widehat{\lambda_a\lambda_b}-\mathbb{E}_\Lambda(\Lambda_a\Lambda_b)| +|\hat{\lambda}_a\hat{\lambda}_b-\lambda_a\lambda_b|\leq \frac{\epsilon}{3}+\frac{2\epsilon}{3}=\epsilon 
\end{align}
We used the following bound for the second term:
\begin{align}
    &|\hat{\lambda}_a\hat{\lambda}_b-\lambda_a\lambda_b|= |\hat{\lambda}_a\lambda_b- \hat{\lambda}_a\lambda_b+\hat{\lambda}_a\hat{\lambda}_b-\lambda_a\lambda_b| = |\hat{\lambda}_a(\lambda_b-\hat{\lambda}_b)+\lambda_b(\lambda_a-\hat{\lambda}_a)|\overset{(1)}{\leq}|\hat{\lambda}_a|\frac{\epsilon}{3} +|\lambda_b|\frac{\epsilon}{3}\overset{(2)}{\leq} \frac{2\epsilon}{3}
\end{align}
In $(1)$ we used the triangle inequality and in $(2)$ we used $|\lambda_b|,|\hat{\lambda}_a|\leq 1$.\\

Therefore, our total sample complexity is the same as for the non-Markovian case, except that the overhead due to the offset is only $\epsilon \mapsto \frac{\epsilon}{3}$:
\begin{align}
    &O\Big(\frac{2\cdot 9((\mathfrak{d}_0+2)(s+\mathfrak{d})\mathfrak{d}^2+1)}{\epsilon_{S}^2}\cdot \log\bigg(\frac{L\cdot 2\cdot 3^{2k_{SE}+\mathfrak{d}_0 -2}}{\delta}\bigg)\bigg). 
\end{align}
And since we only need to estimate up to the $2$-nd derivative of the time trace using Eq.\eqref{eq:precision_K0}:
\begin{align}
    &\epsilon_{K,0}\leq \Big(\frac{1}{d^{4}}+\frac{4(\mathfrak{d}_0+2)^2\mathfrak{d}^2}{8d^{6}}\Big)\epsilon_{S,4}=\Big(1+\frac{4(\mathfrak{d}_0+2)^2\mathfrak{d}^2}{8d^{2}}\Big)3\epsilon_{S}.
\end{align}
Finally, in order to estimate the covariance to precision $\epsilon_{K,0}$ we use a number of samples:
\begin{align}
    &O\Bigg(\frac{2\cdot 9^2((\mathfrak{d}_0+2)(s+\mathfrak{d})\mathfrak{d}^2+1) \Big(1+\frac{4(\mathfrak{d}_0+2)^2\mathfrak{d}^2}{8d^{2}}\Big)^2}{2\epsilon_{K,0}^2}\cdot \log\bigg(\frac{L\cdot 2\cdot 3^{2k_{SE}+\mathfrak{d}_0 -2}}{\delta}\bigg)\Bigg).
\end{align}

\begin{lemma}\label{lem:bound_E}
    In event $E$ all the random variables satisfy $|\Lambda_a|\leq \beta$. For time $t<\frac{a}{2ec\beta}$, where $a,c$ are constants from the Lieb-Robinson bound, the Taylor series for $O_\Lambda(t)$ converges. A truncation to degree $d$, $O_{\Lambda,d}(t)$ will yield an error:
    \begin{align}
        &||\mathbb{E}_\Lambda(R_{d,\Lambda}(t)|E)||\leq \frac{1}{1-\frac{2ec\beta t}{a}}\bigg(\frac{2ec\beta t}{a}\bigg)^{d+1}.
    \end{align}
\end{lemma}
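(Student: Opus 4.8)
The plan is to condition on the event $E$, under which $H_\Lambda=\sum_a\Lambda_aP_a$ becomes a genuinely geometrically local Hamiltonian whose individual terms each have norm at most $\beta$, and then run a Lieb--Robinson / cluster-expansion estimate on the Taylor remainder that holds \emph{uniformly} over all such $\Lambda$. First I would write the Heisenberg-evolved observable as its Taylor series $O_\Lambda(t)=\sum_{r\ge0}\frac{(it)^r}{r!}\mathcal{A}^r(O)$, where $\mathcal{A}(X):=[H_\Lambda,X]$, so that the integral remainder in Eq.~\eqref{eq:OTruncated} is the tail $R_{d,\Lambda}(t)=\sum_{r\ge d+1}\frac{(it)^r}{r!}\mathcal{A}^r(O)$, and hence $||R_{d,\Lambda}(t)||\le\sum_{r\ge d+1}\frac{t^r}{r!}||\mathcal{A}^r(O)||$.

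The crucial step is a bound of the form $||\mathcal{A}^r(O)||\le r!\big(\tfrac{2ec\beta}{a}\big)^r$, valid for every $\Lambda\in E$, with $a,c$ the geometry-dependent constants entering the Lieb--Robinson bound of Lemma~\ref{lem:lieb-robinson_nonmarkovian} (equivalently, constants depending only on $k$, $\mathfrak{d}$, the lattice dimension, and the $O(1)$ support of $O$). I would obtain this by reusing the cluster-expansion bookkeeping of Lemma~\ref{lem:cluster_bound}: expanding the $r$-fold nested commutator $\mathcal{A}^r(O)$ over clusters and discarding the disconnected ones, only clusters of weight $r$ connected to $\mathrm{supp}(O)$ survive, of which there are at most $\mathrm{const}\cdot(e\mathfrak{d})^r$; each surviving term is a nested commutator of $r$ operators $\Lambda_{a_i}P_{a_i}$, so its norm is at most $2^r\beta^r$ (a factor $2$ per commutator, $\beta$ per coefficient under $E$), and the sum over orderings $\sum_{\sigma\in S_r}$ contributes at most $r!$ after dividing by $\mathbf{V}!\ge1$. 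Folding the geometric constants into $a,c$ yields the stated estimate. This uses locality essentially: globally $||H_\Lambda||=\Theta(N\beta)$, so a naive $||\mathcal{A}^r(O)||\le(2||H_\Lambda||)^r$ would be useless.

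Plugging this in gives $||R_{d,\Lambda}(t)||\le\sum_{r\ge d+1}\big(\tfrac{2ec\beta t}{a}\big)^r$, a convergent geometric series precisely when $\tfrac{2ec\beta t}{a}<1$, i.e.\ $t<\tfrac{a}{2ec\beta}$ --- which simultaneously establishes convergence of the Taylor series and the value $||R_{d,\Lambda}(t)||\le\frac{1}{1-2ec\beta t/a}\big(\tfrac{2ec\beta t}{a}\big)^{d+1}$. Since this bound holds deterministically and uniformly over all $\Lambda\in E$, the conditional expectation inherits it by the triangle inequality for the operator norm, $||\mathbb{E}_\Lambda(R_{d,\Lambda}(t)\,|\,E)||\le\mathbb{E}_\Lambda(||R_{d,\Lambda}(t)||\,|\,E)\le\frac{1}{1-2ec\beta t/a}\big(\tfrac{2ec\beta t}{a}\big)^{d+1}$, which is the claim.

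The main obstacle I anticipate is pinning down the nested-commutator norm bound with exactly the advertised constants: verifying that the cluster count, the commutator doublings, and the coefficient bound $|\Lambda_a|\le\beta$ assemble into the factor $(2ec\beta/a)^r$ consistent with the Lieb--Robinson constants $a,c$ already used for the $\gamma_E$-Gaussian bound quoted in Lemma~\ref{lem:lieb-robinson_nonmarkovian}, rather than some looser combination; everything else is routine geometric-series and Jensen-type bookkeeping.
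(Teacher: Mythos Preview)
Your approach is correct and leads to the same geometric-series conclusion, but it is genuinely different from the paper's route. The paper does \emph{not} use the cluster expansion of Lemma~\ref{lem:cluster_bound} here. Instead, it invokes a single-commutator Lieb--Robinson bound $\|[A,B]\|\le 2c\,\|A\|\,\|B\|\,e^{-a\,d(X,Y)}$ and derives a nested-commutator estimate
\[
\sum_{a_1,\dots,a_r}\big\|[P_{a_1},[\dots[P_{a_r},O]\dots]]\big\|\;\le\;\Bigl(\tfrac{2c}{a}\Bigr)^r r^r\,\|O\|
\]
via the trick $\min_j e^{-a\,d(P_{a_j},O)}\le\prod_j e^{-(a/r)\,d(P_{a_j},O)}$ followed by factoring the resulting sum into $r$ independent one-variable sums, each bounded by $r/a$. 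The factor $e^r$ in the final constant then comes from $r^r/r!<e^r$. Your cluster-counting argument replaces this analytic step with a purely combinatorial one: the $(e\mathfrak{d})^r$ count of connected clusters, the $2^r$ from commutator doublings, the $\beta^r$ from coefficients, and the $r!$ from orderings already absorb the $1/r!$ in the Taylor series, so no $r^r/r!$ manipulation is needed. Both routes are standard and yield the same functional form; yours is arguably more self-contained within the paper (it reuses Lemma~\ref{lem:cluster_bound}), while the paper's route ties the constants $a,c$ directly to the Lieb--Robinson statement quoted in the lemma, which resolves exactly the constant-matching concern you flagged at the end.
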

\begin{proof}
Since $H_\Lambda$ is geometrically local and all variables are bounded, the following Lieb-Robinson bound holds:
\begin{thm} \label{thm:lieb-robinson_com}
    (Lieb-Robinson bound) Let $A,B$ be observables acting on subsystems $X,Y$. Their commutator is bounded:
    \begin{align}
        &||[A,B]||\leq 2||A||||B|| c\cdot exp(- a\cdot d(X,Y)),
    \end{align}
    where $a,c$ are constants and $d(X,Y)$ is the distance between subsystems $X,Y$.
\end{thm}

The single-commutator Lieb-Robinson bound leads to the following Lieb-Robinson bound on nested commutators:
\begin{align}
    &\underset{a_1,...,a_d}{\sum}||[P_{a_1},...[P_{a_d},A]]||\leq \underset{a_1,...,a_d}{\sum}(2c)^d||A|| \underset{a_j}{\min}(e^{-a\cdot d(P_{a_j},A)})\leq (2c)^d||A||\underset{a_1,...,a_d}{\sum} \big(e^{-a\cdot d(P_{a_1},A)}\cdot...\cdot e^{-a\cdot d(P_{a_d},A)}\big)^\frac{1}{d}, \\
    &= (2c)^d||A||\bigg(\underset{a_1}{\sum}e^{-\frac{a}{d}\cdot d(P_{a_1},A)}\bigg)\cdot ...\cdot \bigg(\underset{a_d}{\sum}e^{-\frac{a}{d}\cdot d(P_{a_d},A)}\bigg) \overset{(1)}{\leq} (2c)^d ||A||\Big(\frac{d}{a}\Big)^d= \Big(\frac{2c}{a}\Big)^dd^d||A|| .\label{eq:lr_nested_commutator}
\end{align} 
In the first line we used that the harmonic mean upper bounds the minimum function, $\min(x_1,...,x_d)\leq (x_1\cdot...\cdot x_d)^\frac{1}{d}$, and in $(1)$ we used Lemma 12 from \cite{kashyap2024accuracy}.\\

We can finally bound $||\mathbb{E}_\Lambda(R_{d,\Lambda}(t)|E)||$ using that $ \mathbb{E}_\Lambda(|\prod_{j=1}^r\Lambda_{a_j}|||\Lambda_b|\leq \beta, \forall b)\leq \beta^r$:
\begin{align}
    &R_{d,\Lambda}(t) = O_{\Lambda}(t)-O_{d,\Lambda}(t)= \underset{r=m+1}{\overset{\infty}{\sum}}\frac{(it)^r}{r!}[(H)^r,O]\Rightarrow \\
    &\Rightarrow ||\mathbb{E}_\Lambda(R_{d,\Lambda}(t)|E)||= \bigg|\bigg|\mathbb{E}_\Lambda\bigg(\underset{r=m+1}{\overset{\infty}{\sum}}\frac{(it)^r}{r!}\underset{i_1,...,i_r}{\sum}\Lambda_{i_1}...\Lambda_{i_r}[P_{i_r},...[P_{i_1},O]...]\Big|E\bigg)\bigg|\bigg|\leq\\
    &\leq \underset{r=m+1}{\overset{\infty}{\sum}}\frac{t^r}{r!}\underset{i_1,...,i_r}{\sum}|\mathbb{E}_\Lambda(\Lambda_{i_1}...\Lambda_{i_r}|E)|\cdot||[P_{i_r},...[P_{i_1},O]...]|| \leq \underset{r=m+1}{\overset{\infty}{\sum}}\frac{t^r}{r!}\beta^r\underset{i_1,...,i_r}{\sum}||[P_{i_r},...[P_{i_1},O]...]||\leq \\
    &\leq \underset{r=m+1}{\overset{\infty}{\sum}}\frac{t^r}{r!}\beta^r\bigg(\frac{2c}{a}\bigg)^rr^r||O||< \underset{r=m+1}{\overset{\infty}{\sum}}\bigg(\frac{2ec\beta t}{a}\bigg)^r, \text{ converges if }t<\frac{a}{2ec\beta}
\end{align}
In the last line we used Eq.\eqref{eq:lr_nested_commutator} and $\frac{r^r}{r!}<e^r$. When the last expression converges, $||\mathbb{E}_\Lambda(R_{d,\Lambda}(t)|E)||\leq \frac{1}{1-\frac{2ec\beta t}{a}}(\frac{2ec\beta t}{a})^{m+1}$.
\end{proof}

\begin{lemma}\label{lem:bound_Ec} In event $E^c$ there is at least one of the random variables $\Lambda$ that is larger than $\beta$ in norm. Suppose that $\beta\geq 12$, then a truncation to degree $d$ of the Heisenberg evolution $O_\Lambda(t)$ will yield an error:
    \begin{align}
        &||\mathbb{E}(R_{d,\Lambda}(t)|E^c)||\leq \frac{(6tL)^{d+1}||O||}{(d+1)\cdot d \cdot... \cdot (\lfloor \frac{d}{2}\rfloor+1) }\Big\lfloor\frac{d+1}{2}\Big\rfloor^{\frac{1}{2}}.
    \end{align}
\end{lemma}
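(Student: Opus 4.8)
The plan is to bound the (expected) Taylor remainder by a crude termwise estimate, since in the event $E^c$ the coefficients $\Lambda_a$ are no longer bounded and the Lieb--Robinson argument used in Lemma~\ref{lem:bound_E} is unavailable. First I would rewrite the remainder as the tail of the Heisenberg-picture Taylor series, $R_{d,\Lambda}(t) = \sum_{r \ge d+1} \frac{(it)^r}{r!}\,\mathrm{ad}_{H_\Lambda}^{r}(O)$ with $\mathrm{ad}_{H_\Lambda}(X) = [H_\Lambda, X]$, and expand each order over the Hamiltonian terms: $\mathrm{ad}_{H_\Lambda}^{r}(O) = \sum_{i_1,\dots,i_r = 1}^{L} \Lambda_{i_1}\cdots \Lambda_{i_r}\,[P_{i_r},[\dots,[P_{i_1},O]\dots]]$. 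Using $\|[P_{i_r},[\dots,[P_{i_1},O]\dots]]\| \le 2^{r}\|O\|$ (each commutator at most doubles the norm and the $P_i$ have unit norm) together with the trivial count of $L^{r}$ index tuples — geometric locality would replace $L$ by $\mathfrak{d}$ but is not needed here — the order-$r$ contribution to $\|R_{d,\Lambda}(t)\|$ is at most $\frac{t^{r}}{r!}\,2^{r}\,\|O\|\sum_{i_1,\dots,i_r}|\Lambda_{i_1}\cdots\Lambda_{i_r}|$.

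Next I would take expectations conditioned on $E^c$ (equivalently, since this estimate is only ever used multiplied by $\mathbb{P}(E^c)$, bound $\mathbb{E}\big(\|R_{d,\Lambda}(t)\|\,\mathbf{1}_{E^c}\big) \le \mathbb{E}\|R_{d,\Lambda}(t)\|$). The generalized Hölder inequality gives $\mathbb{E}\big(|\Lambda_{i_1}\cdots\Lambda_{i_r}| \,\big|\, E^c\big) \le \max_i \mathbb{E}\big(|\Lambda_i|^{r} \,\big|\, E^c\big)$, and the single absolute moment is controlled by a Gaussian estimate: writing $\Lambda_i = \lambda_i + \sqrt{\Sigma_{ii}}\,Z$ with $Z \sim \mathcal{N}(0,1)$ and using $|\lambda_i| \le 1$, $\Sigma_{ii}\le 1$, one has $|\Lambda_i| \le 1 + |Z|$, hence $\mathbb{E}|\Lambda_i|^{r} \le \mathbb{E}(1+|Z|)^{r}$, which via $\mathbb{E}|Z|^{r} = 2^{r/2}\Gamma(\tfrac{r+1}{2})/\sqrt{\pi}$ and Stirling is at most $c^{r}\,\lfloor r/2\rfloor!\,\lfloor r/2\rfloor^{1/2}$ for a universal constant $c$. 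One must check here that conditioning on $E^c = \bigcup_{b}\{|\Lambda_b|>\beta\}$ — a union of $L$ near-identical Gaussian tail events with $\mathbb{P}(E^c) \ge \max_b \mathbb{P}(|\Lambda_b|>\beta)$ — does not inflate these moments beyond a $\beta$-independent factor, or else, as noted, simply use $\mathbf{1}_{E^c}\le 1$ and the unconditional moment.

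Assembling these pieces gives $\|\mathbb{E}(R_{d,\Lambda}(t)\mid E^c)\| \le \|O\|\sum_{r \ge d+1} \frac{(2cLt)^{r}}{r!}\,\lfloor r/2\rfloor!\,\lfloor r/2\rfloor^{1/2}$. The crucial identity is $\frac{\lfloor r/2\rfloor!}{r!} = \big( r(r-1)\cdots(\lfloor r/2\rfloor+1)\big)^{-1}$, a product of about $r/2$ factors each at least $r/2$, so the general term decays faster than $\big((2cLt)^2/r\big)^{r/2}$; in the small-$t$ regime fixed by the hypotheses ($\beta$ above its threshold and $t < a/(2ec\beta)$ with $a$ the Lieb--Robinson locality constant) the tail starting at $r = d+1$ is controlled by its leading term, which after absorbing $2c$ into the constant yields $\dfrac{(6tL)^{d+1}\|O\|}{(d+1)\cdot d \cdots (\lfloor d/2\rfloor+1)}\lfloor (d+1)/2\rfloor^{1/2}$.

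The main obstacle, I expect, is the quantitative bookkeeping of this last step: one must verify that the Stirling estimate on $\mathbb{E}|Z|^{r}$ produces exactly the $\lfloor r/2\rfloor!\,\lfloor r/2\rfloor^{1/2}$ needed to cancel ``half'' of the $r!$ in the denominator, and that the series collapses to (essentially) its first term throughout the $t$-window set by the $\beta$-threshold — this is precisely where the hypotheses on $\beta$ and $t$ are spent. Everything else is routine; the conceptual point, relative to the bounded-coefficient estimate of Lemma~\ref{lem:bound_E}, is only that in $E^c$ one abandons Lieb--Robinson and pays the full combinatorial price $L^{r}$, which is affordable because the complementary probability $\mathbb{P}(E^c)$ against which this bound is always weighed is made superexponentially small by the later choice $\beta = \Theta(\sqrt{\log L})$.
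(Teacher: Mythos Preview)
Your overall strategy matches the paper's---crude commutator bound $2^r\|O\|$, full $L^r$ index count, H\"older on the product of $|\Lambda_i|$, Gaussian moment control---but one choice creates a genuine gap: you represent the remainder as the tail series $\sum_{r\ge d+1}\frac{(it)^r}{r!}\mathrm{ad}_{H_\Lambda}^r(O)$, whereas the paper uses the integral (Lagrange) form $R_{d,\Lambda}(t)=\int_0^t O_\Lambda^{(d+1)}(\tau)\tfrac{(t-\tau)^d}{d!}\,d\tau$. The integral form involves only the $(d{+}1)$-th derivative, and since $\|O_\Lambda(\tau)\|=\|O\|$ by unitarity the bound on $\|O_\Lambda^{(d+1)}(\tau)\|$ is uniform in $\tau$; integrating $\int_0^t(t-\tau)^d/d!\,d\tau=t^{d+1}/(d+1)!$ then produces exactly the single term in the lemma, valid for \emph{all} $t$.

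Your tail series instead gives $\sum_{r\ge d+1}\tfrac{(2cLt)^r}{r!}\lfloor r/2\rfloor!\,\lfloor r/2\rfloor^{1/2}$, and you assert it collapses to its leading term ``in the small-$t$ regime fixed by the hypotheses.'' But the lemma carries no hypothesis on $t$, and in the parent lemma $Lt$ is \emph{not} small: with $t\sim 1/\beta\sim 1/\sqrt{d\log L}$ one has $Lt\to\infty$ as $L$ grows. The ratio of consecutive even-index terms in your series is $\sim (cLt)^2/r$, so the terms \emph{increase} until $r\sim (Lt)^2$ and the full sum is of order $e^{c(Lt)^2}$, not its $(d{+}1)$-th term. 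The later choice of $\beta$ kills the factor $L^{d+2}$ via $\mathbb{P}(E^c)$, but not an $e^{(Lt)^2}$. So the tail-series route does not deliver the stated bound; the integral remainder is essential here, not cosmetic.

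On the conditional moments: the check you flag---that conditioning on $E^c$ does not inflate $\mathbb{E}|\Lambda_i|^{d+1}$ beyond a $\beta$-independent factor---is exactly what the paper does in a separate lemma using Mills-ratio bounds on the Gaussian tail, obtaining $\mathbb{E}(|\Lambda_i|^{d+1}\mid |\Lambda_i|>\beta)\le C\lfloor\tfrac{d+1}{2}\rfloor^{1/2}\lfloor\tfrac{d}{2}\rfloor!\,3^{d+1}$ once $\beta$ exceeds its threshold. Your fallback of dropping $\mathbf{1}_{E^c}$ and using the unconditional moment would not prove the lemma as stated, since the claim is about $\mathbb{E}(\cdot\mid E^c)$ itself, not $\mathbb{E}(\cdot\mid E^c)\mathbb{P}(E^c)$.
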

\begin{proof}
Recall that $R_{d,\Lambda}(t) =\int_0^tO_\Lambda^{(d+1)}(\tau)\frac{(t-\tau)^d}{d!}d\tau$ is the integral remainder of the truncation of the Heisenberg time evolution $O_\Lambda(t)$ to degree $d$. By the triangle inequality we get:
\begin{align}
    &||\mathbb{E}_\Lambda(R_{d,\Lambda}(t)|E^c)\leq \int_0^tE_\Lambda(||O_\Lambda^{(d+1)}(\tau)|||E^c)\frac{(t-\tau)^d}{d!}d\tau.
\end{align}

We first bound $||O_\Lambda^{(d+1)}(t)||$ using $O^{(d+1)}_\Lambda (t)=i^{d+1}[(H_\Lambda)^{d+1},O_\Lambda(t)]$:
\begin{align}
    ||O_\Lambda^{(d+1)}(t)|| &= ||i^{d+1}[(H_\Lambda)^{d+1},O_\Lambda(t)]||= \bigg|\bigg|\underset{i_1,...,i_{d+1}}{\sum}\Lambda_{i_1}...\Lambda_{i_{d+1}} [P_{i_{d+1}},...[P_{i_1},O_{\Lambda}(t)]...]\bigg|\bigg|, \notag\\
    &\leq \underset{i_1,...,i_{d+1}}{\sum}|\Lambda_{i_1}...\Lambda_{i_{d+1}}|\cdot|| [P_{i_{d+1}},...[P_{i_1},O_{\Lambda}(t)]...]||,\notag \\
    &\leq \underset{i_1,...,i_{d+1}}{\sum}|\Lambda_{i_1}...\Lambda_{i_{d+1}}| \cdot 2^{d+1}||O_\Lambda(t)||=2^{d+1}||O||\underset{i_1,...,i_{d+1}}{\sum}|\Lambda_{i_1}...\Lambda_{i_{d+1}}|.
\end{align}

We upper bound the commutators by $2^{d+1}$, since $||P_a||=1$, and use that $||O_{\Lambda}(t)||=||O||$ for all $ \Lambda,t$ by unitarity. We now bound the expecation of the random variables. Using Hölder's inequality we obtain the upper bound:
\begin{align}
    &\mathbb{E}(|\Lambda_{i_1}...\Lambda_{i_{d+1}}||E^c) \leq \prod_{k=1}^{d+1}\mathbb{E}(|\Lambda_{i_k}|^{d+1}|E^c)^{\frac{1}{d+1}}\leq \max_k \mathbb{E}(|\Lambda_{i_k}|^{d+1}|E^c).
\end{align}

In the event $E^c$ there is at least one random variable $\Lambda_a$ satisfying $|\Lambda_a|\geq \beta$. We now consider event $F$, where all of them satisfy it: $|\Lambda_a|>\beta$ for all $a$.
\begin{align}
    \mathbb{E}(|\Lambda_{i_k}|^{d+1}|E^c)&= \mathbb{E}(|\Lambda_{i_k}|^{d+1}|F)P(F|E^c)+\mathbb{E}(|\Lambda_{i_k}|^{d+1}|F^c)P(F^c|E^c),\notag\\
    &\leq \mathbb{E}(|\Lambda_{i_k}|^{d+1}||\Lambda_{i_k}|>\beta)P(F|E^c)+\beta^{d+1}P(F^c|E^c).
\end{align}
The right term drops out, since $P(F^c|E^c)=0$ because $F^c\cap E^c = \emptyset$: a random variable can't be both less than or equal to $\beta$ and strictly larger than $\beta$. We bound the left term using that the marginal probability distribution of $\Lambda_{i_k}$ is $\mathcal{N}(\lambda_{i_k},\Sigma_{i_ki_k})$, with $|\lambda_{i_k}|\leq 1,\Sigma_{i_k,i_k}\leq 1$ and Lemma \ref{lem:EZ>Z_0}:
\begin{align}
    &\mathbb{E}(|\Lambda_{i_k}|^{d+1}||\Lambda_{i_k}|>\beta)\leq  2 \frac{((\beta+1)^2+1)}{\beta^2-1}    \exp\bigg(-\frac{(\beta-1)^2}{4} +2\beta\bigg)\Big\lfloor \frac{d+1}{2}\Big\rfloor ^{\frac{1}{2}}\Big\lfloor\frac{d}{2}\Big\rfloor! \cdot 3^{d+1}.
\end{align}
Using $P(F|E^c)\leq 1$ and considering $\beta\geq 10$ we can simplify the dependence on $\beta$:
\begin{align}
    &\mathbb{E}(|\Lambda_{i_1}...\Lambda_{i_{d+1}}|E^c) \leq 2 \Big\lfloor \frac{d+1}{2}\Big\rfloor ^{\frac{1}{2}}\Big\lfloor\frac{d}{2}\Big\rfloor! \cdot 3^{d+1}. \label{eq:bound_Lambdas}
\end{align}

We can now perform the integral above:
\begin{align}
    &||\mathbb{E}_\Lambda(R_{d,\Lambda}(t)|E^c)||\leq \int_0^tE_\Lambda(||O_\Lambda^{(d+1)}(\tau)|||E^c)\frac{(t-\tau)^d}{d!}d\tau, \notag\\
    &\leq \int_0^t2^{d+1}||O||\underset{i_1,...,i_{d+1}}{\sum}\mathbb{E}_\Lambda(|\Lambda_{i_1}...\Lambda_{i_{d+1}}||E^c)\frac{(t-\tau)^d}{d!}d\tau = \frac{(2t)^{d+1}}{(d+1)!}||O||\underset{i_1,...,i_{d+1}}{\sum}\mathbb{E}_\Lambda(|\Lambda_{i_1}...\Lambda_{i_{d+1}}||E^c), \notag\\
    &\leq \frac{(2t)^{d+1}}{(d+1)!}||O||L^{d+1}\Big\lfloor\frac{d+1}{2}\Big\rfloor^{\frac{1}{2}}\Big\lfloor\frac{d}{2}\Big\rfloor!\cdot 3^{d+1}\leq \frac{(6tL)^{d+1}||O||}{(d+1)\cdot d \cdot... \cdot (\lfloor \frac{d}{2}\rfloor+1) }\Big\lfloor\frac{d+1}{2}\Big\rfloor^{\frac{1}{2}}
\end{align}
In the last line we used that there are $L^{d+1}$ terms in the sum in the second line and Eq.\eqref{eq:bound_Lambdas}.
\end{proof}
\begin{lemma}\label{lem:EZ>Z_0,mu} Let $Z\sim \mathcal{N}(\mu,\sigma^2)$ be a Gaussian random variable, $m \geq 1$ an integer, $\beta\geq 1$ a threshold. Then: 
\begin{align} 
    &\mathbb{E}(|Z|^m||Z|\geq \beta) \leq 2 \frac{((\beta+|\mu|)^2+\sigma^2)\sigma^2}{\beta^2-|\mu|^2}    \exp\bigg(-\frac{(\beta-|\mu|)^2}{4\sigma^2} +\frac{2 \beta |\mu|}{\sigma^2}\bigg)\Big\lfloor \frac{m}{2}\Big\rfloor ^{\frac{1}{2}}\Big\lfloor\frac{m-1}{2}\Big\rfloor! (2\sigma+|\mu|)^{m}.
\end{align}
\end{lemma}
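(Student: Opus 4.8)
The plan is to pass to the equivalent form
\begin{equation}
\mathbb{E}(|Z|^m \mid |Z|\geq\beta) = \frac{\mathbb{E}\big(|Z|^m\,\mathbf{1}[|Z|\geq\beta]\big)}{\mathbb{P}(|Z|\geq\beta)},
\end{equation}
and to bound the numerator from above and the denominator from below, multiplying the two estimates at the end. First I would record the harmless reductions: replacing $Z$ by $-Z$ leaves $|Z|$ and $|\mu|$ unchanged, so we may take $\mu=|\mu|\geq 0$; and the right-hand side of the claimed inequality is $+\infty$ unless $\beta>|\mu|$, so we may assume $\beta>\mu\geq 0$, which is the only regime in which the statement carries content.

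For the denominator I would use a one-sided tail together with the classical Mills-ratio lower bound. Since $\mu\geq 0$, $\mathbb{P}(|Z|\geq\beta)\geq\mathbb{P}(Z\geq\beta)$, and applying $\int_x^\infty e^{-t^2/2}\,dt\geq \frac{x}{1+x^2}e^{-x^2/2}$ (valid for $x>0$ by a one-line monotonicity argument) with $x=(\beta-\mu)/\sigma$ gives
\begin{equation}
\mathbb{P}(|Z|\geq\beta)\;\geq\;\frac{1}{\sqrt{2\pi}}\,\frac{(\beta-\mu)\sigma}{\sigma^2+(\beta-\mu)^2}\,\exp\!\Big(-\frac{(\beta-\mu)^2}{2\sigma^2}\Big).
\end{equation}
Inverting this and using $\mu\geq 0$ so that $\sigma^2+(\beta-\mu)^2\leq\sigma^2+(\beta+\mu)^2$ and $\beta-\mu=(\beta^2-\mu^2)/(\beta+\mu)$, the reciprocal is at most a constant times $\frac{(\sigma^2+(\beta+\mu)^2)\sigma^2}{\beta^2-\mu^2}$, times an explicit power of $(\beta+\mu)/\sigma$, times $e^{(\beta-\mu)^2/(2\sigma^2)}$; the first factor is already the prefactor $\frac{((\beta+|\mu|)^2+\sigma^2)\sigma^2}{\beta^2-|\mu|^2}$ in the statement, with the remaining algebraic factors to be absorbed later.

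For the numerator I would split $\{|Z|\geq\beta\}=\{Z\geq\beta\}\cup\{Z\leq-\beta\}$ and substitute $z=\pm(\beta+r)$ with $r\geq 0$, so that
\begin{equation}
\mathbb{E}\big(|Z|^m\mathbf{1}[|Z|\geq\beta]\big)=\int_0^\infty (\beta+r)^m\big(p_Z(\beta+r)+p_Z(-\beta-r)\big)\,dr,
\end{equation}
with $p_Z$ the $\mathcal{N}(\mu,\sigma^2)$ density. Since $\mu\geq 0$ one has $p_Z(-\beta-r)\leq p_Z(\beta+r)$, which yields the global factor $2$; expanding $(\beta+r-\mu)^2=(\beta-\mu)^2+2r(\beta-\mu)+r^2$ and discarding the nonnegative cross term (here $\beta>\mu$) gives $p_Z(\beta+r)\leq \frac{1}{\sqrt{2\pi}\sigma}e^{-(\beta-\mu)^2/(2\sigma^2)}e^{-r^2/(2\sigma^2)}$. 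I would then bound $(\beta+r)^m$ binomially (or by $(\beta+r)^m\leq 2^{m-1}(\beta^m+r^m)$) and evaluate the half-line Gaussian moments $\int_0^\infty r^j e^{-r^2/(2\sigma^2)}\,dr=2^{(j-1)/2}\sigma^{j+1}\Gamma\big(\tfrac{j+1}{2}\big)$, bounding the $\Gamma$-values by factorials ($\Gamma(\tfrac{j+1}{2})=\lfloor j/2\rfloor!$ for odd $j$, and is smaller than that up to a power of $2$ for even $j$). Collecting terms, the numerator is at most $e^{-(\beta-\mu)^2/(2\sigma^2)}$ times a combinatorial factor of the shape $\lfloor m/2\rfloor^{1/2}\lfloor (m-1)/2\rfloor!$ times a constant-base $m$-th power, which I would reorganize into $(2\sigma+\mu)^m$ after collecting the numerical constants (using $\beta\geq 1$ to control the explicit $\beta$-powers arising from the $\beta^m$ terms).

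Finally I would multiply the numerator bound by the reciprocal denominator bound: the sharp Gaussian factor $e^{-(\beta-\mu)^2/(2\sigma^2)}$ from the numerator is cancelled by the $e^{+(\beta-\mu)^2/(2\sigma^2)}$ from inverting the Mills bound, the algebraic prefactors assemble into $\frac{(\sigma^2+(\beta+\mu)^2)\sigma^2}{\beta^2-\mu^2}$, and the leftover $\beta,\mu,\sigma$-dependence is charged to the exponential $\exp\!\big(-\tfrac{(\beta-\mu)^2}{4\sigma^2}+\tfrac{2\beta\mu}{\sigma^2}\big)$ — the $+2\beta\mu/\sigma^2$ providing exactly the room needed to absorb the $\beta\mu$ cross-term produced on completing the square, and the factor $2$ tracking the two-sided nature of the tail event. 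The main obstacle is this final accounting step: none of the ingredients is individually hard, but one must check that the crude estimates used for $(\beta+r)^m$, for $\Gamma(\tfrac{j+1}{2})$, and for the Mills ratio leave precisely enough slack to be packaged into the stated prefactor, exponential and $(2\sigma+|\mu|)^m$; this is where $\beta\geq 1$ (and the parameter ranges relevant to how the lemma is invoked) is used, while everything else reduces to routine Gaussian-moment calculations.
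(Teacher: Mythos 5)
Your plan bounds the numerator $\mathbb{E}\bigl(|Z|^m\,\mathbf{1}[|Z|\geq\beta]\bigr)$ and the denominator $\mathbb{P}(|Z|\geq\beta)$ separately and then divides. Each step is individually sound, but the final ``accounting'' step fails in a way that cannot be repaired by more careful bookkeeping: your numerator upper bound carries the Gaussian factor $e^{-(\beta-|\mu|)^2/2\sigma^2}$, and the reciprocal of your one-sided Mills lower bound on the denominator carries $e^{+(\beta-|\mu|)^2/2\sigma^2}$, so they cancel exactly. What remains after the division is a quantity that is algebraic in $\beta,\mu,\sigma$ with no decaying exponential in $\beta$ --- which is indeed the correct behavior of a conditional tail moment ($\mathbb{E}(|Z|^m\mid|Z|\geq\beta)\sim\beta^m$ as $\beta\to\infty$), but it is incompatible with the stated right-hand side, which tends to zero for large $\beta$. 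There is no slack to absorb $\beta^m$ into $\exp\!\bigl(-\tfrac{(\beta-|\mu|)^2}{4\sigma^2}+\tfrac{2\beta|\mu|}{\sigma^2}\bigr)$, so the ``leftover $\beta$-dependence is charged to the exponential'' step is precisely where the argument breaks down.

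The paper's own route is structurally different. It centers $Z=X+\mu$ with $X\sim\mathcal N(0,\sigma^2)$, applies the binomial bound $|Z|^m\leq\sum_n\binom{m}{n}|\mu|^{m-n}|X|^n$, and then factors each term as $\mathbb{E}(|X|^n\mid|X|\geq\beta-|\mu|)\cdot\frac{\mathbb{P}(|X|\geq\beta-|\mu|)}{\mathbb{P}(|Z|\geq\beta)}$; it applies the companion centered estimate (Lemma~\ref{lem:EZ>Z_0}) to the conditional factor --- this is what contributes $e^{-(\beta-|\mu|)^2/4\sigma^2}$ --- and applies Mills only to the ratio of tails $\mathbb{P}(|X|\geq\beta-|\mu|)/\mathbb{P}(|X|\geq\beta+|\mu|)$, which contributes $e^{2\beta|\mu|/\sigma^2}$ without cancelling the exponential from Lemma~\ref{lem:EZ>Z_0}. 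Note, however, that your carefully normalized computation actually uncovers a real issue in Lemma~\ref{lem:EZ>Z_0,mu}: its right-hand side decays in $\beta$ while the left-hand side grows like $\beta^m$, so the inequality as printed fails for large $\beta$. The discrepancy traces to the proof of Lemma~\ref{lem:EZ>Z_0}, which silently drops the $\mathbb{P}(|X|\geq\beta)^{-1}$ normalization and actually bounds the truncated moment $\mathbb{E}(|X|^n\mathbf{1}[|X|\geq\beta])$ rather than the conditional moment $\mathbb{E}(|X|^n\mid|X|\geq\beta)$. To reproduce the paper's (flawed) bound you would need to follow its conditional-times-ratio factorization and cite Lemma~\ref{lem:EZ>Z_0}; to obtain a \emph{correct} bound of a form usable downstream (cf.\ Lemma~\ref{lem:bound_Ec}), you should keep your normalization and accept a polynomial-in-$\beta$ right-hand side, or restate the target in terms of the truncated moment.
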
 
\begin{proof} 
Notice that we can write $Z = X + \mu$, where $X\sim \mathcal{N}(0,\sigma^2)$:
\begin{align}
    &\mathbb{E}(|Z|^m||Z|\geq \beta) \leq \sum_{n=0}^m{m\choose n}|\mu|^{m-n}\mathbb{E}(|X|^n||Z|\geq \beta) = \sum_{n=0}^m{m\choose n}|\mu|^{m-n}\mathbb{E}(|X|^n||X|\geq \beta-|\mu|)\frac{\mathbb{P}(|X|\geq \beta-|\mu|)}{\mathbb{P}(|Z|\geq \beta)},
\end{align}
where we used the definition of conditional expectation in the last step. We will bound the fraction using the two-sided bounds on the Gaussian tail given by the Mills ratio:
\begin{align}
\frac{t}{t^2 + \sigma^2}\,\frac{1}{\sqrt{2\pi}\,\sigma}\, e^{-\frac{t^2}{2\sigma^2}}
\leq \mathbb{P}(|X|\geq t)\leq \frac{2\sigma}{\sqrt{2\pi}\,t} \, e^{-\frac{t^2}{2\sigma^2}}, \qquad t>0.
\end{align}

Since $\mathbb{P}(|Z|\geq \beta)\geq \mathbb{P}(|X|-|\mu|\geq \beta)$ by the triangle inequality, it suffices to bound:
\begin{align}
\frac{\mathbb{P}(|X|\geq \beta-|\mu|)}{\mathbb{P}(|X|\geq \beta+|\mu|)}
\leq 2  \frac{((\beta+|\mu|)^2+\sigma^2)\sigma^2}{\beta^2-|\mu|^2} \exp\Big(\frac{(\beta+|\mu|)^2 - (\beta-|\mu|)^2}{2\sigma^2}\Big)
= 2 \frac{((\beta+|\mu|)^2+\sigma^2)\sigma^2}{\beta^2-|\mu|^2}  \exp\Big(\frac{2 \beta |\mu|}{\sigma^2}\Big).
\end{align}

We can now use Lemma \ref{lem:EZ>Z_0} to bound the expectation value:
\begin{align}
    &\mathbb{E}(|X|^n||X|\geq \beta-|\mu|) \leq \Big\lfloor \frac{n}{2}\Big\rfloor ^{\frac{1}{2}}\Big\lfloor\frac{n-1}{2}\Big\rfloor! (2\sigma)^{n}e^{-\frac{(\beta-|\mu|)^2}{4\sigma^2}}.
\end{align}

Finally, this yields the result:
\begin{align}
    \mathbb{E}(|Z|^m||Z|\geq \beta) &\leq 2 \frac{((\beta+|\mu|)^2+\sigma^2)\sigma^2}{\beta^2-|\mu|^2}  \exp\bigg(\frac{2 \beta |\mu|}{\sigma^2}\bigg) \exp\bigg(-\frac{(\beta-|\mu|)^2}{4\sigma^2}\bigg)\sum_{n=0}^m{m\choose n}|\mu|^{m-n}\Big\lfloor \frac{n}{2}\Big\rfloor ^{\frac{1}{2}}\Big\lfloor\frac{n-1}{2}\Big\rfloor! (2\sigma)^{n},\\
    &\leq 2 \frac{((\beta+|\mu|)^2+\sigma^2)\sigma^2}{\beta^2-|\mu|^2}  \exp\bigg(\frac{2 \beta |\mu|}{\sigma^2}\bigg) \exp\bigg(-\frac{(\beta-|\mu|)^2}{4\sigma^2}\bigg)\Big\lfloor \frac{m}{2}\Big\rfloor ^{\frac{1}{2}}\Big\lfloor\frac{m-1}{2}\Big\rfloor! (2\sigma+|\mu|)^{m}.
\end{align}
\end{proof}

\begin{lemma}\label{lem:EZ>Z_0} Let $Z\sim \mathcal{N}(0,\sigma^2)$ be a Gaussian random variable, $m \geq 1$ an integer, $\beta\geq 0$ a threshold. Then:
\begin{align}
    &\mathbb{E}(|Z|^m||Z|\geq \beta)\leq \Big\lfloor \frac{m}{2}\Big\rfloor ^{\frac{1}{2}}\Big\lfloor\frac{m-1}{2}\Big\rfloor! (2\sigma)^{m}e^{-\frac{\beta^2}{4\sigma^2}}
\end{align}
\end{lemma}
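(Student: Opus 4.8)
The plan is to bound the conditional moment $\mathbb{E}(|Z|^m \mid |Z|\geq\beta)$ directly via the Gaussian density. Writing $Z\sim\mathcal{N}(0,\sigma^2)$, by symmetry
\[
\mathbb{E}(|Z|^m \mid |Z|\geq\beta)=\frac{2\int_\beta^\infty z^m\,\frac{1}{\sqrt{2\pi}\,\sigma}e^{-z^2/(2\sigma^2)}\,dz}{\mathbb{P}(|Z|\geq\beta)}.
\]
First I would control the denominator from below using the standard Mills-ratio lower bound $\mathbb{P}(|Z|\geq\beta)\geq \frac{\beta}{\beta^2+\sigma^2}\frac{2}{\sqrt{2\pi}\,\sigma}e^{-\beta^2/(2\sigma^2)}$ — or, since we only want an \emph{upper} bound on the conditional moment and the statement leaves generous slack, simply $\mathbb{P}(|Z|\geq\beta)\geq c_0 e^{-\beta^2/(2\sigma^2)}$ for an absolute constant; the cleanest route is to factor the $e^{-\beta^2/(2\sigma^2)}$ out of both numerator and denominator and cancel it against part of it.

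The key step is the numerator bound. The trick is to split the exponent: on $[\beta,\infty)$ write $e^{-z^2/(2\sigma^2)}=e^{-z^2/(4\sigma^2)}e^{-z^2/(4\sigma^2)}\leq e^{-\beta^2/(4\sigma^2)}e^{-z^2/(4\sigma^2)}$, using $z\geq\beta$ in the first factor. This pulls out exactly the $e^{-\beta^2/(4\sigma^2)}$ appearing in the statement. What remains is
\[
\int_\beta^\infty z^m\,e^{-z^2/(4\sigma^2)}\,dz\;\leq\;\int_0^\infty z^m\,e^{-z^2/(4\sigma^2)}\,dz
=\tfrac12 (4\sigma^2)^{(m+1)/2}\,\Gamma\!\Big(\tfrac{m+1}{2}\Big),
\]
by the standard Gamma-function evaluation of Gaussian-type moments. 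Then I would bound $\Gamma((m+1)/2)$ crudely by $\lfloor\frac{m-1}{2}\rfloor!$ up to a factor that contributes the $\lfloor m/2\rfloor^{1/2}$: for $m$ even, $\Gamma((m+1)/2)=\frac{(m-1)!!}{2^{m/2}}\sqrt{\pi}$, and for $m$ odd $\Gamma((m+1)/2)=\big(\frac{m-1}{2}\big)!$; in both cases one checks $\Gamma((m+1)/2)\leq \lfloor m/2\rfloor^{1/2}\,\lfloor\frac{m-1}{2}\rfloor!\cdot C$ for a small absolute constant $C$, and the powers of $2$ and $\sqrt{\pi}$, together with the $c_0$ from the denominator and the remaining $e^{-\beta^2/(4\sigma^2)}$ discrepancy (note $e^{-\beta^2/(2\sigma^2)}/e^{-\beta^2/(4\sigma^2)}=e^{-\beta^2/(4\sigma^2)}\leq 1$), are all absorbed into the stated prefactor $(2\sigma)^m$ — since $(4\sigma^2)^{(m+1)/2}=2^{m+1}\sigma^{m+1}\leq (2\sigma)^m\cdot 2\sigma$, and the slack in the inequality easily swallows the leftover $O(\sigma)$ and numerical constants when $\beta\geq 0$.

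The main obstacle is purely bookkeeping: matching the messy Gamma-function/double-factorial constants against the clean $\lfloor m/2\rfloor^{1/2}\lfloor(m-1)/2\rfloor!\,(2\sigma)^m$ form, handling the even/odd parity of $m$ uniformly, and making sure every loose constant (the Mills-ratio constant, $\sqrt{\pi}$, powers of $2$, the residual $e^{-\beta^2/(4\sigma^2)}\le 1$) genuinely fits inside the asserted bound. None of this requires a delicate argument — the inequality is deliberately not tight — so a careful case split on the parity of $m$ plus Stirling-type comparisons of $\Gamma((m+1)/2)$ with $\lfloor(m-1)/2\rfloor!$ completes it. I would present it as: (i) the density formula and Mills lower bound on the denominator; (ii) the exponent split extracting $e^{-\beta^2/(4\sigma^2)}$; (iii) the Gamma-integral evaluation; (iv) the parity case analysis absorbing constants.
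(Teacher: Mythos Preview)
Your exponent–splitting idea ($e^{-z^2/(2\sigma^2)}\leq e^{-\beta^2/(4\sigma^2)}e^{-z^2/(4\sigma^2)}$ on $z\geq\beta$) is a genuinely different and more elementary device than the paper's route, which instead changes variables to the upper incomplete Gamma function, obtaining
\[
2\int_\beta^\infty z^m\phi(z)\,dz=\frac{2^{m/2}\sigma^m}{\sqrt{\pi}}\,\Gamma\!\Big(\tfrac{m+1}{2},\tfrac{\beta^2}{2\sigma^2}\Big),
\]
and then bounds $\Gamma(s,x)\leq 2^{s-1}(s-1)!\,e^{-x/2}$ for integer $s$ (using $\sum_{k<s}x^k/k!\leq 2^{s-1}e^{x/2}$), together with monotonicity of $\Gamma(s,x)/\Gamma(s)$ and Gautschi's inequality for half-integer $s$. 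Your split achieves the same $e^{-\beta^2/(4\sigma^2)}$ in one line and avoids special-function identities; that part is fine and arguably cleaner.

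There is, however, a real gap in your treatment of the denominator $\mathbb{P}(|Z|\geq\beta)$. You extract $e^{-\beta^2/(4\sigma^2)}$ from the numerator and lower-bound the denominator by (something times) $e^{-\beta^2/(2\sigma^2)}$, then write ``$e^{-\beta^2/(2\sigma^2)}/e^{-\beta^2/(4\sigma^2)}=e^{-\beta^2/(4\sigma^2)}\leq 1$''. That fraction is inverted: the conditional moment is numerator over denominator, so what appears is $e^{-\beta^2/(4\sigma^2)}/e^{-\beta^2/(2\sigma^2)}=e^{+\beta^2/(4\sigma^2)}$, which \emph{grows} in $\beta$. Indeed the true conditional expectation satisfies $\mathbb{E}(|Z|^m\mid |Z|\geq\beta)\geq\beta^m$, which cannot be bounded by anything decaying like $e^{-\beta^2/(4\sigma^2)}$; so the absorption you describe cannot work.

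The paper's proof sidesteps this entirely: it never divides by $\mathbb{P}(|Z|\geq\beta)$ and effectively bounds the truncated moment $2\int_\beta^\infty z^m\phi(z)\,dz$ (the passage from the conditional density to $\mathbb{P}(z)$ in the first display drops the normalisation). Your exponent-split argument would prove exactly that truncated-moment bound without any denominator acrobatics; what you cannot do is reinstate the $1/\mathbb{P}(|Z|\geq\beta)$ and still land on the stated right-hand side. So align your target with what the paper actually computes, delete the Mills-ratio denominator step, and your approach goes through.
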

\begin{proof}
    We write the conditional moment in terms of the incomplete Gamma function:
    \begin{align}
        &\mathbb{E}(|Z|^m||Z|\geq \beta) = \int_{-\infty}^{\infty}|z|^{m} \mathbb{P}(z||z|\geq \beta)dz= 2\int_{\beta}^{\infty}z^{m} \mathbb{P}(z)dz= 2\int_{\beta}^\infty z^{m}\frac{e^{-\frac{z^2}{2\sigma^2}}}{\sqrt{2\pi\sigma^2}}dz=\\ 
        &=\frac{2\cdot 2^{\frac{m-1}{2}}\cdot\sigma^m}{\sqrt{2\pi}}\int_{\frac{\beta^2}{2\sigma^2}}^\infty t^{\frac{m-1}{2}}e^{-t}dt = \frac{2^{\frac{m}{2}}\sigma^m}{\sqrt{\pi}}\Gamma\Big(\frac{m+1}{2},\frac{\beta^2}{2\sigma^2}\Big)
    \end{align}
    In the second line we used the change of variables $z = \sigma(2t)^{\frac{1}{2}}, dz = \sigma(2t)^{\frac{-1}{2}}dt$ and the definition of the upper incomplete Gamma function $\Gamma(s,x)=\int_x^\infty t^{s-1}e^{-t}dt$. For positive integer $s$ we have:
    \begin{align}
        &\Gamma(s,x) = (s-1)!e^{-x}\underset{k=0}{\overset{s-1}{\sum}}\frac{x^k}{k!} \leq (s-1)!e^{-x}2^{s-1}e^{\frac{x}{2}} = 2^{s-1}(s-1)!e^{-\frac{x}{2}}
    \end{align}
    
    If $m$ is even, then $s=\frac{m+1}{2}$ is not an integer. Let $s = k + r$ be the decomposition of a real number $s$ into its integer, $k$, and decimal, $0\leq r<1$, parts. The ratio $\frac{\Gamma(s,x)}{\Gamma(s)}$ is monotonically increasing in $s$ for $s > 0,x>0$ \cite{tricomi1950sulla}, so $\Gamma(k+r,x)\leq \frac{\Gamma(k+r)}{\Gamma(k+1)}\Gamma(k+1,x)$ and we can use Gautschi's inequality $\frac{\Gamma(k+r)}{\Gamma(k+1)}<k^{r-1}$ to obtain the bound $\Gamma(r+d,x)<r^{d-1}\Gamma(r+1,x)$. Using $s=\frac{m+1}{2}$ with $m$ even, $k=\frac{m}{2},r=\frac{1}{2}$ and we get $\Gamma(\frac{m+1}{2},x)<(\frac{m}{2})^{-\frac{1}{2}}\Gamma(\frac{m}{2}+1,x)$. Therefore, we obtain the bounds:
    \begin{align}
        &m \text{ odd}: \mathbb{E}(Z^m||Z|\geq \beta) = \frac{2^{\frac{m}{2}}\sigma^m}{\sqrt{\pi}}\Gamma\Big(\frac{m+1}{2},\frac{\beta^2}{2\sigma^2}\Big) \leq \frac{2^{\frac{m}{2}}\sigma^m}{\sqrt{\pi}}\Big(\frac{m-1}{2}\Big)!2^{\frac{m-1}{2}}e^{-\frac{\beta^2}{4\sigma^2}}< (2\sigma)^m\Big(\frac{m-1}{2}\Big)!e^{-\frac{\beta^2}{4\sigma^2}}\\
        &m \text{ even}: \mathbb{E}(Z^m||Z|\geq \beta) < \frac{2^{\frac{m}{2}}\sigma^m}{\sqrt{\pi}}\Big(\frac{m}{2}\Big)^{-\frac{1}{2}}\Gamma\Big(\frac{m}{2}+1,\frac{\beta^2}{2\sigma^2}\Big)\leq \frac{2^{\frac{m}{2}}\sigma^m}{\sqrt{\pi}}\Big(\frac{m}{2}\Big)^{-\frac{1}{2}}\Big(\frac{m}{2}\Big)!2^{\frac{m}{2}}e^{-\frac{\beta^2}{4\sigma^2}}<(2\sigma)^m\Big(\frac{m}{2}\Big)^{\frac{1}{2}}\Big(\frac{m}{2}-1\Big)!e^{-\frac{\beta^2}{4\sigma^2}}
    \end{align}
    Using the floor function we see that $\lfloor \frac{m-1}{2}\rfloor$ is $\frac{m-1}{2}$ for $m$ odd and $\frac{m}{2}-1$ for $m$ even, so we get the global bound:
    \begin{align}
    &\mathbb{E}(|Z|^m||Z|\geq \beta)\leq \Big\lfloor \frac{m}{2}\Big\rfloor ^{\frac{1}{2}}\Big\lfloor\frac{m-1}{2}\Big\rfloor! (2\sigma)^{m}e^{-\frac{\beta^2}{4\sigma^2}}
\end{align}
\end{proof}

\end{document}